\theoremstyle{plain} 
\newtheorem{thm}{Theorem}[section]
\newtheorem{prop}{Proposition}
\newtheorem{cor}[thm]{Corollary} 
\newtheorem{lem}[thm]{Lemma}
\newtheorem{defn}{Definition}
\newtheorem{obs}{Observation}
\begin{document}
\includepdf{frontespiziotesi}

\linespread{1.125}\selectfont

\thispagestyle{empty}
\begin{flushright}
\null\vspace{\stretch{1}}
\textit{ad Anna,\\
per il suo immancabile supporto.}
\vspace{\stretch{2}}\null

\end{flushright}

\tableofcontents 
\cleardoublepage
\pagenumbering{arabic}

\chapter*{Introduction}
\addcontentsline{toc}{chapter}{Introduction}
\markboth{}{}
The purpose of this work is to demonstrate that it is possible to formulate Einstein's equations as an initial value problem, that is a Cauchy problem. The idea of viewing the field equations of general relativity as a system of evolution equations, see Ringström \cite{ringstrom2015origins}, goes back to Einstein himself, in an argument justifying that gravitational waves propagate at the speed of light. In his papers \cite{einstein2005naherungsweise, einstein1918gravitationswellen}, Einstein considers a situation in which the metric is close to that of Minkowski space, in practice, he studied the linearized problem. Using a special choice of coordinates, he derived a wave equation for the perturbation, a result he used to justify the statement that gravitational waves propagate at the speed of light. The arguments of Einstein give an indication that the field equations of general relativity are a system of wave equations, and thus the problem to pose is an initial value problem. Despite that, the role of the choice of coordinates was not entirely clear at the time. In fact, in his criticism Eddington \cite{Eddington1930-EDDTMT} pointed out that if the coordinates are chosen so as to satisfy a certain condition which has no very geometrical importance, the speed is that of light, but any other choice of coordinates would give a different speed. Thus, the result stands or falls by the choice of coordinates. Furthermore, the choice of the type of coordinates to use was made in order to obtain the simplification which results from representing the propagation occurring with the speed of light. 

One way to approach the objections of Eddington is to argue that gravitational waves propagate at the speed of light without appealing to a specific choice of coordinates. In a paper \cite{vessiot1918propagation}, Vessiot argued that the desired statement follows from the observation that discontinuities in the derivatives of the metric of order strictly higher than one are only allowed along null hypersurfaces. On the other hand, the work of Darmois \cite{darmois1927equations} stressed the fact that characteristic hypersurfaces play a special role in the process of solving the field equations. One particular consequence of Darmois' analysis is that given a metric and its first normal derivatives on a spacelike hypersurface, all the derivatives of the metric are determined on the hypersurface. This yields a local uniqueness result in the real analytic setting. Moreover, there is a linear homogeneous system of equations for the components of the Ricci tensor corresponding to the constraints. Thus, it is not only necessary, but also sufficient, that the constraints be satisfied for the existence of a real analytic solution to the field equations. Furthermore, Darmois, making use of the coordinate choice made by de Donder, known as $\textit{isothermal}$ coordinates, proved that Einstein's argumentation is successful to demonstrate that the gravitational fields propagate at the speed of light. In addition, Darmois states the naturalness of these coordinates because they satisfy the scalar wave equation. Despite of this, a fundamental question remains since, given a solution to the field equations, there are two notions of $\textit{causality}$. There is the causality associated with the metric and there is the notion of domain of dependence associated with solving Einstein's equations considered as a partial differential equation. Then, it is of interest to know if these two notions coincide. This cannot be addressed in the real analytic setting, since real analytic functions have the unique continuation property. This question was addressed by Stellmacher \cite{stellmacher1938anfangswertproblem}, whose argument was based on the use of isothermal coordinates. In fact, given two solutions of Einstein-Maxwell equations, Stellmacher constructs isothermal coordinates such that the PDE techniques can be applied. Then, the conclusion is that two solutions coincide up to a coordinate transformation. Moreover, his work constitutes a justification of the statement that the gravitational field propagates at a speed bounded by that of light. The argument is such that Eddington's objections do not apply.

Acknowledging the results of Stellmacher, Lichnerowicz \cite{lichnerowicz1939problemes} stated the initial value problem as that of finding the solution to Einstein's equations on the basis of the metric and its first derivatives on a hypersurface. Hence, he solves the problem in the real analytic setting for spacelike hypersurfaces and notes the importance of the constraints. Furthermore, he point out the importance to generalize the existence result to the non-real analytic set. 

The work of Yvonne Choquet-Bruhat \cite{foures1952theoreme} provides this generalization by showing that not only does local uniqueness hold in the class of $C^k$-functions for $k$ large enough but, given initial data, there is a unique local solution. Thus, the Cauchy problem in general relativity stands on a solid basis in the $C^k$-setting. It is natural to ask why is the specific regularity class of importance and why it is not sufficient to consider the class of real analytic functions. 

A large part of the difficulty in obtaining the desired result lies in proving the local existence of solution to Einstein's equations in the prescribed regularity. Moreover, it is necessary to use coordinates with respect to which the equations become hyperbolic. Finally, it is necessary to connect the problem of solving the reduced equations with the constraint equations and the problem of solving Einstein's equations.

Hence, by following Yvonne Choquet-Bruhat \cite{foures1952theoreme}, we will show how to construct solutions to Einstein's vacuum equations, given initial data. 

In Chapter 1, by considering a system of partial differential equations, we will give the definition of characteristic manifold, the concept of wavelike propagation and we will introduce and prove the existence of the Riemann Kernel. 

In Chapter 2, we will stress the relation between Riemann's kernel and the fundamental solution, moreover we will introduce the concept of characteristic conoid and the geodesic equations. 

In Chapter 3, we will show how to build the fundamental solution with some examples, in particular we will study the scalar wave equation with smooth initial conditions. 

In Chapter 4, by considering linear systems of normal hyperbolic form we will see on which assumptions they can be solved and we will find the solution.

In Chapter 5, we will see under which assumptions a non-linear hyperbolic system can turn into a linear system such that a solution can be found by making use of the results obtained for them. 

In Chapter 6, by making use of the isothermal coordinates we will see how the previous method applies to Einstein vacuum equations to find their solutions and we will discuss the causal structure of space-time. 

Eventually, in Chapter 7, we will give an useful application by studying the Green functions in the gravitational radiation theory, more precisely, we will use the Riemann Kernel to find a solution of the problem of black hole collisions at the speed of light.

\chapter{Hyperbolic Equations} \label{Chap:1} 
\epigraph{In nature's infinite book of secrecy a little I can read.}{William Shakespeare, Antony and Cleopatra }

\section{Systems of Partial Differential Equations} 
To begin with, following Esposito \cite{esposito2017ordinary}, let us consider a system of $m$ partial differential equations in the unknown functions $ \varphi_1, \varphi_2,..., \varphi_n $ of $ n+1$ independent variables $ x^1,x^2,..., x^n $ that reads as \cite{civita1931caratteristiche}

\begin{equation} \label{eq:1} E_{\mu} = 0,       \hspace{3cm}               \mu =1,2,...,m,  \end{equation}
\\
the $ E_{\mu} $ being functions that depend on the x, on the $ \varphi $ and on the partial derivatives of the $\varphi $ with respect to the x. Such a system is said to be $ \textit{normal} $ with respect to the variable $ x^0 $ if it can be reduced to the form:

\begin{equation} \label{eq:2} \frac {\partial^{r_{\nu}} \varphi}{\partial (x^0)^{r_{\nu}}} = \Phi_{\nu} (x|\varphi|\psi|\chi), \hspace{1cm}    \nu=1,2,...,m, \end{equation}
\\
where the $\psi$ occurring on the right-hand side are partial derivatives of each $\varphi_{\nu}$ with respect to $x^0$ only of order less than $r_{\nu}$, and the $\chi$ are partial derivatives of the $\varphi$ with respect to the $x$ of arbitrary finite order, provided that, with respect to $x^0$, they are of order less than $r_{\nu}$ for the corresponding $\varphi_{\nu}$.\\
The functions $\Phi_{\nu}$ are taken to be real-analytic in the neighbourhood of a set of values of Cauchy data. Before stating the associated Cauchy-Kowalevsky theorem, it is appropriate to recall the existence theorem for integrals of a system of ordinary differential equations. Hence we consider the differential system (having  set $x^0 = t $)

\begin{equation} \label{eq:3} \frac{d^{r_{\nu}} \varphi_{\nu}}{ dt^{r_{\nu}}} = \Phi_{\nu} (t| \varphi|\psi),  \hspace{2cm}  \nu = 1,2,...,m. \end{equation}
\\
This system can be re-expressed in canonical form, involving only first-order equations, by defining

\begin{equation} \label{eq:4} \frac{d}{dt} \varphi_{\nu}  \equiv  \varphi'_{\nu}, \hspace{5mm} \frac{d}{dt} \varphi'_{\nu} \equiv \varphi''_{\nu}, \hspace{5mm} ... \hspace{5mm}  \frac{d}{dt} \equiv {\varphi_{\nu}}^{(r_{\nu} - 1)}, \end{equation}
\\
from which replacement  we obtain
\\
\begin{equation} \label{eq:5} \frac{d}{dt} {\varphi_{\nu}}^{(r_{\nu} -1)} = \Phi_{\nu}(t|\varphi|\psi), \hspace{2cm} \nu=1,2,...,m. \end{equation}
\\
One can also denote by $y_\rho $ the generic element of a table displaying $\varphi_1$ and its derivatives up to the order $(r_1 - 1)$ on the first column, $\varphi_2$ and its derivatives up to $(r_2 - 1)$ on the second column, ..., $\varphi_m$ and its derivatives up to the order $(r_m - 1)$ on the last column. With such a notation, the canonical form ($\ref{eq:4}$) is further re-expressed as

\begin{equation} \label{eq:6} \frac{d}{dt} y_\rho = Y_\rho (t|y), \hspace{3 cm} \rho=1,2,...,r; \hspace{5mm} r\equiv \sum\limits_{k=1}^m r_k. \end{equation}
\\
\\
If each $Y_\rho$ is real-analytic in the neighbourhood of $t=t_0$, $y_\rho = b_\rho $, there exists a unique set of functions $y_\rho$, analytic in the $ t$ variable, which take the value $b_\rho$ at $t=t_0$. In order to prove such a theorem, one begins by remarking that the differential equations make it possible, by means of subsequent differentiations, to evaluate the derivatives of any order of an unknown function $y_\rho$ at the point $ t=t_0$ and hence to write, for each $y_\rho$, the Taylor expansion pertaining to such a point. The essential point of the proof consists in showing that such series converge in a suitable neighbourhood of $t=t_0$. For this purpose one considers some appropriate majorizing functions $Y_\rho$; the corresponding differential system (\ref{eq:5}), which can be integrated by elementary methods, defines some real-analytic functions in the neighbourhood of $t=t_0$, whose Taylor expansions majorize the Taylor expansions of the $y_\rho$ functions.
The Cauchy theorem for the differential systems (\ref{eq:5}) holds also when the right-hand side depends on a certain number of parameters that can be denoted by $x_1$,$x_2$,..., $x_n$ provided that they vary in such a way that the functions $Y_\rho$ are real-analytic. One can then state the following:

\begin{thm} Given the differential system
\begin{equation} \label{eq:7} \frac{d^{r_\nu} \varphi_{\nu}}{d{t}^{r_\nu}} = \Phi_\nu (t|x|\varphi|\psi), \hspace{1cm} \nu=1,2,..., m,  \end{equation}
\\
by assigning at will, at $t= t_0$, the values of each $\varphi_\nu$ and of the subsequent derivatives up to the order $ (r_\nu - 1)$ as functions of the parameters $x_1$,$x_2$,..., $x_n$, there exists a unique set of functions $\varphi$, real-analytic, of the variable t and of the parameters, satisfying Eq.(\ref{eq:7}) and equal to the assigned values at $t=t_0$.
\end{thm}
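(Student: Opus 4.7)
The plan is to reduce the theorem to the standard Cauchy–Kowalevsky result for a first-order canonical system, then obtain analyticity jointly in $t$ and the parameters by a single majorant argument. First, I would perform the substitution already indicated in (\ref{eq:4})–(\ref{eq:6}), introducing the derivatives $\varphi_\nu^{(j)}$, $0 \le j \le r_\nu - 1$, as new unknowns $y_\rho$. Treating $x = (x_1, \ldots, x_n)$ as frozen but arbitrary parameters in a neighbourhood where the $\Phi_\nu$ are real-analytic, the system (\ref{eq:7}) becomes a first-order canonical system $dy_\rho/dt = Y_\rho(t|x|y)$ with Cauchy data $y_\rho(t_0, x) = b_\rho(x)$ prescribed by the given initial values and their $t$-derivatives up to order $r_\nu - 1$. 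Note that the $b_\rho$ are real-analytic in $x$ by hypothesis.

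Next I would construct the candidate solution as a formal power series. From the equation itself, all $t$-derivatives of $y_\rho$ at $t = t_0$ can be computed recursively: the first derivative is $Y_\rho(t_0|x|b(x))$, and higher derivatives follow by differentiating the right-hand side and substituting previously determined values. Each such coefficient is a polynomial in the partial derivatives of $Y_\rho$ and of $b_\rho$ evaluated at $(t_0, x, b(x))$, hence real-analytic in $x$ near any fixed point $x_0$. This produces a formal Taylor series in $t - t_0$ whose coefficients are analytic functions of $x$; moreover, this procedure shows uniqueness, since any real-analytic solution must have precisely these Taylor coefficients.

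The decisive step — and the real obstacle — is proving convergence of this formal series in a common neighbourhood of $(t_0, x_0)$. Here I would invoke Cauchy's method of majorants: choose a geometric majorant $\widetilde{Y}_\rho$ for $Y_\rho$ in the combined variables $(t - t_0, x - x_0, y - b(x_0))$, of the form $M/\bigl(1 - (t - t_0 + \sum_i (x_i - x_{0,i}) + \sum_\sigma (y_\sigma - b_\sigma(x_0)))/r\bigr)$ for suitable constants $M, r > 0$ taken from the common polydisc of analyticity. The majorizing system is symmetric in its unknowns, reduces to a single scalar ODE that can be integrated explicitly in closed form (yielding an algebraic function with a branch point away from $t = t_0$), and its explicit solution is analytic on an explicit neighbourhood. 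A term-by-term comparison then shows that the formal Taylor coefficients of the original system are dominated in absolute value by those of the majorant, so the original series converges in the same neighbourhood to a real-analytic function of $(t, x)$.

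Finally, I would verify that the sum of the series indeed satisfies the reduced system (by differentiating term-by-term within the disc of convergence) and satisfies the prescribed initial conditions (by construction at $t = t_0$). Undoing the reduction (\ref{eq:4})–(\ref{eq:6}) yields analytic $\varphi_\nu(t, x)$ satisfying (\ref{eq:7}) with the prescribed values and $t$-derivatives at $t_0$; uniqueness was already established at the Taylor-coefficient stage. The main technical subtlety is that the majorant must be chosen uniformly in $x$ over a neighbourhood of $x_0$, so that a single polydisc of convergence in $(t, x)$ is obtained rather than merely pointwise-in-$x$ analyticity in $t$.
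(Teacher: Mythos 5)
Your proposal follows essentially the same route as the paper, which itself only sketches the argument: reduction to the first-order canonical system (\ref{eq:4})--(\ref{eq:6}), recursive computation of the Taylor coefficients at $t=t_0$ (which also gives uniqueness), and Cauchy's method of majorants to establish convergence, with the parameters $x_1,\dots,x_n$ carried along so that the majorant is uniform in a neighbourhood and joint real-analyticity in $(t,x)$ results. Your version is simply a more explicit writing-out of the paper's sketch, including the closing remark on uniformity in $x$ that the paper only implies by requiring the $Y_\rho$ to remain real-analytic as the parameters vary.
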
 

This theorem admits an extension to systems of partial differential equations (\ref{eq:2}) in normal form, the new feature being that, on the right-hand side of Eq. (\ref{eq:7}), there occur also derivatives of the unknown functions with respect to the parameters, so that one deals with partial differential equations.
The Cauchy problem consists in finding the functions $\varphi$ satisfying the system (\ref{eq:2}) in normal form, and the initial conditions given by the values of the unknown functions and their partial derivatives with respect to the variable $x^0$, of order less than the maximal order. Let $S$ be the space of the variables, from now on denoted by $x^0$,$x^1$,...,$x^n$. In order to fix the ideas, one can assume that $S$ is endowed with an Euclidean metric, and interpret the $x$ as Cartesian coordinates. Let $\omega$ be the hyperplane with equation
\\
\begin{equation} \label{eq:8} x^0 = a^0.  \end{equation}
\\
The Cauchy existence theorem states that, in the neighbourhood of the hyperplane $\omega$, which is said to be the $\textit{carrier hyperplane}$, one can find the values taken by the $\varphi$ functions, once the initial values of $\varphi$ and $\psi$ functions are freely specified at each point of $\omega$. An easy generalization of the theorem is obtained by replacing the hyperplane $\omega$ with a hypersurface $\sigma$ of $S$. For example, if

\begin{equation} \label{eq:9} z(x^0,x^1,...,x^n)=z^0 = constant \end{equation}
\\
is the equation of the hypersurface $\sigma$, it is enough to replace the $x$ with $n + 1$ independent combinations of the $x$, here denoted by $z,z^1,..., z^n$, in such a way that one of them, i.e. $z$, is precisely the left-hand side of the Eq. (\ref{eq:9}) here written for $\sigma$.

\section{Characteristic Manifolds} 
Let us consider differential systems for which the maximal order of derivation is $s=1$ or $s=2$. Such systems can be made explicit by writing them in the form:
\begin{equation} \label{eq:10} E_\mu = \sum\limits_{\nu =1}^m \sum\limits_{i=0}^n E^i_{\mu \nu} \frac{\partial \varphi_\nu}{\partial x^i} + \Phi_\mu (x|\varphi) =0, \hspace{2cm} \mu=1,2,..., m, \end{equation}
\\
and
\begin{equation} \label{eq:11}  E_\mu = \sum\limits_{\nu =1}^m \sum\limits_{i,k=0}^n E^{ik}_{\mu \nu} \frac{\partial^2 \varphi_\nu}{\partial x^i \partial x^k} + \Phi_\mu (x|\varphi| \chi) =0, \hspace{1cm} \mu=1,2,..., m, \end{equation}
\\
respectively. In Eq. ($\ref{eq:10}$) the $E^i_{\mu \nu}$ and $ \Phi_\mu $ depend on the $x$ and $\varphi$, whereas in Eq. ($ \ref{eq:11}$) the $E^{ik}_{\mu \nu} $ and $\Phi_\mu$ depend on the $x$, $\varphi$ and on the first-order partial derivatives of the $\varphi$ with respect to the $x$. Since the $\varphi_\nu$ are taken to fulfill the conditions  under which one can exchange the order of derivatives, one can always assume that $E^{ik}_{\mu \nu}$ is symmetric in the lower case Latin indices. 

In the particular case of a single unknown function $\varphi$, Eqs. ($\ref{eq:11}$) reduce to the single equation:
\begin{equation} \label{eq:12} E =  \sum\limits_{i,k=0}^n E^{ik} \frac{\partial^2 \varphi_\nu}{\partial x^i \partial x^k} + \Phi(x|\varphi| \chi) =0,  \end{equation} \\
where $\chi$ is a concise notation for the first-order partial derivatives of $\varphi$ with respect to $x^0$, $x^1$, ..., $x^n$. 

A remarkable equation of the type ($\ref{eq:12}$) is the scalar wave equation (having set $x^0=t$ in $c=1$ units): 
\begin{equation} \label{eq:13} \Box \varphi = \bigg{(} \frac{1}{V^2}\frac{\partial^2}{\partial t^2} - \Delta \bigg{)} \varphi = 0, \end{equation} 
where $V$ is a constant and $\Delta = \sum \limits_{i=1}^3 \frac{\partial^2}{\partial (x^i)^2}$ the standard notation for minus the Laplacian in Euclidean three-dimensional space $\mathbb{R}^3$. The $\Box$ operator in Eq. ($\ref{eq:13}$) is the familiar D'Alembert operator for the wave equation in Minkowski space-time. 

The systems ($\ref{eq:10}$) and ($\ref{eq:11}$) are not yet written in normal form, and we now aim at finding the conditions under which such systems are normal with respect to the variable $x^0$. For this purpose, we begin with the system ($\ref{eq:10}$) and point out that, since we are only interested in first-order partial derivatives with respect to $x^0$, we can re-express such equations in the form
\begin{equation} \label{eq:14} \sum\limits_{\nu =1}^m E^0_{\mu \nu} \frac{\partial \varphi_\nu}{\partial x^0} + ... =0, \hspace{3cm} \mu= 1,2,...,m. \end{equation} \\
This system can be solved with respect to the derivatives $\frac{\partial \varphi_\nu}{\partial x^0}$ if the determinant of the matrix $E^0_{\mu \nu}$ does not vanish, i.e.
\begin{equation} \label{eq:15} \Omega = det E^0_{\mu \nu} \neq 0, \hspace{2cm} \mu,\nu=1,2,...,m. \end{equation}
Such a determinant involves the independent variables $x^0$, $x^1$, ..., $x^n$ and also, in general, the unknown functions $\varphi_1$, $\varphi_2$, ..., $\varphi_m$. \\
Let us now consider the Eq. ($\ref{eq:11}$) of the second system, which are written more conveniently in the form
\begin{equation} \label{eq:16} \sum\limits_{\nu=1}^m E^{00}_{\mu \nu} \frac{ \partial^2 \varphi_\nu}{\partial (x^0)^2} + ... = 0, \hspace{2cm} \mu=1,2,...,m, \end{equation} 
and are hence solvable with respect to $\frac{\partialì2 \varphi_\nu}{\partial (x^0)^2}$ if the determinant of the matrix $E^{00}_{\mu \nu}$ does not vanish, i.e.
\begin{equation} \label{eq:17} \Omega = det E^{00}_{\mu \nu} \neq 0, \hspace{2cm} \mu,\nu=1,2,...,m. \end{equation}
Furthermore, the single equation Eq. ($\ref{eq:12}$) can be put in normal form provided that
\begin{equation} \label{eq:18} E^{00} \neq 0. \end{equation}
If the normality conditions ($\ref{eq:15}$), ($\ref{eq:17}$) and ($\ref{eq:18}$) are satisfied, for a given carrier hyperplane having equation $x^0 = a^0$, one can apply the Cauchy theorem, and the functions $\varphi_\nu$, or the single function $\varphi$ of Eq. ($\ref{eq:12}$), are uniquely determined in the neighbourhood of such hyperplane.\\
It is now necessary to investigate under which conditions the normal character is preserved, if the independent variables $x^0$, $x^1$, ..., $x^n$ are mapped into new variables $z$, $z^1$, ..., $z^n$, so that the hyperplane of equation $x^0=a^0$ is turned into a hypersurface $\sigma$ of the space $S$ having equation

\begin{equation} \label{eq:19} z(x^0,x^1,...,x^n)=z^0, \end{equation}
\\
starting from which one can determine (at least in a neighbourhood) the $\varphi$ functions. \\
For this purpose, one defines
\begin{equation} \label{eq:20} p_i \equiv \frac{\partial z}{\partial x^i}, \hspace{2cm} i=0,1,...,n, \end{equation}
from which one obtains
\begin{equation} \label{eq:21} \frac{\partial \varphi_\nu}{\partial x^i} = \frac{\partial \varphi_\nu}{\partial z} p_i + \sum\limits_{j=1}^3 \frac{\partial \varphi_\nu}{\partial z^j} \frac{\partial z_j}{\partial x^i}, \hspace{2cm} \nu=1,2,...,m, \end{equation}
where we need, on the right-hand side, only the first term, so that we write:
\begin{equation} \label{eq:22} \frac{\partial \varphi_\nu}{\partial x^i} = \frac{\partial \varphi_\nu}{\partial z}p_i + ..., \hspace{2cm} \nu=1,2,...,m. \end{equation}
The insertion of ($\ref{eq:22}$) into the system ($\ref{eq:10}$) yields 
\begin{equation}\label{eq:23} \sum\limits_{\nu=1}^m \frac{\partial \varphi_\nu}{\partial z} \sum\limits_{i=0}^n E^i_{\mu \nu} p_i + ... = 0, \hspace{2cm} \mu=1,2,...,m. \end{equation}
If now one sets
\begin{equation} \label{eq:24} \omega^{(1)}_{\mu \nu} \equiv \sum\limits_{i=0}^n E^i_{\mu \nu}p_i, \end{equation}
the transformed system turns out to be normal provided that
\begin{equation} \label{eq:25} \Omega^{(1)} \equiv det \omega^{(1)}_{\mu \nu} \neq 0, \hspace{2cm} \mu,\nu=1,2,..,m. \end{equation}
As far as the system ($\ref{eq:11}$) is concerned, one finds in analogous way
\begin{equation}\label{eq:26} \frac{\partial^2 \varphi_\nu}{\partial x^i \partial x^k} = \frac{\partial^2 \varphi_\nu}{\partial z^2} p_i p_k + ..., \end{equation}
and Eqs. $(\ref{eq:11})$ are turned into
\begin{equation*} \sum_{\nu =1}^m \frac{\partial^2 \varphi_\nu}{\partial z^2} \sum_{i, k=0}^n E^{ik}_{\mu \nu} p_i p_k + ...=0, \hspace{2cm} \mu=1, 2, ..., m. \end{equation*}
If one defines the matrix
\begin{equation} \label{eq:27} \omega^{(2)}_{\mu \nu} \equiv \sum\limits_{i,k=0}^n E^{ik}_{\mu \nu} p_i p_k, \end{equation}
the condition of normality of the system is expressed by non-singularity of this matrix, i.e.
\begin{equation} \label{eq:28} \Omega^{(2)} \equiv det \omega^{(2)}_{\mu \nu} \neq 0, \hspace{2cm} \mu, \nu=1,2,...,m. \end{equation}
Note that, in Eq. ($\ref{eq:25}$), the $\omega^{(1)}_{\mu \nu}$ are linear forms of the variables $p_0$, $p_1$, ..., $p_n$, and hence $\Omega^{(1)}$ is a form of degree $m$ in such arguments, while in Eq. ($\ref{eq:28}$) the $\omega^{(2)}_{\mu \nu}$ are quadratic forms of the $p$, and hence $\Omega^{(2)}$ is a form of degree $2m$ of the argumets $p_0$, $p_1$, ..., $p_n$.
\\
In the case of the unique Eq. ($\ref{eq:12}$), the determinant reduces to the only element
\begin{equation} \label{eq:29} \Omega = \sum\limits_{i,k=0}^n E^{ik}p_i p_k \end{equation}
To sum up, to every function $z(x^0, x^1, ..., x^n)$ for which $\Omega$ does not vanish identically, there corresponds a family of hypersurfaces $z=z^0$, starting from each of which it is still possible to solve the Cauchy problem. This consists in determining the unknown functions when the initial data are relative to the hypersurface itself. This holds by virtue of the normal character of the transformed system with respect to $z$.
\\
When the function $z(x^0, x^1, ..., x^n)$ satisfies the equation
\begin{equation} \label{eq:30} \Omega=0, \end{equation}
it is no longer possible to apply (regardless of the value taken by the constant $z^0$) the Cauchy theorem starting from the carrier hypersurfaces $z=z^0$. In such a case, the carrier hypersurfaces are said to be $\textit{characteristic manifolds}$ \cite{Friedlander:2010eqa, civita1931caratteristiche}. \\ Equation ($\ref{eq:30}$) warns us that the system formed by the variables $z, z^1, ..., z^n$ is not normal with respect to $z$ and makes it possible to assign the manifolds, in correspondence to which one cannot state that the unknown functions can be determined, once the values of the unknown functions and their derivatives of order less than the maximum have been assigned on the manifold.
\\
For the case of the single Eq. ($\ref{eq:12}$), the characteristic manifold is the one satisfying the equation

\begin{equation} \label{eq:31} \sum\limits_{i,k=0}^n E^{ik}p_i p_k =0. \end{equation}
\\
Such a manifold is necessarily complex if the quadratic form on the left-hand side of ($\ref{eq:31}$)  is positive-definite. Otherwise the manifold is real, if the initial data, called Cauchy data, are real. Equation ($\ref{eq:31}$) can then be viewed as expressing the vanishing of the square of the pseudo-norm of the normal vector, and is therefore a null hypersurface. In other words, with the nomenclature of relativity and pseudo-Riemannian geometry, $\textit{characteristic manifolds}$ are null hypersurfaces.

\section{The Concept of Wavelike Propagation}
The scal wave equation ($\ref{eq:13}$) can be applied, in particular, to the air's acoustic vibrations, or to the vibrations of other gaseous masses, since one can neglect in a first approximation any dissipative effect and hence consider the motion as if it were irrotational, without heat exchange among particles (this behaviour is called $\textit{adiabatic}$). If the velocity potential $\varphi$ in Eq. ( $\ref{eq:13}$) describes sound vibrations in the air, the three partial derivatives represent the speed of the air molecule located in $(x^1, x^2, x^3)$ at time $t$. More precisely, what is vibrating at a generic time $t$ is a certain air's stratum, placed in between the two surfaces

\begin{equation} \label{eq:32} z(t|x)=c_1, \hspace{1cm} z(t|x) = c_2 . \end{equation}
Outside this stratum there is rest; i.e. the solution of Eq. ($\ref{eq:13}$) vanishes, whereas within the stratum the acoustic phenomenon is characterized by a non-vanishing solution $\varphi(t|x)$.
\\
From now on, without insisting on the acoustic interpretation of the solutions of Eq. ($\ref{eq:13}$), we assume that $\varphi(t|x)$ and $\varphi^*(t|x)$ are solutions of this equation within and outside of the stratum determined by the surfaces in Eq. ($\ref{eq:32}$), respectively.\\
The phenomenon described by Eq. ($\ref{eq:13}$) is characterized by two distinct functions, depending on whether it is studied inside or outside the stratum. Throughout the surface of Eq. ($\ref{eq:32}$) the derivatives of various orders of $\varphi$ will undergo, in general, sudden variations and, for this reason, one says we are dealing with $\textit{discontinuity surfaces}$. Now it may happen that such discontinuities vary with time, in which case the discontinuity that undergoes propagation is said to be a $\textit{wave}$.\\
Thus, if Eq. ($\ref{eq:13}$) is interpreted as characterizing a possible wavelike propagation, the discontinuity surface (or $\textit{wave surface}$) bounds a stratum that undergoes displacement and, possibly, deformation with time. We shall assume that, during the motion, no interpenetration or molecular cavitation occurs, so that, on passing through a wave surface, the normal component of the velocity of a generic particle does not undergo any discontinuity. We also rule out the possible occurence of sliding phenomena of molecules on such surfaces, which would lead to tangential discontinuities of the velocity of particles. Moreover, in light of the postulates on the pressure that the mechanics of continua relies upon, the pressure cannot, under ordinary conditions, undergo sudden jumps, even if the state of motion were to change abruptly. The density $\mu$ is related to the pressure $p$ by the equation of state $f(\mu,p)=0$, which is the same on both sides of the discontinuity surface. The continuity of $p$ implies therefore that also $\mu$ is continuous.

On the other hand, we have
\begin{equation}\label{eq:33} \frac{\partial \varphi}{\partial t} + V^2 \sigma =0 \end{equation} 
the derivatives of $\varphi$ and $\varphi^*$ with respect to $t$ represent a density up to a constant factor, hence also such derivatives must be continuous across the wave surface.\\
By virtue of all previous considerations one can say that, for the Eq. ($\ref{eq:13}$) to describe a wavelike propagation, one has to assume the existence of two different solutions, say $\varphi$ and $\varphi^*$, taken to characterize the physical phenomenon inside and outside of a stratum, that match each other, i.e. have equal first-order derivatives in time and space, through the wave surface which bounds the stratum at every instant of time. The second derivatives undergo instead sudden variations.\\
Let us now consider one of the wave surfaces $\sigma$ which, at time $t$, bound the stratum where the pertubation is taking place, and let $n$ be the outward-pointing normal to such a stratum at a generic point $P$. The surface undergoes motion and, at time $t + dt$, intersects the normal $n$ at a point $Q$. The measure of the $PQ$ segment, counted positively towards the exterior, can be denoted $d\textit{n}$. The ratio

\begin{equation} \label{eq:34} a \equiv \frac{d \textit{n}}{dt} \end{equation}
\\
is said to be the $\textit{progression velocity}$ of the wave surface at the point $P$ at the instant of time under consideration. Under ordinary circumstances, at all points of one of the two limiting surfaces of the stratum, $\textit{a}$ is positive, while at all points of the other limiting surface $\textit{a}$ is negative. The former surface is said to be a $\textit{wave front}$ or a $\textit{bow}$, while the latter is said to be a $\textit{poop}$. The difference

\begin{equation} \label{eq:35} v(P) \equiv a - \frac{d \varphi}{d \textit{n}} \end{equation}
\\
between the progression velocity and the component orthogonal to $\sigma$ of the velocity of the fluid particle placed at the point $P$ at the instant $t$ is said to be the $\textit{normal propagation velocity}$ of the surface $\sigma$ at the point $P$. This velocity measures the rate at which the surface is moving with respect to the medium (and not with respect to the fixed axes!).

If outside the stratum there is a rest condition, the solution $\varphi^*$ vanishes and therefore, by virtue of the matching conditions at $\sigma$, one can write that

\begin{equation} \label{eq:36} \frac{d \varphi}{ d \textit{n}} =0 \Rightarrow v(P)= a. \end{equation}
\\
In this particular case the propagation velocity coincides with the progression velocity.\\
Note now that the surface $\sigma$ is a characteristic manifold of Eq. ($\ref{eq:13}$), i.e. an integral of the equation 
\begin{equation} \label{eq:37} \frac{1}{V^2}(p_0)^2 - \sum\limits_{i=1}^3(p_i)^2=0. \end{equation}
\\
Indeed, if this were not true, a unique solution of Eq. ($\ref{eq:13}$) would be determined in the neighbourhood of $\sigma$ by the mere knowledge of the values taken upon $\sigma$ by $\varphi$ and $\frac{\partial \varphi}{\partial t}$, in light of Cauchy's theorem. The wavelike propagation is therefore possible  because the wave surfaces are characteristic manifolds.

In order to further appreciate how essential is the consideration of characteristic manifols, let us study the following example \cite{civita1931caratteristiche}. Let us assume for simplicity that we study the wave equation ($\ref{eq:13}$) in two-dimensional Minkowski space-time, with $x^1$ denoted by $x$. Hence we write it in the form

\begin{equation} \label{eq:38} \bigg{(} \frac{1}{V^2} \frac{\partial^2}{\partial t^2} - \frac{\partial^2}{\partial x^2} \bigg{)} \varphi = \bigg{(} \frac{1}{V} \frac{\partial}{\partial t} + \frac{\partial}{\partial x} \bigg{)} \bigg{(} \frac{1}{V} \frac{\partial}{\partial t} - \frac{\partial}{\partial x} \bigg{)}=0. \end{equation}
\\
The form of Eq. ($\ref{eq:38}$) suggests defining the new variables
\begin{equation} \label{eq:39} z \equiv x - Vt, \hspace{1cm} z_1 \equiv x + Vt, \end{equation}
from which the original variables are re-expressed as
\begin{equation}\label{eq:40} x= \frac{1}{2} (z+z_1), \hspace{1cm} t= \frac{1}{2} \frac{(z_1-z)}{V}. \end{equation}
Moreover, the standard rules for differentiation of composite functions lead now to

\begin{equation}\label{eq:41} \frac{\partial}{\partial z}= \frac{1}{2} \bigg{(} \frac{\partial}{\partial x}- \frac{1}{V} \frac{\partial}{\partial t} \bigg{)}, \hspace{1cm} \frac{\partial}{\partial z_1} = \frac{1}{2} \bigg{(} \frac{\partial}{\partial x} + \frac{1}{V} \frac{\partial}{\partial t} \bigg{)}, \end{equation}
\\
and hence Eq. ($\ref{eq:38}$) reads as
\begin{equation} \label{eq:42} \frac{\partial^2 \varphi}{\partial z \partial z_1} = 0, \end{equation}
which is solved by a sum of arbitrary smooth functions
\begin{equation} \label{eq:43} \varphi(z,z_1)= \alpha(z) + \beta (z_1) \end{equation}
depending only on $z$ and on $z_1$, respectively. Thus, it is not possible in general to solve the Cauchy problem for a carrier line $z=c$, but it is necessary that the data satisfy a compatibility condition. In our case, from the solution ($\ref{eq:43}$) one finds

\begin{equation} \label{eq:44} \varphi(c,z_1)= \alpha(c) + \beta(z_1), \hspace{1cm} \bigg{(} \frac{\partial \varphi}{\partial z}\bigg{)}_{z=c} = \alpha'(c). \end{equation}
\\
The functions $\varphi_0=\varphi(z=c)$ and $\varphi_1= \big{(}\frac{\partial \varphi}{\partial z} \big{)}_{z=c}$ of the variable $z_1$ cannot be therefore chosen at will, but the function $\varphi_1(z_1)$ must be a constant, in which case there exist infinitely many forms of the solution of the Cauchy problem for the scalar wave equation. 

\section{The Concept of Hyperbolic Equation}
The scalar wave equation ($\ref{eq:13}$) is a good example of hyperbolic equation, but before we go on it is appropriate to define what is an equation of hyperbolic type. Following Leray \cite{leray1955hyperbolic}, we first define this concept on a vector space and then on a manifold.\\
We consider a $l$-dimensional vector space $X$ over the field of real numbers, whose dual vector space is denoted by $\Xi$. The point $x=(x^1, ..., x^l) \in X$, and the point $p=\big{(} \frac{\partial}{\partial x^1},..., \frac{\partial}{\partial x^l}\big{)} \in \Xi$. A differential equation of order $m$ can be therefore written in the form

\begin{equation} \label{eq:45} a(x,p)u(x)=v(x), \end{equation}
\\
where $a(x,\xi)$ is a given real polynomial in $\xi$ of degree $m$ whose coefficients are functions defined on $X$, $u(x)$ is the unknown function and $v(x)$ a given function. Let $h(x,\xi)$ be the sum of the homogeneous terms of $a(x,\xi)$ of degree $m$ (also called the $\textit{leading symbol}$ of the differential operator $a(x,p)$), and let $V_x(h)$ be the cone defined in $\Xi$ by the equation

\begin{equation} \label{eq:46} h(x,\xi) =0. \end{equation}
\\
The differential operator $a(x,p)$ is said to be $\textit{hyperbolic at the point x}$ if $\Xi$ contains points $\xi$ such that any real line through $\xi$ cuts the cone $V_x(h)$ at $m$ real and distinct points. These points $\xi$ constitute the interior of two opposite convex and closed half-cones $\Gamma_x(a)$ and $- \Gamma_x(a)$, whose boundaries belong to $V_x(h)$.\\
Suppose that the following conditions hold:

\begin{description} 
\item[(i)] The operator $a(x,p)$ is hyperbolic at each point $x$ of the vector space $X$. 
\item[(ii)] The set
\begin{equation} \label{eq:47} \Gamma_X \equiv \cap_{x \in X} \Gamma_x \end{equation} 
has a non-empty interior.
\item[(iii)] No limit of $h(x,\xi)$ as the norm of $x$ approaches 0 is vanishing.
\item[(iv)] No limit of the cones $V_x(h)$ as the norm of $x$ approaches infinity has singular generator. 
\end{description} 

Under such circumstances, the operator $a(x,p)$ is said to be $\textit{regularly}$ $\textit{hyperbolic}$ on $X$.
When $X$ is instead a $l-dimensional$ $(m+ M)$-smooth manifold, not necessarily complete, the operator $a(x,p)$ is said to be $\textit{hyperbolic on}$ $X$ when the following conditions hold:

\begin{description} 
\item[(1)] $a(x,p)$ is hyperbolic at any point $x$ of $X$, in the sense specified above.
\item[(2)] The set of timelike paths (i.e. with timelike tangent vector) from $y$ to $z$ is compact or empty for any $y$ and $z$ $\in$ $X$.
\item[(3)] Either the coefficients of $a(x,p)$ have $\textit{locally bounded}$ (which means boundedness on any compact subset of $X$) derivatives of order $M$ such that $1 \leq M \leq l$, or they have locally bounded derivatives of order $\leq l'$ and locally square integrable derivatives of order $>l'$ and $\leq M$, $l'$ being the smallest integer $> \frac{l}{2}$. This technical condition will became clear in one of the following chapters.
\item[(4)] The total curvature of the interior of $\Gamma_x$ is positive. If $M=1$, then the first derivatives of the coefficients of $h(x,\xi)$ are continuous. 
\end{description} 

\section{Riemann Kernel}
The modern theory of hyperbolic equations was initiated by Riemann’s representation of the solution of the initial-value problem for an equation of second order. Riemann was motivated by a very concrete problem in acoustics, but here we focus on the mathematical ingredients of his conceptual construction. \\
Given a differential expression $\varphi(x, y, y^1, ..., y^n)$ of the variable $x$, a function $y$ and its derivatives up to the $n^{\rm th}$-order, the equation 
\begin{equation} \label{eq:48} \frac{\partial \varphi}{\partial y} - \frac{d}{dx}\bigg{(} \frac{\partial \varphi}{\partial y^1}\bigg{)} + \frac{d^2}{dx^2}\bigg{(} \frac{\partial \varphi}{\partial y}\bigg{)} - ... =0 \end{equation}
expresses the necessary and sufficient condition such that the $\varphi$ function is the derivative of a function $\psi$ which contains, at the same time, the  independent variable $x$, the function $y$ and its $(n-1)$ first derivatives. In the same way, if we consider an expression $\varphi \big{(}x, y, z, \frac{\partial z}{\partial x}, \frac{\partial z}{\partial y}, \frac{\partial^2 z}{\partial x^2}, \frac{\partial z}{\partial x \partial y}, \frac{\partial^2 z}{\partial y^2}, \dots \big{)}$ which contains two independent variables, a function $z$ of them and its partial derivatives up to any $n^{\rm th}$-order, the equation
 \begin{equation} \label{eq:49} \begin{split}   &\frac{\partial \varphi}{\partial z} - \frac{\partial}{\partial x} \bigg{(} \frac{\partial \varphi}{\partial ( \frac{\partial z}{\partial x}) }\bigg{)}  - \frac{\partial}{\partial y} \bigg{(} \frac{\partial \varphi}{\partial ( \frac{\partial z}{\partial y}) }\bigg{)} + \frac{\partial^2}{\partial x^2} \bigg{(} \frac{\partial \varphi}{\partial ( \frac{\partial^2 z}{\partial x^2}) }\bigg{)}\\
&+ \frac{\partial^2}{\partial y^2} \bigg{(} \frac{\partial \varphi}{\partial ( \frac{\partial^2 z}{\partial y^2}) }\bigg{)} + \frac{\partial^2}{\partial x \partial y} \bigg{(} \frac{\partial \varphi}{\partial ( \frac{\partial^2 z}{\partial x \partial y}) }\bigg{)}- ... =0 \end{split}\end{equation} 
expresses the necessary and sufficient condition such that $\varphi$ can be read as $\frac{\partial M}{\partial x} + \frac{\partial N}{\partial y}$, where $M$ and $N$ are functions of $x$, $y$, $z$ and of their partial derivatives up to an order that can be reduced to $n$ or to $(n-1)$. Now, we consider a linear hyperbolic equation of order $n$

\begin{equation} \label{eq:50} L[z] = \sum \limits_{i,k=0}^n A_{ik} \frac{\partial^{i+k} z}{\partial x^i \partial y^k} = 0. \end{equation}
\\
If we multiply the left-hand side by an unknown $u$ and if the Eq. ($\ref{eq:49}$) is verified, we have the linear equation:

\begin{equation} \label{eq:51} L^*[u] = \sum\limits_{i,k=0}^n (-1)^{i+k} \frac{\partial^{i+k}}{\partial x^i \partial y^k} (A_{ik} u) =0, \end{equation}
\\
which defines $u$. This equation is the adjoint of the proposed equation. For any $z$ and $u$, a series of integrations by parts lead us to the identity

\begin{equation} \label{eq:52}  u L[z] - z L^*[u] = \frac{\partial M}{\partial x} + \frac{\partial N}{\partial y} \end{equation}
\\
where $M$ and $N$ have the following values:

\begin{equation} \label{eq:53} 
\left\{\begin{array} {l}
M= A_{10} zu + A_{20} u \frac{\partial z}{\partial x} - z \frac{\partial (A_{20} u)}{ \partial x} + \frac{1}{2} A_{11}u \frac{\partial z}{\partial y} - \frac{1}{2} z \frac{\partial (A_{11} u)}{\partial y} + ... \\
N= A_{01}zu + A_{02} u \frac{\partial z}{\partial y} - z \frac{\partial (A_{02} u)}{ \partial y} + \frac{1}{2} A_{11}u \frac{\partial z}{\partial x} - \frac{1}{2} z \frac{\partial (A_{11} u)}{\partial x} + ...
\end{array}\right.
\end{equation}
and depend on $z$, $u$ and their partial derivatives up to the $(n-1)^{\rm th}$ order.\\
It is important to remark that the expressions of $M$ and $N$ are not completely specified. The right-hand side of the identity ($\ref{eq:52}$), has the same expression if we replace $M$ and $N$ with $M - \frac{\partial \theta}{\partial x} $ and $ N - \frac{\partial \theta}{\partial y}$ and we can take as $\theta$ a linear function of $z$, $u$ and their partial derivatives up to the $(n-2)^{\rm th}$ order, without changing the general form of the values of $M$ and $N$. We can deduce from the previous identity that the relation between $L[z]$ and $L^*[u]$ is mutual, meaning that each equation is the adjoint of the other. \\
To estabilish the identity $ u L[z] - z L^*[u] = \frac{ \partial M}{\partial x} + \frac{\partial N}{\partial y}$ in all its generality, we can make the following calculation.\\
We say that the expression $u \frac{\partial^i v}{\partial x^i} - (-1)^i v \frac{\partial^i u}{\partial x^i} $ is the exact derivative of a function of $u$ and $v$ and of their derivatives up to the $(i-1)^{\rm th}$ order. If we replace $v$ with $\frac{\partial^k v}{\partial y^k}$, we have 

\begin{equation} \label{eq:54} u \frac{\partial^{i+k} v}{\partial x^i \partial y^k} - (-1)^i \frac{\partial^k v}{\partial y^k} \frac{\partial^i u}{\partial x^i} = \frac{\partial P}{\partial x} \end{equation}
\\
$P$ contains the derivatives of $u$ and $v $ up to the order $(i+k-1)$.\\
If we replace, in the previous equation, $u$ with $v$, $x$ with $y$ and $i$ with $k$, we have

\begin{equation} \label{eq:55} v \frac{\partial^{i+k}u}{\partial x^i \partial y^k} - (-1)^k \frac{\partial^i u}{\partial x^i}\frac{\partial^k v}{\partial y^k} = \frac{\partial Q}{\partial y}. \end{equation}
\\
The combination of equations $(\ref{eq:54})$ and $(\ref{eq:55})$ gives us the most general identity

\begin{equation} \label{eq:56} u \frac{\partial^{i+k} v}{\partial x^i \partial y^k} - (-1)^{i-k}v \frac{\partial^{i+k} u}{\partial x^i \partial y^k} = \frac{\partial P}{\partial x} - (-1)^{i - k} \frac{\partial Q}{\partial y} \end{equation}
\\
where $P$ and $Q$ contain the $u$ and $v$ partial derivatives up to the $(i+k-1)^{\rm th}$ order. We have

\begin{equation} \label{eq:57} u L[z] - z L^*[u] = \sum\limits_{i,k=0}^n \bigg{(} u A_{ik} \frac{\partial^{i+k} z}{\partial x^i \partial y^k} - (-1)^{i-k}z \frac{\partial^{i+k}(A_{ik}u)}{\partial x^i \partial y^k}\bigg{)} \end{equation}
and it is possible to use the identity ($\ref{eq:52}$), by replacing $u$ with $A_{ik}u$ and $v$ with $z$, to recognize that the right-hand side of the previous equation can be written as $\frac{\partial M}{\partial x} + \frac{\partial N}{\partial y}$, where $M$ and $N$ contain the derivatives up to the order $(n-1)$.
Let us now consider the double integral
\begin{equation} \label{eq:58} \int \int dx dy (u L[z] - z L^*[u]) = \int \int dx dy \bigg{(} \frac{\partial M}{\partial x} + \frac{\partial N}{\partial y}\bigg{)} \end{equation}
extended to a plane's area $A$ which we suppose to be simply connected and bounded in $S$; this double integral has the same value of the simple integral $\int ( M dy - N dx)$ extended to the bound $S$ walked in the strict sense. Thus, Eq. ($\ref{eq:57}$) may be written in the form

\begin{equation} \label{eq:59} \int \int dx dy (u L[z] - z L^*[u]) = \int_S (M dy - N dx) \end{equation}
that is equivalent to the identity ($\ref{eq:52}$). It is possible to recognize that the indeterminacy stated above for the values of $M$ and $N$ does not affect the previous equation. Indeed, if we replace in Eq. ($\ref{eq:59}$) $M$ and $N$ with their more general values $M + \frac{\partial \theta}{\partial y}$ and $ N - \frac{\partial \theta}{\partial x}$, the right-hand side of the previous equation increases of the integral $\int_S d \theta$ which clearly vanishes everytime that $\theta$ is a finite and uniform function inside the area $A$.\\
To discuss Riemann's method, let us consider a second-order linear hyperbolic equation in two variables that can be written in two equivalent forms 
\begin{equation}\label{eq:60} L[z] \equiv \bigg{(} \frac{\partial^2}{\partial x \partial y} + a(x,y) \frac{\partial}{\partial x} + b(x,y) \frac{\partial}{\partial y} + c(x,y) \bigg{)} z = f(x,y), \end{equation}

\begin{equation} \label{eq:61} L[z] \equiv \bigg{(} \frac{\partial^2}{\partial y^2} - \frac{\partial^2}{\partial x^2} + d(x,y) \frac{\partial}{\partial x} + h(x,y) \frac{\partial}{\partial y} + e(x,y) \bigg{)} z = f(x,y), \end{equation}
where $a$, $b$, $c$, $d$, $h$, $e$ and $f$ are of a suitable differentiability class. The initial curve $C$ is taken to be nowhere tangent to a characteristic direction; the characteristics pertaining to Eq. ($\ref{eq:60}$) are straight lines parallel to the coordinate axes; the characteristics in Eq. ($\ref{eq:61}$) are the lines $x + y = const.$ and $x - y = const$.

The aim is to represent a solution $z$ at a point $P$ in terms of $f$ and of the initial data, i.e. the values taken by $z$ and one derivative of $z$ on $C$. If the initial curve degenerates into a right angle formed by the characteristics $x = \gamma$ and $y= \delta$, it is no longer possible to prescribe two conditions on the initial curve $C$, but it is necessary to consider the $\textit{characteristic initial-value problem}$, in which only the values of $u$ on $x = \gamma$ and $y=\delta$ are prescribed.

Now we choose to consider Eq. ($\ref{eq:60}$) and to use the Riemann's method which consists in multiplying the hyperbolic equation by a function $u$, integrating over a region $A$, transforming the integral by Green's formula such that $z$ appears as a factor of the integrand, then to try to determine $u$ in such a way that the required representation is obtained. This procedure is implemented by introducing the adjoint operator $L^*$, defined to give, as we have seen before, $u L[z] - zL^*[u]$, which is a divergence. 

For the hyperbolic equation in the form ($\ref{eq:60}$), the adjoint operator $L^*$ turns out to be
\begin{equation} \label{62} L^*[u] = \bigg{(} \frac{\partial^2}{\partial x \partial y} - \frac{\partial a}{\partial x} - a(x,y) \frac{\partial}{\partial x} - \frac{\partial b}{\partial y} - b(x,y) \frac{\partial}{\partial y} + c(x,y) \bigg{)} u, \end{equation}
and hence we have
\begin{equation} \label{eq:63} 
\left\{\begin{array} {l}
L[z] = \frac{\partial^2 z}{\partial x \partial y} + a \frac{\partial z}{\partial x} + b \frac{\partial z}{\partial y} + cz, \\
L^*[u] = \frac{\partial^2 u}{\partial x \partial y} - a \frac{\partial u}{\partial x} - b \frac{\partial u}{\partial y} + \bigg{(} c - \frac{\partial a}{\partial x} - \frac{\partial b}{\partial y} \bigg{)} u, \\
M = auz + \frac{1}{2} \bigg{(} u \frac{\partial z}{\partial y} - z \frac{\partial u}{\partial y} \bigg{)}, \\
N= buz + \frac{1}{2} \bigg{(} u \frac{\partial z}{\partial x} - z \frac{\partial u}{\partial x} \bigg{)}.
\end{array}\right.
\end{equation}
More precisely, the identity $ u L[z] - z L^*[u] = \frac{\partial M}{\partial x} + \frac{\partial N}{\partial y}$ reads as

\begin{equation*} \begin{split} & u L[z] - zL^*[u] = u\frac{\partial^2 z}{\partial x \partial y} + au \frac{\partial z}{\partial x}  + bu\frac{\partial z}{\partial y}  + cu z - z \frac{\partial^2 u }{\partial x \partial y} + az \frac{\partial u}{\partial x} \\
&+ bz \frac{\partial u}{\partial y} - z \bigg{(} c - \frac{\partial a}{\partial x} - \frac{\partial b}{\partial y} \bigg{)} u = \frac{\partial}{\partial x}(azu) + \frac{\partial}{\partial y}(bzu) + u \frac{\partial^2 z}{\partial x \partial y} - z \frac{\partial^2 u}{\partial x \partial y}\\
& = \frac{\partial}{\partial x} \bigg{[} azu + \frac{1}{2} \bigg{(} u \frac{\partial z}{\partial y} - z \frac{\partial u}{\partial y} \bigg{)}\bigg{]} + \frac{\partial}{\partial y} \bigg{[} bzu + \frac{1}{2} \bigg{(} u \frac{\partial z}{\partial x} - z \frac{\partial u}{\partial x} \bigg{)} \bigg{]}. \end{split}\end{equation*}
This equation can be re-expressed in the form
\begin{equation}\label{eq:64}u L [z] - z L^*[u] = \frac{\partial}{\partial y} \bigg{(} u \frac{\partial z}{\partial x} + b zu \bigg{)} - \frac{\partial}{\partial x} \bigg{(} z \frac{\partial u}{\partial y} - a z u \bigg{)}. \end{equation}
Let us suppose to take as $z$ and $u$ any integrals of the equation proposed and of its adjoint equation. The integration over a two-dimensional domain $A$ with boundary $S$ and Gauss' formula lead to
\begin{equation} \label{eq:65} - \int\int_A dx dy ( u L[z] - z L^*[u] ) = \int_S \bigg{[} \bigg{(} u \frac{\partial z}{\partial x} + bzu\bigg{)} dx + \bigg{(} z \frac{\partial u}{\partial y} - azu \bigg{)} dy \bigg{]}. \end{equation}
Now, if $L[z]=0$ and $L^*[u]=0$, the left-hand side of Eq. ($\ref{eq:65}$) will always be equal to zero and then it reads as

\begin{equation} \label{eq:66} \int_S \bigg{[}\bigg{(} u \frac{\partial z}{\partial x} + bzu \bigg{)} dx + \bigg{(} z \frac{\partial u}{\partial y} - azu \bigg{)} dy \bigg{]}=0, \end{equation} 
\\
or similarly

\begin{equation*} \int_S \big{(} M dy - N dx \big{)} =0. \end{equation*}
\\
Let $A$ be a point of the plane and $B'C'$ an arbitrary curve in this plane. If we draw from $A$ two straight lines $AB$ and $AC$ parallel to the axes which intersect the curve, and suppose that the integrals $z$, $u$, as well as the coefficients of the proposed equation and their derivatives, are finite and continuous inside the area $ABC$. By integration of the previous equation along the path $ACBA$, in fig.($\ref{fig:1}$), we have:
\begin{equation} \label{eq:67} \int\limits_A^C M dy + \int\limits_C^B (M dy - N dx) - \int\limits_B^A N dx = 0 \end{equation}
where 
\begin{equation} \label{eq:68} \int\limits_A^C M dy = \int\limits_A^C \bigg{[} \frac{1}{2} \frac{\partial (uz)}{\partial y} dy - z \bigg{(} \frac{\partial u}{\partial y} - au \bigg{)}dy \bigg{]} \end{equation}
\begin{equation} \label{eq:69} \int\limits_A^B N dx = \int\limits_A^B \bigg{[} \frac{1}{2} \frac{\partial (uz)}{\partial x} dx - z \bigg{(} \frac{\partial u}{\partial x} - bu \bigg{)}dx \bigg{]} \end{equation}

\begin{figure}

\centering

\includegraphics{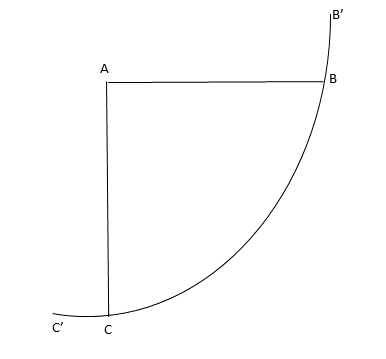}

\caption{}\label{fig:1}

\end{figure}
Then, if we denote by $\varphi_P$ the value of a function $\varphi$ at $P$, the previous equations read as
\begin{equation} \label{eq:70} \int\limits_A^C M dy = \frac{ (uz)_C - (uz)_A}{2} - \int\limits_A^C z \bigg{(} \frac{\partial u}{\partial y} - au \bigg{)} dy \end{equation}
\begin{equation} \label{eq:71} \int\limits_A^B N dx = \frac{ (uz)_B - (uz)_A}{2} - \int\limits_A^B z \bigg{(} \frac{\partial u}{\partial x} - bu \bigg{)} dx \end{equation}
If we insert Eqs. ($\ref{eq:70}$) and ($\ref{eq:71}$) inside Eq. ($\ref{eq:67}$), we have
\begin{equation} \label{eq:72} \begin{split} & (uz)_A =  \frac{(uz)_B + (uz)_C}{2} - \int\limits_B^C (M dy - N dx)\\
& - \int\limits_A^B z \bigg{(} \frac{\partial u}{\partial x} - bu \bigg{)} dx  - \int\limits_A^C z \bigg{(} \frac{\partial u}{\partial y} - au \bigg{)} dy \end{split} \end{equation}

Let us study first the right-hand side of the previous equation. Our aim is to determine, using Riemann's method, the solution $z$ of the partial differential equation proposed, which assumes given values, as well as one of its two derivatives, along all points of the $B'C'$ curve. The equation $ dz= \frac{\partial z}{\partial x} dx + \frac{\partial z}{\partial y}dy $, applied to a path along this curve, clearly determines the two first derivatives which are not given at priori; then we can consider the two derivatives of $z$ as known at each point of the $B'C'$ curve. It follows that, if we choose the solution $u$ of the adjoint equation, the three terms $ (uz)_B$, $(uz)_C$ and $\int\limits_B^C (M dy - N dx) $, are perfectly known and depend only on the boundary conditions imposed  upon $z$. Therefore, we try to evaluate the two integrals on the righ-hand side of Eq. ($\ref{eq:72}$)  with $z_A$ as unknown. These two integrals depend, in general, from the unknown values of $z$ along the straight lines $AB$ and $AC$. To avoid these values, it is necessary that the solution $u$ has to be chosen in such a way that we have
\begin{equation} \label{eq:73} 
\left\{\begin{array} {l}
\frac{\partial u}{\partial x} - bu =0  \; \; {\rm along} \; 
 {\rm every} \; {\rm point} \; {\rm of} \; {\rm AB} \\
\frac{\partial u}{\partial y} - au=0 \; \; {\rm along} \; {\rm every} \; {\rm point} \; {\rm of} \; {\rm AC}
\end{array}\right.
\end{equation}
If these conditions hold, the fundamental equation reads as

\begin{equation} \label{eq:74} (uz)_A = \frac{(uz)_B + (uz)_C}{2} - \int\limits_C^B (N dx - M dy) \end{equation}
and it will give us the value of $z$ for each point of the plane's region $A$, depending only on the boundary conditions. Thus, we have to determine the solution $u$ of the adjoint equation in order to satify the previously stated conditions. To represent $z_A = z(A)= z(\xi,\eta)$, we choose as $u$ a two-point function or $\textit{kernel}$ $R(x,y;\xi, \eta)$, where $x$ and $y$ are the coordinates of any point while $\xi$ and $\eta$ are the coordinates of $A$, subject to the following conditions:

\begin{description} 
\item[(i)] As a functon of $x$ and $y$, $R$ satisfies the homogeneous equation
\begin{equation} \label{eq:75} L^*_{(x,y)}[R]=0. \end{equation}
\item[(ii)] $\frac{\partial R}{\partial x} = bR$ on the segment $AB$ parallel to the $x$-axis, and $\frac{\partial R}{\partial y} = aR$ on the segment $AC$ parallel to the $y$-axis. 

More precisely, one has to write
\begin{equation} \label{eq:76} \frac{\partial}{\partial x} R(x,y; \xi,\eta) - b(x,\eta)R(x,y;\xi,\eta) = 0 \hspace{1mm} \; {\rm on} \; {\rm  y = \eta } \end{equation}
\begin{equation} \label{eq:77} \frac{\partial}{\partial y} R(x,y; \xi,\eta) - a(\xi,y)R(x,y;\xi,\eta) = 0 \hspace{1mm} \; {\rm on} \; {\rm  x = \xi} \end{equation}

\item[(iii)]The kernel $R$ equals 1 at coinciding points, i.e.
\begin{equation} \label{eq:78} R(\xi,\eta;\xi,\eta)=1. \end{equation}
\end{description} 
Note that Eqs.($\ref{eq:76}$) and ($\ref{eq:77}$) reduce to ordinary differential equations for the kernel $R$ along the characteristics. The integration of Eq. ($\ref{eq:76}$) gives us 
\begin{equation} \label{eq:79} R(x,\eta;\xi,\eta)= u_M = u_A \exp \bigg{(}\int\limits_A^M b(\lambda,\eta) d\lambda \bigg{)} \end{equation}
\\
for every point $M$ along $AB$. In the same manner, we can integrate Eq. ($\ref{eq:77}$) and obtain
\begin{equation} \label{eq:80} R(\xi,y;\xi,\eta)= u_M = u_A \exp \bigg{(}\int\limits_A^M a(\lambda, \xi) d\lambda \bigg{)} \end{equation}
\\
for every point $M$ along $AC$. To fix the constant of integration $u_A$ to 1, we exploit Eq. ($\ref{eq:78}$). Therefore, we have

\begin{equation} \label{eq:81} R(x,\eta;\xi,\eta)= u_M = \exp \bigg{(}\int\limits_\xi^x b(\lambda, \eta) d \lambda \bigg{)}, \end{equation}
\begin{equation} \label{eq:82} R(\xi,y;\xi,\eta) = u_M = \exp \bigg{(}\int\limits_\eta^y a(\lambda, \xi) d \lambda \bigg{)}. \end{equation}
The formulae provide the value of $R$ along the characteristics passing through the point $A(\xi,\eta)$. The problem of finding a solution $R$ of Eq. ($\ref{eq:75}$) with data ($\ref{eq:81}$) and ($\ref{eq:82}$) is said to be a $\textit{characteristic initial-value problem}$. Riemann did not actually prove the existence of such a solution, but brought this novel perspective in mathematics, i.e. solving hyperbolic equations by finding kernel functions that obey characteristic initial-value problems. In the case under examination, the desired Riemann's representation formula can be written as
\begin{equation} \begin{split} \label{eq:83}   z_A = &\frac{z_B R(B;\xi,\eta) + z_C R(C; \xi, \eta)}{2} + \int\limits_{B}^C \bigg{(}\bigg{[} bRz + \frac{1}{2} \bigg{(} R \frac{\partial z}{\partial x} - \frac{\partial R}{\partial x} z \bigg{)}\bigg{]}dx \\
& - \bigg{[}aRz + \frac{1}{2} \bigg{(} R \frac{\partial z}{\partial y} - \frac{\partial R}{\partial y} z \bigg{)} \bigg{]}dy \bigg{)} =  \frac{z_B R(B;\xi,\eta) + z_C R(C; \xi, \eta)}{2} \\
&+ \int\limits_{B}^C (N dx - M dy). \end{split} \end{equation}
This is the fundamental result established by Riemann. He found the function $u$, which is the solution of an equation that is in fact $E(\beta,\beta')$ \cite{Friedlander:2010eqa}. Now, we want to make an observation about the previous results.\\
Let us suppose that the curve $BC$ reduces itself to two straight lines parallel to the axes $B'D$ and $DC'$, in fig. ($\ref{fig:2}$), and let $x_1$ and $y_1$ be the coordinates of a point $D$. We have 
\begin{equation} \label{eq:84} \int\limits_C^D (Ndx - M dy) = \int\limits_C^D N dx - \int\limits_D^B M dy. \end{equation}
\begin{figure}

\centering

\includegraphics{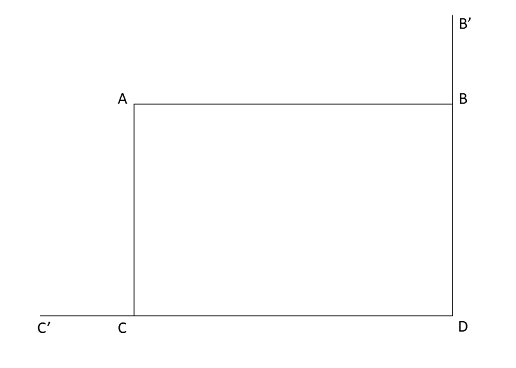}

\caption{}\label{fig:2}

\end{figure}
For this purpose, the right-hand side of Eq.($\ref{eq:84}$) can be replaced by
\begin{equation} \label{eq:85} \int\limits_C^D N dx = \int\limits_C^D \bigg{[}\frac{1}{2} \bigg{(} u \frac{\partial z}{\partial x} - z \frac{\partial u}{\partial x} \bigg{)} + buz \bigg{]} dx = \int\limits_C^D \bigg{[}- \frac{1}{2} \frac{\partial (uz)}{\partial x} + u \bigg{(} \frac{\partial z}{\partial x} + bz \bigg{)} \bigg{]} dx \end{equation} 
and then
\begin{equation} \label{eq:86} \int\limits_C^D N dx = \frac{(uz)_C - (uz)_D}{2} + \int\limits_C^D u \bigg{(} \frac{\partial z}{\partial x} + bz\bigg{)}dx \end{equation}
In the same manner we have
\begin{equation} \label{eq:87} - \int\limits_D^B M dy =  \frac{(uz)_B - (uz)_D}{2} + \int\limits_B^D u \bigg{(} \frac{\partial z}{\partial y} + az\bigg{)}dy \end{equation}
On inserting ($\ref{eq:86}$) and ($\ref{eq:87}$) inside Eq. ($\ref{eq:74}$) and Eq. ($\ref{eq:84}$), we then have
\begin{equation} \label{eq:88} z_A = (uz)_D - \int\limits_C^D u \bigg{(} \frac{\partial z}{\partial x} + bz \bigg{)} dx - \int\limits_B^D u \bigg{(} \frac{\partial z}{\partial y} + az \bigg{)} dy. \end{equation}
This formula holds for every solution $z$ of the proposed equation. It gives the most general analogy with Eq. ($\ref{eq:74}$), but it differs from it by an essential property. We can recognize immediately that it is still not necessary to specify one of the $z$ derivatives on the path $C'DB'$. In order to evaluate the two integrals inside Eq. ($\ref{eq:88}$) it is sufficient to know the values of the solution over the straight lines $C'D$ and $DB'$. We have to find the underlying reason for this result, in this case where the new contour consists of the characteristics of the proposed linear equation.\\
Let us suppose now to take as $z$ a particular solution $z(x,y;x_1,y_1)$ of the proposed equation which can be determined by the same conditions of $u(x,y;x_0,y_0)$ considered as the solution of the adjoint equation. When we pass from the equation to its adjoint, there is a sign change for the coefficients $a$ and $b$ and the solution becomes
\begin{equation} \label{eq:89} z = \exp\bigg{(}-\int\limits_{x_1}^x b d \lambda\bigg{)} \; {\rm for} \hspace{0.5cm} y=y_1; \hspace{1cm} z =  \exp \bigg{(}-\int\limits_{y_1}^y a d \lambda \bigg{)} \; {\rm for} \hspace{0.5cm} x=x_1 \end{equation}
and consequently $z=1$ when $x=x_1$ and $y=y_1$.

Hence we will have
\begin{equation} \label{eq:90} 
\left\{\begin{array} {l}
\frac{\partial z}{\partial x} + bz =0  \; \; {\rm along} \; {\rm every} \; {\rm point} \; {\rm of} \; {\rm CD}, \\
\frac{\partial z}{\partial y} + az=0 \; \; {\rm along} \; {\rm every} \; {\rm point} \; {\rm of} \; {\rm BD},\\
z=1 \; \; {\rm for} \; {\rm the} \; {\rm point} \; {\rm D}.
\end{array}\right. \end{equation}
Then the equation
\begin{equation} \label{eq:91} z_A = (uz)_D - \int\limits_C^D u \bigg{(} \frac{\partial z}{\partial x} + bz \bigg{)} dx - \int\limits_B^D u\bigg{(} \frac{\partial z}{\partial y} + az \bigg{)} dy \end{equation}
reduces to $z_A=u_D$, i.e. $z(x_0,y_0; x_1,y_1)=u(x_1,y_1;x_0,y_0)$. 

The solution $u(x,y;x_0,y_0)$ of the adjoint equation can be considered as a function of the parameters $x_0$, $y_0$; it is a solution of the originary equation, where we have replaced $x$, $y$, with $x_0$, $y_0$, and it has in relation to this equation and the variable $x_0$, $y_0$, the property for which it has been defined as solution of the adjoint equation and a function of $x$, $y$. In other words, the definition of $u$ is still the same if we replace the linear equation with its adjoint, provided we replace the variables $x$, $y$, with $x_0$, $y_0$. It follows that the integration of two linear equations, the proposed equation and its adjoint, is reduced to the determination of the function $u(x,y;x_0,y_0)$, i.e. of $R$. This function can be defined, both as solution of the proposed equation and as a solution of the adjoint equation, by the boundary conditions to which this function is subjected. \\
Let us apply this general proposition to the equation

\begin{equation} \label{eq:92} E(\beta,\beta')=  \frac{\partial^2 z}{\partial x \partial y} - \frac{\beta'}{(x-y)} \frac{\partial z}{\partial x} + \frac{\beta}{(x-y)}\frac{\partial z}{\partial y}=0 \end{equation}
\\
and we try to define the function $u(x,y;x_0,y_0)$ associated with this equation, considered as a solution of the adjoint equation subjected to the previously stated conditions. The adjoint equation to $E(\beta,\beta')$ reads as
\begin{equation} \label{eq:93} \frac{\partial^2 u}{\partial x \partial y} + \frac{\beta'}{(x-y)} \frac{\partial u}{\partial x} - \frac{\beta}{(x-y)} \frac{\partial u}{\partial y} - \frac{\beta + \beta'} {(x-y)^2} u =0. \end{equation}
If we define
\begin{equation} \label{eq:94} u= (x-y)^{\beta + \beta'}\nu, \end{equation}
the equation ($\ref{eq:93}$) becomes
\begin{equation} \label{eq:95} \frac{\partial^2 \nu}{\partial x \partial y} -  \frac{\beta}{(x-y)} \frac{\partial \nu}{\partial x} + \frac{\beta'}{(x-y)} \frac{\partial \nu}{\partial y}=0. \end{equation}
A solution to this equation can be represented as $Z(\beta',\beta)$. We then have
\begin{equation}\label{eq:96} u \equiv (x-y)^{\beta + \beta'} Z(\beta',\beta). \end{equation}
Among the particular solutions $Z$, there exist many general properties that can be derived from the solutions to the homogeneous equation. We have that
\begin{equation} \label{eq:97} x^\lambda F\bigg{(}-\lambda,\beta; 1- \beta - \lambda, \frac{y}{x}\bigg{)} \end{equation}
is a solution of the equation $E(\beta,\beta')$. If we interchange $\beta$ and $\beta'$, the expression $\nu= x^\lambda F(-\lambda, \beta; 1-\beta'-\lambda,\frac{y}{x})$ will be a particular solution of ($\ref{eq:95})$ which contains only a constant $\lambda$; but we can introduce two new ones. We can make on the variables $x$ and $y$ the linear substitution
\begin{equation} \label{eq:98} x \rightarrow \frac{x- y_0}{x - x_0}; \hspace{2cm} y \rightarrow \frac{y- y_0}{y - x_0} \end{equation}
provided we multiply by a factor $(x- x_0)^{-\beta'}(y-y_0)^{-\beta}$. Hence, we obtain the most general formula
\begin{equation} \label{eq:99} \nu = (y_0 - x)^\lambda (x-x_0)^{-\beta' - \lambda} (y-x_0)^{-\beta} F(-\lambda, \beta; 1-\beta'-\lambda,\sigma),\end{equation}
where $\sigma = \frac{(x-x_0) (y-y_0)}{(x-y_0)(y-x_0)}$. It is enough to multiply by $(y-x)^{\beta+\beta'}$ to find $u$ as
\begin{equation}\label{eq:100} u =(y_0 - x)^\lambda (x_0 - x)^{-\beta' - \lambda} (y-x)^{\beta+\beta'}(y-x_0)^{-\beta} F(-\lambda,\beta;1-\beta'-\lambda,x), \end{equation}
that leads to the expected result. If we set $ x=x_0$ in the previous result, $\sigma$ vanishes, the $F$ series reduces to unity, while $(x_0-x)^{-\lambda - \beta'}$ makes $u$ equal to zero or makes it infinite, unless $\lambda= - \beta'$. If $\lambda$ takes this value, $u$ reads as

\begin{equation}\label{eq:101} u= (y_0 - x)^{-\beta'} (y-x)^{\beta+ \beta'} (y-x_0)^{-\beta}F(\beta,\beta';1,\sigma); \end{equation}
if $x=x_0$, we have $u=\bigg{(} \frac{y-x_0}{y_0-x_0 } \bigg{)}^{\beta'}$; whereas if $y=y_0$, we obtain $\sigma=0$ and $u= \bigg{(} \frac{y_0-x}{y_0-x_0 } \bigg{)}^{\beta}$. Thus, Eq. ($\ref{eq:101}$) is the expected solution. In this manner, Riemann's method applied to $E(\beta,\beta')$ makes it possible to determine its integrals with the most general boundary conditions. It will be enough to insert the value of $u$ inside Eq. ($\ref{eq:74}$) or Eq. ($\ref{eq:83}$) to find the general integral of the equation. For example, if we replace $u$ in Eq. ($\ref{eq:83}$), $z$ reads as
\begin{equation} \label{eq:102} z_{x_0,y_0} = (uz)_{x_1,y_1} + \int\limits_{x_0}^{x_1} u_{x,y_1} f(x) dx + \int\limits_{y_0}^{y_1} u_{x_1,y}, \varphi(y) dy \end{equation} 
where $f$ and $\varphi$ are two arbitrary functions which depend on the boundary value of $z$ and $\Phi_{\alpha.\beta}$ represent the result of replacing $x$ and $y$ with $\alpha$ and $\beta$ inside $\Phi(x,y)$.

\section{Proof of the existence of Riemann's Kernel}
There is nothing left to do but to demonstrate that the Poisson and M. Appell formula \cite{darboux1894leccons}
\begin{equation}\begin{split} \label{eq:103} Z(\beta, \beta')=& \int_x^y \varphi(u) (u - x)^{- \beta} (y - u)^{- \beta'} du \\
&+ (y-x)^{1- \beta - \beta'} \int_x^y \varphi(u) (u-x)^{\beta' - 1}(y-u)^{\beta -1} du \end{split}\end{equation}
effectively provides all the integrals of the proposed equation. Following the work of Darboux \cite{darboux1894leccons}, we will first present the following observation on this integral.
The equation ($\ref{eq:92}$) has its coefficients finite and continuous as long as $x$ is different from $y$. If we bisect the angle $yox$, as shown in fig. ($\ref{fig:3}$), we can say that this line is a line of discontinuity for the previous equation, and hence that the coefficients of the equation remain always finite and continuous as long as we remain on the same side of this line. 
\begin{figure}

\centering

\includegraphics{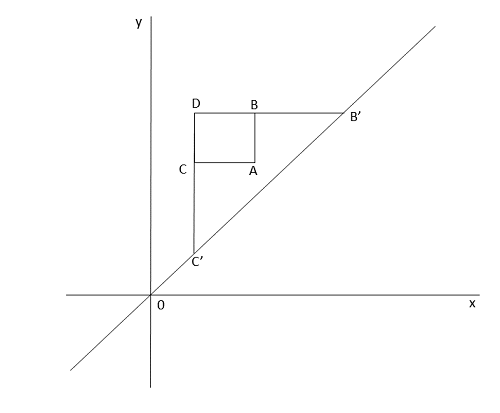}

\caption{}\label{fig:3}

\end{figure}

Let us see what happens to the Poisson integral when $y$ approaches $x$. If we revert to the form

\begin{equation} \label{eq:104} \frac{\partial^2 z}{\partial x \partial y} = \frac{M}{(1-x-y)} \frac{\partial z}{\partial x} + \frac{N}{(1-x-y)} \frac{\partial z}{\partial y} + \frac{P}{(1-x-y)^2}z \end{equation}
the first term on the right-hand side of ($\ref{eq:104}$) has principal part
\begin{equation} \label{eq:105}  (y-x)^{t- \beta-\beta'} \int\limits_0^1 \varphi(x) t^{-\beta}(1-t)^{-\beta'} dt = \frac{ \Gamma(1-\beta) \Gamma(1-\beta')}{\Gamma(2- \beta -\beta')} \varphi(x) (y-x)^{1-\beta-\beta'}. \end{equation}
This approximate value can be seen as the first term of the expansion in powers of $(y-x)$, the unwritten terms are of a higher degree. Likewise, the approximate expression of the second integral on the right-hand side of Eq. ($\ref{eq:104}$) will read as
\begin{equation} \label{eq:106} \psi(x) \int\limits_0^1 t^{\beta'-1} (1-t)^{\beta-1} dt = \frac{ \Gamma(\beta)\Gamma(\beta') }{\Gamma(\beta+\beta')} \psi(x). \end{equation}
It follows that, for any solution provided by the Poisson formula, the expansion according to powers of $(y-x)$ consists of two series of terms, one of integer degree and the other of degree $(1- \beta - \beta')$ increased by an integer; and limiting the expansion to the first term of each of the two series. The integral will have the approximate expression
\begin{equation} \label{eq:107} z = \frac{ \Gamma(1- \beta)\Gamma(1-\beta')}{\Gamma(2-\beta- \beta')} \varphi(x) (y-x)^{1-\beta-\beta'} + \frac{\Gamma(\beta)\Gamma(\beta')}{\Gamma(\beta+\beta')} \psi(x). \end{equation}
This formula will tell us the path that must be followed in order to verify that all solutions of the equation are given by the Poisson formula. We will first try to establish that, in the neighborhood of the discontinuity line, the solution sought is of the form

\begin{equation} \label{eq:108} \varphi_1(x) (y-x)^{1-\beta-\beta'} + \psi_1(x) \end{equation}
\\
The comparison of this form with the previous one will allow us to connect the functions $\varphi$ and $\psi$ that must appear in the Poisson formula; and all that will remain is to verify an equation or nothing will remain unknown.\\
Let us apply this method to the general integral as given by the formula
\begin{equation} \label{eq:109} z_{x_0,y_0} = (uz)_{x_1,y_1} + \int\limits_{x_0}^{x_1}u_{x,y_1} f(x) dx + \int\limits_{y_0}^{y_1} u_{x_1,y} \varphi(y) dy. \end{equation} 
The third term on the right-hand side of Eq. ($\ref{eq:109}$)  will be deduced from the second upon interchanging $x$ with $y$. We can verify that the first two terms on the right-hand side are given by the Poisson formula. To be clearer, let us suppose that we are on the discontinuity line, as in figure ($\ref{fig:3}$); $x_1$ and $y_1$ are the coordinates of $D$, while $x_0$, $y_0$ are those of $A$, and we have $x_1 <x_0<y_0$, where $x_0$ and $y_0$ are the independent variables. The first term of the previous integral is $u(x_1,y_1;x_0,y_0)$ multiplied by the constant $z_{x_1,y_1}$. We must, therefore, first check that the expression $u(x_1,y_1;x_0,y_0)$ considered as a function of the variables $x_0$, $y_0$ verifies the proposed equation and that it is given by the Poisson formula. In accordance with the general method that we are going to indicate, it will therefore be necessary to obtain first its approximation when $(y_0-x_0)$ becomes infinitely small.

If we refer to the expression of $u$
\begin{equation}\label{eq:110} u =(y_0-x)^{-\beta'} (y-x)^{\beta+\beta'}(y-x_0)^{-\beta}F(\beta,\beta';1,\sigma) \end{equation}
and to the definition of $\sigma$, we see that we will have
\begin{equation}\label{eq:111} 1 - \sigma = \frac{(y-x)(y_0-x_0)}{(y-x_0)(y_0-x)} \end{equation}
and then $(1-\sigma)$ is of the same order of $(y_0-x_0)$ and we are led to expand $F(\beta,\beta';1,\sigma)$ according to the powers of $(1-\sigma)$. For this we will borrow the theory of the hypergeometric series:
\begin{equation} \begin{split} \label{eq:112} &F(\beta,\beta',1,\sigma)= \frac{\Gamma(1-\beta-\beta')}{\Gamma(1-\beta)\Gamma(1-\beta')}F(\beta,\beta',\beta+\beta',1- \sigma) \\
&+ \frac{\Gamma(\beta+\beta'-1)}{\Gamma(\beta)\Gamma(\beta')} F(1-\beta,1-\beta',2-\beta-\beta',1-\sigma)(1-\sigma)^{1-\beta-\beta'}. \end{split} \end{equation}
If we bring this value of $F(\beta,\beta',1,\sigma)$ into the formula of $u$, we immediately deduce the approximate expression of $u$, when $y_0$ is approaching $x_0$. It is sufficient to bring the series $F$ back to the unit and we will find therefore for the first two terms of $u$
\begin{equation} \begin{split} \label{eq:113} u =& \frac{\Gamma(1-\beta-\beta')}{\Gamma(1-\beta)\Gamma(1-\beta')} (x_0-x)^{-\beta'} (y-x)^{\beta+\beta'}(y-x_0)^{-\beta} \\
 &+\frac{\Gamma(\beta+\beta'-1)}{\Gamma(\beta)\Gamma(\beta')}(x_0-x)^{\beta-1} (y-x)(y-x_0)^{\beta'-1}(y_0-x_0)^{1-\beta-\beta'}. \end{split}\end{equation}
From the comparison of this formula with the equation for $z$, where we have replaced $x$ and $y$ with $x_0$ and $y_0$, we immediately obtain the two functions that must occur in the Poisson formula. We then find

\begin{equation} \label{eq:114} 
\left\{\begin{array} {l}
\varphi(\alpha)=-A(\alpha - x)^{\beta-1}(y-x)(y-\alpha)^{\beta'-1}, \\
\psi(\alpha)=A(\alpha-x)^{-\beta'}(y-x)^{\beta+\beta'}(y-\alpha)^{-\beta},
\end{array}\right. \end{equation}
\\
where $A$ denotes the constant
\begin{equation} \label{eq:115} A = \frac{\Gamma(1-\beta-\beta')\Gamma(\beta+\beta')}{\Gamma(\beta)\Gamma(1-\beta)\Gamma(\beta')\Gamma(1-\beta')} = \frac{sin(\beta\pi)sin(\beta'\pi)}{\pi sin(\beta+\beta')\pi}. \end{equation}
If we replace these values in the Poisson formula we find the following result:
\begin{equation}\label{eq:116} \begin{split} u =& A(y-x)^{\beta+\beta'}(y_0-x_0)^{1-\beta-\beta'} \times \\
&\int\limits_{x_0}^{y_0} (\alpha-x)^{-\beta'}(y-\alpha)^{-\beta}(y_0-\alpha)^{\beta-1}( \alpha -x_0)^{\beta'-1}d \alpha  \\
&- A(y-x) \int\limits_{x_0}^{y_0}(\alpha - x)^{\beta-1}(y-\alpha)^{\beta'-1}(y_0-\alpha)^{-\beta'}(\alpha-x_0)^{-\beta}d\alpha; \end{split}\end{equation}
and we must at most verify the agreement of this expression with that given by the expression of $z$. We verify this as follows. 

We match the two expressions of $u$ and the equation to be verified will take the form
\begin{equation}\label{eq:117} \begin{split} F(\beta,\beta',1,\sigma)=& A(y_0-x_0)^{1-\beta-\beta'}(y_0-x)^{\beta'}(y-x_0)^{\beta}  \\
& \times \int\limits_{x_0}^{y_0}(\alpha-x)^{-\beta'}(y-\alpha)^{-\beta}(y_0-\alpha)^{\beta-1} (\alpha - x_0)^{\beta'-1}d\alpha \\
& -A(y-x)^{1-\beta-\beta'}(y_0 - x)^{\beta'}(y-x_0)^{\beta} \\
& \times \int\limits_{x_0}^{y_0} (\alpha - x)^{\beta -1} (y-\alpha)^{\beta'-1}(y_0 - \alpha)^{-\beta'}(\alpha- x_0)^{-\beta}d\alpha; \end{split}\end{equation}
and we note that the two terms on the right-hand side remain formally unaffected if we make the same linear substitution on $\alpha$, $x$, $ y$, $x_0$, and $y_0$. We choose the coefficients of this substitution in such a way that $x$, $ x_0$, $y_0$, reduce to infinity, 0 and 1, respectively. Then $y$ will reduce to $\frac{1}{1-\sigma}$, $\sigma$ is the one defined previously as a harmonic ratio. The right-hand side of the equation to verify becomes
\begin{equation} \label{eq:118} \begin{split} & A\int\limits_0^1 [1-\alpha(1-\sigma)]^{-\beta}(1-\alpha)^{\beta-1}\alpha^{\beta'-1}d\alpha \\
& - A(1-\sigma)^{1-\beta-\beta'}\int\limits_0^1[1-\alpha(1-\sigma)]^{\beta'-1}(1-\alpha)^{-\beta'}\alpha^{-\beta}d\alpha. \end{split}\end{equation}
From a well-known formula of Euler, the two previous integrals are expressed through the hypergeometric series and we find the two terms on the right-hand side of the identity $(\ref{eq:112})$ for $F(\beta,\beta',1,\sigma)$. Hence, the equality is verified.\\
Let us now consider the term
\begin{equation}\label{eq:119} \int\limits_{x_0}^{x_1} u_{x,y_1}f(x)dx \end{equation}
of the Riemann integral. We wrote the approximate expression of $u_{x,y_1}$ when $(x_0-y_0)$ approaches zero. If we replace it in the previous integral, we have the same approximate expression of the integral
\begin{equation}\label{eq:120} \begin{split}& \frac{\Gamma(1-\beta-\beta')}{\Gamma(1-\beta)\Gamma(1-\beta')}\int\limits_{x_0}^{x_1} (x_0-x)^{-\beta'}(y_1-x)^{\beta+\beta'}(y_1-x_0)^{-\beta}f(x)dx \\
&+ \frac{\Gamma(\beta+\beta'-1)}{\Gamma(\beta)\Gamma(\beta')}(y_0-x_0)^{1-\beta-\beta'}\int\limits_{x_0}^{x_1}(x_0-x)^{\beta-1}(y_1-x)(y_1-x_0)^{\beta'-1}f(x)dx. \end{split}\end{equation}
The comparison of $(\ref{eq:120})$ with the expression of $z$ gives the two functions that must occur in the Poisson formula. We find therefore
\begin{equation} \label{eq:121} 
\left\{\begin{array} {l}
\varphi(\alpha)=-A\int\limits_\alpha^{x_1}(\alpha - x)^{\beta-1}(y_1-x)(y_1-\alpha)^{\beta'-1}f(x)dx, \\
\psi(\alpha)=A\int\limits_\alpha^{x_1}(\alpha-x)^{-\beta'}(y_1-x)^{\beta+\beta'}(y_1-\alpha)^{-\beta}f(x)dx,
\end{array}\right. \end{equation}
where $A$ is the constant that we have previously defined. It is sufficient to verify that, by introducing these values into the Poisson integral, we find the term ($\ref{eq:119}$). The substitution of the values $(\ref{eq:121})$ gives two terms that are both of the form
\begin{equation}\label{eq:122} \int\limits_{x_0}^{y_0} d\alpha\int\limits_\alpha^{\alpha_1}P dx, \end{equation}
where $x_1<x_0<\alpha<y_0<y_1$. The integration variable $x$, which lies between $\alpha$ and $x_1$, can be either smaller or bigger than $x_0$. We can, therefore, decompose the previous integral
\begin{equation} \label{eq:123} \int\limits_{x_0}^{y_0}d\alpha \int\limits_{\alpha}^{x_0} P dx + \int\limits_{x_0}^{y_0}d\alpha\int\limits_{x_0}^{x_1}P dx. \end{equation}
For the first term, the order of magnitude of the variables will be defined by the inequalities $x_1<x_0<x<\alpha<y_0<y_1$. We can therefore invert the order of integration, that will give us $ - \int\limits_{x_0}^{y_0}dx\int\limits_{x}^{y_0}Pd\alpha$. For the second term, we have $x_1<x<x_0<\alpha<y_0<y_1$ and we can then write $\int\limits_{x_0}^{x_1}dx \int\limits_{x_0}^{y_0}P d\alpha$.
 
If we apply these transformations to the two terms that make up the Poisson integral, we have the following result
\begin{equation}\label{eq:124}\begin{split}& - A \int\limits_{x_0}^{x_1}f(x)(y_1-x)dx\int\limits_{x_0}^{y_0}(y_0-\alpha)^{-\beta'}(\alpha-x_0)^{-\beta}(\alpha-x)^{\beta-1}(y_1-\alpha)^{\beta'-1}d\alpha  \\
&-A \int\limits_{x_0}^{y_0}f(x)(y_1-x)dx\int\limits_{x}^{y_0}(y_0-\alpha)^{-\beta'}(\alpha-x_0)^{-\beta}(\alpha-x)^{\beta-1}(y_1-\alpha)^{\beta'-1}d\alpha  \\
&-A \int\limits_{x_0}^{x_1}f(x)(y_0-x_0)^{1-\beta-\beta'}(y_1-x)^{\beta+\beta'}dx \times \\
& \times \int\limits_{x_0}^{y_0}(y_0-\alpha)^{\beta-1}(\alpha-x_0)^{\beta'-1}(\alpha-x)^{-\beta'} (y_1-\alpha)^{-\beta}d\alpha \\
&-A \int\limits_{x_0}^{y_0}f(x)(y_0-x_0)^{1-\beta-\beta'}(y_1-x)^{\beta-\beta'}dx\int\limits_{x}^{x_0}(y_0-\alpha)^{\beta-1} \times\\
& \times (\alpha-x_0)^{\beta'-1}(\alpha-x)^{-\beta'}(y_1-\alpha)^{\beta}d\alpha \end{split}\end{equation}
The first and third terms in ($\ref{eq:124}$) represent the expression ($\ref{eq:119}$). In order to recognize them, it is enough to refer to the expression of $u$; while as far as the second and fourth terms are concerned, their sum vanishes by virtue of the equation
\begin{equation}\label{eq:125} \begin{split} &(y_1-x)^{1-\beta-\beta'}\int\limits_{x}^{y_0} (y_0-\alpha)^{-\beta'}(\alpha-x_0)^{-\beta}(\alpha - x)^{\beta-1}(y_1-\alpha)^{\beta'-1} d\alpha \\
&= (y_0-x_0)^{1-\beta-\beta'}\int\limits_{x}^{y_0}(y_0-\alpha)^{\beta -1}(\alpha -x_0)^{\beta' -1}(\alpha -x)^{-\beta'}(y_1-\alpha)^{-\beta}d\alpha, \end{split}\end{equation}
that we will verify as follows. We will perform on the variable of the first integral the linear substitution for which $y_0$, $x_0$, $y_1$ and $x$ are turned into $x$, $y_1$, $x_0$ and $y_0$ and will find the second integral. \\
The Poisson formula contains two arbitrary functions $\varphi(\alpha)$ and $\psi(\alpha)$.  Suppose that we know these functions only for $\alpha_0 < \alpha <\alpha_1$; the general integral can be determined only for the values of $x$ and $y$ lying between these values of $\alpha$. Let us assume, to fix the ideas, that $y$ is greater than $x$.  If we construct the $OC'B'$ bisector of the angle formed by the axes and the points $C'$, $B'$, of abscissa $\alpha_0$ and $\alpha_1$, the value of the integral will be known for all the points of the plane included within and on $DB'$ , $DC'$ of the $DB'C'$ triangle; but it will be impossible to determine the solution outside that triangle. We have assumed that the functions $\varphi$ and $\psi$ are determined only for $\alpha_0<\alpha<\alpha_1$. We can extend them beyond this interval in an infinite number of ways, preserving also the continuity of the derivatives up to any order, both for $\alpha=\alpha_0$, and for $\alpha=\alpha_1$. By adopting different extensions, we will have different integrals of the proposed equation that will have the same values within the $DB'C'$ triangle, whose derivatives will be the same up to any order for the points of each of the $DB'$, $DC'$ segments; but that will be different outside the triangle. Thus, an integral of the proposed equation that assumes given values on the segmets $DB'$ and $DC'$, of the $DB'C'$ triangle is well determined for the points located within the triangle. This is evident from the expression 
\begin{equation} \label{eq:126} z_{x_0,y_0} =(uz)_{x_1,y_1} + \int\limits_{x_0}^{x_1}u_{x,y_1}f(x)dx + \int\limits_{y_0}^{y_1}u_{x_1,y} \varphi(y)dy. \end{equation}
Conversely, it is not defined outside the triangle. It can take on the outside of the triangle an infinity of values that we can define in a very general way, making sure to respect the continuity of $z$ and its derivatives up to any order for all the points of $DB'$ and $DC'$. It is interesting that the lines $DB'$ and $DC'$ are $\textit{characteristics}$. The general formula of Riemann shows us in fact that, if on any other curve that is a line parallel to the axes, the function is given as well as its first derivatives, it is determined on both sides of the curve. \\
The study we have made of the Riemann method, in the particular case of the equation $E(\beta,\beta')$, allows us to return to the general theory and to eliminate an objection that can be made to this theory. 

The value of $z$ is given by
\begin{equation} \label{eq:127} (uz)_A= \frac{(uz)_B -(uz)_C}{2} -\int\limits_{C}^{B} (Ndx-Mdy) \end{equation}
and satisfies the partial differential equation 
\begin{equation*}  \frac{\partial^2 z}{\partial x \partial y} + a \frac{\partial z}{\partial x} + b \frac{\partial z}{\partial y} + cz =0. \end{equation*}
It also satisfies the boundary conditions that have been set a priori, but it can be objected that the existence of the function $u$, on which all our reasoning is based and that we have determined in the particular case of $E(\beta,\beta')$, is not established for the more general equations. It is possible to raise this objection at least for the specific case in which the coefficients $a$, $b$ and $c$ of the linear equation are finite and continuous functions, for the consequent series expansion. The function $u$, considered as a solution to the adjoint equation, must reduces, for $y=y_0$, to a function given by $x$, $exp \bigg{(}\int\limits_{x_0}^{x}b dx\bigg{)}$, and, for $x=x_0$, to a function given by $y$, $exp \bigg{(}\int\limits_{y_0}^{y}a dy\bigg{)}$. The functions $a$ and $b$ can be expanded in series according to the powers of $(x-x_0)$ and $(y-y_0)$. Thus it will be enough to admit the existence of the function $u$ to establish the following general proposition:

\begin{prop} Given the linear equation 
\begin{equation} \label{eq:128} \frac{\partial^2 z}{\partial x \partial y} + a \frac{\partial z}{\partial x} + b \frac{\partial z}{\partial y} + cz=0,\end{equation}
where the coefficients $a$, $b$, $c$ can be expanded in series ordered according to the integer and positive powers of $(x-x_0)$ and $(y-y_0)$, there exists a solution to the partial differential equation, which reduces, for $y=y_0$, to a given function $\varphi(x)$ of $x$, expandable in a series according to the powers of $(x-x_0)$ and, for $x=x_0$, to a given function $\psi(y)$ of $y$, expandable in a series according to the powers of $(y-y_0)$. \end{prop}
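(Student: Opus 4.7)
The plan is to solve this characteristic (Goursat) initial-value problem by the method of power series combined with the method of majorants, in the spirit of Cauchy--Kowalevsky. First I would shift coordinates so that $(x_{0},y_{0})=(0,0)$, assume the compatibility condition $\varphi(0)=\psi(0)$ (otherwise no single-valued $z$ exists), and write the data as convergent series
\begin{equation*}
a=\sum a_{ij}x^{i}y^{j},\qquad b=\sum b_{ij}x^{i}y^{j},\qquad c=\sum c_{ij}x^{i}y^{j},\qquad \varphi=\sum \varphi_{i}x^{i},\qquad \psi=\sum \psi_{j}y^{j}.
\end{equation*}
I would then look for the solution as a formal double series $z(x,y)=\sum_{i,j\ge 0}z_{ij}x^{i}y^{j}$, with the boundary conditions forcing $z_{i0}=\varphi_{i}$ and $z_{0j}=\psi_{j}$, so that all ``edge'' coefficients are prescribed.

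The next step is to substitute this ansatz into the PDE. The key algebraic observation is that the coefficient of $x^{i-1}y^{j-1}$ in $z_{xy}$ is $ij\,z_{ij}$; hence, for every $i,j\ge 1$, the equation reads
\begin{equation*}
ij\,z_{ij}= -\sum \bigl(a_{pq}(i-p)z_{i-p,j-q-1}+b_{pq}(j-q)z_{i-p-1,j-q}+c_{pq}z_{i-p-1,j-q-1}\bigr),
\end{equation*}
where the sum is over indices with $p\le i-1$, $q\le j-1$. This recursion determines $z_{ij}$ uniquely in terms of coefficients of lower total bi-index, and it involves only the prescribed edge values plus previously computed interior coefficients. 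In this way the formal power series solution exists and is unique.

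The hard part, as always in Cauchy--Kowalevsky style arguments, is convergence of the formal series. The plan is to dominate $a,b,c,\varphi,\psi$ by simple majorants of the form
\begin{equation*}
\frac{M}{\bigl(1-\tfrac{x}{r}\bigr)\bigl(1-\tfrac{y}{r}\bigr)}
\end{equation*}
for suitable $M,r>0$ chosen in the common domain of convergence, and to consider the auxiliary Goursat problem
\begin{equation*}
Z_{xy}=\frac{M}{(1-x/r)(1-y/r)}\bigl(Z_{x}+Z_{y}+Z+1\bigr)
\end{equation*}
with analogous majorant data $\Phi(x)$ and $\Psi(y)$. By a straightforward induction on $i+j$, using the recursion above and the fact that all the signs can only help, one checks that $|z_{ij}|\le Z_{ij}$ where $Z_{ij}$ are the Taylor coefficients of the auxiliary solution $Z$. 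The auxiliary equation is constructed so that a solution of the form $Z(x,y)=F(x+y)+G(xy)$, or more simply the unique analytic solution obtained by reducing it to a first-order ODE along the diagonal, has a positive radius of convergence. This majorization then guarantees convergence of $\sum z_{ij}x^{i}y^{j}$ in a neighbourhood of the origin, and the sum is by construction a solution of the original PDE taking the prescribed boundary values on the two characteristics.

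The main obstacle is the majorant step: one must choose $M$, $r$, and the form of the majorizing equation carefully so that (i) the recursion inequality $|z_{ij}|\le Z_{ij}$ really propagates, and (ii) the majorant Goursat problem is explicitly solvable (or at least demonstrably analytic). The algebraic verification in step two is routine, but the majorant construction is the genuinely delicate part that underlies the existence of the Riemann kernel $u$ invoked throughout the preceding section.
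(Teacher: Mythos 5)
Your overall strategy coincides with the paper's: rescale/shift so that the data are analytic in the unit polydisc, observe that the two boundary conditions fix all pure $x$- and pure $y$-derivatives of $z$ at the corner while the equation $z_{xy}=-(az_x+bz_y+cz)$ recursively determines every mixed derivative, assemble the formal double series, and then prove convergence by majorants. Up to and including the recursion $ij\,z_{ij}=\cdots$ your argument is sound (and you are right to flag the corner compatibility $\varphi(x_0)=\psi(y_0)$, which the paper leaves implicit).

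The gap is exactly where you locate it, in the majorant step, and your proposed resolution of it does not work. You majorize the coefficients by the product $\frac{M}{(1-x/r)(1-y/r)}$ and then assert that the auxiliary Goursat problem
\begin{equation*}
Z_{xy}=\frac{M}{(1-x/r)(1-y/r)}\bigl(Z_{x}+Z_{y}+Z+1\bigr)
\end{equation*}
admits a solution of the form $F(x+y)+G(xy)$, or can be handled by ``reducing it to a first-order ODE along the diagonal.'' Neither claim survives inspection: substituting $Z=F(x+y)$ gives $F''(x+y)$ on the left but a right-hand side depending on $x$ and $y$ separately through $(1-x/r)(1-y/r)$, so the ansatz is inconsistent; the $G(xy)$ piece fails similarly; and restricting a two-variable Goursat problem to the diagonal does not determine, let alone prove analytic, its solution off the diagonal. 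So the analyticity of your majorant $Z$ is unestablished, and the whole convergence argument rests on it. The paper avoids this by a deliberate choice of majorants, namely $\frac{M}{1-x-y}$, $\frac{N}{1-x-y}$, $\frac{P}{(1-x-y)^{2}}$ for $a$, $b$, $c$ and $\frac{H}{1-x}$, $\frac{H}{1-y}$ for the data: with these the majorant equation
\begin{equation*}
\frac{\partial^{2}z}{\partial x\,\partial y}=\frac{M}{1-x-y}\frac{\partial z}{\partial x}+\frac{N}{1-x-y}\frac{\partial z}{\partial y}+\frac{P}{(1-x-y)^{2}}\,z
\end{equation*}
becomes, after the substitution $x\rightarrow 1-x$, an instance of the equation $E(\beta,\beta')$ whose solutions were constructed explicitly earlier in the chapter by the Poisson--Appell integral and the hypergeometric function, so the majorant solution is known in closed form to be analytic near the corner. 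To repair your proof you would either have to adopt majorants of this $1/(1-x-y)$ type (singular on a line, not a product of one-variable factors) and invoke the explicit solution of $E(\beta,\beta')$, or supply an independent existence-and-analyticity argument for your own auxiliary Goursat problem, which is essentially the original difficulty over again.
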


To prove this proposition, we perform the substitution $ x \rightarrow x_0 + \frac{x}{\rho}$, $y \rightarrow y_0 + \frac{y}{\sigma}$, where $\rho$ and $\sigma$ are  two constants that we will choose in such a way that the expansions of the functions $a$, $b$, $c$, $\varphi(x)$ and $\psi(y)$, that are ordered, by substitution, according to the powers of $x$ and $y$, are convergent for all the values of those variables whose modulus is less than or equal to one. Hence, we plan to determine all the derivatives of the function $z$ for $x=y=0$. 

Since $z$ must be reduced to $\varphi(x)$ for $y = 0$, this condition will determine all derivatives of $z$ in relation to the single variable $x$; in the same way, since $z$ must be reduced to $\psi(y)$ for $x = 0$, we will know all derivatives in relation to the single variable $y$; eventually, the partial differential equation will allow us to know all derivatives, depending on the previous ones, in relation to $x$ and $y$. One can, with all these derivatives, form the series expansion of the solution sought according to the powers of $x$ and $y$, and all is reduced to determining whether this expansion is convergent; because, in the affirmative case, it will satisfy both the boundary conditions and the proposed equation. 

Now, if the series for the functions $a$, $b$, $c$, $\varphi$ and $\psi$ converge in a circle of radius 1, it is always possible to find the constants $M$, $N$, $P$ and $H$ which are positive and such that the derivatives of any order of $a$, $b$, $c$, $\varphi$ and $\psi$ have modules smaller than the derivatives of the corresponding functions 
\begin{equation} \label{eq:129} \frac{M}{(1-x-y)}, \hspace{0.5cm}  \frac{N}{(1-x-y)}, \hspace{0.5cm}  \frac{P}{(1-x-y)^2}, \hspace{0.5cm}  \frac{H}{(1-x)}, \hspace{0.5cm}  \frac{H}{(1-y)}. \end{equation}
In fact, if we aim at determining the function that satisfies the equation

\begin{equation}\label{eq:130} \frac{\partial^2 z}{\partial x \partial y} =  \frac{M}{(1-x-y)}\frac{\partial z}{\partial x} + \frac{N}{(1-x-y)}\frac{\partial z}{\partial y} + \frac{P}{(1-x-y)^2}z \end{equation}
\\
which reduces to $\frac{H}{(1-x)}$, for $y=0$, and to $\frac{H}{(1-y)}$, for $x=0$. We will obtain for this function a series whose coefficients will be bigger than those related to the given equation. It will be sufficient to show that this new series is convergent for the values of $x$ and $y$ sufficiently close to zero.

The new problem to which we have arrived is already solved, in fact, if we replace in the equation $x$ with $(1-x)$, it assumes the same form of the equation $\varphi(x,y,y',...,y^n)$ and therefore it can be reduced to $E(\beta,\beta')$, for which the problem is solved. The result leads to a function that is actually expandable in series. It is therefore possible to determine a solution of the partial differential equation proposed with the boundary conditions that we have indicated and to establish the general theorem upon which relies the existence of the function $u$.

\chapter{Fundamental Solutions}
\epigraph{Natural science is the attempt to comprehend nature by precise concepts.}{Bernhard Riemann}

\section{Wavelike Propagation for a Generic Normal System}
Let us consider the two systems
\begin{equation} \label{eq:2.1} E_{\mu} = \sum\limits_{\nu =1}^{m} \sum\limits_{i=0}^{n} E^{i}_{\mu \nu} \frac{\partial \varphi_{\nu}}{\partial x^i} + \Phi_{\mu}(x|\varphi) =0, \hspace{2cm} \mu=1,2,...,m, \end{equation}
\begin{equation} \label{eq:2.2} E_{\mu}= \sum\limits_{\nu=1}^{m}\sum\limits_{i,k=0}^{n} E^{ik}_{\mu \nu} \frac{\partial^2 \varphi_\nu}{\partial x^i \partial x^k} + \Phi_{\mu}(x|\varphi|\chi) =0, \hspace{1cm} \mu=1,2,...,m, \end{equation}
following Levi-Civita \cite{levi1988caratteristiche}, we assume that, inside and outside the stratum determined by two hypersurfaces of equations
\begin{equation} \label{eq:2.3} z=c_1, \hspace{1cm} z=c_2, \end{equation}
they are satisfied by the $m$ functions $\varphi_1$, $\varphi_2$, ..., $\varphi_m$ and $\varphi^*_1$, $\varphi^*_2$, ..., $\varphi^*_m$, respectively. We assume that the stratum determined by Eq. ($\ref{eq:2.3}$) undergoes motion and possibly also bendings, and that through the hypersurfaces ($\ref{eq:2.3}$) the partial derivatives of first order for ($\ref{eq:2.1}$) and of second order for ($\ref{eq:2.2}$) undergo sudden variations (or jumps) and are therefore discontinuous therein.\\
The solutions $\varphi$ of Eq. ($\ref{eq:2.1}$) are taken to be continuous through the hypersurfaces ($\ref{eq:2.3}$), while the solutions $\varphi^*$ of Eq. ($\ref{eq:2.2}$) are taken to be continuous together with their first derivatives through the confining hypersurfaces. This describes a wavelike phenomenon, where the wave surfaces are those bounding the stratum. \\
For a system of maximal order $s$, the functions $\varphi$ and $\varphi^*$ should obey matching conditions through the wave surfaces of order less than $s$, whereas some discontinuities occur for the derivatives of order $s$. The wave surfaces turn out to be $\textit{characteristic manifolds}$, because out of them it is not possible to apply the theorem that guarantees uniqueness of the integrals.\\
Hereafter we merely assume the existence of the functions $\varphi$ and $\varphi^*$ with the associated wavelike propagation, and we describe some of their properties. If $z=c$ is a wave surface $\sigma$, the function $z$ must satisfy the equation
\begin{equation} \label{eq:2.4} \Omega(x|p)=0, \end{equation}
where the $p$ variables are given by
\begin{equation} \label{eq:2.5} p_i = \frac{\partial z}{\partial x^i}, \hspace{3cm} i=0,1,...,n. \end{equation}
The validity of Eq. ($\ref{eq:2.4}$) is indeed established only on $\sigma$, i.e. for $z=c$. However, the limitation $z=c$ is inessential, because $\Omega$ certainly vanishes whenever the $p_i$ are set equal to the derivatives of the function $z$. One therefore deals, with respect to $z$, with a partial differential equation. Such an equation can characterize $z$ by itself provided that the $E_\mu$ functions occurring in such systems depend only on the $x$ variables.\\
Now we aim at studying the velocity of progression of the wave surface $\sigma$ at a point $P$, by assuming that the space of variables $x^1$, $x^2$, ..., $x^n$ is endowed with an Euclidean metric, and that such variables are Cartesian coordinates. 

We suppose that
\begin{equation} \label{eq:2.6} z(t|x)=c, \hspace{1cm} z(t+dt|x)=c \end{equation}
are the equations of $\sigma$ at the instants of time $t$ and $t+dt$, respectively. The normal $N$ to $P$ at $\sigma$ intersects the second of Eq. ($\ref{eq:2.6}$) at a point $Q$. If $dN$ is the measure, with sign, of the segment $PQ$, counted positively towards the exterior of the stratum determined by $\sigma$ and  by the other wave surface pertaining to the instant $t+dt$, the ratio
\begin{equation} \label{eq:2.7} a \equiv \frac{dN}{dt} \end{equation}
is said to be the $\textit{progression velocity}$ of the wave surface at the point $P$ at the instant of time under consideration. \\
The directional cosines of the normal $N$ to $\sigma$ at $P$ are given by
\begin{equation} \label{eq:2.8} \alpha_i = \frac{p_i}{|\rho|}, \hspace{1cm} i=1,2,...,n, \end{equation}
where
\begin{equation} \label{eq:2.9} \rho^2\equiv \sum\limits_{i,j=1}^{n} \delta^{ij}p_ip_j=\sum\limits_{i=1}^n p_ip^i. \end{equation}
If the points $P$ and $Q$ have coordinates $x^i$ and $x^i+dx^i$, respectively, one has from ($\ref{eq:2.6}$)
\begin{equation} \label{eq:2.10} z(t|x)=c, \hspace{1cm} z(t+dt|x+dx)=c, \end{equation}
\\
and hence, by taking the difference,
\begin{equation} \label{eq:2.11}  dz=p_0dt + \sum\limits_{i=1}^n p_i dx^i=0. \end{equation}
Since the $dx^i$ are the components of the vector $Q-P$, one has also
\begin{equation} \label{eq:2.12} dx^i= \pm \sum\limits_{j=1}^n \delta^{ij} \alpha_j dN = \pm \alpha^i dN, \hspace{1cm} i=1,2,...,n. \end{equation}
The sign is $\pm$ depending on whether $z$ is positive or negative outside of the stratum. We do not need to fix it. By virtue of ($\ref{eq:2.8}$) and ($\ref{eq:2.12}$), Eq. ($\ref{eq:2.11}$) leads to (we set $\epsilon \equiv \pm 1$)

\begin{equation} \label{eq:2.13} \begin{split}& p_0dt + \sum\limits_{i=1}^n p_i \epsilon \alpha^i dN = p_0dt + \epsilon \sum\limits_{i=1}^{n} \frac{p_ip^i}{|\rho|}dN \\
&= p_0dt + \epsilon |\rho|dN=0, \end{split} \end{equation}
\\
from which

\begin{equation} \label{eq:2.14} |a|= \bigg{|} \frac{dN}{dt} \bigg{|} = \bigg{|} \frac{p_0}{\rho}\bigg{|}. \end{equation}
\\
This is the desired formula for the modulus of the velocity of progression. As the point $P$, the time parameter $t$ and the wave surface are varying. 

\section{Cauchy's Method for Integrating a First-Order Equation}
We have seen that the characteristic manifolds
\begin{equation} \label{eq:2.15} z(x^0,x^1,...,x^n)= {\rm const}. \end{equation}
of a normal system of equations in the $n+1$ independent variables $x^0, x^1$, ..., $x^n$ ensure the vanishing of a certain determinant
\begin{equation} \label{eq:2.16} \Omega(x|p)=0, \end{equation}
where the $p_i$ are obtained as
\begin{equation} \label{eq:2.17} p_i \equiv \frac{\partial z}{\partial x^i}, \hspace{2cm} i=0,1,2,...,n. \end{equation}
In the most general case, $\Omega$ depends not only on the $x$ and $p$, but also on the unknown functions $\varphi$ of the normal system under consideration. There exists however a particular set of normal systems, of order $s=1$ and $s=2$, where $\Omega$ depends only on $x$ and $p$ variables, provided that the coefficients $E^i_{\mu \nu}$ in Eq. ($\ref{eq:2.1} $) and $E^{ik}_{\mu \nu}$ in Eq. ($\ref{eq:2.2}$) depend only on the $x$ variables.
We are going to describe the Cauchy method for integrating a first-order partial differential equation, considering, in particular, Eq. ($\ref{eq:2.16}$), where the unknown function $z$ does not occur explicitly. We are therefore guaranteed that $\Omega$ contains at least one of the $p$ functions, e.g. $p_0$. If Eq. ($\ref{eq:2.16}$) can be solved with respect to $p_0$, one can write
\begin{equation} \label{eq:2.18} p_0 + H(t,x^1,...,x^n|p_1,...,p_n)=0. \end{equation}
Let us study first the linear case, i.e. when $H$ is a linear function of the $p$ variables. We are going to show that the task of integrating Eq. ($\ref{eq:2.18}$) is turned into the integration of a system of ordinary differential equations. Indeed, Eq. ($\ref{eq:2.18}$) is then of the type
\begin{equation} \label{eq:2.19} p_0+A_0+ \sum\limits_{i=1}^n A^ip_i=0, \end{equation}
where the $A$'s depend only on the variables $t$, $x^1$, ..., $x^n$. Let us consider the space $S_{n+2}$ of the $(n+2)$ variables $(t,x^1,...,x^n,z)$ and an integral hypersurface

\begin{equation} \label{eq:2.20} z= \varphi(t|x) \end{equation}
\\
of Eq. ($\ref{eq:2.19}$), that we shall denote by $\sigma$. Let $\Gamma$ be the section of $\sigma$ with the hyperplane $t=0$, i.e. the locus of points defined by the equation
\begin{equation} \label{eq:2.21} \Gamma: \hspace{1cm} z= \varphi(0|x)= \varphi_0(x). \end{equation}
The fundamental guiding principle adopted at this stage consists in viewing $\sigma$ as the locus of $\infty^n$ curves obtainable by integration of a suitable system of ordinary differential equations of the kind
\begin{equation} \label{eq:2.22} \frac{d}{dt} x^i = X^i(t|x), \hspace{2cm} i=1,2,...,n, \end{equation}
\begin{equation} \label{eq:2.23} \frac{d}{dt} z=Z(t|x), \end{equation}
\\
of rank $(n+1)$ in the unknown functions $x^1$, ..., $x^n$, $z$ of the variable $t$. Such a system involves $(n+1)$ arbitrary constants, but their number is reduced by 1 if one requires compatibility of the system with Eq. ($\ref{eq:2.20}$) for the integral surface $\sigma$. 

The basic assumption, which justifies the interest in the system ($\ref{eq:2.22}$) and ($\ref{eq:2.23}$), is that it is independent of the preliminary integration of Eq. ($\ref{eq:2.19}$). Once we have made this statement, we must express the condition that any integral curve of Eqs. ($\ref{eq:2.22}$) and ($\ref{eq:2.23}$) belongs to $\sigma$.

Upon viewing $z$ as a function of $t$ and $x$, Eqs. ($\ref{eq:2.22}$) and ($\ref{eq:2.23}$) lead to
\begin{equation} \label{eq:2.24} \frac{dz}{dt}= Z = p_0 + \sum\limits_{i=1}^n p_i \frac{dx^i}{dt} = p_0 + \sum\limits_{i=1}^n p_i X^i, \end{equation}
and, bearing in mind Eq. ($\ref{eq:2.19}$) to re-express $p_0$, one obtains
\begin{equation} \label{eq:2.25} Z= -A_0 + \sum\limits_{i=1}^n p_i (X^i-A^i). \end{equation}
Since we want to make sure that the differential system ($\ref{eq:2.22}$) and ($\ref{eq:2.23}$) is independent of the integration of Eq. ($\ref{eq:2.19}$), the coefficients of the $p_i$ must vanish, and hence
\begin{equation}\label{eq:2.26} X^i=A^i, \end{equation}
\\
from which if follows that
\\
\begin{equation}\label{eq:2.27} Z=-A_0. \end{equation}
\\
The desired differential system reads therefore
\begin{equation} \label{eq:2.28} \frac{dx^i}{dt} = A^i, \hspace{1cm} i = 1,2,...,n, \end{equation}
\begin{equation} \label{eq:2.29} \frac{dz}{dt}=-A_0, \end{equation}
\\
or also, with the notation used until the end of nineteenth century,
\\
\begin{equation}\label{eq:2.30} \frac{dx^i}{A^1} = \frac{dx^2}{A^2}=...=\frac{dx^n}{A^n} = - \frac{dz}{A_0}=dt, \end{equation}
\\
which is capable to determine the integral hypersurfaces $\sigma$ of Eq. ($\ref{eq:2.19}$). 

Indeed, in order to solve the Cauchy problem relative to a pre-assigned $\Gamma$ of the hyperplane $t=0$, it is enough to consider, in the first place, the whole set of integral curves, which are $\infty^n$, of the system of Eq. ($\ref{eq:2.28}$), in which the function $z$ does not occur. The integration of the residual differential equation ($\ref{eq:2.29}$), which is performed by a simple quadrature, once the system ($\ref{eq:2.28}$) has been integrated, completes the determination of the curves of the space $S_{n+2}$ (of the $t$, $x$ and $z$ variables) which are integral curves of the system ($\ref{eq:2.28}$) and ($\ref{eq:2.29}$). If one wants that these curves emanate from the points of $\Gamma$, it is necessary and sufficient that $z$ takes the value $\varphi_0(x)$ at $t=0$, the $x$ referring to the same zero value of $t$ and being therefore identifiable with the $n$ arbitrary constants introduced from the integration of the system ($\ref{eq:2.28}$). Thus, the total number of arbitrary constants is $n$, and every integral hypersurface $\sigma$ of Eq. ($\ref{eq:2.19}$) occurs as the locus of $\infty^n$ integral curves of Eq. ($\ref{eq:2.28}$) and Eq. ($\ref{eq:2.29}$), emanating from the points of $\Gamma$.

The concept of transforming the problem of the integration of a linear partial differential equation of first order into the problem of integrating a system of ordinary differential equations, originally developed by Lagrange, was generalized by Lagrange himself, Charpit, Cauchy and Jacobi to non-linear equations.  Hereafter, following Levi-Civita \cite{levi1988caratteristiche}, we describe the Cauchy method. 

For this purpose, let us revert to the general equation
\begin{equation} \label{eq:2.31} p_0+ H(t,x^1,...,x^n|p_1,...,p_n)=0, \end{equation}
and let us try to understand whether it is possible to determine a generic integral hypersurface (the one whose existence is guaranteed by virtue of the Cauchy theorem for given initial data) as a locus of integral curves of a suitable differential system. 

One can easily recognize that it is no longer possible, in general, to associate with Eq. ($\ref{eq:2.31}$) a congruence of curves of the space $S_{n+2}$ that holds for whichever integral hypersurface, but it is necessary to pass to an auxiliary higher-dimensional space. It will be useful to regard as arguments, besides the $x$ coordinates of a generic point $P$ of the integral hypersurface $\sigma$, also the $p_0$, $p_1$, ..., $p_n$ which, geometrically, define a facet for $P$. In order to give a concrete metrical meaning to such $p$ variables, we may regard $t$, the $x$ and $z$ as Cartesian coordinates of the space $S_{n+2}$. The variables $p_0$, $p_1$, ...., $p_n$, -1 are then proportional to the directional cosines of the normal to $\sigma$, with reference to the axes $t$, $x^1$, ..., $x^n$, $z$ respectively. 

Having made this choice, let us try to associate with Eq. ($\ref{eq:2.31}$) a differential system of the kind

\begin{equation} \label{eq:2.32} \frac{d}{dt}x^i = X^i (t,x|p), \hspace{0.5cm} \frac{d}{dt}p_i = P_i(t,x|p), \hspace{0.5cm} i=1,2,...,n, \end{equation}
\begin{equation} \label{eq:2.33} \frac{d}{dt} z= Z(t,x|p). \end{equation}
\\
Once the expressions of the $X^i$ have been determined in terms of the $t$, $x$, $p$ variables, one finds also the form of $Z$. Indeed, since $z$ is a function of $t$ by means of $x^0=t$ and of $x^1$, ..., $x^n$, one has

\begin{equation} \label{eq:2.34} \frac{dz}{dt} = p_0 + \sum\limits_{i=1}^n p_i \frac{dx^i}{dt}, \end{equation}
\\
and, by virtue of the first of Eq. ($\ref{eq:2.32}$),

\begin{equation} \label{eq:2.35} \frac{dz}{dt} = Z(t,x|p)=p_0 + \sum\limits_{i=1}^n p_i X^i. \end{equation}
Note that Eq. ($\ref{eq:2.33}$), with $Z$ given by Eq. ($\ref{eq:2.35}$), should be associated after having integrated the system Eq. ($\ref{eq:2.32}$), because then $z$ can be expressed in terms of $t$, by means of a quadrature. 

Hereafter we denote by $\Gamma$ a hypersurface in the hyperplane $t=0$, $M_0$ a point of $\Gamma$, $ \bar \omega_0$ the hyperplane tangent at $M_0$ to the integral hypersurface $\sigma$ of Eq. ($\ref{eq:2.32}$) that is passing through $\Gamma$. We aim at expressing the condition for the integral curve $C_0$ of the system ($\ref{eq:2.32}$) and ($\ref{eq:2.33}$), that emanates from $M_0$ and is tangent to $\bar \omega_0$, to belong to the integral hypersurface $\sigma$, while still fulfilling the equations

\begin{equation} \label{eq:2.36} p_i=\frac{\partial z}{\partial x^i}, \hspace{1cm} i=0,1, ..., n; \hspace{1cm} x^0=t, \end{equation}
\\
and this for whatever hypersurface $\Gamma$ passing through the point $M_0$. 

On passing from $t$ to $t+dt$, $p_i$ undergoes an infinitesimal change
\begin{equation}\label{eq:2.37} dp_i=P_i dt, \end{equation}
and on the other hand, for the  Eq. ($\ref{eq:2.36}$) to remain valid, one requires that

\begin{equation} \label{eq:2.38} dp_i = \sum\limits_{j=0}^n p_{ij}dx^j, \; \; \; \; i=0, 1, ..., n, \end{equation}
having defined

\begin{equation} \label{eq:2.39} p_{ij} \equiv \frac{\partial^2z}{\partial x^i \partial x^j}=p_{ji}, \; \; i,j=0,1,...,n. \end{equation}
\\
The formulae ($\ref{eq:2.37}$) and ($\ref{eq:2.38}$) for $dp_i$ should agree. Note that the quantities $p_{ij}$ when both indices are positive are arbitrary, because of the choice, arbitrary by hypothesis, of the hypersurface $\Gamma$. The components $p_{0i}$ satisfy instead relations that can be obtained by differentiation of Eq. ($\ref{eq:2.31}$). In other words, one has the $(n+1)$ equations

\begin{equation} \label{eq:2.40} p_{0i} + \sum\limits_{j=1}^n \frac{\partial H}{\partial p_j} p_{ji} + \frac{\partial H}{\partial x^i} =0, \; \; i=0,1,...,n. \end{equation}
Since the full number of $p_{ij}$ components is $\frac{1}{2}(n+1)(n+2)$, we are left with
\begin{equation*} \frac{1}{2} (n+1)(n+2) - (n+1) =\frac{1}{2} n(n+1) \end{equation*}
free components, while we have at our disposal the $2n$ quantities $x^1$, $x^2$, ..., $x^n$; $p_1$, $p_2$, ..., $p_n$. It would therefore seem impossible to determine $P_i$ in such a way that

\begin{equation} \label{eq:2.41} P_i dt= \sum\limits_{j=0}^n p_{ij}dx^j, \end{equation}
independently of the $p_{ij}$.

However, Cauchy's idea works because, by virtue of $p_i= \frac{\partial z}{\partial x^i}$, one finds, by differentiation with respect to $t$, 

\begin{equation} \label{eq:2.42} \frac{dp_i}{dt} = P_i = p_{i0} + \sum\limits_{j=1}^n p_{ij} \frac{dx^j}{dt} = p_{i0} + \sum\limits_{j=1}^n p_{ij}X^j. \end{equation}
Now we eliminate the $p_{i0}=p_{0i}$ by means of Eq. ($\ref{eq:2.40}$) and we exploit the symmetry of the $p_{ij}$. Hence we find that Eq. ($\ref{eq:2.42}$) is equivalent to

\begin{equation} \label{eq:2.43} P_i= - \frac{\partial H}{\partial x^i} + \sum\limits_{j=1}^n \bigg{(} X^j - \frac{\partial H}{\partial p_j} \bigg{)}p_{ij}, \hspace{1cm} i=1, 2, ..., n.\end{equation}
Such equations are satisfied independently of the $p_{ij}$ values provided that, for all $i$ ranging from 1 through $n$, the following equations hold:

\begin{equation} \label{eq:2.44} X^i= \frac{\partial H}{\partial p_i}, \end{equation}
\begin{equation} \label{eq:2.45} P_i=-\frac{\partial H}{\partial x^i}. \end{equation}

\section{The Bicharacteristics}
We have just shown that if, starting from a generic point $M_0$ of the integral hypersurface $\sigma$, one assigns to the $t$, $x$, $p$, $z$ variables some increments which obey the differential system ($\ref{eq:2.32}$) and ($\ref{eq:2.33}$), which is uniquely characterized in the form
\begin{equation} \label{eq:2.46} \frac{d}{dt} x^i = \frac{\partial H}{\partial p_i}, \hspace{1cm} \frac{d}{dt}p_i=- \frac{\partial H}{\partial x^i}, \hspace{1cm}  i=1, 2, ..., n, \end{equation}
\begin{equation} \label{eq:2.47} \frac{d}{dt}z= \sum\limits_{i=1}^n p_i \frac{\partial H}{\partial p_i} - H,  \end{equation}
one reaches an infinitely close point $M_1$ which belongs again to $\sigma$ and for which the $p_i + dp_i$ define the direction of the normal to the hypersurface $\sigma$ itself. 

The same considerations may be certainly repeated starting from the point $M_1$, and this for the essential reason that the system ($\ref{eq:2.32}$), ($\ref{eq:2.33}$) and hence ($\ref{eq:2.46}$), ($\ref{eq:2.47}$) has been built in such a way that it holds for all integral hypersurfaces $\sigma$ passing through $M_0$ with given orientation of the normal, i.e. with given values of $p$. As far as the integral hypersurface $\sigma$ is concerned, we are therefore at $M_1$ in the same conditions in which we found ourselves at $M_0$. Hence the whole curve $C$, defined uniquely from Eqs. ($\ref{eq:2.46}$) and ($\ref{eq:2.47}$) under the condition that the $x$, $p$, $z$ take at $t=0$ the values corresponding to $M_0$, belongs to the integral hypersurface under consideration, which is an integral hypersurface whatsoever among the many passing through $M_0$ and having therein the $\bar \omega_0$ as tangent hyperplane. 

Thus, we discover the geometric corollary according to which, if two integral manifolds meet each other at a point, they meet each other along the whole curve $C$ passing through that point. 

The curves $C$ are called $\textit{bicharacteristics}$ by Hadamard, whereas we call $\textit{characteristics}$ (of the space $S$ of the $t$, $x$ variables) the hypersurfaces having exceptional behaviour with respect to the Cauchy problem.

\section{Fundamental Solution and its relation to Riemann's Kernel}
Following the work of Hadamard, Chap. 3 of his famous book \cite{hadamard1952lectures}, he studies the fundamental solutions of partial differential operators, starting from the familiar form of the fundamental solution

\begin{equation} \label{eq:2.48} \mathcal{U}\; log \frac{1}{r} + \omega, \; r \equiv \sqrt{(x-x_0)^2 + (y-y_0)^2} \end{equation}
for the equation

\begin{equation} \label{eq:2.49} \bigg{(} \frac{\partial^2}{\partial x^2} + \frac{\partial^2}{\partial y^2} + C(x,y) \bigg{)} u=0. \end{equation}
\\
In the formula ($\ref{eq:2.48}$), $ \mathcal{U}$ and $\omega$ are properly chosen functions of $(x, y; x_0, y_0)$ which are regular in the neighbourhood of $x=x_0$, $y=y_0$. The function $\omega$ remains arbitrary to some extent, because any regular solution of Eq. ($\ref{eq:2.49}$) might be added to it. 

As a next step, Hadamard went on to consider the more general equation
\begin{equation} \label{eq:2.50} \bigg{(} \frac{\partial^2}{\partial x^2} + \frac{\partial^2}{\partial y^2} + D(x,y)\frac{\partial}{\partial x} + H(x,y)\frac{\partial}{\partial y} + E(x,y) \bigg{)} u=0, \end{equation}
where $D$, $H$, $E$ are taken to be analytic functions. In this analytic case, there is no essential distinction between Eq. ($\ref{eq:2.50}$) and the equation
\begin{equation} \label{eq:2.51} \mathcal{F}(u) = \bigg{(}\frac{\partial^2}{\partial x \partial y} + A(x,y) \frac{\partial}{\partial x} + B(x,y) \frac{\partial}{\partial y} + C(x,y) \bigg{)} u =0, \end{equation}
which can be obtained from Eq. ($\ref{eq:2.50}$) by changing $(x+iy) \rightarrow x$, $(x-iy) \rightarrow y$. This map has the effect of changing $r^2$ in Eq. ($\ref{eq:2.48}$) into $(x-x_0)(y-y_0)$. Thus, Hadamard wrote the fundamental solution of Eq. ($\ref{eq:2.51}$) in the form

\begin{equation} \label{eq:2.52} u= \mathcal{U} \; log \bigg{[} (x-x_0)(y-y_0) \bigg{]} + \omega. \end{equation}
For this to be a solution of Eq. ($\ref{eq:2.51}$) at all $x \neq x_0$, $y \neq y_0$, we have to require that
\begin{equation} \label{eq:2.53} \mathcal{F} \bigg{[} \mathcal{U} \; log(x-x_0)(y-y_0) \bigg{]} = \mathcal{M}, \end{equation}
where $\mathcal{M}$ is a regular function, while for $\omega$ we can take any regular solution of the equation
\begin{equation} \label{eq:2.54} \mathcal{F}(u) = - \mathcal{M}, \end{equation}
Indeed, by virtue of the definition ($\ref{eq:2.51}$) of the operator $\mathcal{F}$, one finds
\begin{equation} \label{eq:2.55} \begin{split} &\mathcal{F} \Biggl\{ \mathcal{U}\; log[(x-x_0)(y-y_0)] \Biggr\} = \mathcal{F}(\mathcal{U}) log[(x-x_0)(y-y_0)] \\ 
&+ \frac{1}{(x-x_0)} \bigg{(} \frac{\partial }{\partial y} + A(x,y) \bigg{)} \mathcal{U} + \frac{1}{(y-y_0)} \bigg{(} \frac{\partial}{\partial x} + B(x,y) \bigg{)} \mathcal{U}. \end{split} \end{equation}
This is found to be a regular function of $x$, $y$ near each of the lines $x=x_0$, $y=y_0$ if and only if the following conditions hold:

\begin{description} 
\item[(i)]The logarithmic term vanishes, so that $\mathcal{U}$ itself is a solution of Eq. ($\ref{eq:2.51}$).
\item[(ii)]The numerators of the two fractions on the second line of ($\ref{eq:2.55}$) vanish at same time as the denominators, i.e.
\begin{equation} \label{eq:2.56} \bigg{(} \frac{\partial}{\partial y} + A \bigg{)} \mathcal{U} =0 \; \; {\rm for}\; x=x_0, \end{equation}
\begin{equation} \label{eq:2.57} \bigg{(} \frac{\partial}{\partial x} + B \bigg{)} \mathcal{U} =0 \; \; {\rm for}\; y=y_0.\end{equation}
\end{description} 
Note now that these conditions, together with
\begin{equation} \label{eq:2.58} \mathcal{U}=1 \; \; {\rm at} \; x=x_0, y=y_0, \end{equation}
are precisely the conditions for the Riemann kernel. Thus, we have just proved that Riemann's kernel coincides with the coefficient of the logarithmic term in the fundamental solution of Eq. ($\ref{eq:2.51}$).

\section{The concept of Characteristic Conoid}
In general, the fundamental solution is singular not only at a point, e.g. the pole, but along a certain surface. What this surface must be was the content of an important theorem of Le Roux \cite{le1895integrales} and Delassus \cite{delassus1895equations, delassus1896equations}, who proved that $\textit{a\-ny \- sin\-gu\-lar \- sur\-fa\-ce \- of \-a \- solu\-tion \- of \-a lin\-ear dif\-fe\-ren\-tial e\-qua\-tion} $ \\
$\textit{ \- mu\-st \- be \- cha\-rac\-teris\-tic}$. Such a singular surface must therefore satisfy the first-order differential equation

\begin{equation} \label{eq:2.59} \Omega \bigg{(} \frac{\partial z}{\partial x^1}, \frac{\partial z}{\partial x^2}, ..., \frac{\partial z}{\partial x^m}; x^1, x^2, ..., x^m \bigg{)} = 0. \end{equation}
\\
Among the solutions of Eq. ($\ref{eq:2.59}$), one which was especially considered by Darboux \cite{darboux1883memoire} is the one which has a given point $a(a^1, a^2, ..., a^m)$ as a conic point, which is called, since Hadamard, the $\textit{characteristic conoid}$. It coincides with the characteristic cone itself when the coefficients of the equation, or at least the coefficients of the terms of second order, are constants. In general, however, it is a kind of cone with curved generatrices. A more precise definition of the $\textit{characteristic conoid}$ can be given if we introduce some basic concepts of pseudo-Riemannian geometry. 

A space-time $(M,g)$ is the following collection of mathematical entities:
\begin{description} 
\item[(i)] A connected, four-dimensional, Hausdorff (distinct points belong always to disjoint open neighbourhoods) $C^{\infty}$ manifold $M$;
\item[(ii)] A Lorentz metric $g$ on $M$, i.e. the assignment of a non-degenerate bilinear form $g|_p : T_pM \times T_pM \rightarrow \mathbb{R}$ with diagonal form $(-,+,+,+)$ to each tangent space. Thus, $g$ has signature +2 and is not positive-definite;
\item[(iii)] A time orientation, given by a globally defined timelike vector field $X : M \rightarrow TM$. A timelike or null tangent vector $v \in T_pM$ is said to be future-directed if $g(X(p),v) < 0$, or past-directed if $g(X(p),v)>0$. 
\end{description}
Some important remarks are now in order:
\begin{description}
\item[(a)] The condition (i) can be formulated for each number of space-time dimensions $\geq 2$;
\item[(b)] Also the convention $(+,-,-,-)$ for the diagonal form of the metric can be chosen. The definitions of timelike and spacelike will then become opposite to out definitions: $X$ is timelike if $g(X(p),X(p))>0$ for $p \in M$, and $X$ is spacelike if $g(X(p),X(p))<0$ for $p \in M$;
\item[(c)] The pair $(M,g)$ is only defined up to equivalence. Two pairs $(M,g)$ and $(M',g')$ are said to be equivalent if there exists a diffeomorphism $\alpha: M \rightarrow M'$ such that $\alpha_* g = g'$. Thus, we are really dealing with $\textit{an equivalence class of pairs}$ (M,g). 
\end{description} 
Now, if $M$ is a connected, four-dimensional, Hausdorff four-manifold of class $C^{\infty}$, a linear partial differential operator is a linear map
\begin{equation} \label{eq:2.60} L : u \in C^{\infty}(M) \rightarrow (Lu) \in C^k(M), \end{equation}
with coefficients $a^{i_1...i_m}$ given by functions of class $C^k$.  The $\textit{cha\-rac\-te\-ri\-stic}$ $\textit{po\-ly\-no\-mial}$ of the operator $L$ at a point $x \in M$ is 

\begin{equation} \label{eq:2.61} H(x, \xi) = \sum\limits_{i_1,...,i_m}a^{i_1...i_m} (x)\xi_{i_1}...\xi_{i_m}, \end{equation}
where $\xi_i$ is a cotangent vector at $x$. The cone in the cotangent space $T^*_x$ at $x$ defined by

\begin{equation} \label{eq:2.62} H(x,\xi)=0 \end{equation}
is called the $\textit{characteristic conoid}$. By construction, such a cone is independent of the choice of coordinates, because the terms of maximal order (also called leading or principal symbol) of $L$ transform into terms of the same order by a change of coordinates. 

The concept of $\textit{hyperbolicity}$ at $x$ of the operator $L$, requires the existence of a domain $\Gamma_x$, a convex open cone in $T^*_x$, such that every line through $\lambda \in \Gamma_x$ cuts the characteristic conoid in $m$ real distinct points. \\
In particular, second-order differental operators with higher-order terms 
\begin{equation*} (g^{-1})^{\alpha \beta} (x) \frac{\partial}{\partial x^{\alpha}}\frac{\partial}{\partial x^{\beta}} \end{equation*}
are hyperbolic at $x$ if and only if the cone defined by 

\begin{equation} \label{eq:2.63} H_2(x,\xi) \equiv \sum\limits_{\alpha, \beta=1}^{n} (g^{-1})^{\alpha \beta}(x) \xi_\alpha \xi_\beta =0 \end{equation}
is convex, i.e. if the quadratic form $H_2(x,\xi)$ has signature $(1, n-1)$.

\section{Fundamental Solutions with an Algebraic Singularity}
Following \cite{hadamard1952lectures}, we study in the first place the case of a surface without a singular point. We look for fundamental solutions of Eq. like ($\ref{eq:2.50}$), but in $m$ variables, having the form
\begin{equation} \label{eq:2.64} u = UG^p + \omega, \end{equation}
where $G=0$ is the equation of the desired regular surface, $p$ a given constant, $U$ and $\omega$ are regular functions. Since we assume for $u$ the homogeneous equation
\begin{equation} \label{eq:2.65} \mathcal{F}(u)= \bigg{(} \sum\limits_{i,k=1}^m A^{ik} \frac{\partial^2}{\partial x^i \partial x^k} + \sum\limits_{i=1}^m B^i \frac{\partial}{\partial x^i} + C \bigg{)} u =0, \end{equation}
the insertion of the factorized ansatz $u= UF(G)$ into Eq. ($\ref{eq:2.65}$) yields, upon defining $p_i \equiv \frac{\partial G}{\partial x^i}$, terms involving the first derivatives 
\begin{equation} \label{eq:2.66} \frac{\partial u}{\partial x^i} = U p_i F'(G) + \frac{\partial U}{\partial x^i} F(G), \end{equation}
and terms involving the second derivatives
\begin{equation} \label{eq:2.67} \begin{split} \frac{\partial^2 u}{\partial x^i \partial x^k}= &U p_i p_k F''(G) + \bigg{(} p_i \frac{\partial U}{\partial x^k} + p_k \frac{\partial U}{\partial x^i} + U \frac{\partial^2 G}{\partial x^i \partial x^k} \bigg{)} F'(G) \\
& + \frac{\partial^2 U}{\partial x^i \partial x^k}F(G). \end{split}\end{equation}
Now we have to multiply Eq. ($\ref{eq:2.66}$) for every $i$ by $B^i$, and Eq. ($\ref{eq:2.67}$) for every $i$, $k$ by $A^{ik}$, and add to $Cu=CUF$. In this combination one finds that \cite{hadamard1952lectures}:
\begin{description}
\item[(i)]The coefficient of $F''(G)$ is $A(p_1, ..., p_m)$;
\item[(ii)]In the coefficient of $F'(G)$, the terms in $\frac{\partial U}{\partial x^i}$ are
\begin{equation*} \frac{\partial U}{\partial x^i} \sum\limits_{k=1}^m 2A^{ik}p_k = \frac{\partial U}{\partial x^i}\frac{\partial A}{\partial p_i}. \end{equation*}
\end{description}
Thus, Eq. ($\ref{eq:2.65}$) becomes
\begin{equation} \label{eq:2.68} U F''(G)A(p_1, ..., p_m) + F'(G) \bigg{(} \sum\limits_{i=1}^m \frac{\partial U}{\partial x^i} \frac{\partial A}{\partial p_i} + MU \bigg{)} + F(G) \mathcal{F}(U) =0, \end{equation}
where $M$ denotes \\
\begin{equation} \label{eq:2.69} M \equiv \mathcal{F}(G) - CG. \end{equation}
\\
In particular, if $F(G)$ reduces to the $p$-th power of $G$, i.e. $F(G) =G^p$, one gets from Eq. ($\ref{eq:2.68}$) the equation

\begin{equation} \label{eq:2.70} p(p-1)G^{p-2}UA(p_1, ..., p_m) + pG^{p-1} \bigg{(} \sum\limits_{i=1}^m \frac{\partial U}{\partial x^i} \frac{\partial A}{\partial p_i} + MU \bigg{)} + G^p \mathcal{F}(U) =0. \end{equation}
If the cases $p=0, 1$ are ruled out, the left-hand side of Eq. ($\ref{eq:2.70}$) cannot vanish identically or even be a regular function if the coefficient $A(p_1, ..., p_m)$ does not vanish. In other words, $G=0$ is not a characteristic. The equation 

\begin{equation*} A \bigg{(} \frac{\partial G}{\partial x^1}, ..., \frac{\partial G}{\partial x^m } \bigg{)} = 0 \end{equation*}
must be either an identity or a consequence of $G=0$, hence there exists a function $A_1$, regular also for $G=0$, such that
\begin{equation} \label{eq:2.71} A \bigg{(} \frac{\partial G}{\partial x^1}, ..., \frac{\partial G}{\partial x^m } \bigg{)} = A_1 G. \end{equation}
The Delassus theorem is therefore proved. Hereafter we assume that Eq. ($\ref{eq:2.71}$) is satisfied, so that the term involving $G^{p-2}$ disappears from Eq. ($\ref{eq:2.70}$). More precisely, one finds, by virtue of Eq. ($\ref{eq:2.71}$), that Eq. ($\ref{eq:2.70}$) reads as
\begin{equation} \label{eq:2.72} p G^{p-1} \bigg{[} (p-1) A_1 U + MU + \sum\limits_{i=1}^m \frac{\partial U}{\partial x^i} \frac{ \partial A}{\partial p_i} \bigg{]} + G^{p} \mathcal{F}(U) =0. \end{equation}
At this stage, multiplication by $G^{1-p}$ and subsequent restriction to the surface $G=0$ imply that Eq. ($\ref{eq:2.72}$) leads to
\begin{equation} \label{eq:2.73} \sum\limits_{i=1}^m \frac{\partial U}{\partial x^i} \frac{\partial A}{\partial p_i} + \big{[} M + (p-1) A_1 \big{]} U=0. \end{equation}
This is a linear partial differential equation of first order in $U$, whose integration would lead to the introduction of the lines defined by the ordinary differential equations
\begin{equation} \label{eq:2.74} \frac{dx^1}{\frac{1}{2} \frac{\partial A}{\partial p_1}} = ... = \frac{dx^m}{\frac{1}{2} \frac{\partial A}{\partial p_m}} = ds. \end{equation}
In the denominators it is possible to recognize the direction cosines of the transversal to $G=0$; this is, in the case considered, tangent to that surface (since the latter is a characteristic; the transversal is the direction of the generatrix of contact between the plan $(p_1, ..., p_m)$ and the characteristic cone). Thus, a line satisfying Eq. ($\ref{eq:2.74}$) and issuing from a point of $G=0$ is lying entirely on that surface. These lines are in fact the bicharacteristics of Eq. ($\ref{eq:2.59}$), with $\Omega=A$ and $z=G$. If the function $A_1$ in Eq. ($\ref{eq:2.71}$) vanishes, so that the function $G$ satisfies identically the equation $A=0$, the theory of partial differential equations of first order shows that, besides Eq. ($\ref{eq:2.74}$), the bicharacteristics satisfy also the equations
\begin{equation} \label{eq:2.75} \frac{dp_1}{-\frac{1}{2} \frac{\partial A}{\partial x^1}} = ... = \frac{dp_m}{-\frac{1}{2} \frac{\partial A}{\partial p_m}} = ds, \end{equation}
and hence they can be determined without knowing the equation $G=0$ by integrating the system of ordinary differential equations ($\ref{eq:2.74}$) and ($\ref{eq:2.75}$).

\section{\-G\-e\-o\-d\-e\-s\-i\-c\- E\-q\-u\-a\-t\-i\-o\-n\-s\- W\-i\-t\-h\- a\-n\-d\- W\-i\-t\-h\-o\-u\-t\- R\-e\-p\-a\-r\-a\-m\-e\-t\-r\-i\-z\-a\-t\-i\-o\-n\- I\-n\-v\-a\-r\-i\-a\-n\-c\-e\-}
The characteristic conoid with any point $a(a^1, ..., a^m)$ as its vertex has that point for a singular point, and to study this new case one has to first form the equation of the characteristic conoid. That is $\textit{the locus of all bicharacteristics}$ $\textit{issuing from a}$. One has to take any set of quantities $p_1$, ..., $p_m$ fulfilling the equation
\begin{equation} \label{eq:2.76} A(p_1, ..., p_m; x^1, ..., x^m)=0 \end{equation}
and, with the initial conditions
\begin{equation} \label{eq:2.77} x^i(s=0)=a^i, \hspace{1cm} p_i(s=0)=p_{0i}, \end{equation}
integrate the Eqs. ($\ref{eq:2.74}$) and ($\ref{eq:2.75}$), here written concisely in Hamilton form
\begin{equation} \label{eq:2.78} \frac{dx^i}{ds} = \frac{1}{2} \frac{\partial A}{\partial p_i}, \hspace{1cm} \frac{dp_i}{ds} =- \frac{1}{2} \frac{\partial A}{\partial x^i}. \end{equation}
Since the ratios of the quantities $p_{0i}$, ..., $p_{0m}$ under consideration ($\ref{eq:2.76}$) depend on $(m-2)$ parameters, the locus of the line generated in such a way is a surface. Our task is to obtain a precise form for the equation of this surface. For this purpose, we construct every line issuing from the point $(a^1, ..., a^m)$ and satisfying the differential system ($\ref{eq:2.78}$), whether or not the initial values $p_{01}$, ..., $p_{0m}$ of the variables $p_i$ satisfy Eq. ($\ref{eq:2.76}$). Such lines are indeed the $\textit{geodesics}$ of a suitably chosen line element. Within this framework we recall the definition of a $\textit{geodesic}$.

If $T$ is a tensor field defined along a curve $\lambda$ of class $C^r$, and if $\bar T$ is an arbitrary tensor field of class $C^r$ which extends $T$ in an open neighbourhood of $\lambda$, the covariant derivative of $T$ along $\lambda(t)$ can be denoted by $\frac{DT}{\partial t}$ and is equal to
\begin{equation} \label{eq:2.79} \frac{DT}{\partial t} \equiv \nabla_{\frac{\partial}{\partial t}} \bar T, \end{equation}
where $\nabla$ is the connection of the Riemannian or pseudo-Riemannian manifold we are considering. The formula ($\ref{eq:2.79}$) describes a tensor field of class $C^{r-1}$, defined along the curve $\lambda$, and independent of the extension $\bar T$ \cite{hawking1973large}. In particular, if $\lambda$ has local coordinates $x^a(t)$, and $X^{a} = \frac{dx^a}{dt} $ are the components of its tangent vector, the expression in local coordinates of the covariant derivative of a vector $Y$ along a curve is 
\begin{equation} \label{eq:2.80} \frac{DY^a}{\partial t} = \frac{\partial Y^a}{\partial t} + \sum\limits_{b,c=1}^n \Gamma\{a, [b,c] \} \frac{dx^b}{dt}Y^c, \end{equation}
where the $\Gamma'$s are the Christoffel symbolds of second kind.

The tensor field $T$ (and also, in particular, the vector field corresponding to $Y$) is said to undergo $\textit{parallel transport along}$ $\lambda$ if 
\begin{equation} \label{eq:2.81} \frac{DT}{\partial t} =0. \end{equation}
In particular, one may consider the covariant derivative of the tangent vector itself along $\lambda$. The curve $\lambda$ is said to be a $\textit{geodesic}$ if 
\begin{equation*} \nabla_{X}X= \frac{D}{\partial t} \bigg{(}\frac{\partial}{\partial t} \bigg{)}_\lambda \end{equation*}
is parallel to the tangent vector $\big{(}\frac{\partial}{\partial t} \big{)}_\lambda$. This implies that there exists a smooth function on the manifold $M$ for which (the semicolon being used to denote the covariant derivative $\nabla_b$)
\begin{equation} \label{eq:2.82} \sum\limits_{b=1}^n X^a_{;b}X^b = fX^a. \end{equation}
The parameter $v(t)$ along the curve $\lambda$ such that
\begin{equation} \label{eq:2.83} \frac{D}{\partial v} \bigg{(} \frac{\partial}{\partial v} \bigg{)}_\lambda =0 \end{equation}
is said to be an $\textit{affine parameter}$. The corresponding tangent vector $U \equiv \big{(} \frac{\partial}{\partial v} \big{)}_\lambda$ obeys the equation
\begin{equation} \label{eq:2.84} \sum\limits_{b=1}^n U^a_{;b}U^b=0, \end{equation}
i.e., by virtue of ($\ref{eq:2.80}$),

\begin{equation} \label{eq:2.85} \frac{d^2 x^a}{d v^2} + \sum\limits_{b,c=1}^n \Gamma \{a, [b,c]\} \frac{dx^b}{dv} \frac{dx^c}{d v}=0. \end{equation}
\\
The affine parameter is determined up to a linear transformation
\begin{equation} \label{eq:2.86} v' = av+b. \end{equation}
We stress that our geodesics are $\textit{auto-parallel}$ curves \cite{hawking1973large}. We prefer auto-parallel curves because they involve the full connection. Given this definition of a geodesic, we have, in the case under consideration its alternative definition as extremal curve for the Lorentzian arc-length. In fact, if
\begin{equation} \label{eq:2.87} \mathbf{H} (dx^1, ..., dx^m; x^1, ..., x^m) = \sum\limits_{i,k=1}^m H_{ik}dx^i \otimes dx^k \end{equation}
is any non-singular quadratic form, the coefficients $H_{ik}$ being given functions of $x^1$, ..., $x^m$, then if the differentials $dx^l$ are viewed as differentials of the corresponding $x^l$, the $\mathbf{H}$ can be taken as the squared line element in a $m$-dimensional manifold. The integral
\begin{equation} \label{eq:2.88} \int \sqrt{ \mathbf{H}(dx^1, ..., dx^m)} = \int \sqrt{\mathbf{H}(x'^1, ..., x'^m)}dt, \end{equation}
where $x'^i \equiv \frac{d x^i}{dt}$, is therefore the arc-length of a smooth curve. The corresponding geodesics are the lines which make the variation of this functional vanish. Their differential equations are
\begin{equation} \label{eq:2.89} \frac{d}{dt} \bigg{(} \frac{\partial}{\partial x'^i} \sqrt{\mathbf{H}} \bigg{)} - \frac{\partial}{\partial x^i} \sqrt{\mathbf{H}}=0, \hspace{1cm} i=1, 2, ..., m. \end{equation}

On the other hand, Lagrangian dynamics leads to writing these differential equations in a different form, i.e.
\begin{equation} \label{eq:2.90} \frac{d}{ds} \bigg{(} \frac{\partial}{\partial x'^i} \mathbf{H} \bigg{)} - \frac{\partial}{\partial x^i} \mathbf{H} =0, \hspace{1cm} i=1, ..., m; \end{equation}
this being the law governing the motion of a system whose $\textit{vis viva}$ is $\mathbf{H}(x',x)$, and on which no forces act. The equations ($\ref{eq:2.89}$) and ($\ref{eq:2.90}$) are not exactly equivalent, but are $\textit{conditionally equivalent}$ \cite{hadamard1952lectures}. The former determines the required lines but not $t$, the time remaining an arbitrary parameter whose choice is immaterial. In other words, Eq. ($\ref{eq:2.89}$) are reparametrization-invariant, because they remain unchanged if $t$ gets replaced by any smooth function $\phi(t)$. 

However the latter equations, i.e. ($\ref{eq:2.90}$), define not only a line, but a motion on that line, and this motion is no longer arbitrary in time. It must satisfy the vis viva integral
\begin{equation} \label{eq:2.91} \mathbf{H}= {\rm constant}, \end{equation}
hence the representative point $(x^1, ..., x^m)$ must move on the curve with constant kinetic energy. But on taking into account Eq. ($\ref{eq:2.91}$), the systems ($\ref{eq:2.89}$) and ($\ref{eq:2.90}$) become in general equivalent. A simple way to see this is to point out that, if we choose $t$ in Eq. ($\ref{eq:2.89}$) in such a way that $\mathbf{H}$ is constant in time, then the denominator $2 \sqrt{\mathbf{H}}$  in the identity
\begin{equation} \label{eq:2.92} \frac{\partial}{\partial x'^i} \sqrt{\mathbf{H}}= \frac{1}{2 \sqrt{\mathbf{H}}}\frac{\partial}{\partial x'^i} \mathbf{H} \end{equation}
is not affected by the time derivative, and we obtain eventually Eq. ($\ref{eq:2.90}$). 

Conversely, if one wants to write Eq. ($\ref{eq:2.90}$) in such a way that the independent variable $t$ may become arbitrary, one has to note that, as a function of $t$, the variable $s$ can be easily evaluated from the vis viva integral ($\ref{eq:2.91}$) according to
\begin{equation} \label{eq:2.93} ds= \sqrt{\mathbf{H}} dt. \end{equation}
On replacing $ds$ by this value, and accordingly $x'^i$ by $\frac{x'^i}{\sqrt{\mathbf{H}}}$, one recovers ($\ref{eq:2.89}$) \cite{darboux1896leccons, hadamard1952lectures}. All these recipes no longer hold for bicharacteristics, for which $A=\mathbf{H} =0$. For them the system ($\ref{eq:2.89}$) becomes meaningless, whereas Eqs. ($\ref{eq:2.90}$) remain valid.

\chapter{How to Build the Fundamental Solution}
\epigraph{The game's afoot.}{William Shakespeare, King Henry V}

\section{Hamiltonian Form of Geodesic Equations}
Let us now try to see how it is possible to build the fundamental solution. For this purpose, we here consider the fundamental form of $n$-dimensional Euclidean space \cite{garabedian1998partial}
\begin{equation} \label{eq:3.1} g_{E} = \sum \limits_{i,j=1}^n A_{ij}dx^i \otimes dx^j, \end{equation}
where $A_{ij}=A_{ji}=A_{ij}(x^1, ..., x^n)$. Let $\Omega_{pq}$ be the set of piecewise smooth curves in the manifold $M$ from $p$ to $q$. Given the curve $c: [0,1] \rightarrow M$ and belonging to $\Omega_{pq}$, there is a finite partition of $[0,1]$ such that $c$ restricted to the sub-interval $[t_i,t_{i+1}]$ is smooth $\forall i$. If we consider the interval $[t_0,t_1]$, the arc-length of $c$ with respect to $g_E$ is defined by
\begin{equation} \label{eq:3.2} I \equiv \int_{t_0}^{t_1} \sqrt{\sum\limits_{i,j=1}^n A_{ij} \frac{dx^i}{dt} \frac{dx^j}{dt}} dt, \end{equation}
and at the ends of the integration interval we define 
\begin{equation} \label{eq:3.3} x^i(t_0)\equiv y^i, \; x^i(t_1)\equiv z^i, \hspace{1cm} i=1, 2, ..., n. \end{equation}
Let $a^{ij}$ be the controvariant components of the inverse metric, for which 
\begin{equation} \label{eq:3.4} \sum\limits_{k=1}^n a^{ik}A_{kj} = \delta^i_j. \end{equation}
To begin with the variational problem let us define 
\begin{equation} \label{eq:3.5} Q \equiv \sum\limits_{i,j=1}^n A_{ij} \frac{dx^i}{dt} \frac{dx^j}{dt}, \end{equation}
thus the Lagrangian related to this problem, defined as 
\begin{equation} \label{eq:3.6} L = \sqrt{Q}, \end{equation}
is a function homogeneous of degree 1 in the $\dot{x}^j = \frac{dx^j}{dt}$. Since the associated Hessian matrix is singular, i.e.
\begin{equation} \label{eq:3.7} det \bigg{(} \frac{\partial^2 L}{\partial \dot{x}^i \partial \dot{x}^j} \bigg{)} =0, \end{equation}
it is not possible to define the Legendre transform. However, it is possible to overcome this difficulty by writing the Euler-Lagrange equations, which in terms of Q are
\begin{equation} \label{eq:3.8}  \frac{d}{dt} \frac{1}{\sqrt{Q}} \frac{\partial Q}{\partial \dot{x}^i} - \frac{1}{\sqrt{Q}}\frac{\partial Q}{\partial x^i} =0, \hspace{1cm} i=1, 2, ..., n. \end{equation}
This suggests taking $t$, the parameter along the geodesics, as the arc-length measured from the point $y^1, ..., y^n$. 

The integral 
\begin{equation} \label{eq:3.9} J \equiv \int_0^s Q dt \end{equation}
is stationary. The terminal values $y^1, ..., y^n$ and $z^1, ..., z^n$ of $x^1, ..., x^n$ are fixed, but the upper limit of integration $s$ is allowed to vary. Hence, the extremals of $J$ come to depend on $s$ and on the variable of integration $t$ according to \cite{garabedian1998partial}
\begin{equation} \label{eq:3.10} x^i=x^i \bigg{(} \frac{I}{s} \; t \bigg{)} \end{equation}
for a change of scale, since $Q$ is a function homogeneous of degree 2 in the velocity variables $\dot{x}^1, ..., \dot{x}^n$. Thus, the constant value of $Q$ becomes $\frac{I^2}{s^2}$ along each such extremal curve, and
\begin{equation} \label{eq:3.11} J = \frac{I^2 (z^1, ..., z^n)}{s^2}(s - 0) = \frac{I^2(z^1, ..., z^n)}{s}. \end{equation}
Now we can apply the Hamilton-Jacobi theory to the equations of motion that we are studying. Since the corresponding momenta are 
\begin{equation} \label{eq:3.12} p_i = \frac{\partial L}{\partial \dot{x}^i} = \frac{\partial Q}{\partial \dot{x}^i} = 2 \sum\limits_{j=1}^n A_{ij} \frac{dx^j}{dt}, \end{equation}
we can re-express the velocity variables in the form $ \frac{dx^j}{dt} = \frac{1}{2} \sum\limits_{i=1}^n a^{ji}p_i$. Thus, it is possible to write
\begin{equation} \label{eq:3.13} \sum\limits_{i=1}^n p_i \frac{dx^i}{dt} = 2 \sum\limits_{i,j=1}^n A_{ij} \frac{dx^i}{dt } \frac{dx^j}{dt} = \frac{1}{2} \sum\limits_{i,j=1}^n a^{ij}p_i p_j. \end{equation}
The Hamiltonian reads as
\begin{equation} \label{eq:3.14} H=Q= \frac{1}{4} \sum\limits_{i,j=1}^n a^{ij}p_i p_j. \end{equation}
The functional $J$, previously defined, satisfies the Hamilton-Jacobi equation
\begin{equation} \label{eq:3.15} \frac{\partial J}{\partial s} + \frac{1}{4} \sum\limits_{i,i=1}^n a^{ij} (z^1, ..., z^n) \frac{\partial J}{\partial z^i}\frac{\partial J}{\partial z^j} =0. \end{equation}
In this equation we can insert the form $(\ref{eq:3.11})$ of $J$ and set eventually $s=1$. The non-vanishing factor $I^2$, common to both terms, drops therefore out, and the equation $(\ref{eq:3.15})$ reduces to 
\begin{equation} \label{eq:3.16} \sum\limits_{i,i=1}^n a^{ij} \frac{\partial I}{\partial z^i}\frac{\partial I}{\partial z^j} =1. \end{equation}
If we now define
\begin{equation} \label{eq:3.17} \Gamma \equiv I^2, \end{equation}
we obtain $I= \sqrt{\Gamma}$, and Eq. ($\ref{eq:3.16}$) takes the remarkable form
\begin{equation} \label{eq:3.18} \sum\limits_{i,j=1}^n a^{ij} \frac{\partial \Gamma}{\partial z^i}\frac{\partial \Gamma}{\partial z^j} = 4 \Gamma. \end{equation}
This equation coincides with Eq. ($\ref{eq:2.71}$) upon setting therein 
\begin{equation} \label{eq:3.19} G= \Gamma, \; A_1=4, \; A=a({\rm grad} \Gamma,{\rm grad}\Gamma). \end{equation}
The function $\Gamma$ is a conoidal solution of Eq. ($\ref{eq:3.18}$), generated by all bicharacteristics of this equation passing through $(y^1, ..., y^n)$ which are geodesic of the metric $g_E$.

The geodesics satisfy the equations of motion in Hamiltonian form
\begin{equation} \label{eq:3.20} \frac{dx^i}{ds} = \frac{1}{2} \sum\limits_{j=1}^n a^{ij}p_j; \hspace{1cm} \frac{dp_i}{ds} = - \frac{1}{4} \sum\limits_{j,k=1}^n \frac{\partial a^{jk}}{\partial x^i} p_j p_k,\end{equation}
together with the initial conditions
\begin{equation} \label{eq:3.21} x^i(0)=y^i; \hspace{1cm} p_i(0) = \gamma_i. \end{equation}
In a generic space-time manifold, the $a^{ij}$ of Eq. ($\ref{eq:3.18}$) will denote the contravariant components $(g^{-1})^{ij}$ of
\begin{equation} \label{eq:3.22} g^{-1}=\sum\limits_{i,j=1}^n a^{ij} \frac{\partial}{\partial x^i} \otimes \frac{\partial}{\partial x^j}, \end{equation}
the signature of $g$ being $(n-2)$. Equation ($\ref{eq:3.17}$) will then be interpreted by stating that $\Gamma$ is a two-point function, called the $\textit{world function}$ and equal to the square of the geodesic distance between the space-time points $x=(x^1, ..., x^n)$ and $y=(y^1, ..., y^n)$. This means that such a formalism can only be used $\textit{locally}$, when there exists a unique geodesic from $x$ to $y$. Such a space-time is said to be $\textit{geodesically convex}$.

\section{The Unique Real-Analytic World Function}
We aim now to demostrate, following Hadamard \cite{hadamard1952lectures}, that Eq. ($\ref{eq:3.18}$), or ($\ref{eq:2.71}$), is the fundamental equation in the theory of the characteristic conoid, in that any function real-analytic in the neighbourhood of the desired vertex $a$, vanishing on the conoid and satysfying Eq. ($\ref{eq:3.18}$), can only be the world function $\Gamma$ itself (besides this, there exist infinitely many non-analytic solutions of Eq. ($\ref{eq:3.18}$).

\proof The desired function should be of the form $\Gamma \Pi$ , where $\Pi$ is a real-analytic function. By insertion into Eq. ($\ref{eq:3.18}$), this yields
\begin{equation} \begin{split} \label{eq:3.23} 4\Gamma \Pi &= \sum\limits_{i,j=1}^n a^{ij} \bigg{(} \Pi \frac{\partial \Gamma}{\partial x^i} + \Gamma \frac{\partial \Pi}{\partial x^i} \bigg{)} \bigg{(} \Pi \frac{\partial \Gamma}{\partial x^j} + \Pi \frac{\partial \Pi}{\partial x^j} \bigg{)} \\
&= \Pi^2 \nabla_1 \Gamma + 2 \Pi \Gamma \nabla_1 (\Pi, \Gamma) + \Gamma^2 \nabla_1 \Pi, \end{split} \end{equation}
On the right-hand side of Eq. ($\ref{eq:3.23}$), the term involving the mixed differential parameter can be expressed, making use of the derivative of $\Pi$ along a geodesic and the symmetry of $a^{ij}$, as
\begin{equation} \begin{split} \label{eq:3.24}& \nabla_1(\Gamma,\Pi) = \sum\limits_{i,j=1}^n a^{ij} \frac{\partial \Gamma}{\partial x^i} \frac{\partial \Pi}{\partial x^j} = 2s \sum\limits_{i,j=1}^n a^{ij}p_i \frac{\partial \Pi}{\partial x^j} \\
&= s \sum\limits_{j=1}^n \frac{\partial \Pi}{\partial x^j} \frac{\partial A}{\partial p_j} = 2s \sum\limits_{j=1}^n \frac{\partial \Pi}{\partial x^j} \frac{dx^j}{ds} = 2s \frac{d\Pi}{ds}. \end{split} \end{equation}
Thus, Eq. ($\ref{eq:3.23}$) becomes
\begin{equation} \label{eq:3.25} \Pi^2 \nabla_1 \Gamma + 4s \Gamma \Pi \frac{d \Pi}{ds} + \Gamma^2 \nabla_1 \Pi = 4 \Gamma \Pi. \end{equation} 
In this equation, we can divide both sides by $4 \Gamma \Pi$, finding therefore
\begin{equation} \begin{split} \label{eq:3.26} &\Pi \frac{\nabla_1 \Gamma}{4 \Gamma} + s \frac{d \Pi}{ds} + \frac{\Gamma}{4 \Pi} \nabla_1 \Pi - 1 = \bigg{(} \Pi + s \frac{d \Pi}{ds} - 1 \bigg{)} + \frac{ \Gamma}{4 \Pi} \nabla_1 \Pi \\
&= \frac{d}{ds} [s (\Pi - 1)] + \frac{\Gamma}{4 \Pi} \nabla_1 \Gamma = 0. \end{split}\end{equation}
This equation shows that the function $\Pi$ equals 1 over the whole conoid, and hence we can write the general formula
\begin{equation} \label{eq:3.27} \Pi = 1 + \Gamma^l E, \end{equation}
where $l$ is a positive exponent, and $E$ is yet another real-analytic function, not vanishing over the whole surface of the conoid. But this leads to a contradiction, because the insertion of Eq. ($\ref{eq:3.27}$) for $\Pi$ into Eq. ($\ref{eq:3.26}$) yields
\begin{equation} \begin{split} \label{eq:3.28} &\frac{d}{ds} \big{(} s \Gamma^l E \big{)} + \frac{\Gamma}{4 \Pi} \nabla_1 (\Gamma^l E) = \Gamma^l E + s \frac{d \Gamma^l}{ds} E + s \Gamma^l \frac{dE}{ds} \\
& + \frac{\Gamma}{4 \Pi} \sum\limits_{i,j=1}^n a^{ij} \bigg{(} l \Gamma^{l-1} \frac{\partial \Gamma}{\partial x^i} E + \Gamma^l \frac{ \partial E}{\partial x^i} \bigg{)} \bigg{(} l \Gamma^{l - 1} \frac{\partial \Gamma}{\partial x^j} E + \Gamma^l \frac{\partial E}{\partial x^j} \bigg{)} \\
& = \Gamma^l \bigg{[} s \frac{d E}{ds} + (2l + 1) E \bigg{]} + \frac{\Gamma}{4 \Pi} \bigg{[} l^2 \Gamma^{2l - 2} E^2(\Delta_1 \Gamma ) \\
& + 2l \Gamma^{2l-1}E \Delta_1 (\Gamma, E) + \Gamma^{2l} \Delta_1 E \bigg{]}. \end{split} \end{equation}
This equation, when restricted to the characteristic conoid, reduces to
\begin{equation} \label{eq:3.29} s \frac{dE}{ds} + (2l + 1)E=0, \end{equation}
which is solved by
\begin{equation} \label{eq:3.30} E = E_0 \bigg{(} \frac{s}{s_0} \bigg{)}^{- ( 2l+1)}, \end{equation}
which can only be regular if $E_0=0$, that implies $E=0$. \endproof

\section{Examples of Fundamental Solutions}
Now we aim to study the linear partial differential equation 
\begin{equation} \label{eq:3.31} \mathcal{F}(u)= \bigg{(} \sum\limits_{i,k=1}^m A^{ik} \frac{\partial^2}{\partial x^i \partial x^k} + \sum\limits_{i=1}^m B^i \frac{\partial }{\partial x^i} + C \bigg{)} u =0, \end{equation}
with associated world function $\Gamma$, the square of the geodesic distance between two points, obeying Eq. ($\ref{eq:3.18}$), with coefficients $a^{ij}$ equal to the controvariant components $A^{ij}$ of the inverse metric. A fundamental solution of $ \mathcal{F}(u)=0$ is a two-point function $R(x, \xi)$, with $x=(x^1, ..., x^n)$ and $\xi=(\xi^1, ..., \xi^n)$, which solves Eq. ($\ref{eq:3.31}$) in its dependence on $x$ and possesses, at the parameter point $\xi$, a singularity characterized by the split reading as \cite{garabedian1998partial}
\begin{equation} \label{eq:3.32} R = \frac{U}{\Gamma^m} + V log (\Gamma) + W, \end{equation}
where $U$, $V$ and $W$ are taken to be smooth functions of $x$ in a neighbourhood of $\xi$, with $U \neq 0$ at $\xi$, and where the exponent $m$ is given by
\begin{equation} \label{eq:3.33} m = \frac{n}{2} - 1. \end{equation}
We are going to show that, when $n$ is odd, the coefficient $V$ of the logarithm vanishes, whereas the term $W$ is redundant for $n$ even. Thus, the dimension of Euclidean space affects in a non-trivial way the conceivable form of the fundamental solution.

\subsection{Odd Number of Variables}
Following Garabedian \cite{garabedian1998partial}, we consider first the odd values of $n$. We then put $V=W=0$ in Eq. ($\ref{eq:3.32}$), and look for a convergent series expressing the unknown function $U$, in the form
\begin{equation} \label{eq:3.34} U = U_{l} \Gamma^{l} = U_0 + U_1\Gamma + U_2 \Gamma^2 + O(\Gamma^3), \end{equation}
with regular coefficients $U_l$. By replacing $R= U_{l} \Gamma^{l-m}$ inside ($\ref{eq:3.31}$), where we recall that $u=R$, and exploiting the symmetry of the inverse metric $a^{ij}$, we have
\begin{equation} \begin{split} \label{eq:3.35} &\mathcal{F}[U_l\Gamma^{l-m}] =(l-m) (l-m-1)U_l \Gamma^{l-m-2} \sum\limits_{i,j=1}^n a^{ij} \frac{\partial \Gamma}{\partial x^i} \frac{\partial \Gamma}{\partial x^j} + \\
& + (l-m) \bigg{[} 2 \sum\limits_{i,j=1}^n a^{ij} \frac{\partial U}{\partial x^i} \frac{\partial \Gamma}{\partial x^j} + 4 D U_l \bigg{]} \Gamma^{l-m-1} + \mathcal{F}[U_l]\Gamma^{l-m}, \end{split} \end{equation}
where $D$ is the term
\begin{equation} \label{eq:3.36} D \equiv \frac{1}{4} \sum\limits_{i,j=1}^n a^{ij} \frac{\partial^2 \Gamma}{\partial x^i \partial x^j} + \frac{1}{4} \sum\limits_{i=1}^n B^i \frac{\partial \Gamma}{\partial x^i}. \end{equation}
One should stress that the possibility of eliminating the lowest power of $\Gamma$ on the right-hand side of Eq. ($\ref{eq:3.35}$) by means of the first order partial differential equation ($\ref{eq:3.18}$) now shows why the fundamental solution $R$ should be expanded in terms of this particular function, i.e. the world function $\Gamma$. 

It is now convenient to introduce again a parameter $s$ which is measured along the geodesics that generate $\Gamma$. We can then write
\begin{equation} \label{eq:3.37} \sum\limits_{i,j=1}^n a^{ij} \frac{\partial U_l}{\partial x^i} \frac{\partial \Gamma}{\partial x^j} = 2s \frac{d U_l}{ds}. \end{equation}
Hence we arrived at a simplified form of Eq. ($\ref{eq:3.35}$), i.e. \cite{garabedian1998partial}
\begin{equation} \label{eq:3.38} \mathcal{F}[U_l\Gamma^{l-m}] = 4 (l-m) \bigg\{ s \frac{dU_l}{ds} + (D+l-m-1)U_l \bigg\} \Gamma^{l-m-1} + \mathcal{F}[U_l] \Gamma^{l-m}. \end{equation}
At this stage, in order to solve, $\forall x \neq \xi$, the equation
\begin{equation} \label{eq:3.39}  \mathcal{F}[R] = \mathcal{F} \bigg{[} \sum\limits_{l=0}^\infty U_l \Gamma^{l-m} \bigg{]} =0, \end{equation}
we set to zero all coefficients of the various powers of $\Gamma$. This leads to the fundamental recursion formulae
\begin{equation} \label{eq:3.40} \bigg{[} s \frac{d}{ds} + (D- m-1) \bigg{]} U_0 =0, \end{equation}
\begin{equation} \label{eq:3.41} \bigg{[} s \frac{d}{ds} + (D+ l-m-1) \bigg{]} U_l = - \frac{1}{4(l-m)} \mathcal{F} [U_{l-1}], \hspace{1cm} l \geq  1, \end{equation}
for the evaluation of $U_0$, $U_1$, $U_2$, .... For odd values of $n$, the division by $(l-m)$ on the right-hand side of ($\ref{eq:3.41}$) is always legitimate by virtue of the expression of $m$, because $(l-m)$ never vanishes.

Note that, when Eq. ($\ref{eq:3.31}$) is hyperbolic, the fundamental solution $R$ becomes infinite along a two-sheeted conoid $\Gamma=0$ separating $n$-dimensional space into three parts. This conoid is indeed a characteristic surface for the second-order equation ($\ref{eq:3.31}$), since ($\ref{eq:3.18}$) reduces on the level surface $\Gamma=0$ to the first-order partial differential equation

\begin{equation} \label{eq:3.42} \sum\limits_{i,j=1}^n a^{ij} \frac{\partial \Gamma}{\partial x^i} \frac{\partial \Gamma}{\partial x^j} = 0 \end{equation}
for such a characteristic. The basic property involved is that any locus of singularities of a solution of a linear hyperbolic equation can be expected to form a characteristic surface \cite{garabedian1998partial}.

The geodesics that lie on the conoid $\Gamma=0$ are the bicharacteristics of the original equation ($\ref{eq:3.31}$). We have found that, along the characteristic conoid $\Gamma=0$, the ordinary differential operators occurring on the left in the transport equations ($\ref{eq:3.40}$) and ($\ref{eq:3.41}$) apply in the directions of the bicharacteristics. This happens because, within any of its characteristic surfaces, Eq. ($\ref{eq:3.31}$) reduces to an ordinary differential equation imposed on the Cauchy data along each bicharacteristic \cite{garabedian1998partial}.

To evaluate the functions $U_0$, $U_1$, ... it is convenient to work in a new space with coordinates $\theta_1$, ..., $\theta_n$ defined by \cite{ hadamard1952lectures, garabedian1998partial}
\begin{equation} \label{eq:3.43} \theta_i=s p_i(0). \end{equation}
It is possible to do so in a sufficiently small neighbourhood of the parameter point $\xi=(\xi^1, ..., \xi^n)$ because the relevant Jacobian does not vanish. In this new space the geodesics become rays emanating from the origin, while the parameter $s$ can be chosen to coincide with the distance from the origin along each such ray. Each coefficient $U_l$ in the expansion $U= U_l \Gamma^l$ can be written in the form of a series
\begin{equation} \label{eq:3.44} U_l = \sum\limits_{j=0}^\infty P_{lj} \end{equation}
of polynomials $P_{lj}$ homogeneous in the coordinates $ \theta_1, ..., \theta_n$ of degree equal to the index $j$. 

Note that the differential operator $s \frac{d}{ds}$ in Eqs. $(\ref{eq:3.40})$ and $(\ref{eq:3.41})$ does not alter the degree of any of the polynomials $P_{lj}$, with the exception that it reduces a polynomial of degree zero, i.e. a constant, to zero. Thus, unless the coefficient $(D-m-1)$ vanishes for $\theta_1 = ...= \theta_n = 0$, there does not exist a solution $U_0$ of Eq. $(\ref{eq:3.40})$ satisfying the requirement $P_{00} \neq 0$. However, we have chosen the exponent as in $(\ref{eq:3.33})$ precisely so that this will be the case, because our $D= \frac{n}{2}$ at the parameter point $x= \xi$. Thus, we can integrate Eq. $(\ref{eq:3.40})$ to find
\begin{equation} \label{eq:3.45} U_0 = P_{00} e^{- \int\limits_0^s (D-m-1) \frac{ d \tau}{\tau}}, \end{equation}
where $P_{00}$ is a constant as a function of $x$ that might vary with $\xi$.

Similarly, Eq. $(\ref{eq:3.41})$ may be solved by the recursion formula
\begin{equation} \label{eq:3.46} U_l = - \frac{U_0}{4(l-m)s^l} \int\limits_{0}^s \frac{\mathcal{F}[U_{l-1}]\tau^{l-1}}{U_0} d\tau, \hspace{1cm} l \geq 1. \end{equation}
The linear operator on the right turns any convergent series $U_{l-1}$ of the type $(\ref{eq:3.44})$ into another series of the same kind for $U_l$. At this stage, one has still to prove uniform convergence of the expansion of $U$ in powers of $\Gamma$, for sufficiently small values of $s$. This can be obtained by using the method of majorants.

For the purpose of proving convergence, it is sufficient to treat only the particular case $U_0= {\rm constant}$, because the substitution $u_1\equiv \frac{u}{U_0},$ with $U_0$ given by $(\ref{eq:3.45})$, reduces $(\ref{eq:3.31})$ to a new partial differential equation reading as
\begin{equation} \label{eq:3.47} \mathcal{F}_1[u_1] = \mathcal{F}[U_0u_1] =0, \end{equation}
for which such an assumption is verified.

Let $K$ and $\epsilon$ be positive numbers such that the geometric series
\begin{equation} \label{eq:3.48} \sum\limits_{j=0}^\infty \frac{K}{\epsilon^j} \bigg{(} |\theta_1| + ... + |\theta_n| \bigg{)}^j = \frac{ K \epsilon}{ \epsilon - |\theta_1 | - ... - | \theta_n |} \end{equation}
is a majorant for the Taylor expansions in powers of $\theta_1$, ..., $\theta_n$ of all the coefficients of $\mathcal{F}$, which is now a differential operator expressed in these new coordinates. Hence one finds that, if
\begin{equation} \label{eq:3.49} M \{U_l \} = \frac{ M_l}{ \bigg{(} 1 - \frac{|\theta_1| + ... + |\theta_n|}{\epsilon} \bigg{)}^{2l}} \end{equation}
denotes a majorant for $U_l$, with $M_l$ taken as a suitably large constant, then
\begin{equation} \label{eq:3.50} M \{ \mathcal{F} [U_l] \} = \frac{ 2l(2l +1) \bigg{[} 1 + \frac{n}{\epsilon} + \frac{n^2}{\epsilon^2} \bigg{]} K M_l}{ \bigg{(} 1- \frac{ |\theta_1| + ... + |\theta_n |}{\epsilon} \bigg{)}^{2l + 3}} \end{equation}
is a majorant for $\mathcal{F}[U_l]$. We now apply the recursion formula $(\ref{eq:3.46})$ to $(\ref{eq:3.50})$ in order to establish that when $l$ is replaced by $(l+1)$, and with
\begin{equation} \label{eq:3.51} M_{l+1} = \frac{ l(2l+1)}{2(l+1)(l-m+1)} \bigg{[} 1 + \frac{n}{\epsilon} + \frac{n^2}{\epsilon^2} \bigg{]} KM_l \end{equation}
the rule $(\ref{eq:3.49})$ also defines a majorant for $U_{l+1}$.

Since we have recognized that it is enough to consider the case $U_0= {\rm constant}$, the proof that we are interested in reduces to a verification that
\begin{equation} \label{eq:3.52} M \Biggl\{ s^{-l-1} \int\limits_{0}^s \frac{ \tau^l}{(1- \gamma \tau )^{2l+3} } d \tau \Biggr\} = \frac{1}{(l+1)} (1- \gamma s)^{-2l-2} \end{equation}
is a majorant for the integral inside curly brackets on the left. This can be proved with the help of the convenient choice
\begin{equation} \label{eq:3.53} M \Biggl\{ \frac{s^l}{(1- \gamma s)^{2l+3}} \Biggr\} = [ 1+ \gamma s ] \frac{ s^l}{(1- \gamma s)^{2l + 3}} = \frac{1}{(l+1)} \frac{d}{ds} \frac{ s^{l+1}}{(1- \gamma s)^{2l+2}} \end{equation}
of a majorant for the integrand. With this notation, see Garabedian \cite{garabedian1998partial}, $\gamma$ is specified by
\begin{equation} \label{eq:3.54} |\theta_1| + ... + |\theta_n| = \epsilon \gamma s. \end{equation}
By induction, we conclude that the majorants $(\ref{eq:3.49})$ are valid for all $l \geq 1$, provided that $M_1$ is sufficiently large and that $M_2$, $M_3$, ... are given by $(\ref{eq:3.51})$. Thus, the series for $U$ in powers of the world function $\Gamma$ converges in a neighbourhood specified by the upper bound
\begin{equation} \label{eq:3.55} | \Gamma | < \frac{ \bigg{(} 1 - \frac{ | \theta_1| + ... + |\theta_n| }{\epsilon} \bigg{)}^2 }{\bigg{[} 1 + \frac{n}{\epsilon} + \frac{ n^2}{\epsilon^2} \bigg{]} K } \end{equation}
of the parameter point $\xi = (\xi^1, ..., \xi^n)$.

To sum up, we obtain $\textit{locally}$ a fundamental solution $S$ of Eq. $(\ref{eq:3.31})$ having special form
\begin{equation} \label{eq:3.56} R= \frac{U}{\Gamma^m} = \sum\limits_{l=0}^{\infty} U_l \Gamma^{l-m}, \end{equation}
when the number $n$ of independent variables is odd. The addition of a regular term $W$ to the right-hand side of $(\ref{eq:3.56})$ is not mandatory.

\subsection{Even Number of Variables and Logarithmic Term}
When the number $n \geq 4$ of independent variables is even, the exponent $m$ defined in Eq. $(\ref{eq:3.33})$ is a positive integer and the previous construction of $U$ no longer holds, because the whole algorithm involves division by $(l-m)$, which vanishes when $l=m$. Only the functions $U_0$, $U_1$, ..., $U_{m-1}$ can then be obtained as previously seen. This is why a logarithmic term is needed in the formula $(\ref{eq:3.32})$ for the fundamental solution $R$ in a space with even number of dimensions. Hence we look for $R$ in the form
\begin{equation} \label{eq:3.57} R = \sum\limits_{l=0}^{m-1} U_{l}\Gamma^{l-m} + V log (\Gamma) + W. \end{equation}
If the formula $(\ref{eq:3.57})$ is inserted into the homogeneous equation $(\ref{eq:3.31})$, one finds
\begin{equation} \begin{split} \label{eq:3.58} &\sum\limits_{l=0}^{m-1} \mathcal{F} [U_l\Gamma^{l-m}] + \mathcal{F}[V log(\Gamma)] + \mathcal{F}[W] = \mathcal{F} [U_{m-1}] \frac{1}{\Gamma} - \frac{V}{\Gamma^2} \sum\limits_{i,j=1}^n a^{ij} \frac{ \partial \Gamma}{\partial x^i} \frac{\partial \Gamma}{\partial x^j} \\
& + \bigg{[} 2 \sum\limits_{i,j=1}^n a^{ij} \frac{ \partial V}{\partial x^i}\frac{\partial \Gamma}{\partial x^j} + 4DV \bigg{]} \frac{1}{\Gamma} + \mathcal{F} [V]log(\Gamma) + \mathcal{F} [W] =0, \end{split}\end{equation}
by virtue of equation $(\ref{eq:3.38})$ and of the transport equations $(\ref{eq:3.40})$ and $(\ref{eq:3.41})$. In Eq. $(\ref{eq:3.58})$ the term which is non-linear in the derivatives of $\Gamma$ is re-expressed from Eq. $(\ref{eq:3.18})$ (with $z^i$ therein written as $x^i$), and we arrive at
\begin{equation} \label{eq:3.59} \Bigg\{ \mathcal{F}[U_{m-1}] + 4 \bigg{[} s \frac{dV}{ds} + (D-1) V \bigg{]} \Biggr\} \frac{1}{\Gamma} + \mathcal{F}[V] log(\Gamma) + \mathcal{F}[W] =0. \end{equation}
We are now going to prove that this equation determines $V$ uniquely, whereas $W$ can be selected in a number of ways, in order to satisfy the requirements imposed on it.

We note that the ${\rm log}(\Gamma)$ is not balanced by other terms in Eq. $(\ref{eq:3.59})$, hence the function $V$ must solve the homogeneous equation
\begin{equation} \label{eq:3.60} \mathcal{F}[V] =0. \end{equation}
Moreover the coefficient of $\frac{1}{\Gamma}$ in Eq. $(\ref{eq:3.59})$ must vanish along the whole characteristic conoid $\Gamma=0$, since the remaining regular term $\mathcal{F}[W]$ cannot balance the effect of $\frac{1}{\Gamma}$. Thus, the function $V$ has to solve also the ordinary differential equation.
\begin{equation} \label{eq:3.61} \bigg{[} s \frac{d}{ds} + (D-1) \bigg{]} V = - \frac{1}{4} \mathcal{F} [U_{m-1}] \end{equation}
on each bicharacteristic that generates the conoid. From our study of the transport equations $(\ref{eq:3.40})$ and $(\ref{eq:3.41})$ we know that Eq. $(\ref{eq:3.61})$ determines the function $V$ uniquely on the characteristic surface $\Gamma=0$, and that $V$ must indeed coincide there with the function $V_0$ defined in a neighbourhood of the parameter point $\xi$ by the integral
\begin{equation} \label{eq:3.62} V_0 = - \frac{U_0}{4s^m} \int\limits_0^s \frac{ \mathcal{F}[U_{m-1}]\tau^{m-1}}{U_0} d\tau. \end{equation}
We have therefore formulated a characteristic initial-value problem for the partial differential equation $(\ref{eq:3.60})$, in which the unknown function $V$ is prescribed on the conoid $\Gamma=0$. This result agrees with our previous findings, according to which the coefficient of the logarithm is a Riemann kernel satisfying a characteristic initial-value problem.

From another point of view \cite{garabedian1998partial}, one can think of Eq. $(\ref{eq:3.62})$ as a substitute for the recursion formula $(\ref{eq:3.46})$ in the case $l=m$. This property suggests trying to find $V$ as a convergent power series
\begin{equation} \label{eq:3.63} V= \sum\limits_{l=0}^\infty V_l\Gamma^l. \end{equation}
Insertion of Eq. $(\ref{eq:3.63})$ into Eq. $(\ref{eq:3.60})$ leads to infinitely many powers of $\Gamma$ whose coefficients should all be set to zero. We do so, and integrate the resulting ordinary differential equations, finding therefore
\begin{equation} \label{eq:3.64} V_l =- \frac{U_0}{4ls^{l+m}} \int\limits_0^s \frac{ \mathcal{F}[V_{l-1}] \tau^{l+m-1}}{U_0} d\tau, \hspace{1cm} \forall l \geq 1. \end{equation}
The first term $V_0$ is instead obtained from Eq. $(\ref{eq:3.62})$. The method of majorants can be used to deduce estimates like $(\ref{eq:3.49})$ for the functions $V_l$ provided by $(\ref{eq:3.64})$. Thus, the series $(\ref{eq:3.63})$ converges uniformly in a region like $(\ref{eq:3.54})$ surrounding the parameter point $\xi$. Another result of this method consists on the fact that $(\ref{eq:3.59})$ becomes a partial differential equation for $W$ with a inhomogeneous term that is regular in the neighbourhood of $\xi$. This determines $W$ only up to the addition of an arbitrary solution of the homogeneous equation $(\ref{eq:3.31})$. A particular choice for $W$ that agrees with the method used so far demands that
\begin{equation} \label{eq:3.65} W = \sum\limits_{l=1}^\infty W_l \Gamma^l, \end{equation}
where the coefficient functions $W_1$, $W_2$, ... will be found from a recursive scheme like $(\ref{eq:3.64})$. By requiring that the series for $W$ should not include the term $W_0$ corresponding to the value $l=0$, one obtains a unique determination of the fundamental solution $(\ref{eq:3.57})$ such that the functions $U_0$, ..., $U_{m-1}$, $V$ and $W$ are all regular as functions of the parameter point $\xi$. 

The limitation of the Hadamard approach described so far is that it yields the fundamental solution only locally, i.e. in a sufficiently small neighbourhood of $\xi$. Furthermore, when the inverse metric components $a^{ij}$ are varying, also the world function $\Gamma$ is defined only in the small.

\subsection{Example of Fundamental Solution: Scalar Wave Equation with Smooth Initial Conditions }
Following Sobolev \cite{sobolev1963applications}, we study the wave operator
\begin{equation} \label{eq:3.66} \Box u = \bigg{(} \Delta - \frac{\partial^2 }{\partial t^2} \bigg{)} u \end{equation}
on the domain $\Omega$ of the $(n+1)$-dimensional space of coordinates $x_1$, $x_2$, ..., $x_n$, $t$ limited by a smooth surface $S$. Let $u(x_1, x_2, ..., x_n, t)$ and $v(x_1, x_2, ..., x_n, t)$ be twice differentiable in $\Omega$ with all their first derivatives continuous on the surface $S$. Thus, we consider
\begin{equation} \label{eq:3.67} \Box u = f; \hspace{3cm} \Box v = \varphi \end{equation}
and the integral
\begin{equation}\begin{split} \label{eq:3.68} J = & \underbrace{ \int \dots \int  }_S   \Biggl\{ - \sum\limits_{i=1}^n \bigg{(} \frac{\partial u}{\partial x_i}\frac{\partial v}{\partial t} + \frac{\partial v}{\partial x_i}\frac{\partial u}{\partial t} \bigg{)} cos( \vec{n} x_i )\\
& + \bigg{(} \frac{\partial u}{\partial t}\frac{\partial v}{\partial t} + \sum\limits_{i=1}^n \frac{\partial u}{\partial x_i}\frac{\partial v}{\partial x_i}\bigg{)} cos (\vec{n}t)\Bigg\}  dS \end{split} \end{equation}
where $\vec{n}$ is the inward-pointing normal to $S$.

A simple transformation leads to 
\begin{equation} \begin{split} \label{eq:3.69}& J= - \underbrace{ \int \dots \int }_\Omega \Biggl\{ \frac{\partial}{\partial t} \bigg{(} \frac{\partial u}{\partial t} \frac{\partial v}{\partial t} + \sum\limits_{i=1}^n \frac{\partial u}{\partial x_i} \frac{\partial v}{\partial x_i} \bigg{)} - \sum\limits_{i=1}^n \frac{\partial}{\partial x_i} \bigg{(} \frac{\partial u}{\partial x_i} \frac{\partial v}{\partial t} + \frac{\partial v}{\partial x_i}\frac{\partial u}{\partial t} \bigg{)} \Biggr\} d \Omega \\
& = -  \underbrace{ \int \dots \int }_\Omega \Biggl\{ \frac{\partial u}{\partial t} \bigg{[} \frac{\partial^2 v}{\partial t^2} - \sum\limits_{i=1}^n \frac{\partial^2 v}{\partial {x_i}^2} \bigg{]} + \frac{\partial v}{\partial t} \bigg{[} \frac{\partial^2 u}{\partial t^2} - \sum\limits_{i=1}^n \frac{\partial^2 u}{\partial t^2} \bigg{]} \Biggr\} d \Omega \\
&=  \underbrace{ \int \dots \int }_\Omega \Biggl\{ \frac{\partial u}{\partial t}\Box v + \frac{\partial v}{\partial t} \Box u \Biggr\} d\Omega  \end{split}\end{equation}
Replacing $\Box u$ and $\Box v$ with their values, we have
\begin{equation} \label{eq:3.70} J =  \underbrace{ \int \dots \int }_\Omega \bigg{(} \frac{\partial u}{\partial t} \varphi + \frac{\partial v}{\partial t} f \bigg{)} d \Omega. \end{equation}
Let us now consider the expression
\begin{equation} \begin{split} \label{eq:3.71} & \sum\limits_{i=1}^n \bigg{(} \frac{\partial u}{\partial x_i} cos( \vec{n} t) - \frac{\partial u}{\partial t} cos(\vec{n} x_i) \bigg{)} \bigg{(} \frac{\partial v}{\partial x_i} cos( \vec{n}t) - \frac{\partial v}{\partial t} cos( \vec{n}x_i) \bigg{)}\\
& = \frac{\partial u}{\partial t} \frac{\partial v}{\partial t} \sum\limits_{i=1}^n (cos(\vec{n} x_i))^2  + (cos(\vec{n} t))^2\sum\limits_{i=1}^n \frac{\partial u}{\partial x_i}\frac{\partial v}{\partial x_i} \\
& - \sum\limits_{i=1}^n \bigg{(} \frac{\partial u}{\partial t} \frac{\partial v}{\partial x_i} - \frac{\partial v}{\partial t} \frac{\partial u}{\partial x_i} \bigg{)} cos(\vec{n} x_i) cos(\vec{n}t) = cos(\vec{n}t) \bigg{[} \frac{\partial u}{\partial t}\frac{\partial v}{\partial t} cos(\vec{n}t) \\
& + \sum\limits_{i=1}^n \frac{\partial u}{\partial x_i}\frac{\partial v}{\partial x_i} cos(\vec{n} t)  - \sum\limits_{i=1}^n \bigg{(} \frac{\partial u}{\partial t}\frac{\partial v}{\partial x_i} + \frac{\partial v}{\partial t} \frac{\partial u}{\partial x_i} \bigg{)} cos(\vec{n} x_i) \bigg{]} \\
& + \frac{\partial u}{\partial t} \frac{\partial v}{\partial t} \Biggl\{ \sum\limits_{i=1}^n (cos(\vec{n}x_i))^2 - (cos(\vec{n}t))^2 \Biggr\}
 \end{split}\end{equation}
Everywhere, except at the points of the surface $S$ where $cos(\vec{n}t)=0$, we have
\begin{equation} \begin{split} \label{eq:3.72} & J= \underbrace{ \int \dots \int }_S \bigg{[} \frac{1}{cos(\vec{n} t)} \sum\limits_{i=1}^n \bigg{(} \frac{\partial u}{\partial x_i} cos( \vec{n} t) - \frac{\partial u}{\partial t} cos(\vec{n} x_i) \bigg{)} \\
& \times \bigg{(} \frac{\partial v}{\partial x_i} cos(\vec{n}t) - \frac{\partial v}{\partial t} cos(\vec{n}x_i) \bigg{)} - \frac{\partial u}{\partial t} \frac{\partial v}{\partial t} \frac{1 - 2(cos(\vec{n}t))^2 }{cos(\vec{n}t)} \bigg{]} dS \\
&= \underbrace{ \int \dots \int }_S \Phi dS \end{split}\end{equation}
where $\Phi$  is the integer of $J$. If $u=v$, then for all the points of $S$ where $|cos(\vec{n}t) | \geq \frac{1}{\sqrt{2}}$, the previous expression always assumes the same sign, that is the same sign of $cos(\vec{n}t)$.
\begin{equation} \label{eq:3.73} sign \Phi = sign (cos(\vec{n}t)). \end{equation}
Whereas, if $cos(\vec{n}t) =0$, $\Phi$ becomes
\begin{equation} \label{eq:3.74} \Phi = - \bigg{(} \frac{\partial u}{\partial t} \frac{\partial v}{\partial \vec{n}} + \frac{\partial v}{\partial t}\frac{\partial u}{\partial \vec{n}} \bigg{)} \end{equation}

Now, let us suppose that $u$ is the solution of the wave equation
\begin{equation} \label{eq:3.75} \Box u =0 \end{equation}
in an infinite homogeneous medium.

By taking $u=v$ and making use of Eq. ($\ref{eq:3.69}$), where $\Omega$ it is the truncated cone whose generators form an angle of $\frac{\pi}{4}$ degrees with the axis $0t$ (as shown in fig ($\ref{fig:4}$)). Thus
\begin{equation} \label{eq:3.76} cos(\vec{n} t) =
\left\{\begin{array} {l}
- \frac{1}{\sqrt{2}} \; \; \rm{in} \; S_1 \\
- 1 \; \; \; \rm{in} \; S_2 \\
+ 1 \; \; \; \rm{in} \; S_3
\end{array}\right.
\end{equation}

\begin{figure}
\centering

\includegraphics{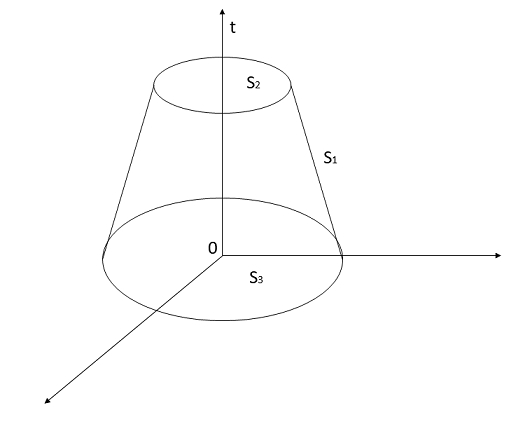}

\caption{}\label{fig:4}

\end{figure}

where $S_3$ is the lower base, $S_2$ is the upper base and $S_1$ is the lateral surface of the truncated cone. Let us suppose that $t$ on $S_2$ is equal to $t_0$. Making use of Eq. ($\ref{eq:3.69}$) and $\Box u =0$, we obtain
\begin{equation} \begin{split} \label{eq:3.77} J(u,u) =&   \underbrace{ \int \dots \int }_{S_1} \Phi dS +  \underbrace{ \int \dots \int }_{S_2} \Phi dS +  \underbrace{ \int \dots \int }_{S_3} \Biggl\{ - i \sum\limits_{i=1}^\infty \frac{\partial u}{\partial x_i} \frac{\partial u}{\partial t} cos(\vec{n} x_i) \\
& + \bigg{[} \bigg{(} \frac{\partial u}{\partial t} \bigg{)}^2 + \sum\limits_{i=1}^n \bigg{(} \frac{\partial u}{\partial x_i} \bigg{)}^2 \bigg{]} cos( \vec{n} t) \Biggr\} dS =0 \end{split} \end{equation}
By recalling the ($\ref{eq:3.73}$), we have
\begin{equation} \label{eq:3.78}    \underbrace{ \int \dots \int }_{S_1} \Phi dS < 0. \end{equation}
Therefore, the Eq. ($\ref{eq:3.77}$) reads as
\begin{equation} \label{eq:3.79} - \underbrace{ \int \dots \int }_{S_2} \Phi dS = \underbrace{ \int \dots \int }_{S_3} \Phi dS + \underbrace{ \int \dots \int }_{S_1} \Phi dS <  \underbrace{ \int \dots \int }_{S_3} \Phi dS. \end{equation}
Since on $S_2$ and $S_3$ we have $cos(\vec{n} x_i)=0$, it is necessary to evaluate
\begin{equation} \label{eq:3.80}  \underbrace{ \int \dots \int }_{S_2} \bigg{[} \sum\limits_{i=1}^n \bigg{(} \frac{\partial u}{\partial x_i} \bigg{)}^2 + \bigg{(} \frac{\partial u}{\partial t} \bigg{)}^2 \bigg{]} dS \leq  \underbrace{ \int \dots \int }_{S_3} \bigg{[} \sum\limits_{i=1}^n \bigg{(} \frac{\partial u}{\partial x_i} \bigg{)}^2 + \bigg{(} \frac{\partial u}{\partial t} \bigg{)}^2 \bigg{]} dS. \end{equation}
For this purpose, it is also necessary to evaluate
\begin{equation} \label{eq:3.81}  \underbrace{ \int \dots \int }_{S_2} u^2 dS. \end{equation}
From Eq. ($\ref{eq:3.80}$) it follows that $\underbrace{ \int \dots \int }_{S_2} \big{(} \frac{\partial u}{\partial t} \big{)}^2 dS$ is bounded. If we denote by $y(t)$ the quantity
\begin{equation} \label{eq:3.82} y(t) = \underbrace{ \int \dots \int }_{\Sigma_t} u^2 dS, \end{equation}
where $\Sigma_t$ is the surface on which the coordinates $x_1$, $x_2$, ..., $x_n$ assume the same values as above $S_2$, whereas $t$ goes from 0 to $t_0$, we have
\begin{equation} \label{eq:3.83} y'(t) = 2 \underbrace{ \int \dots \int }_{\Sigma_t} u(t) \frac{\partial u}{\partial t} dS. \end{equation}
Making use of Cauchy-Bunjakovsky inequality, it follows that 
\begin{equation} \label{eq:3.84} | y'(t)| \leq 2 \bigg{[}  \underbrace{ \int \dots \int }_{\Sigma_t} \bigg{(} \frac{\partial u}{\partial t} \bigg{)}^2 dS \bigg{]}^{\frac{1}{2}}  \bigg{[} \underbrace{ \int \dots \int }_{\Sigma_t} u^2(t) dS \bigg{]}^{\frac{1}{2}} \end{equation}
and by virtue of the inequality ($\ref{eq:3.80}$), we have
\begin{equation} \label{eq:3.85} |y'(t)| \leq 2 A (y(t))^{\frac{1}{2}}, \end{equation} 
where A is given by
\begin{equation} \label{eq:3.86} A = \bigg{[}  \underbrace{ \int \dots \int }_{S_2} \Biggl\{ \sum\limits_{i=1}^n \bigg{(} \frac{\partial u}{\partial x_i} \bigg{)}^2 +  \bigg{(} \frac{\partial u}{\partial t} \bigg{)}^2 \Biggr\} dS_3 \bigg{]}^{\frac{1}{2}}. \end{equation}
The inequality ($\ref{eq:3.85}$) implies that
\begin{equation} \label{eq:3.87} \frac{1}{2} \frac{dy}{\sqrt{y}} \leq A dt \; \rightarrow \; \frac{d}{dt} \sqrt{y} \leq A. \end{equation}
Similary we obtain $\sqrt{y_1} \leq \sqrt{y_0} + At$, from which eventually we have
\begin{equation} \label{eq:3.88} y \leq y_0 + 2 A \sqrt{y_0} t + A^2 t^2; \end{equation} 
if we set $ \underbrace{ \int \dots \int }_{S_3} u^2 dS = B^2 $ it follows that $y_0 \leq B^2$ and then
\begin{equation} \label{eq:3.89} y \leq (B + At)^2. \end {equation}
Now, by intersecting our truncated cone with the plane $t= {\rm const.}$, where $\Sigma_t$ is the $n$-dimensional space domain resulting from that intersection, and applying the same procedure that we have previously shown, we have
\begin{equation}\begin{split} \label{eq:3.90} & \overbrace{\underbrace{ \int \dots \int }_{\Sigma_t}}^n \Biggl\{ \sum\limits_{i=1}^n \bigg{(} \frac{\partial u}{\partial x_i} \bigg{)}^2 + \bigg{(} \frac{\partial u}{\partial t} \bigg{)}^2 \Biggr\} dS \\
& \leq  \underbrace{ \int \dots \int }_{S_3} \Biggl\{ \sum\limits_{i=1}^n \bigg{(} \frac{\partial u}{\partial x_i} \bigg{)}^2 + \bigg{(} \frac{\partial u}{\partial t} \bigg{)}^2 \Biggr\}dS. \end{split} \end{equation}
If we integrate Eq. ($\ref{eq:3.90}$) over $t$ from 0 to $t_0$, it reads as
\begin{equation} \begin{split} \label{eq:3.91} & \overbrace{\underbrace{ \int \dots \int }_{V}}^{n+1} \Biggl\{ \sum\limits_{i=1}^n \bigg{(} \frac{\partial u}{\partial x_i} \bigg{)}^2 + \bigg{(} \frac{\partial u}{\partial t} \bigg{)}^2 \Biggr\}dV \leq  t_0 \underbrace{ \int \dots \int }_{S_3} \Biggl\{ \sum\limits_{i=1}^n \bigg{(} \frac{\partial u}{\partial x_i} \bigg{)}^2 \\
&+ \bigg{(} \frac{\partial u}{\partial t} \bigg{)}^2 \Biggr\}dS \leq A^2 t_0, \end{split} \end{equation}
where $V$ is the truncated cone. In the same manner, making use of 
\begin{equation} \label{eq:3.92} y =   \underbrace{ \int \dots \int }_{\Sigma_t}u^2 dS \leq (B+ A^2 t)^2 \end{equation}
and by integration over $t$ from 0 to $t_0$, we have
\begin{equation} \label{eq:3.93}  \overbrace{ \underbrace{ \int \dots \int }_{V}}^{n+1} u^2 dV \leq \frac{1}{A} [ (B+ At_0)^3 - B^3 ] = 3B^2 t_0 + 3AB {t_0}^2+A^2 {t_0}^3. \end{equation}
The inequality ($\ref{eq:3.92}$) has two important corollaries.
\begin{cor}$\\$
Suppose that the initial values of $u$ and $\frac{\partial u}{\partial t}$ are on $S_3$. This implies that $A=B=0$ and as a consequence of $(\ref{eq:3.92}$) we have $y=0$, i.e. $u=0$ in $V$. Hence, if on the base of the truncated cone $u=0$ and $\frac{\partial u}{\partial t}=0$, then $u=0$ inside this cone.
\end{cor}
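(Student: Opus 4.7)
The plan is to apply directly the energy estimate $(\ref{eq:3.92})$, which already bounds $y(t) \equiv \int\int_{\Sigma_t} u^2\, dS$ in terms of the two non-negative constants $A$ and $B$ built out of the Cauchy data on the lower base $S_3$. First I would verify that the vanishing hypotheses $u=0$ and $\tfrac{\partial u}{\partial t} = 0$ on $S_3$ force both $A$ and $B$ to be zero. From the definition $B^2 = \int\int_{S_3} u^2\, dS$ it is immediate that $B=0$. For $A$, defined by $(\ref{eq:3.86})$, I would note that $S_3$ is the flat spacelike base at constant $t$, so the spatial derivatives $\tfrac{\partial u}{\partial x_i}$ along $S_3$ are tangential derivatives of $u|_{S_3}$; since $u$ vanishes identically on $S_3$, each $\tfrac{\partial u}{\partial x_i}$ vanishes there as well, and the hypothesis $\tfrac{\partial u}{\partial t}=0$ disposes of the remaining summand. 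Hence $A=0$.

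With $A=B=0$ inserted in $(\ref{eq:3.92})$ one concludes $y(t) \le 0$ for every $t\in[0,t_0]$. Since $y(t)$ is manifestly non-negative, this forces $y(t)=0$ throughout the interval. As $y(t)$ is the integral of the non-negative continuous quantity $u^2$ over the slice $\Sigma_t$, continuity of $u$ yields $u\equiv 0$ on each $\Sigma_t$, and therefore $u\equiv 0$ throughout the truncated cone $V$, which is precisely the desired conclusion.

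The only point that deserves explicit care, and which I would flag as the main obstacle, is the step $A=0$: one must genuinely exploit that $S_3$ is a spacelike hyperplane of constant $t$, so that all first spatial derivatives of $u$ on $S_3$ are determined by $u|_{S_3}$ through tangential differentiation. On a more general initial surface the two hypotheses $u=0$ and $\tfrac{\partial u}{\partial t}=0$ would not suffice to control every first derivative entering $A^2$, and one would need to prescribe the full gradient instead. Apart from this verification, the proof is a one-line substitution into the already-established estimate $(\ref{eq:3.92})$, with no further analytic work required.
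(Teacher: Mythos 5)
Your proof is correct and follows essentially the same route as the paper: show $A=B=0$ from the vanishing data on $S_3$ (noting that the spatial derivatives on the constant-$t$ base are tangential, hence vanish when $u|_{S_3}\equiv 0$), substitute into the energy bound $(\ref{eq:3.92})$ to get $y(t)\le 0$, and conclude $u\equiv 0$ in $V$ by non-negativity and continuity. The only difference is that you make explicit the tangential-derivative justification for $A=0$, which the paper leaves implicit; this is a worthwhile clarification but not a different argument.
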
 
\begin{cor} $\\$
The value of the function $u$, that is a solution of the given equation, at a point ${x_1}^0$, ..., ${x_n}^0$, $t^0$, is given by the initial values of $u$ and $\frac{\partial u}{\partial t}$ on the sphere $\eta = \bigg{(} \sum\limits_{i=1}^n (x_i - {x_i}^0)^2 \bigg{)}^{\frac{1}{2}} \leq t_0$, that is the intersection of the characteristic cone with vertex in every point with the plane $t=0$.
\end{cor}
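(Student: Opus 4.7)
The plan is to reduce the domain-of-dependence assertion (Corollary 2) to the uniqueness statement (Corollary 1) via linearity of the wave operator $\Box$. First I would notice that the geometric object underlying both corollaries is the same: the backward characteristic cone from $P=(x_1^0,\ldots,x_n^0,t^0)$, whose generators make angle $\tfrac{\pi}{4}$ with the $t$-axis, intersects the hyperplane $t=0$ exactly in the closed ball $\eta\leq t^0$. The claim ``$u(P)$ is determined by the initial data on $\eta\leq t^0$'' is equivalent, by linearity, to: if $u$ is a solution with $u=\partial_t u=0$ on the ball $\eta\leq t^0$ at $t=0$, then $u(P)=0$.

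Second, I would apply Corollary 1 to a truncated cone whose lower base $S_3$ is the ball $\eta\leq t^0$ in $\{t=0\}$, but whose upper base $S_2$ is the small ball of radius $\epsilon$ centered at $(x_1^0,\ldots,x_n^0)$ in the hyperplane $\{t=t^0-\epsilon\}$, with $0<\epsilon<t^0$. This is genuinely a truncated $45^\circ$ cone of the kind considered in the estimate leading to $(\ref{eq:3.89})$, so Corollary 1 gives $u\equiv 0$ throughout this truncated cone. In particular, $u$ vanishes at every point of the axis of the cone with $0<t<t^0$.

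Third, letting $\epsilon\to 0$, the point $P$ is the limit of such interior points; by the continuity of $u$ (which is built into the regularity hypotheses on the solution used throughout Section 3.3), we conclude $u(P)=0$. Applying this to the difference $w=u_1-u_2$ of any two solutions sharing Cauchy data on $\eta\leq t^0$ proves Corollary 2. Moreover, the converse inclusion, that the data \emph{outside} the ball are irrelevant, also follows: altering $u$ and $\partial_t u$ on $\{t=0\}\setminus\{\eta\leq t^0\}$ changes a solution by a function whose initial data are supported away from $\eta\leq t^0$ and which therefore, by the same argument, vanishes at $P$.

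The main obstacle is really only bookkeeping at the apex: one needs to know that the solution is continuous up to $P$ so that the limit $\epsilon\to 0$ is legitimate. This is immediate from the standing assumption (twice differentiability with continuous first derivatives on the closure) used in deriving the energy identity $(\ref{eq:3.69})$. No new PDE technique is required beyond Corollary 1; the content of Corollary 2 is a geometric repackaging of uniqueness, identifying the backward characteristic cone through $P$ with the domain of dependence.
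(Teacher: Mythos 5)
Your proposal is correct and follows essentially the same route as the paper: form the difference of two solutions sharing data on the ball $\eta\leq t^0$, observe its Cauchy data vanish there, and invoke Corollary 1 on the backward characteristic cone to conclude the difference vanishes at the vertex. The only addition is your explicit $\epsilon\to 0$ limiting argument to pass from the truncated cone of Corollary 1 to the apex itself, which the paper leaves implicit but which is justified exactly as you say, by the continuity assumed of the solution.
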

In fact, if for every two solutions of the wave equation the initial data of $u$ and $\frac{\partial u}{\partial t}$ coincide on this domain, the data of their difference will vanish on this domain, and from the corollary 3.3.1, the difference will be null on the vertex of the cone. Thus, at the top of the cone, the two solutions will coincide.
\begin{thm}$\\$ \label{thm:2}
Let $u$ be a solution to the homogeneous wave equation. If the initial values $u|_{t=0}$ and ${\frac{\partial u}{\partial t}}|_{t=0}$ are infinitely differentiable on the whole space of the $x_1$, .., $x_n$, then the function $u$ itself has all its derivatives up to every order.
\end{thm}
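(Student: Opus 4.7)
The plan is to establish smoothness of $u$ by combining the translation-invariance of $\Box$ with the continuous dependence on Cauchy data encoded in the energy estimates (\ref{eq:3.91})--(\ref{eq:3.93}), and then to trade time derivatives for spatial ones via the equation itself. Since the wave operator has constant coefficients, it commutes with every spatial translation, so that for each coordinate direction $e_i$ and each real $h\neq 0$ the difference quotient
$$
v_h^{(i)}(x,t)\equiv\frac{u(x+h e_i,t)-u(x,t)}{h}
$$
is itself a classical solution of $\Box v=0$, whose Cauchy data are the corresponding difference quotients of $f\equiv u|_{t=0}$ and $g\equiv\partial_t u|_{t=0}$. By the assumed smoothness of $(f,g)$, these data and all their derivatives converge uniformly on compact subsets of the initial hyperplane to $(\partial_{x_i}f,\partial_{x_i}g)$ as $h\to 0$.

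I would then apply estimate (\ref{eq:3.93}) to the difference $v_h^{(i)}-v_{h'}^{(i)}$ on an arbitrary truncated cone: the constants $A$ and $B$ built from its Cauchy data tend to zero as $h,h'\to 0$, whence $\{v_h^{(i)}\}$ is Cauchy in the corresponding $L^2$-type norm. Its limit solves $\Box w=0$ with the smooth Cauchy data $(\partial_{x_i}f,\partial_{x_i}g)$ and, by Corollary~3.3.1, is determined unambiguously, coinciding classically with $\partial_{x_i}u$ wherever the latter exists. Iterating this procedure for multi-indices of increasing length yields every pure spatial derivative $\partial_x^{\alpha}u$ as a continuous solution of the wave equation with smooth Cauchy data.

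Time and mixed derivatives are then obtained directly from the equation. Differentiating $\partial_t^2 u=\Delta u$ in $t$ and using commutativity with $\Delta$, one finds inductively $\partial_t^{2k}u=\Delta^k u$ and $\partial_t^{2k+1}u=\Delta^k\partial_t u$, and more generally every mixed derivative $\partial_t^{\ell}\partial_x^{\alpha}u$ is expressible in terms of pure spatial derivatives of $u$ and of $\partial_t u$, all of which are continuous by the previous step. Because the truncated cone was arbitrary, this establishes $u\in C^{\infty}$ throughout space-time.

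The principal obstacle lies in passing from the integral-type convergence afforded by the energy estimates to classical differentiability: at each stage the estimates control only $L^2$ norms of the difference quotients, whereas the statement of the theorem demands pointwise smoothness. The role of the uniqueness Corollary~3.3.1 is precisely to close this gap, identifying the weak limit of the $v_h^{(i)}$ with the unique classical solution sharing its limiting Cauchy data; the induction then promotes integral control of order $k$ to classical continuity of the $(k{-}1)$-th derivatives at the next step, since at every new order the newly produced solution is fed the smooth data $(\partial_x^{\alpha}f,\partial_x^{\alpha}g)$, which are themselves of arbitrarily high regularity.
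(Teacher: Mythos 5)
Your overall strategy (difference quotients plus energy estimates) is a reasonable alternative in outline, but as written it has a genuine gap at the step where $L^2$ control is converted into classical differentiability, and the tool you invoke there cannot do the job. Corollary~3.3.1 is a \emph{uniqueness} statement for twice-differentiable solutions: it says that two classical solutions with the same Cauchy data coincide. It cannot identify the $L^2$-limit $w$ of the difference quotients $v_h^{(i)}$ with ``the unique classical solution'' having data $(\partial_{x_i}f,\partial_{x_i}g)$, because the existence of a classical (indeed even continuous) solution with those data is precisely the existence-and-regularity assertion under proof; an $L^2$ limit of classical solutions is a priori only a square-integrable function solving the equation distributionally, and uniqueness within the classical class says nothing about it. The same difficulty then blocks your induction: the energy identities ($\ref{eq:3.80}$), ($\ref{eq:3.91}$)--($\ref{eq:3.93}$) are derived by integration by parts for twice-differentiable solutions, so at the next stage you are not entitled to apply them to difference quotients of $\partial_x^{\alpha}u$, which at that point is known only as a weak $L^2$ derivative. (Your closing claim that integral control at order $k$ yields classical continuity at order $k-1$ is also quantitatively wrong: the embedding loses $\lfloor n/2\rfloor+1$ orders, not one; this is harmless only because all orders are eventually controlled.)

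The missing idea --- and the one the paper actually uses --- is to obtain convergence in \emph{every} Sobolev space $W_2^l$ and then invoke the embedding (inclusion) theorem, which turns convergence in all $W_2^l$ into uniform convergence of all derivatives and hence classical smoothness of the limit. The paper arranges this by a different construction: it first proves a lemma for data vanishing near the boundary of a box, where the Fourier partial sums $u^{(N)}$ of Eq.~($\ref{eq:3.100}$) furnish an explicit sequence of $C^\infty$ solutions; the energy identities applied to all spatial derivatives of the differences $\omega^{(k,r)}$ show the sequence is Cauchy in each $W_2^l$, and completeness plus the inclusion theorem give a smooth limit. The theorem then follows by multiplying the data by a smooth cutoff as in ($\ref{eq:3.110}$)--($\ref{eq:3.111}$) and using the domain-of-dependence property to see that the modified solution agrees with $u$ on a pyramid. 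Your difference-quotient scheme can be repaired along the same lines --- take \emph{iterated} difference quotients of $u$ itself, which remain classical solutions because they are finite linear combinations of translates of $u$, deduce that all weak derivatives of $u$ lie in $L^2_{\mathrm{loc}}$, and only then apply the Sobolev embedding --- but the appeal to uniqueness must be replaced by that embedding argument.
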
 

To begin with, we will estabilish this theorem in a more general situation. Hence we will demonstate the lemma
\begin{lem} $\\$
Let $u$ be the function that satisfies the equation
\begin{equation*} \Delta u - \frac{\partial^2 u}{\partial t^2} = 0 \end{equation*}
on the domain $-a \leq x \leq a$ , for $i=1, 2, ..., n$, where $a$ is a constant, and let us suppose that
\begin{equation} \label{eq:3.94} u|_{x_i} = \pm a =0, \end{equation}
i.e. $u$ vanishes on the boundary of this domain, and that, at $t=0$, we have
\begin{equation} \label{eq:3.95} u|_{t=0} = \varphi_0 \hspace{0.5cm} \rm{and} \hspace{0.5cm} {\frac{\partial u}{\partial t}}\bigg{|}_{t=0} = \varphi_1 \end{equation}
where the functions $\varphi_0$ and $\varphi_1$ have their derivatives continuous up to every order and they, together with their derivatives, vanish on the boundary of this domain. Hence, $u$ has continuous derivatives up to every order.
\end{lem}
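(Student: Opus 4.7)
The plan is to establish $C^\infty$ regularity by iterating the previously derived energy estimates on successive time-derivatives of $u$, and then trading time-derivatives for space-derivatives via the wave equation itself. The starting observation is that if $u$ solves $\Box u = 0$ with data $(\varphi_0, \varphi_1)$, then (formally) $v := \partial_t u$ solves the same wave equation with vanishing Dirichlet boundary values and initial data $v|_{t=0} = \varphi_1$, $\partial_t v|_{t=0} = \Delta \varphi_0$. By the hypothesis that $\varphi_0, \varphi_1$ and all their derivatives vanish on $x_i = \pm a$, the pair $(\varphi_1, \Delta\varphi_0)$ satisfies the same hypothesis. Iterating, $u_k := \partial_t^k u$ solves an initial/boundary-value problem of exactly the same form, with Cauchy data built from $\Delta^j \varphi_0$ and $\Delta^j \varphi_1$.

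Next I would apply the energy estimate \eqref{eq:3.91}, together with the $L^2$-bound \eqref{eq:3.93}, to each $u_k$. This gives, for every $k$, uniform $L^2$-control of $u_k$, $\partial_t u_k$ and $\nabla u_k$ in the truncated cone (equivalently, in space-time slabs compatible with finite propagation speed). A Sobolev-embedding argument on slabs then upgrades $L^2$-bounds of sufficiently many derivatives of $u_k$ to pointwise continuity of $u_k$. Because the equation itself lets us write $\partial_t^2 u = \Delta u$, any mixed derivative $\partial_x^\alpha \partial_t^\ell u$ can be expressed, after pairing time-derivatives, in terms of $\Delta^j \partial_t^{\ell - 2j} u$, which we have already shown to be continuous. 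Hence all mixed derivatives of $u$ are continuous.

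The main obstacle is making the induction rigorous: one needs to justify a priori that $\partial_t u$ really satisfies the wave equation pointwise, with the prescribed Cauchy data, and that difference quotients in $t$ converge in a sense strong enough to commute with the energy estimate. The cleanest way to bypass this is to realize $u$ by separation of variables on the cube. Writing $\{\phi_{\vec{n}}\}$ for the $L^2$-orthonormal Dirichlet eigenfunctions of $-\Delta$ with eigenvalues $\lambda_{\vec{n}}^2$, one has
\begin{equation*}
u(x,t) = \sum_{\vec{n}} \Bigl(a_{\vec{n}} \cos(\lambda_{\vec{n}} t) + \tfrac{b_{\vec{n}}}{\lambda_{\vec{n}}} \sin(\lambda_{\vec{n}} t)\Bigr) \phi_{\vec{n}}(x),
\end{equation*}
with $a_{\vec{n}} = \langle \varphi_0, \phi_{\vec{n}}\rangle$ and $b_{\vec{n}} = \langle \varphi_1, \phi_{\vec{n}}\rangle$. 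Repeated integration by parts, each step legitimated by the vanishing of $\Delta^j \varphi_0$ and $\Delta^j \varphi_1$ on the boundary, gives $a_{\vec{n}} = \lambda_{\vec{n}}^{-2k} \langle (-\Delta)^k \varphi_0, \phi_{\vec{n}}\rangle$ and similarly for $b_{\vec{n}}$, whence $|a_{\vec{n}}|, |b_{\vec{n}}| = O(\lambda_{\vec{n}}^{-N})$ for every $N$.

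This rapid decay makes the series and all its termwise derivatives absolutely and uniformly convergent, so every $\partial_x^\alpha \partial_t^\ell u$ exists and is continuous. The uniqueness corollary (Corollary~3.3.1) guarantees that this representation coincides with the solution $u$ of the lemma, so $u \in C^\infty$, completing the proof. The technical heart of the argument is therefore the verification that $\Delta^j \varphi_0$ and $\Delta^j \varphi_1$ vanish on the boundary for every $j$ (which is immediate from the hypothesis) and the resulting super-polynomial decay of the spectral coefficients.
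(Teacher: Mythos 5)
Your proof is correct, and it rests on the same decomposition the paper uses: the Fourier sine series on the cube, i.e.\ the Dirichlet eigenfunction expansion, made legitimate by the hypothesis that $\varphi_0$, $\varphi_1$ and all their derivatives vanish on $x_i=\pm a$. Where you diverge is in how the convergence of the differentiated series is established. The paper (following Sobolev) never estimates the coefficients directly: it forms the partial-sum solutions $u^{(N)}$, applies the energy identity \eqref{eq:3.102} to the differences $\omega^{(k,r)}_{\alpha_1\dots\alpha_n}$ of their spatial derivatives (checking that each such derivative satisfies either a Dirichlet or a Neumann condition on the faces, so the lateral boundary term vanishes), deduces that $\{u^{(N)}\}$ is Cauchy in every $W_2^{\,l}$, and then invokes completeness and the embedding theorem to get uniform convergence of all derivatives. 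You instead integrate by parts $k$ times against the eigenfunctions, using that $\Delta^j\varphi_0$ and $\Delta^j\varphi_1$ vanish on the boundary so that no boundary terms survive, to get $|a_{\vec n}|,|b_{\vec n}|=O(\lambda_{\vec n}^{-N})$ for every $N$, and conclude absolute and uniform convergence of every termwise-differentiated series directly. Your route is more elementary and self-contained (no Sobolev-space machinery), while the paper's keeps the energy estimate as the engine, which is what generalizes later in the chapter to data that are only finitely differentiable. Both arguments close the same way, via the uniqueness corollary. Your opening sketch about iterating energy estimates on $\partial_t^k u$ is not needed once you pass to the spectral argument, and you are right that it would require extra work to justify; it is best omitted.
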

\proof
In this case, the solution can be written in explicit form by making use of the Fourier series. Thus, we expand in Fourier series $\varphi_0$ and $\varphi_1$
\begin{equation} \label{eq:3.96} \varphi_0 = \sum\limits_{j_k =1}^\infty b_{j_1, j_2, ..., j_n} sin \bigg{(} j_1 \frac{(x_1 + a) \pi}{2a} \bigg{)} \dots \; sin \bigg{(} j_n \frac{ (x_n + a) \pi}{2a} \bigg{)}, \end{equation}
\begin{equation} \label{eq:3.97} \varphi_1 = \sum\limits_{j_k =1}^\infty g_{j_1, j_2, ..., j_n} sin \bigg{(} j_1 \frac{(x_1 + a) \pi}{2a} \bigg{)} \dots \; sin \bigg{(} j_n \frac{ (x_n + a) \pi}{2a} \bigg{)}. \end{equation}
The $\varphi_0$ and $\varphi_1$ functions are continuous with their derivatives and they can be extended periodically to the whole space preserving the continuity of all their derivatives. It follows that the Fourier series of all these functions will converge uniformly with all their derivatives of arbitrary order.

It is possible to consider the partial sum of these series, which reads as
\begin{equation} \label{eq:3.98} { \varphi_0}^{(N)} = \sum\limits_{j_k =1}^N b_{j_1, j_2, ..., j_N} sin \bigg{(} j_1 \frac{(x_1 + a) \pi}{2a} \bigg{)} \dots \; sin \bigg{(} j_n \frac{ (x_n + a) \pi}{2a} \bigg{)}, \end{equation}
\begin{equation} \label{eq:3.99} { \varphi_1}^{(N)} = \sum\limits_{j_k =1}^N g_{j_1, j_2, ..., j_N} sin \bigg{(} j_1 \frac{(x_1 + a) \pi}{2a} \bigg{)} \dots \; sin \bigg{(} j_n \frac{ (x_n + a) \pi}{2a} \bigg{)}. \end{equation}
If we replace in the initial data $\varphi_0$ and $\varphi_1$ with ${\varphi_0}^{(N)}$ and ${\varphi_1}^{(N)}$, we obtain as a solution of the wave equation with the previous initial data the function
\begin{equation} \label{eq:3.100} \begin{split} &u^{(N)} = \sum\limits_{j_k =1}^N \Biggl\{ b_{j_1, j_2, ..., j_N} cos \bigg{(}\sqrt{ {j_1}^2 + {j_2}^2 + ... {j_N}^2 }\bigg{)} + \frac{g_{j_1, j_2, ..., j_N}}{\sqrt{{j_1}^2 +{j_2}^2 + ... + {j_N}^2}} \\
& \times sin \bigg{(} \sqrt{{j_1}^2 +{j_2}^2 + ... + {j_N}^2 }\bigg{)} \Biggr\}  sin \bigg{(} j_1 \frac{(x_1 + a) \pi}{2a} \bigg{)} \dots \; sin \bigg{(} j_N \frac{ (x_N + a) \pi}{2a} \bigg{)} \end{split} \end{equation}
This solution is infinitely differentiable. We have to show that, with the increase of $N$, $u^{(N)}$ converges to every Sobolev space ${W_2}^{(l)}$, where $l$ is an arbitrary number of some function $u$ (see Appendix A). It follows from this that the limit function $u$ is a solution of the wave equation that satisfies the initial conditions $u|_{t=0} = \varphi_0$ and ${\frac{\partial u}{\partial t}}|_{t=0} =\varphi_1 $ and it is infinitely differentiable. Since this solution is unique the lemma is shown. It is left to prove the convergence of $u^{(N)}$.

Let us apply the ($\ref{eq:3.85}$) to the parallelepiped, in fig. ($\ref{fig:5}$), whose base $S_2$ is the domain $\Omega: |x_i| \leq a$, on the plane $t=0$, and its upper base $S_3$ lies on the plane $t=t_0$. Since on the lateral surface $S_1$ of this domain $(|x_i|=a)$ we have $ u^{(N)}= \frac{\partial {u^{(N)}}}{\partial t} =0 $, then
\begin{figure}
\centering

\includegraphics{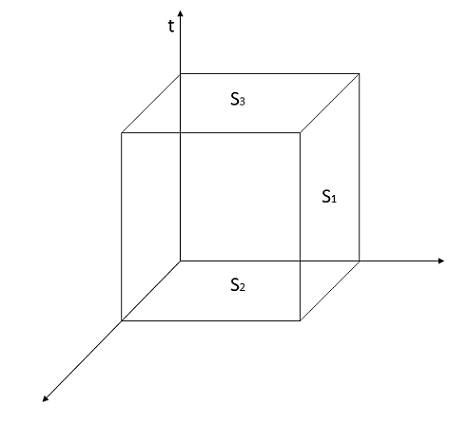}

\caption{}\label{fig:5}

\end{figure}
\begin{equation} \label{eq:301} \underbrace{\int \dots \int}_{S_1}  \frac{\partial u^{(N)}}{\partial t}\frac{\partial u^{(N)}}{\partial n} dS_1=0, \end{equation}
and hence
\begin{equation} \begin{split} & \label{eq:3.102} \underbrace{\int \dots \int}_{S_2} \bigg{[} \sum\limits_{i=1}^N \bigg{(} \frac{\partial u^{(N)}}{\partial x_i} \bigg{)}^2 + \bigg{(}\frac{\partial u^{(N)}}{\partial t}\bigg{)}^2 \bigg{]} dS_2 = \underbrace{\int \dots \int}_{S_3} \bigg{[} \sum\limits_{i=1}^N \bigg{(} \frac{\partial u^{(N)}}{\partial x_i} \bigg{)}^2  \\
& + \bigg{(}\frac{\partial u^{(N)}}{\partial t}\bigg{)}^2 \bigg{]} dS_3. \end{split} \end{equation}
At this stage, we consider the functions
\begin{equation} \label{eq:3.103} v^{(N)}_{a_1 ... a_n} = \frac{\partial^{\alpha} u^{(N)}}{\partial {x_1}^{\alpha_1}... \partial {x_N}^{\alpha_N}}, \end{equation}
which are solutions of the wave equation. 

We also note that on the boundary of the parallelepiped, $|x_i|=a$, the functions $ v^{(N)}_{a_1 ... a_n}$ satisfy the condition $v^{(N)}_{a_1 ... a_n} =0$, if $\alpha_j$ is even, or $\frac{\partial  v^{(N)}_{a_1 ... a_n}}{\partial n} =0$, if $\alpha_j$ is odd. If we apply ($\ref{eq:3.85}$) to these functions, we have
\begin{equation} \begin{split} \label{eq:3.104} & \underbrace{\int \dots \int}_{S_2} \bigg{[} \sum\limits_{i=1}^N \bigg{(} \frac{\partial v^{(N)}_{a_1 ... a_n} }{\partial x_i} \bigg{)}^2 + \bigg{(} \frac{ \partial v^{(N)}_{a_1 ... a_n}}{\partial t} \bigg{)}^2 \bigg{]} dS_2 =  \underbrace{\int \dots \int}_{S_3}  \bigg{[} \sum\limits_{i=1}^N \bigg{(} \frac{\partial v^{(N)}_{a_1 ... a_n} }{\partial x_i} \bigg{)}^2 \\
&+ \bigg{(} \frac{ \partial v^{(N)}_{a_1 ... a_n}}{\partial t} \bigg{)}^2 \bigg{]} dS_3 \end{split} \end{equation}
On the initial plane $S_2$, all integrals for given $\alpha_1$, $\alpha_2$, ..., $\alpha_N$, are bounded numbers that do not depend on $N$. 

We also consider the functions 
\begin{equation} \label{eq:3.105} {\omega^{(k,r)}}_{\alpha_1 ... \alpha_N} ={ v^{(k)}}_{\alpha_1 ... \alpha_N} - { v^{(r)}}_{\alpha_1 ... \alpha_N}. \end{equation}
For these functions we obtain, as before, that
\begin{equation} \begin{split} \label{eq:3.106} & \underbrace{\int \dots \int}_{S_3} \bigg{[} \sum\limits_{i=1}^N \bigg{(} \frac{\partial \omega^{(k,r)}_{a_1 ... a_n} }{\partial x_i} \bigg{)}^2 + \bigg{(} \frac{ \partial \omega^{(k,r)}_{a_1 ... a_n}}{\partial t} \bigg{)}^2 \bigg{]} dS_3 =  \underbrace{\int \dots \int}_{S_3}  \bigg{[} \sum\limits_{i=1}^N \bigg{(} \frac{\partial \omega^{(k,r)}_{a_1 ... a_n} }{\partial x_i} \bigg{)}^2 \\
&+ \bigg{(} \frac{ \partial \omega^{(k,r)}_{a_1 ... a_n}}{\partial t} \bigg{)}^2 \bigg{]} dS_2. \end{split} \end{equation}
For $k$ and $r$ sufficiently large, the integral on the right-hand side will be small enough. This immediately follows from the convergence with all its derivatives of the Fourier series for $\varphi_0$ and $\varphi_1$. This implies that the quantity on the left-hand side will be arbitrarily small.
\begin{equation} \label{eq:3.107} \underbrace{\int \dots \int}_{S_3} \bigg{[} \sum\limits_{i=1}^N \bigg{(} \frac{\partial \omega^{(k,r)}_{a_1 ... a_n} }{\partial x_i} \bigg{)}^2 + \bigg{(} \frac{ \partial \omega^{(k,r)}_{a_1 ... a_n}}{\partial t} \bigg{)}^2 \bigg{]} dS_3 < \epsilon.  \end{equation}
By Integration of this latter inequality with respect to the variable $t$ from 0 to $T$, it follows that
\begin{equation} \label{eq:3.108} \underbrace{\int \dots \int}_{\Omega} \bigg{[} \sum\limits_{i=1}^N \bigg{(} \frac{\partial \omega^{(k,r)}_{a_1 ... a_n} }{\partial x_i} \bigg{)}^2 + \bigg{(} \frac{ \partial \omega^{(k,r)}_{a_1 ... a_n}}{\partial t} \bigg{)}^2 \bigg{]} d\Omega < T\epsilon,  \end{equation}
where $\Omega$ is the domain $0 \leq t \leq T$; $0 \leq x_i \leq a$. Now, with a procedure analogous to that we have used to obtain ($\ref{eq:3.93}$), it is possible to prove that
\begin{equation} \label{eq:3.109} \underbrace{\int \dots \int}_{\Omega} \bigg{(} \omega^{(k,r)}_{a_1 ... a_n} \bigg{)}^2 d\Omega \leq (\epsilon_0 + \epsilon_1 T )^2 \leq \epsilon.  \end{equation}
By virtue of the completeness of the space ${W_2}^l$, it is possible to conclude that $v^{(N)}_{a_1 ... a_n}$, which satisfy the Cauchy convergence criterion, must converge in this space. The convergence of all derivatives of $u^{(N)}$ in ${W_2}^l$ implies the uniform convergence of all the derivatives of these functions. \endproof

Now, by making use of this lemma, we can prove the theorem.
\proof The values of the unknown functions within the piramid are
\begin{equation*} 0 \leq |x_i| + t \leq \frac{a}{2} \end{equation*}
and they depend only on the values of $\varphi_0$ and $\varphi_1$ inside the domain $0 \leq x_1 \leq \frac{a}{2}$, at $t=0$.
Thus, we build the functions
\begin{equation} \label{eq:3.110} {\varphi_0}^{(a)} = \varphi_0 \prod\limits_{i=1}^n \psi \bigg{(} \bigg{|} \frac{x_i}{a} \bigg{|} \bigg{)}; \end{equation}
\begin{equation} \label{eq:3.111} {\varphi_1}^{(a)} = \varphi_1 \prod\limits_{i=1}^n \psi \bigg{(} \bigg{|} \frac{x_i}{a} \bigg{|} \bigg{)}; \end{equation}
where $\psi(\xi)$ is a function equal to 1 when $\xi < \frac{1}{2}$ and to 0 when $\xi >1$ and it is infinitely differentiable. Our aim is to find a solution $u^a$ of the wave equation that satisfies
\begin{equation} \label{eq:3.112} {u^a}|_{t=0} = {\varphi_0}^a, \hspace{1cm} {\frac{\partial u^a}{\partial t}}\bigg{|}_{t=0} = {\varphi_1}^a. \end{equation}
We note that the previously demonstrated lemma shows that $u^a$ is infinitely differentiable, but, as we have seen before, on the piramid $0 \leq |x_i| + t \leq \frac{a}{2}$, this solution coincides with $u$. Hence, $u$ will be, in this case, infinitely differentiable. \endproof

At this stage, we aim to find a solution to the wave equation $\Box u = \Delta u - \frac{\partial^2 u}{\partial t^2} =0$, on the whole space that satifies the initial conditions
\begin{equation} \label{eq:3.113} {u}|_{t=0} = {u_0}, \hspace{1cm} {\frac{\partial u}{\partial t}}\bigg{|}_{t=0}= {u_1}. \end{equation}
It has been shown that if $u_0$ and $u_1$ have derivatives of every order, the solution of the problem exists and it is infinitely differentiable. There is no need for the infinite differentiation of data to obtain solutions, especially since the equation involves only second-order derivatives. To solve our problem we first propose to determine which conditions imposed on $u_0$ and $u_1$ ensure the existence of doubly differentiable solutions. 

Thus, let $u(t, x_1, ..., x_n)$ be a summable function on a domain $\Omega$ of the $(n+1)$-dimensional space. If there exists a summable function $f(x_1, ..., x_n, t)$ such that
\begin{equation} \label{eq:3.114} \overbrace{\underbrace{\int \dots \int}_{\Omega}}^{n+1} u \; \Box \psi dV = \underbrace{\int \dots \int}_{\Omega} \psi f dV \end{equation}
for each twice differentiable $\psi(t, x_1, ..., x_n)$ function that vanishes out of some closed subset of $\Omega$, then $f$ takes the name of $\textit{Generalized Wave Operator}$ of $u$ and we will write $\Box u = f$. A function that has a generalized wave operator equal to zero will take the name of a$\textit{ generalized solution}$ of the wave equation.

\begin{thm} $\\$
If $u_0$ has generalized derivatives up to the $ \big{(} \frac{n}{2} + 3 \big{)} $ -order square integrable on each bounded domain and $u_1$ has similar generalized derivatives up to the $\big{(} \frac{n}{2}+ 2 \big{)}$ -order, then the equation $\Delta u - \frac{\partial^2 u}{\partial t^2}=0$ has a doubly differentiable solution that satisfies the conditions
\begin{equation*} {u}|_{t=0} = {\varphi_0}; \hspace{1cm} {\frac{\partial u}{\partial t}}\bigg{|}_{t=0}= {\varphi_1}. \end{equation*}
\end{thm}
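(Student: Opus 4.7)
The plan is to obtain $u$ as the limit of classical smooth solutions produced from mollified initial data. First, set $\varphi_0^{(k)}=\varphi_0*\rho_{1/k}$ and $\varphi_1^{(k)}=\varphi_1*\rho_{1/k}$ where $\rho_\varepsilon$ is a standard mollifier; these are $C^\infty$ functions on $\mathbb{R}^n$ converging to $\varphi_0$ and $\varphi_1$ in ${W_2}^{n/2+3}$ and ${W_2}^{n/2+2}$ respectively on every bounded subset. By Theorem \ref{thm:2}, each pair $(\varphi_0^{(k)},\varphi_1^{(k)})$ produces an infinitely differentiable solution $u^{(k)}$ of the wave equation with $u^{(k)}|_{t=0}=\varphi_0^{(k)}$ and $\partial_t u^{(k)}|_{t=0}=\varphi_1^{(k)}$.

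The crucial step is to show that $\{u^{(k)}\}$ is Cauchy in $C^2$ on every compact subset of space-time. Fix a point $(x^0,t^0)$ and work inside its backward characteristic cone $\mathcal{C}$; by finite propagation speed, the values of $u^{(k)}$ on $\mathcal{C}$ depend only on the data on the ball $\mathcal{C}\cap\{t=0\}$. For every multi-index $\alpha$ with $|\alpha|\leq n/2+2$, the function $\partial^\alpha(u^{(k)}-u^{(j)})$ is itself a smooth solution of $\Box w=0$; applying the iterated energy identity (\ref{eq:3.104}) and its time-integrated version (\ref{eq:3.108}) together with the $L^2$ bound (\ref{eq:3.93}), one controls $\partial^\alpha(u^{(k)}-u^{(j)})$ and $\partial_t\partial^\alpha(u^{(k)}-u^{(j)})$ in $L^2(\mathcal{C}\cap\Sigma_t)$ by the ${W_2}^{|\alpha|+1}$-norm of $\varphi_0^{(k)}-\varphi_0^{(j)}$ and the ${W_2}^{|\alpha|}$-norm of $\varphi_1^{(k)}-\varphi_1^{(j)}$. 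Both tend to zero as $k,j\to\infty$ by construction, yielding Cauchy convergence $u^{(k)}\to u$ in $C^0_t{W_2}^{n/2+3}_x$ and $\partial_t u^{(k)}\to\partial_t u$ in $C^0_t{W_2}^{n/2+2}_x$ on $\mathcal{C}$.

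The Sobolev embedding ${W_2}^{s}(\mathbb{R}^n)\hookrightarrow C^q(\mathbb{R}^n)$ for $s>n/2+q$ then upgrades this to $u\in C^0_tC^2_x$ and $\partial_t u\in C^0_tC^1_x$ on $\mathcal{C}$. Writing $\partial_t^2 u=\Delta u$ directly from the equation gives $\partial_t^2 u\in C^0_tC^0_x$ as well, so $u$ is twice continuously differentiable on $\mathcal{C}$, solves $\Box u=0$ classically by passage to the limit in the equation satisfied by $u^{(k)}$, and assumes the data $\varphi_0,\varphi_1$ by uniform convergence at $t=0$. Since $(x^0,t^0)$ was arbitrary, $u$ is a global $C^2$ solution of the Cauchy problem.

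The main obstacle is the bookkeeping of the asymmetric regularity: the energy identity applied to $\partial^\alpha u$ consumes $|\alpha|+1$ derivatives of $\varphi_0$ but only $|\alpha|$ derivatives of $\varphi_1$, so the thresholds $n/2+3$ for $\varphi_0$ and $n/2+2$ for $\varphi_1$ must be matched exactly against the Sobolev embedding into $C^2$ to avoid losing regularity. A secondary technical point is the reduction of the Cauchy problem on $\mathbb{R}^n$ to the Dirichlet cube setting of the preceding lemma, which is handled, as in the proof of Theorem \ref{thm:2}, by multiplying the data by the cutoff $\prod_i\psi(|x_i/a|)$ and invoking finite propagation speed to ensure the modification is invisible inside $\mathcal{C}$.
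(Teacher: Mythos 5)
Your proposal is correct and follows essentially the same route as the paper: mollify the data (the paper's ``average functions'' $u_{0h}$, $u_{1h}$ are exactly your $\varphi_i*\rho_{1/k}$), solve with the smooth-data theorem, apply the energy inequalities (\ref{eq:3.80}), (\ref{eq:3.117}), (\ref{eq:3.118}) to the differences $v_{p,q}$ and their spatial derivatives up to order $\tfrac{n}{2}+3$ to get a Cauchy sequence in ${W_2}^{\frac{n}{2}+3}$ on constant-time slices, and finish with the Sobolev inclusion theorem to obtain $C^2$ convergence. Your explicit accounting of the asymmetric derivative loss and the cutoff reduction to the cube are the same devices the paper uses, just spelled out more carefully.
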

\proof We build the sequence of average functions $\{ u_{0h} \}$ and $\{u_{1h} \}$. Using the theorem for the solution of the wave equation with smooth initial conditions, there exist solutions of the equation $\Box u=0$ which satify the initial conditions
\begin{equation} \label{eq:3.115}  {u}|_{t=0} = {u_{0h}}; \hspace{1cm} {\frac{\partial u}{\partial t}}\bigg{|}_{t=0}= {u_{1h}} \end{equation}
and having derivatives of arbitrary order.

Let us consider the function $v_{p,q} = v_{hp} - v_{hq}$; $v_{p,q}$ is a solution of $\Box u=0$, which satisfies the conditions
\begin{equation*}  {v_{p,q}}|_{t=0} = {u_{0h_p}} - u_{0h_p}; \; \; \; {\frac{\partial v_{p,q}}{\partial t}}\bigg{|}_{t=0}= {u_{1h_p} - u_{1h_q}}. \end{equation*}
From inequality ($\ref{eq:3.80}$), which refers to fig. ($\ref{fig:4}$), we have
\begin{equation} \begin{split} \label{eq:3.116}& \underbrace{\int \dots \int}_{S_2} \bigg{[} \sum\limits_{i=1}^n \bigg{(} \frac{\partial v_{p,q}}{\partial x_i} \bigg{)}^2 + \bigg{(} \frac{\partial v_{p,q}}{\partial t} \bigg{)}^2 \bigg{]} dS \leq \underbrace{\int \dots \int}_{S_3} \bigg{[} \sum\limits_{i=1}^n \bigg{(} \frac{\partial v_{p,q}}{\partial x_i} \bigg{)}^2 \\
& + \bigg{(} \frac{\partial v_{p,q}}{\partial t} \bigg{)}^2 \bigg{]} dS, \end{split} \end{equation}
and similarly for each derivative of $v_{p,q}$ we have
\begin{equation} \begin{split} \label{eq:3.117}& \underbrace{\int \dots \int}_{S_2} \bigg{[} \sum\limits_{i=1}^n \bigg{(} \frac{\partial}{\partial x_i}\frac{\partial^\alpha v_{p,q}}{\partial {x_1}^{\alpha_1} \dots \partial {x_n}^{\alpha_n}} \bigg{)}^2 + \bigg{(} \frac{\partial}{\partial t}\frac{\partial^\alpha v_{p,q}}{\partial {x_1}^{\alpha_1}\dots \partial {x_n}^{\alpha_n}} \bigg{)}^2 \bigg{]} dS \leq \underbrace{\int \dots \int}_{S_3} \\
&\bigg{[} \sum\limits_{i=1}^n \bigg{(} \frac{\partial}{\partial x_i}\frac{\partial^\alpha v_{p,q}}{\partial {x_1}^{\alpha_1} \dots \partial {x_n}^{\alpha_n}} \bigg{)}^2 + \bigg{(} \frac{\partial}{\partial t}\frac{\partial^\alpha v_{p,q}}{\partial {x_1}^{\alpha_1} \dots \partial {x_n}^{\alpha_n}}\bigg{)}^2 \bigg{]} dS; \end{split} \end{equation}
whereas from ($\ref{eq:3.92}$) follows
\begin{equation} \begin{split} \label{eq:3.118} & \underbrace{\int \dots \int}_{S_2} (v_{p,q})^2 dS \leq \Biggl\{ \bigg{[} \underbrace{\int \dots \int}_{S_3} (v_{p,q})^2 dS \bigg{]}^{\frac{1}{2}} + \bigg{[} \underbrace{\int \dots \int}_{S_3} \sum\limits_{i=1}^n \bigg{(}\frac{\partial v_{p,q}}{\partial x_i} \bigg{)}^2 dS \\
&+ \underbrace{\int \dots \int}_{S_3} \bigg{(} \frac{\partial v_{p,q}}{\partial t} \bigg{)}^2 dS \bigg{]}^{\frac{1}{2}}t \Biggr\}^2, \end{split} \end{equation}
From one of the properties of the average functions it follows that $u_{0h} \rightarrow u_0$ in ${W_2}^{\frac{n}{2} + 3}$ and $u_{1h} \rightarrow u_{1}$ in ${W_2}^{\frac{n}{2} + 2}$ and consequently the right-hand side of the previous inequalities can be arbitrarily small for $h_p$ and $h_q$ sufficiently small and $\alpha \leq \frac{n}{2} + 3$, and then the left-hand side has the same behaviour, thus for an arbitrary domain of the plane $t= {\rm const.}$ the sequence $\{u_h \}$ strongly converges in the sense of ${W_2}^{\frac{n}{2} +3}$. \endproof
However, the convergence of the $u$ functions in $C^2$ follows from this inclusion theorem. With a similar estimate, we show that $\frac{\partial u}{\partial t} \in C^1$ and $\frac{\partial^2 u}{\partial t^2} \in C^0$, i.e. $u$ is twice continuosly differentiable in the $(n+1)$ -dimensional space and it is solution of the wave equation.

\section{Parametrix of Scalar Wave Equation in Curved Space-Time}
Let us recall that the solution of the wave equation
\begin{equation} \label{eq:3.119} \Box u =0 \end{equation}
in Minkowski space-time involves amplitude and phase functions, which characterize the integral representation
\begin{equation} \label{eq:3.120} u (t, x_1, x_2, x_3) = \int_{-\infty}^{\infty} d \xi_1  \int_{-\infty}^{\infty} d \xi_2  \int_{-\infty}^{\infty} d \xi_3 A(\xi_1, \xi_2, \xi_3, t) e^{i(\xi_1 x_1 + \xi_2 x_2 + \xi_3 x_3 )}. \end{equation}
This is completely specified once suitable Cauchy data 
\begin{equation} \label{eq:3.121} u(t, x)|_{t=0} \equiv u_{0}(x), \; \; \; {\frac{\partial u}{\partial t}}\bigg{|}_{t=0}  \equiv u_1 (x), \end{equation}
are assigned. 
However, when the wave operator refers to a curved-space time, Eq. ($\ref{eq:3.120}$) has to be generalized. This is possible, since we have seen that a theorem guarantees that the solution of the Cauchy problem for the system under examination can be expressed in the form \cite{treves1980introduction}
\begin{equation} \label{eq:3.122} u(x, t) = \sum\limits_{i=0}^1 E_{i}(t) u_{i}(x), \end{equation}
where, on denoting by $\hat{u_i}$ the Fourier transform of the Cauchy data, the operators $E_{i}(t)$ act according to 
\begin{equation} \label{eq:3.123} E_{i}(t) u_i (x) = \sum\limits_{k=1}^2 (2\pi)^{-3} \int e^{i \varphi_k(x, t, \xi)}\alpha_{ik}(x, t, \xi) \hat{u_i}(\xi)d^3 \xi + R_i(t) u_i(x), \end{equation}
where the $\varphi_k$ are real-valued phase functions which satisfy the initial condition
\begin{equation} \label{eq:3.124} \varphi_k(t, x, \xi) |_{t=0} = x \cdot \xi = \sum\limits_{s=1}^3 x^s \xi_s, \end{equation}
and $R_i(t)$ is a regularizing operator which smoothes out the singularities acted upon by it. In other words, the Cauchy problem is here solved by a pair of Fourier-Maslov integral operators \cite{treves1980introduction} of the form ($\ref{eq:3.123}$), and such a construction generalizes the monochromatic plane waves for the d'Alembert operator from Minkowski space-time to curved space-time. Strictly, we are dealing with the $\textit{parametrix}$ for the wave equation. In our case, since we know that ($\ref{eq:3.122}$) and ($\ref{eq:3.123}$) yield an exact solution of the Cauchy problem, we can insert them into Eq. ($\ref{eq:3.119}$) with $P= \Box$, finding that, for all $i=0, 1,$
\begin{equation} \label{eq:3.125} P[E_i(t)u_i(x)] \sim \sum\limits_{k=1}^2 (2\pi)^{-3} \int P [e^{i \varphi_k} \alpha_{ik} ] \hat{u_i}(\xi)d^3\xi, \end{equation}
where $PR_i(t)u_i(x)$ can be neglected with respect to the integral on the right-hand side of Eq. ($\ref{eq:3.123}$), because $R_i(t)$ is a regularizing operator. Next, we find from Eq. ($\ref{eq:3.119}$) that
\begin{equation} \label{eq:3.126} P[e^{i \varphi_k}\alpha_{ik}] = e^{i \varphi_k}(iA_{ik} + B_{ik}), \end{equation}
where, on considering the form of $P$ in Kasner space-time (see Appendix B), i.e.
\begin{equation} \label{eq:3.127} P = - \frac{ \partial^2}{\partial t^2} - \frac{1}{t} \frac{\partial}{\partial t} \sum \limits_{l=1}^3 t^{-2p_l} \frac{\partial^2}{\partial x^{l^2}}, \; \; \sum\limits_{k=1}^3 p_k = \sum\limits_{k=1}^3 (p_k)^2 =1, \end{equation}
one finds
\begin{equation} \label{eq:3.128} A_{ik} \equiv \frac{\partial^2 \varphi_k}{\partial t^2} \alpha_{ik} + 2 \frac{\partial \varphi_k}{\partial t} \frac{\partial \alpha_{ik}}{\partial t} + \frac{1}{t} \frac{\partial \varphi_k}{\partial t} \alpha_{ik} - \sum\limits_{l=1}^3 t^{-2p_l} \bigg{(} \frac{\partial^2 \varphi_k}{\partial {x_l}^2} \alpha_{ik} + 2 \frac{\partial \varphi_k}{\partial x_l} \frac{\partial \alpha_{ik}}{\partial x_l} \bigg{)}, \end{equation}
\begin{equation} \label{eq:3.129}B_{ik} \equiv \frac{\partial^2 \alpha_{ik}}{\partial t^2} - \bigg{(} \frac{\partial \varphi_k}{\partial t} \bigg{)}^2 \alpha_{ik} + \frac{1}{t} \frac{\partial \alpha_{ik}}{\partial t} - \sum\limits_{l=1}^3t^{-2p_l} \bigg{(} \frac{\partial^2 \alpha_{ik}}{\partial {x_l}^2} -  \bigg{(}\frac{\partial \varphi_k}{\partial x_l}\bigg{)}^2 \alpha_{ik}\bigg{)}. \end{equation}
Then, if the phase functions $\varphi_k$ are real-valued, since the exponentials $e^{i \varphi_k}$ can be taken to be linearly independent, we can fulfill Eq. ($\ref{eq:3.119}$), up to the negligible contributions resulting from $PR_{i}(t)u_i(x)$, by setting to zero in the integrand ($\ref{eq:3.125}$) both $A_{ik}$ and $B_{ik}$. This leads to a coupled system of partial differential equations. We want to remark that the choice of Kasner space-time is merely an useful example and it is not necessary for the validity of our argumentation. Our Cauchy problem is therefore equivalent to solving the equations
\begin{equation} \label{eq:3.130} A_{ik}=0, \hspace{1cm} B_{ik}=0. \end{equation}
This equation is the $\textit{dispersion relation}$ for the scalar wave equation in Kasner space-time. Such a dispersion relation takes a neater geometric form upon bearing mind the form ($\ref{eq:3.127}$) of the wave operator $P=\Box$ in Kasner coordinates, i.e.
\begin{equation} \label{eq:3.131} A_{ik}=0 \rightarrow \bigg{[} - \alpha_{ik} (\Box \varphi_k ) - 2 \sum\limits_{\beta, \gamma =1}^4 (g^{-1})^{\beta \gamma}(\varphi_k)_{, \beta}(\alpha_{ik})_{,\gamma} \bigg{]} =0, \end{equation}
\begin{equation} \label{eq:3.132} B_{ik}=0 \rightarrow \bigg{[} - \Box + \sum\limits_{\beta, \gamma =1}^4 (g^{-1})^{\beta \gamma}(\varphi_k)_{, \beta}(\varphi_{k})_{,\gamma} \bigg{]} \alpha_{ik} =0. \end{equation}
Let us bear in mind that the indices $i$ and $k$ count the number of functions contributing to the Fourier-Maslov integral operator. We can therefore exploit the four-dimensional concept of gradient of a function as the four-dimensional covariant vector defined by the differential of the function, i.e.
\begin{equation} \label{eq:3.133} df = \sum\limits_{\alpha =1}^4 \frac{\partial f}{\partial x^\alpha} d x^{\alpha} = \sum\limits_{\alpha =1}^4 (\nabla_\alpha f) dx^{\alpha} = \sum\limits_{\alpha=1}^4 (\rm{grad} f)_\alpha dx^\alpha,\end{equation}
where $\nabla$ is the Levi-Civita connection on four-dimensional space-time, and we exploit the identity $f_{,\alpha} = \nabla_\alpha f$, $\forall f \in C^{\infty}(M)$. The consideration of $\nabla_\alpha f$ is not mandatory at this stage, but it will be helpful in a moment, when we write in tensor language the equations expressing the dispersion relation.

We arrive therefore, upon multiplying Eq. ($\ref{eq:3.131}$) by $\alpha_{ik}$, while dividing Eq. ($\ref{eq:3.132}$)  by $\alpha_{ik}$, at the following geometric form of dispersion relation in Kasner space-time
\begin{equation} \label{eq:3.134} \sum\limits_{\beta,\gamma=1}^4 (g^{-1})^{\beta \gamma} \nabla_\beta \bigg{[} (\alpha_{ik})^2 \nabla_\gamma \varphi_k \bigg{]} = {\rm div} \big{[}(\alpha_{ik})^2 \rm{grad} \varphi_k \big{]} =0, \end{equation}
\begin{equation} \label{eq:3.135} \sum\limits_{\beta,\gamma=1}^4 (g^{-1})^{\beta \gamma} (\nabla_\beta  \varphi_{k})( \nabla_\gamma \varphi_k) = <\rm{grad}(\varphi_k), \rm{grad}\varphi_k> = \frac{(\Box \alpha_{ik})}{\alpha_{ik}}, \end{equation}
where the four-dimensional divergence operator acts according to
\begin{equation} \label{eq:3.136}{\rm div}F=\sum\limits_{\beta=1}^4 \nabla^\beta F_\beta = \sum\limits_{\alpha, \beta=1}^4 (g^{-1})^{\alpha \beta} \nabla_\beta F_\alpha. \end{equation}

\section{Tensor Generalization of the Ermakov-Pinney Equation}
Note that, if the ratio $\frac{(\Box \alpha_{ik})}{\alpha_{ik}}$ is much smaller than a suitable parameter having dimension $\rm{length}^{-2}$, Eq. ($\ref{eq:3.135}$) reduces to the eikonal equation and hence the phase functions reduces to the Hadamard-Ruse-Synge world function that we have defined in the course of studying the characteristic conoid. It is possible to solve exactly Eqs. ($\ref{eq:3.134}$) and ($\ref{eq:3.135}$). For this purpose we remark that, upon defining the covariant vector
\begin{equation} \label{eq:3.137} \psi_\gamma \equiv (\alpha_{ik})^2 \nabla_{\gamma} \varphi_k, \end{equation}
Eq. ($\ref{eq:3.134}$) is equivalent to solving the first-order partial differential equation expressing the vanishing divergence condition for $\psi_\gamma$, i.e.
\begin{equation} \label{eq:3.138} \sum\limits_{\gamma=1}^4 \nabla^\gamma \psi_\gamma = \rm{div}\psi =0. \end{equation}
This equation is not enough to determine the four components of $\psi_\gamma$, but there are cases where further progress can be made. After doing that, we can express the covariant derivative of the phase function from the definition ($\ref{eq:3.137}$), i.e.
\begin{equation} \label{eq:3.139} \nabla_\gamma \varphi_k = \partial_\gamma \varphi_k = (\alpha_{ik})^{-2} \psi_{\gamma}, \end{equation}
and the insertion of Eq. ($\ref{eq:3.139}$) into Eq. ($\ref{eq:3.135}$) yields
\begin{equation} \label{eq:3.140} (\alpha_{ik})^3 \Box \alpha_{ik} = g(\psi,\psi) =\sum\limits_{\beta, \gamma=1}^4 (g^{-1})^{\beta \gamma} \psi_\beta \psi_\gamma = \sum\limits_{\gamma=1}^4 \psi_\gamma \psi^{\gamma}. \end{equation}
This is a tensorial generalization of a famous non-linear ordinary differential equation, i.e. the $\textit{Ermakov-Pinney equation}$ \cite{pinney1950nonlinear}
\begin{equation} \label{eq:3.141} y'' + p y=qy^{-3}.\end{equation}
If $y''$ is replaced by $\Box y$, $p$ is set to zero and $q$ is promoted to a function of space-time location, Eq. ($\ref{eq:3.141}$) is mapped into Eq. ($\ref{eq:3.140}$). After solving this nonlinear equation for $\alpha_{ik}= \alpha_{ik}[g(\psi, \psi)]$, we have to find the phase function $\varphi_k$ by writing and solving the four components of Eq. ($\ref{eq:3.139}$). To sum up, we have proved the following result:

\begin{thm} $\\$
Fior any Lorentzian space-time manifold $(M,g)$, the amplitude functions $\alpha_{ik} \in C^2(T^*M)$ and phase functions $\varphi_k \in C^1(T^*M)$ in the parametrix ($\ref{eq:3.123}$) for the scalar wave equation can be obtained by solving, first, the linear conditions ($\ref{eq:3.138}$) of vanishing divergence for a covariant vector $\psi_\gamma$. All non-linearities of the coupled system are then mapped into solving the non-linear equation ($\ref{eq:3.140}$) for the amplitude function $\alpha_{ik}$. Eventually, the phase function $\varphi_k$ is found by solving the first-order linear equation ($\ref{eq:3.139}$).
\end{thm}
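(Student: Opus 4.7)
The plan is to verify that the three-step reduction stated in the theorem is algebraically equivalent to the coupled dispersion relations $A_{ik}=0$ and $B_{ik}=0$ obtained by inserting the parametrix ansatz into $\Box u=0$. First I would substitute (\ref{eq:3.123}) into the wave equation, discard the smoothing contribution $PR_i(t)u_i(x)$, and use $P[e^{i\varphi_k}\alpha_{ik}]=e^{i\varphi_k}(iA_{ik}+B_{ik})$ to reduce the integrand to $iA_{ik}+B_{ik}=0$. Exploiting the real-valuedness of the phase functions and the linear independence modulo smoothing of the exponentials $e^{i\varphi_k}$, I would require the imaginary and real parts to vanish separately, and then rewrite the resulting equations in covariant form via $f_{,\alpha}=\nabla_\alpha f$; this produces exactly (\ref{eq:3.134}) and (\ref{eq:3.135}), which are intrinsic statements on $(M,g)$ independent of the Kasner chart used for motivation.

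Next I would introduce the auxiliary covariant vector $\psi_\gamma\equiv(\alpha_{ik})^2\nabla_\gamma\varphi_k$ from (\ref{eq:3.137}) and verify two algebraic equivalences. Expanding $\nabla^\gamma\psi_\gamma$ by the Leibniz rule and comparing with the left-hand side of (\ref{eq:3.134}) shows that the divergence condition $\nabla^\gamma\psi_\gamma=0$ is identical to (\ref{eq:3.134}), that is, to (\ref{eq:3.138}). Substituting the inverted definition $\nabla_\gamma\varphi_k=(\alpha_{ik})^{-2}\psi_\gamma$ into the quadratic phase term $g^{\beta\gamma}(\nabla_\beta\varphi_k)(\nabla_\gamma\varphi_k)$ on the left of (\ref{eq:3.135}) and multiplying through by $(\alpha_{ik})^3$ then produces the scalar Ermakov--Pinney-like identity $(\alpha_{ik})^3\Box\alpha_{ik}=g(\psi,\psi)$, which is (\ref{eq:3.140}). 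This packages the original coupled system into a linear piece for $\psi$ and a pointwise nonlinear piece for $\alpha_{ik}$, exactly as claimed.

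The logical order of solution is then the one stated by the theorem. One first selects a covariant vector $\psi_\gamma$ satisfying the linear condition (\ref{eq:3.138}); uses it as source data in the nonlinear equation (\ref{eq:3.140}) and solves for the amplitude $\alpha_{ik}$; and finally recovers the phase $\varphi_k$ by integrating the first-order linear system (\ref{eq:3.139}), with the constant of integration on the Cauchy surface fixed by (\ref{eq:3.124}). Running this chain in reverse, any triple $(\psi,\alpha_{ik},\varphi_k)$ produced this way satisfies (\ref{eq:3.134}) and (\ref{eq:3.135}) by construction, which closes the equivalence and hence the proof.

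The main obstacle will be the integrability of the final step. Equation (\ref{eq:3.139}) is a system of four first-order equations for the single scalar $\varphi_k$, so it is solvable only when the one-form $\omega\equiv(\alpha_{ik})^{-2}\psi_\gamma\,dx^\gamma$ is closed. The hard part is therefore to show that the residual freedom in choosing a divergence-free $\psi_\gamma$, together with the pointwise constraint (\ref{eq:3.140}) on $\alpha_{ik}$, can always be used to enforce $d\omega=0$ without spoiling (\ref{eq:3.138}); a natural strategy is to prescribe $\psi$ and $\varphi_k$ consistently on a Cauchy hypersurface $\Sigma$ and to propagate them along the bicharacteristic flow of the principal symbol, so that closedness of $\omega$ is transported forward by the characteristic transport equations inherited from the Fourier--Maslov symbol calculus.
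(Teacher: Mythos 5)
Your proposal reproduces the paper's own argument essentially verbatim: the paper "proves" the theorem by the computation immediately preceding it, namely deriving the covariant dispersion relations (\ref{eq:3.134})--(\ref{eq:3.135}) from $A_{ik}=0$, $B_{ik}=0$, defining $\psi_\gamma=(\alpha_{ik})^2\nabla_\gamma\varphi_k$, observing that (\ref{eq:3.134}) is just $\nabla^\gamma\psi_\gamma=0$, and substituting $\nabla_\gamma\varphi_k=(\alpha_{ik})^{-2}\psi_\gamma$ into (\ref{eq:3.135}) to obtain the tensorial Ermakov--Pinney equation (\ref{eq:3.140}). Your closing observation about integrability --- that (\ref{eq:3.139}) is four equations for one scalar and so requires the one-form $(\alpha_{ik})^{-2}\psi_\gamma\,dx^\gamma$ to be closed --- is a genuine point that the paper does not address at all; it does not invalidate the algebraic equivalence you and the paper both establish, but it does mean the final step of the stated solution scheme is only as complete in the paper as it is in your write-up.
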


\chapter{Linear Systems of Normal Hyperbolic Form}
\epigraph{The most incomprehensible thing about the world is that it is comprehensible.}{Albert Einstein }
In the next chapters, our aim is to demonstrate, following the work by Fourès-Bruhat \cite{foures1952theoreme}, that it is possible to solve the Cauchy problem for Einstein field equations in vacuum. The great achievement of this work was a rigorous and constructive proof that the Cauchy problem for Einstein's theory is well posed and admits a unique solution also with non-analytic Cauchy data. 

Hence, in this Chapter, we first consider a system of $n$ second-order partial differential equations $[E]$, with $n$ unknown functions $u_s$ and four variables $x^\alpha$, hyperbolic and linear, of the following type:
\begin{equation*} E_r = \sum\limits_{\lambda,\mu=1}^4 A^{\lambda \mu} \frac{\partial^2 u_s}{\partial x^\lambda \partial x^\mu} + \sum\limits_{s=1}^n \sum\limits_{\mu=1}^4 {{B^s}_r}^\mu \frac{\partial u_s}{\partial x^\mu} + f_r =0, \hspace{3cm} [E] \end{equation*}
where the coefficients $A^{\lambda \mu}$, ${{B^s}_r}^\mu$ and $f_r$ are given functions of the four variables $x^\alpha$ and they are taken to satisfy some useful assumptions. We will consider some linear combinations of these equations  whose coefficients are some auxiliary functions which possess at $M$ the parametrix properties and then we will obtain, by integrating them over the characteristic conoid $\Sigma$ with vertex $M$, a system of integral equations of the type of Kirchhoff formulae. 

Thus, by adjoining to these Kirchhoff formulae the equations determining the characteristic conoid and the auxiliary functons we will find a system of integral equations that is the solution of the system $[E]$.

\section{Assumptions on the Coefficients and The Characteristic Conoid}
 Following the Foures-Bruhat work \cite{foures1952theoreme}, let us first consider a system of $n$ second-order partial differential equations $[E]$, with $n$ unknown functions $u_s$ and four variables $x^\alpha$, hyperbolic and linear, of the following type:
\begin{equation} \label{eq:4.1} E_r = \sum\limits_{\lambda,\mu=1}^4 A^{\lambda \mu} \frac{\partial^2 u_s}{\partial x^\lambda \partial x^\mu} + \sum\limits_{s=1}^n \sum\limits_{\mu=1}^4 {{B^s}_r}^\mu \frac{\partial u_s}{\partial x^\mu} + f_r =0, \; \; r=1, 2, ..., n. \end{equation}
The coefficients $A^{\lambda \mu}$, which are the same for all the $n$ equations, ${{B^s}_r}^\mu$ and $f_r$ are given functions of the four variables $x^\alpha$. These are taken to satisfy the following assumptions:

Within a domain defined by
\begin{equation} \label{eq:4.2} |x^i - \bar{x}^i | \leq d, \; \; |x^4| \leq \epsilon \hspace{1cm} (i=1, 2, 3), \end{equation}
where $\bar{x}^i$, $d$ and $\epsilon$ are some given numbers, it holds that

\begin{description} 
\item[(1)] The coefficients $A^{\lambda \mu}$ and ${{B^s}_r}^\mu$ possess continuous and bounded derivatives up to the orders four and two, respectively. The coefficients $f_s$ are continuous and bounded. 
\item[(2)] The quadratic form $\sum\limits_{\lambda, \mu =1}^4A^{\lambda \mu}x_\lambda x_\mu$ is of the normal hyperbolic type, i.e. it has one positive square and three negative squares. We will assume in addition that the variable $x^4$ is a temporal variable, the three variables $x^i$ being spatial, i.e.
\begin{equation} \label{eq:4.3} A^{44}>0  \; \rm{and} \; \rm{the} \; \rm{quadratic} \; \rm{form} \; \sum\limits_{i,j=1}^3 A^{ij}x_i x_j < 0. \end{equation}
\item[(3)] The partial derivatives of the $A^{\lambda \mu}$ and ${{B^s}_r}^\mu$ of order four and two, respectively, satisfy Lipschitz conditions with respect to all their arguments.

\end{description}
The characteristic surfaces of ($\ref{eq:4.1}$) are three-dimensional manifolds of the space of four variables $x^\alpha$, solutions of the differential system
\begin{equation} \label{eq:4.4} F = \sum\limits_{\lambda, \mu =1}^4 A^{\lambda \mu} y_\lambda y_\mu =0 \; \rm{with} \; \sum\limits_{\lambda=1}^4 y_\lambda dx^\lambda =0. \end{equation}
The four quantities $y_\lambda$ denote a system of directional parameters of the normal to the contact element, having support $x^\alpha$. If we take this system, which is only defined up to a proportionality factor, in such a way that $y_4=1$ and if we set $y_i=p_i$, the desired surfaces are solution of 
\begin{equation} \label{eq:4.5} F=A^{44} + 2 \sum\limits_{i=1}^3 A^{i4}p_i + \sum\limits_{i,j=1}^3 A^{ij}p_i p_j =0 \; \rm{with} \; dx^4 + \sum\limits_{i=1}^3 p_i dx^i = 0.\end{equation}
The characteristics of this differential system, bicharacteristics of Eq. ($\ref{eq:4.1}$), satisfy the following differential equations:
\begin{equation} \label{eq:4.6} \frac{ dx^i}{A^{i4} + \sum\limits_{j=1}^3A^{ij}p_j}= \frac{ dx^4}{A^{44} + \sum\limits_{i=1}^3A^{i4}p_i}= \frac{- dp_i}{\frac{1}{2} \bigg{(} \frac{\partial F}{\partial x^i} -p_i \frac{\partial F}{\partial x^4}\bigg{)}}= d \lambda_1, \end{equation}
where $\lambda_1$ is an auxiliary parameter.

The characteristic conoid $\Sigma_0$ with vertex $M_0(x^\alpha_0)$ is the characteristic surface generated from the bicharacteristics passing through $M_0$. Any such bicharacteristic satisfies the system of integral equations
\begin{equation} \label{eq:4.7} x^i = x^i_0 + \int\limits_{0}^{\lambda_1} \bigg{[} A^{i4} + \sum\limits_{j=1}^3 A^{ij}p_j \bigg{]} d \lambda=  x^i_0 + \int\limits_{0}^{\lambda_1} T^i d \lambda, \end{equation}
\begin{equation} \label{eq:4.8} x^4 = x^4_0 + \int\limits_{0}^{\lambda_1} \bigg{[} A^{44} + \sum\limits_{i=1}^3 A^{i4}p_i \bigg{]} d \lambda=  x^4_0 + \int\limits_{0}^{\lambda_1} T^4 d \lambda, \end{equation}
\begin{equation} \label{eq:4.9} p_i = p_i^0 - \int\limits_{0}^{\lambda_1} \frac{1}{2} \bigg{(} \frac{\partial F}{\partial x^i} -p_i \frac{\partial F}{\partial x^4}\bigg{)}d \lambda=  p_i^0 + \int\limits_{0}^{\lambda_1} R_i d \lambda, \end{equation}
where the $p_i^0$ verify the relation
\begin{equation} \label{eq:4.10} A_{0}^{44} + 2 \sum\limits_{i=1}^3 A^{i4}_0 p_i^0 + \sum\limits_{i,j=1}^3 A^{ij}_0 p_i^0p_j^0 =0, \end{equation}
whereas $A^{\lambda \mu}_0$ denotes the value of the coefficient $A^{\lambda \mu}$ at the vertex $M_0$ of the conoid $\Sigma_0$. We will assume that at the point $M_0$ the coefficients $A^{\lambda \mu}$ take the following values:
\begin{equation} \label{eq:4.11} A^{44}_0=1,\; \; A^{i4}_0=0,\; \; A^{ij}_0 = - \delta^{ij}. \end{equation}
Thus, Eq. ($\ref{eq:4.10}$) reads as
\begin{equation} \label{eq:4.12} \sum\limits_{i=1}^3 (p_i^0)^2=1. \end{equation}
We will introduce to define the points of the surface $\Sigma_0$, besides the parameter $\lambda_1$ which defines the position of a point on a given bicharacteristic, two new parameters $\lambda_2$ and $\lambda_3$, that vary with the bicharacteristic under consideration, by means of
\begin{equation} \label{eq:4.13} p_1^0 = sin (\lambda_2)cos(\lambda_3), \hspace{0.5cm} p_2^0=sin (\lambda_2)sin(\lambda_3), \hspace{0.5cm} p_3^0=cos(\lambda_2). \end{equation}
The assumptions ($\ref{eq:4.11}$) make it possible to prove that there exists a number $\epsilon_1$ defininig a variation domain $\Lambda$ of the parameters $\lambda_i$ by means of
\begin{equation} \label{eq:4.14} |\lambda_1| \leq \epsilon_1, \; \; 0 \leq \lambda_2 \leq \pi, \; \; 0 \leq \lambda_3 \leq 2 \pi, \end{equation}
such that the integral equations ($\ref{eq:4.7}$), ($\ref{eq:4.8}$) and ($\ref{eq:4.9}$) possess within ($\ref{eq:4.14}$) a unique solution, continuous and bounded 
\begin{equation} \label{eq:4.15} x^\alpha = x^{\alpha}(x^\alpha_0, \lambda_1, \lambda_2, \lambda_3), \hspace{1cm} p_i=p_i (x^\alpha_0, \lambda_1, \lambda_2, \lambda_3), \end{equation}
satisfying the inequalities
\begin{equation*} |x^i - \bar{x}^i | \leq d, \hspace{1cm} |x^4| \leq \epsilon \end{equation*}
and possessing partial derivatives, continuous and bounded, of the first three orders with respect to the overabundant variables $\lambda_1$, $p_i^0$ (hence with respect to the three variables $\lambda_i$).

The first four equations ($\ref{eq:4.15}$) define, as a function of the three parameters $\lambda_i$, varying within the domain $\Lambda$, a point of a domain $V$ of the characteristic conoid $\Sigma_0$. We shall be led, in the following part of this work, to consider other parametric representations of the domain $V$:

\begin{description}
\item[(1)] We shall take as independent parameters the three quantities $x^4$, $\lambda_2$, $\lambda_3$. The function $x^4(\lambda_1, \lambda_2, \lambda_3)$ satisfies the equation
\begin{equation} \label{eq:4.16} x^4= \int\limits_{0}^{\lambda_1} T^4 d \lambda + x^4_0. \end{equation}
Or it turns out from ($\ref{eq:4.10}$) that, on $\Sigma_0$, one has
\begin{equation} \label{eq:4.17} 2 \sum\limits_{i=1}^4 A^{i4}p_i = - \sum\limits_{i,j=1}^3 A^{ij}p_ip_j - A^{44} \geq - A^{44}, \end{equation}
from which $T^4 \geq \frac{A^{44}}{2} >0$; $x^4$ is thus a monotonic increasing function of $\lambda_1$, the correspondence between $(x^4, \lambda_2, \lambda_3)$ and $(\lambda_1, \lambda_2, \lambda_3)$ is bijective.
\item[(2)] We shall take as representative parameters of a point of $\Sigma_0$ his three spatial coordinates $x^i$. The elimination of $\lambda_1$, $\lambda_2$, $\lambda_3$ among the four equations yields $x^4$ as a function of the $x^i$.

From the relation 
\begin{equation*} dx^4 + \sum\limits_{i=1}^3 p_i dx^i =0, \end{equation*}
identically verified from the solutions of equations ($\ref{eq:4.7}$), ($\ref{eq:4.8}$) and ($\ref{eq:4.9}$) on the characteristic surface $\Sigma_0$, one infers that the partial derivatives of this function $x^4$ with respect to the $x^i$ verify the relation
\begin{equation*} \frac{\partial x^4}{\partial x^i} = - p_i. \end{equation*}
If we denote by $[\varphi]$ the value of a function $\varphi$ of four coordinates $x^\alpha$ on the surface of the characteristic conoid $\Sigma_0$, it can be expressed as a function of the three variables of a parametric representation of $\Sigma_0$, in particular of the three coordinates $x^i$. The partial derivatives of this function with respect to the $x^i$ fulfill therefore:
\begin{equation} \label{eq:4.18}\frac{\partial [\varphi]}{\partial x^i} = \bigg{[} \frac{\partial \varphi}{\partial x^i} \bigg{]} -  \bigg{[} \frac{\partial \varphi}{\partial x^4} \bigg{]}p_i. \end{equation}
In the same manner it is possible to evaluate the derivatives $\big{[} \frac{\partial^2 \varphi}{\partial x^i \partial x^j} \big{]}$ and $\big{[} \frac{\partial^2 \varphi}{\partial x^i \partial x^4} \big{]}$, which are
\begin{equation*} \bigg{[} \frac{\partial^2 \varphi}{\partial x^i \partial x^4} \bigg{]} = \frac{\partial}{\partial x^i} \bigg{[} \frac{\partial \varphi}{\partial x^4} \bigg{]} + \bigg{[} \frac{\partial^2 \varphi}{\partial (x^4)^2} \bigg{]}p_i, \end{equation*}
\begin{equation*} \begin{split} \bigg{[} \frac{\partial^2 \varphi}{\partial x^i \partial x^j} \bigg{]} = & \frac{\partial^2 [\varphi]}{\partial x^i \partial x^j} + \frac{\partial}{\partial x^i} \bigg{[} \frac{\partial \varphi}{\partial x^4} \bigg{]}p_j + \frac{\partial}{\partial x^j} \bigg{[} \frac{\partial \varphi}{\partial (x^4)} \bigg{]}p_i + \bigg{[} \frac{\partial \varphi}{\partial x^4} \bigg{]}\frac{\partial p_i}{\partial x^j} \\
& + \bigg{[} \frac{\partial^2 \varphi}{\partial (x^4)^2} \bigg{]}p_i p_j. \end{split} \end{equation*}
These identities make it possible to write the following relations satisfied by the unknown functions $u_s$ on the characteristic conoid:
\begin{equation}\begin{split} \label{eq:4.19}[E_r] = &\sum\limits_{i,j=1}^3 [A^{ij}] \frac{\partial^2 [u_r]}{\partial x^i \partial x^j} + \Biggl\{ \sum\limits_{i,j=1}^3 [A^{ij}]p_ip_j + 2 \sum\limits_{i=1}^3 [A^{i4}]p_i + [A^{44}] \Biggr\} \frac{\partial^2 u_r}{\partial (x^4)^2} \\
& + 2 \sum\limits_{i=1}^3 \Biggl\{ \sum\limits_{j=1}^3 [A^{ij}]p_j +  [A^{i4}] \Biggr\} \frac{\partial}{\partial x^i} \bigg{[}\frac{\partial u_r}{\partial x^4} \bigg{]} + \bigg{[}\frac{\partial u_r}{\partial x^4} \bigg{]}\sum\limits_{i,j=1}^3 [A^{ij}]\frac{\partial p_i}{\partial x^j} \\
& + \sum\limits_{s=1}^n \sum\limits_{\mu=1}^4 B^{s \mu}_r \bigg{[}\frac{\partial u_s}{\partial x^\mu} \bigg{]} + [f_r] =0. \end{split}\end{equation}
The coefficient of the term $\big{[} \frac{\partial^2 u_r}{\partial (x^4)^2} \big{]}$ is the value on the characteristic conoid of the first member of Eq. ($\ref{eq:4.5}$); it therefore vanishes. We might have expected on the other hand that the equations $[E_r]=0$ would not contain second derivatives of the functions $u_r$ but those obtained by derivation on the surface $\Sigma_0$, the assignment on a characteristic surface of the unknown functions $[u_r]$ and of their first derivatives $\big{[} \frac{\partial u_r}{\partial x^\alpha} \big{]}$ not being able to determine the set of second derivatives.

\end{description}

\section{Integral Equations for Derivatives of $x^i$ and $p_i$}
Let us now derive Eqs. ($\ref{eq:4.7}$), ($\ref{eq:4.8}$) and ($\ref{eq:4.9}$) under the summation sign with respect to the $p_i^0$, they will read as
\begin{equation} \label{eq:4.20} \frac{\partial x^i}{\partial p_j^0} = \int\limits_{0}^{\lambda_1} \frac{\partial T^i}{\partial p_j^0} d \lambda = \int \limits_{0}^{\lambda_1} \Biggl\{ \sum\limits_{h=1}^3 \Biggl\{ \frac{\partial}{\partial x^h} \sum\limits_{k=1}^3 [A^{ik}]p_k + \frac{\partial}{\partial x^h}[A^{i4}] \Biggr\} y_j^h + [A^{ih}]z_j^h \Biggr\} d\lambda, \end{equation}
\begin{equation} \label{eq:4.21} \frac{\partial p_i}{\partial p_j^0} = \int\limits_{0}^{\lambda_1} \frac{\partial R_i}{\partial p_j^0} d \lambda = \int \limits_{0}^{\lambda_1} \sum\limits_{k=1}^3 \bigg{(} \frac{\partial R_i}{\partial x_k} \frac{\partial x^k}{\partial p_j^0} + \frac{\partial R_i}{\partial p_k} \frac{\partial p_k}{\partial p_j^0} \bigg{)}d\lambda, \end{equation}
\begin{equation} \label{eq:4.22} \frac{\partial^2 x^i}{\partial p_j^0 \partial p_k^0} = \int\limits_{0}^{\lambda_1} \frac{\partial^2 T^i}{\partial p_j^0 \partial p_k^0} d \lambda = \int \limits_{0}^{\lambda_1} \Biggl\{ \sum\limits_{h=1}^3 \bigg{(} \frac{\partial T^i}{\partial x^h} \frac{\partial^2 x^h}{\partial p_j^0 \partial p_k^0} + \frac{\partial T^i}{\partial p_h} \frac{\partial^2 p_h}{\partial p_j^0 \partial p_k^0} \bigg{)} + \phi^i_{jk} \Biggr\} d\lambda, \end{equation}
\begin{equation} \label{eq:4.23} \frac{\partial^2 p_i}{\partial p_j^0 \partial p_k^0} = \int\limits_{0}^{\lambda_1} \frac{\partial^2 R^i}{\partial p_j^0 \partial p_k^0} d \lambda = \int \limits_{0}^{\lambda_1} \Biggl\{ \sum\limits_{h=1}^3 \bigg{(} \frac{\partial R_i}{\partial x^h} \frac{\partial^2 x^h}{\partial p_j^0 \partial p_k^0} + \frac{\partial R_i}{\partial p_h} \frac{\partial^2 p_h}{\partial p_j^0 \partial p_k^0} \bigg{)} + \psi^i_{jk} \Biggr\} d\lambda, \end{equation}
\begin{equation}\begin{split} \label{eq:4.24} \frac{\partial^3 x^i}{\partial p_j^0 \partial p_h^0 \partial p_k^0} = &\int\limits_{0}^{\lambda_1} \frac{\partial^3 T^i}{\partial p_j^0 \partial p_h^0 \partial p_k^0} d \lambda = \int \limits_{0}^{\lambda_1} \Biggl\{ \sum\limits_{l=1}^3 \bigg{(} \frac{\partial T^i}{\partial x^l} \frac{\partial^3 x^l}{\partial p_j^0  \partial p_h^0 \partial p_k^0} \\
&+ \frac{\partial T^i}{\partial p_l} \frac{\partial^3 p_l}{\partial p_j^0 \partial p_h^0 \partial p_k^0} \bigg{)} + \phi^i_{jhk} \Biggr\} d\lambda, \end{split}\end{equation}
\begin{equation} \begin{split} \label{eq:4.25} \frac{\partial^3 p_i}{\partial p_j^0 \partial p_h^0 \partial p_k^0} = &\int\limits_{0}^{\lambda_1} \frac{\partial^3 R^i}{\partial p_j^0 \partial p_h^0 \partial p_k^0} d \lambda = \int \limits_{0}^{\lambda_1} \Biggl\{ \sum\limits_{l=1}^3 \bigg{(} \frac{\partial R_i}{\partial x^l} \frac{\partial^3 x^l}{\partial p_j^0 \partial p_h^0 \partial p_k^0} \\
&+ \frac{\partial R_i}{\partial p_l} \frac{\partial^3 p_l}{\partial p_j^0 \partial p_h^0\partial p_k^0} \bigg{)} + \psi^i_{jhk} \Biggr\} d\lambda, \end{split}\end{equation}
where $\phi^i_{jk}$ and $\psi^i_{jk}$ are polynomials of the functions $p_i(\lambda)$, $\frac{\partial x^i}{\partial p_j^0}(\lambda)$, $\frac{\partial p_i}{\partial p_j^0}(\lambda)$, of the coefficients $A^{\lambda \mu}(x^\alpha)$ and of their partial derivatives with respect to the $x^\alpha$ up to the third order included. Whereas  $\phi^i_{jhk}$ and $\psi^i_{jhk}$ are polynomials of the functions $p_i(\lambda)$, $\frac{\partial x^i}{\partial p_j^0}(\lambda)$, $\frac{\partial p_i}{\partial p_j^0}(\lambda)$, $\frac{\partial^2 x^i}{\partial p_j^0 \partial p_h^0}(\lambda)$, $\frac{\partial^2 p_i}{\partial p_j^0 \partial p_h^0}(\lambda)$ as well as of the coefficients $A^{\lambda \mu}$ and of their partial derivatives up to the fourth order included. In these functions the $x^\alpha$ are replaced from the $x^\alpha(\lambda)$ given by ($\ref{eq:4.15}$).
If we set
\begin{equation*} \frac{\partial x^i}{\partial p_j^0} \equiv y^i_j, \hspace{0.5cm} \frac{\partial^2 x^i}{\partial p_j^0 \partial p_h^0} \equiv y^i_{jh},  \hspace{0.5cm} \frac{\partial^3 x^i}{\partial p_j^0 \partial p_h^0 \partial p_k^0} \equiv y^i_{jhk}, \end{equation*}
\begin{equation*} \frac{\partial p_i}{\partial p_j^0} \equiv z^i_j,  \hspace{0.5cm} \frac{\partial^2 p_i}{\partial p_j^0 \partial p_h^0} \equiv z^i_{jh},  \hspace{0.5cm} \frac{\partial^3 p_i}{\partial p_j^0 \partial p_h^0 \partial p_k^0} \equiv z^i_{jhk},\end{equation*}
\begin{equation*} T^i_j \equiv \frac{\partial T^i}{\partial p_j^0},  \hspace{0.5cm} T^i_{jk} \equiv \frac{\partial^2 T^i}{\partial p_j^0 \partial p_k^0},  \hspace{0.5cm} T^i_{jhk} \equiv \frac{\partial^3 T^i}{\partial p_j^0 \partial p_h^0 \partial p_k^0}, \end{equation*}
\begin{equation*} R^i_j \equiv\frac{\partial R_i}{\partial p_j^0},  \hspace{0.5cm} R^i_{jk} \equiv \frac{\partial^2 R_i}{\partial p_j^0 \partial p_k^0},  \hspace{0.5cm} R^i_{jhk} \equiv \frac{\partial^3 R_i}{\partial p_j^0 \partial p_h^0 \partial p_k^0};\end{equation*}
 Eqs. ($\ref{eq:4.20}$), ($\ref{eq:4.21}$), ($\ref{eq:4.22}$), ($\ref{eq:4.23}$), ($\ref{eq:4.24}$) and ($\ref{eq:4.25}$) read as
\begin{equation} \begin{split} \label{eq:4.26} & y^i_j = \int\limits_{0}^{\lambda_1} T^i_j d\lambda,  \hspace{0.5cm}  z^i_j = \int\limits_{0}^{\lambda_1} R^i_j d\lambda, \hspace{0.5cm} y^i_{jk} = \int\limits_{0}^{\lambda_1} T^i_{jk} d\lambda, \\
& z^i_{jk} = \int\limits_{0}^{\lambda_1} R^i_{jk} d\lambda,  \hspace{0.5cm}   y^i_{jhk} = \int\limits_{0}^{\lambda_1} T^i_{jhk} d\lambda,  \hspace{0.5cm} z^i_{jhk} = \int\limits_{0}^{\lambda_1} R^i_{jhk} d\lambda. \end{split} \end{equation}

\section{The Auxiliary Functions $\sigma^r_s$}
Let us now form $n^2$ linear combinations $\sum\limits_{r=1}^n \sigma^r_s [E_r]$ of the Eq. ($\ref{eq:4.19}$) verified by the unknown functions within the domain $V$ of $\Sigma_0$, the $\sigma^r_s$ denoting $n^2$ auxiliary functions which possess at $M_0$ a singularity.
If we set
\begin{equation*} M (\varphi)= \sum\limits_{i,j=1}^3 [A^{ij}] \frac{\partial^2 \varphi}{\partial x^i \partial x^j}, \end{equation*}
$\varphi$ denoting a function whatsoever of the three variables $x^i$, it is possible to perfom the stated $n^2$ linear combinations as
\begin{equation} \begin{split} \label{eq:4.27} \sum\limits_{r=1}^n \sigma^r_s [E_r] & =  \sum\limits_{r=1}^n \Biggl\{ M ([u_r]) + 2 \sum\limits_{i=1}^3 \bigg{(}  \sum\limits_{j=1}^3 [A^{ij}]p_j + [A^{i4}] \bigg{)} \frac{\partial}{\partial x^i} \bigg{[} \frac{\partial u_r}{\partial x^4} \bigg{]} \\
& + \bigg{[} \frac{\partial u_r}{\partial x^4} \bigg{]} \sum\limits_{i,j=1}^3 [A^{ij}] \frac{\partial p_i}{\partial x^j} +  \sum\limits_{t,\mu=1}^3 [{B^{t }_r}^\mu ] \bigg{[} \frac{\partial u_t}{\partial x^\mu} \bigg{]} + [f_r] \Biggr\} \sigma^r_s=0. \end{split} \end{equation}
We will transform these equations in such a way that a divergence occurs therein, whose volume integral will get transformed into a surface integral, while the remaining terms will contain only $[u_r]$ and $\big{[} \frac{\partial u_r}{\partial x^4} \big{]}$. We will use for that purpose the following identity, verified by two functions whatsoever $\varphi$ and $\psi$ of the three variables $x^i$:
\begin{equation*}  \psi M(\varphi) = \sum\limits_{i,j=1}^3 \frac{\partial}{\partial x^i} \bigg{(} [A^{ij}]\psi \frac{\partial \varphi}{\partial x^j} \bigg{)} -\sum\limits_{i,j=1}^3 \frac{\partial \varphi}{\partial x^j} \frac{\partial}{\partial x^i} ([A^{ij}]\psi ) \end{equation*}
or 
\begin{equation*} \psi M(\varphi) =\sum\limits_{i,j=1}^3 \frac{\partial}{\partial x^i} \bigg{(} [A^{ij}]\psi \frac{\partial \varphi}{\partial x^j} \bigg{)} - \varphi \frac{\partial}{\partial x^j} ([A^{ij}] \varphi) \bigg{)} + \varphi \bar{M}(\psi), \end{equation*}
where $\bar{M}$ is the adjoint operator of $M$, i.e.
\begin{equation*} \bar{M}(\psi)= \sum\limits_{i,j=1}^3 \frac{\partial^2 ([A^{ij}]\psi)}{\partial x^i \partial x^j}, \end{equation*}
and the identity ($\ref{eq:4.18}$) yields here
\begin{equation*} \bigg{[} \frac{\partial u_r}{\partial x^i} \bigg{]} = \frac{\partial [u_r]}{\partial x^i} + p_i \bigg{[} \frac{\partial u_r}{\partial x^4} \bigg{]}. \end{equation*}
Thus, $\sum\limits_{r=1}^4 \sigma^r_s[E_r]$ take the form
\begin{equation*} \sum\limits_{r=1}^4 \sigma^r_s[E_r] = \sum\limits_{i=1}^3 \frac{\partial}{\partial x^i}E_s^i + \sum\limits_{r=1}^n [u_r]L_s^r +\sum\limits_{r=1}^n \sigma^r_s [f_r] - \sum\limits_{r=1}^n \bigg{[} \frac{\partial u_r}{\partial x^4}\bigg{]} D_s^r, \end{equation*}
where we have defined 
\begin{equation} \begin{split} \label{eq:4.28} E_s^i =&  \sum\limits_{j=1}^3 \sum\limits_{r=1}^n \bigg{(} [A^{ij}]\sigma_s^r \frac{\partial [u_r]}{\partial x^j} - [u_r]\frac{\partial}{\partial x^j}([A^{ij}]\sigma_s^r) \bigg{)} + 2 \sum\limits_{r=1}^n \sigma_s^r \Biggl\{ \sum\limits_{j=1}^3[A^{ij}]p_j \\
&+ [A^{i4}] \Biggr\} \bigg{[}\frac{\partial u_r}{\partial x^4} \bigg{]}  + \sum\limits_{r,t=1}^n[B^t_{ri}][u_t]\sigma_s^r, \end{split} \end{equation}
\begin{equation} \label{eq:4.29} L_s^r = \bar{M}(\sigma_s^r) - \sum\limits_{i=1}^3\sum\limits_{t=1}^n \frac{\partial}{\partial x^i} ([B_t^
{ri}] \sigma_s^t ), \end{equation}
\begin{equation} \begin{split} \label{eq:4.30} D_s^r =&  \sigma_s^r \Biggl\{ 2 \sum\limits_{i=1}^3 \frac{\partial}{\partial x^i} \bigg{(}\sum\limits_{j=1}^3 [A^{ij}]p_j + [A^{i4}] \bigg{)} - \sum\limits_{i,j=1}^3[A^{ij}] \frac{\partial p_j}{\partial x^i} + 2 \sum\limits_{i=1}^3 \bigg{(} \sum\limits_{j=1}^3 [A^{ij}]p_j  \\
& + [A^{i4}] \bigg{)} \frac{\partial \sigma_s^r}{\partial x^i} - \sum\limits_{t=1}^n ([B_t^{r4}] + \sum\limits_{i=1}^3 [B_t^{ri}]p_i) \sigma_s^t. \end{split} \end{equation}
We will choose the auxiliary functions $\sigma_s^r$ in such a way that, in every equation, the coefficient of $\big{[} \frac{\partial u_r}{\partial x^4} \big{]}$ vanishes. These functions will therefore fulfill $n^2$ partial differential equations of first order
\begin{equation} \label{eq:4.31} D_s^r=0. \end{equation}
If we look for its solution of the form $\sigma_s^r = \sigma \omega_s^r$,  where $\sigma$ is infinite at the point $M_0$ and the $\omega_s^r$ are bounded,  Eq. ($\ref{eq:4.31}$) reads as
\begin{equation} \begin{split} \label{eq:4.32} & \sigma_s^r \Biggl\{ \sum\limits_{i=1}^3 \frac{\partial}{\partial x^i} \bigg{(} \sum\limits_{j=1}^3[A^{ij}]p_j + [A^{i4}] \bigg{)} + \sum\limits_{i,j=1}^3 p_j \frac{\partial}{\partial x^i}[A^{ij}] + \sum\limits_{i=1}^3 \frac{\partial }{\partial x^i}[A^{i4}] \Biggr\} \\
& - \sum\limits_{t=1}^n \bigg{(}[B_t^{r4}] + \sum\limits_{i=1}^3[B_t^{ri}]p_i \bigg{)}\sigma_s^t + 2 \sum\limits_{i=1}^3 \bigg{(}\sum\limits_{j=1}^3[A^{ij}]p_j + [A^{i4}]\bigg{)} \frac{\partial \sigma_s^r}{\partial x^i} = 0. \end{split} \end{equation}
The coefficients $A^{\lambda \mu}$, $B_s^{t \lambda}$ , the first derivatives of the $A^{\lambda \mu}$ and the functions $p_i$ are bounded within the domain $V$, the coefficients of the linear first-order partial differential equations are therefore a sum of bounded terms, perhaps with exception of the terms
\begin{equation*} \sum\limits_{i=1}^3 \frac{\partial }{\partial x^i} \Biggl\{ \sum\limits_{j=1}^3 [A^{ij}]p_j + [A^{i4}] \Biggr\} \end{equation*}
We will therefore choose the $\omega_s^r$, that we want to be bounded, as satisfying the equation
\begin{equation} \begin{split} \label{eq:4.33} & \omega_s^r \bigg{(} \sum\limits_{i,j=1}^3p_j \frac{\partial}{\partial x^i}[A^{ij}] + \sum\limits_{i=1}^3 \frac{\partial}{\partial x^i} [A^{i4}]\bigg{)} - \sum\limits_{t=1}^n \omega_s^t \Biggl\{[B_t^{r4}] +\sum\limits_{i=1}^3[B_t^{ri}]p_i \Biggr\} \\
& + 2\sum\limits_{i=1}^3 \Bigg\{\sum\limits_{j=1}^3 [A^{ij}]p_j + [A^{i4}] \Biggr\} \frac{\partial \omega_s^r}{\partial x^i}=0, \end{split}\end{equation}
fulfilling in turn
\begin{equation}\label{eq:4.34} \sigma \sum\limits_{i=1}^3 \frac{\partial}{\partial x^i} \bigg{(}\sum\limits_{j=1}^3 [A^{ij}]p_j + [A^{i4}] \bigg{)} + 2\sum\limits_{i=1}^3 \bigg{(}\sum\limits_{j=1}^3 [A^{ij}]p_j + [A^{i4}] \bigg{)} \frac{\partial \sigma}{\partial x^i} =0. \end{equation}
Our task is to evaluate $\omega_s^r$ and then $\sigma_s^r$.

\section{Evaluation of  the $\omega_s^r$ and $\sigma$}
Let us consider the equation ($\ref{eq:4.33}$), it can be written in form of integral equations analogous to the Eqs. ($\ref{eq:4.7}$), ($\ref{eq:4.8}$) and ($\ref{eq:4.9}$) obtained in the search for the conoid $\Sigma_0$. We have indeed, on $\Sigma_0$:
\begin{equation*} \sum\limits_{j=1}^3[A^{ij}]p_j + [A^{i4}] = T^i = \frac{\partial x^i}{\partial \lambda_1}, \end{equation*}
from which, for an arbitrary function $\varphi$ defined on $\Sigma_0$, 
\begin{equation*}\sum\limits_{i=1}^3T^i \frac{\partial \varphi}{\partial x^i}= \frac{\partial \varphi}{\partial \lambda_1}. \end{equation*}
Let us impose upon the $\omega_s^r$ the limiting conditions $\omega_s^r = \delta_s^r$ for $\lambda_1=0$. These quantities satisfy therefore the integral equations
\begin{equation} \label{eq:4.35} \omega_s^r = \int\limits_{0}^{\lambda_1} \bigg{(}\sum\limits_{t=1}^n \mathcal{Q}_t^r \omega_s^t + \mathcal{Q}\omega_s^r \bigg{)} d \lambda + \delta_s^r \end{equation}
where we have defined
\begin{equation*} \mathcal{Q}_t^r = \frac{1}{2} ([B_t^{r4}] + \sum\limits_{i=1}^3[B_t^{ri}]p_i) \; \rm{and} \; \mathcal{Q}= - \frac{1}{2} \bigg{(}\sum\limits_{i,j=1}^3 p_j \frac{\partial}{\partial x^i} [A^{ij}] + \sum\limits_{i=1}^3 \frac{\partial}{\partial x^i}[A^{i4}] \bigg{)}, \end{equation*}
the assumptions made upon the coefficients $A^{\lambda \mu}$ and $B_s^{r \lambda}$ and the results obtained on the functions $x^i$, $p_i$ enabling moreover to prove that, for a convenient choice of $\epsilon_1$, these equations have a unique, continuous, bounded solution which has partial derivatives of the first two orders with respect to the $p_i^0$, continuous and bounded within the domain $\Lambda$. We will denote these derivatives by $\omega^r_{si}$ and $\omega^r_{sij}$.

Once we have found $\omega_s^r$, let us consider the Eq. ($\ref{eq:4.34}$) verified by $\sigma$. We know that
\begin{equation*} \sum\limits_{i=1}^3 \bigg{(} \sum\limits_{j=1}^3 [A^{ij}]p_j + [A^{i4}] \bigg{)} \frac{\partial \sigma}{\partial x^i} = \frac{\partial \sigma}{\partial \lambda_1}, \end{equation*}
and we are going to evaluate the coefficient of $\sigma$,
\begin{equation*}  \sum\limits_{i=1}^3 \frac{\partial}{\partial x^i} \bigg{(} \sum\limits_{j=1}^3 [A^{ij}]p_j + [A^{i4}]\bigg{)}, \end{equation*}
by relating it very simply to the determinant
\begin{equation*} \frac{D(x^1, x^2, x^3)}{D(\lambda_1, \lambda_2, \lambda_3)} \equiv J_{x \lambda}. \end{equation*}
This Jacobian $J_{x \lambda}$ of the change of variables $x^i = x^i (\lambda_j)$ on the conoid $\Sigma_0$, has for elements
\begin{equation} \label{eq:4.36}  \frac{\partial x^i}{\partial \lambda_1}=T^i,  \hspace{0.5cm} \frac{\partial x^i}{\partial \lambda_2}= \sum\limits_{j=1}^3 {y^i}_j \frac{\partial p_j^0}{\partial \lambda_2},  \hspace{0.5cm} \frac{\partial x^i}{\partial \lambda_3}= \sum\limits_{j=1}^3 {y^i}_j \frac{\partial p_j^0}{\partial \lambda_3}. \end{equation}
Let us denote by $J^i_j$ the minor relative to the element $\frac{\partial x^i}{\partial \lambda_j}$ of the determinant $J_{x \lambda}$.
A function whatsoever $\varphi$, defined on $\Sigma_0$, verifies the identities
\begin{equation*} \frac{\partial \varphi}{\partial x^i} = \sum\limits_{j=1}^3 \frac{J^j_i}{J_{x \lambda}} \frac{\partial \varphi}{\partial \lambda_j}. \end{equation*}
Let us apply this formula to the function $\frac{\partial x^i}{\partial \lambda_1} = T^i$:
\begin{equation*} \sum\limits_{i=1}^3 \frac{\partial}{\partial x^i}T^i = \sum\limits_{i,j=1}^3 \frac{J^i_j}{J_{x \lambda}} \frac{\partial}{\partial \lambda_j} T^i = \sum\limits_{i,j=1}^3 \frac{J^i_j}{J_{x \lambda}} \frac{\partial}{\partial \lambda_1} \bigg{(} \frac{ \partial x^i}{\partial \lambda_j} \bigg{)}, \end{equation*}
$J^i_j$ being the minor relative to the element $\frac{\partial x^i}{\partial \lambda_j}$ of the determinant $J_{x \lambda}$ we have
\begin{equation*} \sum\limits_{i=1}^3 \frac{\partial}{\partial x^i}T^i  = \frac{1}{J_{x \lambda}} \frac{ \partial J_{x \lambda}}{\partial \lambda_1}. \end{equation*}
Thus, the function $\sigma$ verifies the relation
\begin{equation*} \sigma \frac{\partial J_{x \lambda}}{ \partial \lambda_1} + 2 J_{x \lambda} \frac{\partial \sigma}{\partial \lambda_1} = 0, \end{equation*}
whose general solution is
\begin{equation*} \sigma = \frac{ f (\lambda_2, \lambda_3)}{ {|J_{x \lambda}|}^{\frac{1}{2}}}, \end{equation*}
where $f$ denotes an arbitrary function.

For $\lambda_1=0$ the determinant $J_{x \lambda}$ vanishes, because the $y^i_j$ are vanishing; the function $\sigma$ is therefore infinite. The coefficients $A^{\lambda \mu}$ and their first and second partial derivatives with respect to the $x^\alpha$ being continuous and bounded within the domain $V$ of $\Sigma_0$, as well as the functions $x^i$, $y^i_j$, $z^i_j$, we have
\begin{equation} \label{eq:4.37} \lim_{\lambda \to 0} \frac{y^i_j}{\lambda_1} = [A^{ij}]_{\lambda_1 =0} = - \delta^i_j. \end{equation}
By dividing the second and third line of $J_{x \lambda}$ by $\lambda_1$ we obtain a determinant equal to $\frac{J_{x \lambda}}{(\lambda_1)^2}$; we deduce from the formulas ($\ref{eq:4.36}$) and ($\ref{eq:4.37}$)
\begin{equation*} \begin{split} \lim_{\lambda \to 0} \frac{y^i_j}{(\lambda_1)^2}& = det   
\left ( {\begin{array}{ccc}
- sin (\lambda_2)cos(\lambda_3) & - sin (\lambda_2)sin(\lambda_3) & - cos(\lambda_2) \\
- cos (\lambda_2)cos(\lambda_3)& - cos (\lambda_2)sin(\lambda_3) &  sin (\lambda_2) \\
+ sin (\lambda_2)sin(\lambda_3) & - sin (\lambda_2)cos(\lambda_3) & 0 \\
\end{array} } \right )
 \\
& = - sin(\lambda_2).
\end{split}
\end{equation*}
As a matter of fact:
\begin{equation*} \lim_{\lambda_1 \to 0} T^i = - \sum\limits_{j=1}^3 \delta^j_ip_j^0 = - p_i^0, \end{equation*}
\begin{equation*} \lim_{\lambda_1 \to 0}  \frac{1}{\lambda_1} \frac{\partial x^i}{\partial \lambda_u} = \lim_{\lambda_1 \to 0} \sum\limits_{j=1}^3 \frac{y^i_j}{\lambda_1} \frac{\partial p_j^0}{\partial \lambda_u} = - \sum\limits_{j=1}^3 \delta^i_j \frac{\partial p_j^0}{\partial \lambda_u }. \end{equation*}
We will take for auxiliary function $\sigma$ the function
\begin{equation*} \sigma= {\bigg{|} \frac{ sin(\lambda_2)}{J_{x \lambda}} \bigg{|}}^\frac{1}{2}. \end{equation*}
We will then have $\lim_{\lambda_1 \to 0} \sigma \lambda_1=1$.

\section{Derivatives of the Functions $\sigma^r_s$}
Let us now consider
\begin{equation*} D^r_s=0. \end{equation*}
These equations possess a solution having at $M_0$ the desired singularity. If the auxiliary functions $\sigma^r_s$ verify these $n^2$ relations, the equations, verified by the unknown functions $u_r$ on the characteristic conoid $\Sigma_0$, take the simple form
\begin{equation} \label{eq:4.38} \sum\limits_{r=1}^n ([u_r]L^r_s + \sigma^r_s [f_r] ) + \sum\limits_{i=1}^3 \frac{\partial}{\partial x^i}E^i_s=0. \end{equation}
We will integrate these equations with respect to the three variables $x^i$ on a portion $V_\eta$ of hypersurface of the characteristic conoid $\Sigma_0$, limited by the hypersurfaces $x^4=0$ and $x^4 = x^4_0 - \eta$. This domain $V_\eta$ is defined to be simply connected and internal to the domain $V$ if the coordinate $x^4_0$ is sufficiently small. As a matter of fact
\begin{equation*} |x^4_0| < \epsilon \; \rm{implies}  \; \rm{within}  \; V_\eta \; |x^4-x^4_0| < \epsilon_0. \end{equation*}
The formula ($\ref{eq:4.16}$) shows in such a case that, for a suitable choice of $\epsilon_0$, we will have $\lambda_1 \leq \epsilon_1$. Since the boundary of $V_\eta$ consists of two-dimesional domains $S_0$ and $S_\eta$ cut over $\Sigma_0$ from the hypersurfaces $x^4=0$, $x^4=x^4_0 - \eta$ we will have, upon integrating Eq. ($\ref{eq:4.38}$) within $V_\eta$, the following fundamental relations:
\begin{equation} \begin{split}\label{eq:4.39} & \underbrace{ \int \int \int}_{V_\eta} \sum\limits_{r=1}^n \{ [u_r]L^r_s + \sigma^r_s[f_r] \} dV + \underbrace{\int \int}_{S_\eta}  \sum\limits_{i=1}^3 E^i_s cos(n,x^i)dS \\
& - \underbrace{ \int \int}_{S_0} \sum\limits_{i=1}^3 E^i_s cos(n,x^i)dS=0, \end{split}\end{equation}
where $dV$, $dS$ and $cos(n,x^i)$ denote, in the space of three variables $x^i$, the volume element, the area element of a surface $x^4={\rm const.}$ and the directional cosines of the outward-pointing normal to one of such surfaces, respectively.

Equation ($\ref{eq:4.38}$) contains, on the one hand the values on $\Sigma_0$ of the unknown functions $u_r$, of their partial derivatives as well as the functions $p_i$, $y$ and $z$, on the other hand the functions $\sigma^r_s$ and their first and second partial derivatives.

Let us study therefore the partial derivatives of the first two orders of the functions $\sigma$ and $\omega^r_s$.
Since we have seen that $\sigma= {\big{|} \frac{sin(\lambda_2)}{J_{x \lambda}} \big{|}}^\frac{1}{2}$, thus it is a function of the trigonometric lines of $\lambda_u$ (with $u=2, 3$), of the functions $x^\alpha$ (through the intermediate effect of the $A^{\lambda \mu}$) and of the functions $p_i$, $y^i_j$. The first and second partial derivatives of $\sigma$ with respect to the $x^i$ will be therefore expressed with the help of the functions listed and of their first and second partial derivatives.

$\textbf{First derivatives of $\sigma$:}$ We have seen that the partial derivatives with respect to the $x^i$ of a  function whatsoever $\varphi$, defined on $\Sigma_0$, satisfy the identity
\begin{equation}  \label{eq:4.40} \frac{\partial \varphi}{\partial x^i} = \sum\limits_{j=1}^3 \frac{ J^j_i}{J_{x \lambda}} \frac{\partial \varphi}{\partial \lambda_j}, \end{equation}
where $\frac{ J^j_i}{J_{x \lambda}}$ is a given function of $cos(\lambda_u)$, $ sin(\lambda_u)$, $x^\alpha$, $p_i$, $y^j_i$; the partial derivatives with respect to $\lambda_1$ of the functions $x^i$, $p_i$, $y^j_i$, are the quantities $T^i$, $R_i$, $T^i_j$ which are expressed through these functions themselves and through $z^j_i$; the partial derivatives with respect to $\lambda_u$ of these functions $x^i$, $p_i$, $y^j_i$ being expressible by means of their derivatives with respect to the overabundant parameters $p_h^0$, denoted here by $y^i_h$, $z^i_h$, $y^i_{jh}$, and by means of $cos(\lambda_u)$, $sin(\lambda_u)$.

The function $\sigma$ admits therefore within $V$, under the assumpstions made, of first partial derivatives with respect to the $x^i$ which are expressible by means of the functions $x^\alpha$ ( with the intermediate help of the $[A^{\lambda \mu}]$ and of the $\big{[} \frac{\partial A^{\lambda \mu}}{\partial x^\alpha} \big{]}$ and of the functions $p_i$, $y^i_j$, $z^i_j$, $y^i_{jh}$ and of $cos(\lambda_u)$, $sin(\lambda_u)$).

$\textbf{Second derivatives of $\sigma$:}$ Another application of the formula ($\ref{eq:4.40}$) shows, in analogous fashion, that $\sigma$ admits within $V$ of second partial derivatives, which are expressible by means of the functions $x^\alpha$ (with the intermediate action of the $A^{\lambda \mu}$ and their first and second partial derivatives) and of the functions $p_i$, $y^i_j$, $z^i_j$, $y^i_{jh}$, $z^i_{jh}$, $y^i_{jhk}$ and of $cos(\lambda_u)$, $sin(\lambda_u)$.

$\textbf{Derivatives of  the $\omega^r_s$:}$ The identity ($\ref{eq:4.40}$) makes it possible moreover to state that the functions $\omega^r_s$ admit within $V$ of first and second partial derivatives with respect to the variables $x^i$ if these functions admit, within $V$, of first and second partial derivatives with respect to the variables $\lambda_u$; it suffices for that purpose that they admit of first and second partial derivatives with respect to the overabundant variables $p_i^0$.

We shall set 
\begin{equation*} \frac{\partial \omega^r_s}{\partial p_i^0} = \omega^r_{si},  \hspace{1cm} \frac{\partial \omega^r_s}{\partial p_i^0 \partial p_j^0} = \omega^r_{sij}. \end{equation*}
If these functions are continuous and bounded within $V$ they satisfy, under the assumptions made, the integral equations obtained by derivation under the summation symbol of Eq. ($\ref{eq:4.35}$) with respect to the $p_i^0$. Let us define
\begin{equation} \label{eq:4.41} \omega^r_{si} = \int\limits_{0}^{\lambda_1} \bigg{(} \sum\limits_{t=1}^n \mathcal{Q}^r_t \omega^t_{si} + \mathcal{Q} \omega^r_{si} + \Omega^r_{si} \bigg{)} d \lambda, \end{equation}
where
\begin{equation*} \Omega^r_{si}= \sum\limits_{t=1}^n \frac{\partial \mathcal{Q}^r_t}{\partial p_i^0} \omega^t_s + \frac{\partial \mathcal{Q}}{\partial p_i^0} \omega^r_s \end{equation*}
is a polynomial of the functions $\omega^r_s$, $p_i$, $y^i_j$, $z^i_j$ as well as of the values on $\Sigma_0$ of the coefficients $A^{\lambda \mu}$, $B^{r \lambda}_s$ of the equations ($\ref{eq:4.1}$) and their partial derivatives with respect to the $x^\alpha$ up to the orders two and one, respectively (quantities that are themselves functions of the functions $x^\alpha (\lambda_j)$).
\begin{equation} \label{eq:4.42} \omega^r_{sij} = \int\limits_{0}^{\lambda_1} \bigg{(} \sum\limits_{t=1}^n \mathcal{Q}^r_t \omega^t_{sij} + \mathcal{Q} \omega^r_{sij} + \Omega^r_{sij} \bigg{)} d \lambda, \end{equation}
where
\begin{equation*} \Omega^r_{sij} = \sum\limits_{t=1}^n \frac{\partial \mathcal{Q}^r_t}{\partial p_j^0} \omega^t_{si} + \frac{ \partial \mathcal{Q}}{\partial p_j^0} \omega^r_{si} + \frac{\partial \Omega^r_{si}}{\partial p_j^0}, \end{equation*}
is a polynomial of the functions $\omega^r_s$, $\omega^r_{si}$, $p_i$, $y^i_j$, $z^i_j$, $y^i_{jh}$, $z^i_{jh}$ as well as of the values on $\Sigma_0$ of the coefficients $A^{\lambda \mu}$, $B^{r \lambda}_s$ and of their partial derivatives with respect to the $x^\alpha$ up to the orders three and two, respectively.

The first and second partial derivatives of the $\omega^r_s$  with respect to the variables $x^i$ are expressed by means of the functions $x^\alpha$ (with the help of the coefficients $A^{\lambda \mu}$ and of their first partial derivatives), $p_i$, $y^i_j$, $z^i_j$, $y^i_{jh}$, $z^i_{jh}$, $\omega^r_s$, $\omega^r_{si}$ and $\omega^r_{sij}$.

Then, the functions $\omega^r_s$ and their first and second derivatives with respect to the $x^i$ are expressed only through some functions $X$ and $\Omega$, $X$ denoting any whatsoever of the functions $x^\alpha$, $p_i$, $y^i_j$, $z^i_j$, $y^i_{jh}$, $z^i_{jh}$, $y^i_{jhk}$, $z^i_{jhk}$ and $\Omega$ any whatsoever among the functions $\omega^r_s$, $\omega^r_{si}$, $\omega^r_{sij}$.

The functions $X$ and $\Omega$ satisfy integral equations of the form
\begin{equation*} X = \int\limits_{0}^{\lambda_1} E(X) d\lambda + X_0, \end{equation*}
\begin{equation*} \Omega = \int\limits_{0}^{\lambda_1} F(X, \Omega) d\lambda + \Omega_0, \end{equation*}
where $X_0$ and $\Omega_0$ denote the given values of the functions $X$ and $\Omega$ for $\lambda_1=0$.

$E(X)$ is a polynomial of the functions $X$ and of the values on $\Sigma_0$ of the coefficients $A^{\lambda \mu}$ and of their partial derivatives up to the fourth order (functions of the functions $x^\alpha$).

$F(X, \Omega)$ is a polynomial of the functions $X$ and $\Omega$, and of the values on $\Sigma_0$ of the coefficients $A^{\lambda \mu}$, $B^{r \lambda}_s$ and of their partial derivatives up to the orders three and two, respectively.

\section{Behaviour in the Neighbourhood of the Vertex}
We are going to study the quantities occurring in the integrals of the fundamental relations $(\ref{eq:4.39}$), and for this purpose we will look in a more precise way for the expression of the partial derivatives of the functions $\sigma$ and $\omega^r_s$ with respect to the variables $x^i$ by means of the functions $X$ and $\Omega$. The behaviour of these functions in the neighbourhood of $\lambda_1=0$, that is the vertex of the characteristic conoid $\Sigma_0$, will make it possible for us to look for the limit of Eq. ($\ref{eq:4.39}$) for $\eta=0$: the function $x^4(\lambda_1, \lambda_2, \lambda_3)$ being, within the domain $\Lambda$, a continuous function of the three variables $\lambda_i$, $\eta= x^4 - x^4_0$ approaches actually zero with $\lambda_1$. 
First of all, we have already seen that the quantity $\frac{J_{x \lambda}}{(\lambda_1)^2}$ is a polynomial of the functions $X$, that is $p_i$ in this case,  $\tilde{X}$, that is $\frac{ y^i_j}{\lambda_1}$, of the coefficients $A^{\lambda \mu}$ and of the $sin(\lambda_u)$, $cos(\lambda_u)$. It is therefore a continuous bounded function of $\lambda_1$, $\lambda_2$ and $\lambda_3$ within $V$. We have seen that the value of this function for $\lambda_1=0$ is 
\begin{equation*} \lim_{\lambda_1 \to 0}  \frac{J_{x \lambda}}{(\lambda_1)^2} = - sin (\lambda_2). \end{equation*}
In the neighbourhood of $\lambda_1=0$ the function $\frac{J_{x \lambda}}{(\lambda_1)^2} \neq 0$ but for $\lambda_2=0$ or $\lambda_2= \pi$.
To remove this difficulty we will show that the polynomial $J_{x \lambda}$ is divisible by $sin(\lambda_2)$ and we will make sure that the function $D= \frac{J_{x \lambda}}{(\lambda_1)^2sin(\lambda_2)}$ appears in the denominators we consider.

Let us therefore consider on the conoid $\Sigma_0$ the following change of variables: $\mu_i \equiv \lambda_i p_i^0$.
We set 
\begin{equation*} d \equiv \frac{D(\mu_1, \mu_2, \mu_3)}{D(\lambda_1, \lambda_2, \lambda_3)}= {\rm det}
\left ( {\begin{array}{ccc}
p_1^0 & p_2^0 & p_3^0 \\
\lambda_1 \frac{\partial p_1^0}{\partial \lambda_2} & \lambda_1 \frac{\partial p_2^0}{\partial \lambda_2}&  \lambda_1 \frac{\partial p_3^0}{\partial \lambda_2} \\
\lambda_1 \frac{\partial p_1^0}{\partial \lambda_3} & \lambda_1 \frac{\partial p_2^0}{\partial \lambda_3} & \lambda_1 \frac{\partial p_3^0}{\partial \lambda_3} \\
\end{array} } \right )
 =(\lambda_1)^2  sin(\lambda_2), 
\end{equation*}
and 
\begin{equation*} J_{x \lambda} \equiv \frac{D(x^1, x^2, x^3)}{D(\lambda_1, \lambda_2, \lambda_3)}. \end{equation*}
Since
\begin{equation*}\frac{D(x^1, x^2, x^3)}{D(\lambda_1, \lambda_2, \lambda_3)} = \frac{D(x^1, x^2, x^3)}{D(\mu_1, \mu_2, \mu_3)} \frac{D(\mu_1, \mu_2, \mu_3)}{D(\lambda_1, \lambda_2, \lambda_3)} \end{equation*}
we have
\begin{equation} \label{eq:4.43} J_{x \lambda} = D(\lambda_1)^2 sin(\lambda_2), \end{equation}
where the determinant $D$ has elements
\begin{equation*} \frac{\partial x^i}{\partial \mu_j} = \frac{\partial x^i}{\partial \lambda_1} \frac{\partial \lambda_1 }{\partial \mu_j} + \sum\limits_{h=1}^3 \sum\limits_{u=2}^3 \frac{\partial x^i}{\partial p_h^0}\frac{\partial p_h^0}{\partial \lambda_u}\frac{\partial \lambda_u}{\partial \mu_j}. \end{equation*}
It results directly from $\mu_i =\lambda_i p_i^0$ and from the identity
\begin{equation*} \sum\limits_{i=1}^3 (\mu_i)^2 = (\lambda_1)^2 \end{equation*}
that
\begin{equation*} \frac{\partial \lambda_1}{\partial \mu_j} = p_j^0  \hspace{0.5cm} \rm{and}  \hspace{0.5cm} \frac{\partial p_h^0}{\partial \lambda_u} = \frac{1}{\lambda_1}\frac{\partial \mu_h}{\partial \lambda_u}. \end{equation*}
On the other hand, we have
\begin{equation*} \frac{\partial \lambda_1}{\partial \mu_j} \frac{\partial \mu_h}{\partial \lambda_1} + \sum\limits_{u=2}^3  \frac{\partial \lambda_u}{\partial \mu_j} \frac{\partial \mu_h}{\partial \lambda_u} = \delta^h_j. \end{equation*}
The elements of $D$ are therefore
\begin{equation*}  \frac{\partial x^i}{\partial \mu_j} = T^i p_j^0 + \sum\limits_{h=1}^3 \frac{y^i_h}{\lambda_1} (\delta^h_j - p_j^0 p_h^0). \end{equation*}

The polynomial $\frac{J_{x \lambda}}{(\lambda_1)^2}$ is therefore divisible by $sin(\lambda_2)$, the quotient $D$ being a polynomial of the same functions $X$, $\tilde{X}$ as $\frac{J_{x \lambda}}{(\lambda_1)^2}$ is of $sin(\lambda_u)$, $cos(\lambda_u)$. 

$D$ is a continuous bounded function of $\lambda_1$, $\lambda_2$, $\lambda_3$ within $V$ whose value for $\lambda_1=0$ is $\lim_{\lambda_1 \to 0} D = -1$. As a matter of fact:
\begin{equation*} \lim_{\lambda_1 \to 0} \frac{\partial x^i}{\partial \mu_j} = - p_i^0 p_j^0 - \delta^i_j + p_i^0p_j^0 = - \delta^i_j. \end{equation*}
$\frac{J_{x \lambda}}{(\lambda_1)^2}$ being a homogeneous polynomial of second degree of the functions $\frac{y^j_i}{\lambda_1}$, the same is true of the polynomial $D$, and the quantity $(\lambda_1)^2D$ is a polynomial of the functions $X$, of the coefficients $A^{\lambda \mu}$ and of the three $p_i^0$, homogeneous of the second degree with respect to the $y^i_j$.

$D$ is actually a continuous and bounded function of $\lambda_1$ in the domain $\Lambda$ (where $\lambda_2$ and $\lambda_3$ vary over a compact set) and takes the value -1 for $\lambda_1=0$. There exists therefore a number $\epsilon_2$ such that, in the domain $\Lambda_2$, neighbourhood $\lambda_1=0$ of the domain $\Lambda$, defined by
\begin{equation*} |\lambda_1| \leq \epsilon_2, \; 0 \leq \lambda_2 \leq \pi, \; 0 \leq \lambda_2 \leq 2 \pi, \end{equation*}
one has for example 
\begin{equation*} |D + 1| \leq \frac{1}{2}  \hspace{0.5cm} \rm{therefore}  \hspace{0.5cm} |D| \geq \frac{1}{2}. \end{equation*}
We will denote by $W$ the domain of $\Sigma_0$ corresponding to the domain $\Lambda_2$. Hereafter, the behaviour of the minors of $J_{x \lambda}$ is studied.

$\textbf{Minors relative to elements of the first line of $J_{x \lambda}$:}$ $J_i^1$ is, as $J_{x \lambda}$ itself, a homogeneous polynomial of second degree with respect to the functions $y^j_i$, and $\frac{J_i^1}{(\lambda_1)^2}$ is a polynomial of the functions $X(p_i)$, $\tilde{X}\big{(}\frac{y^j_i}{\lambda_1}\big{)}$, of the coefficients $[A^{\lambda \mu}]$ and of $sin(\lambda_u)$, $cos(\lambda_u)$; it is therefore a continuous and bounded function of $\lambda_1$, $\lambda_2$, $\lambda_3$ in $V$.

In order to study the quantity $\frac{J_i^1}{J_{x \lambda}}= \frac{\partial \lambda_1}{\partial x^i}$ we shall put it in the form of a rational fraction with denominator $D$, which differs from 0 in $W$. 

We have
\begin{equation} \label{eq:4.44} \frac{J_i^1}{J_{x \lambda}}= \frac{\partial \lambda_1}{\partial x^i} = \sum\limits_{j=1}^3 \frac{\partial \lambda_1}{\partial \mu_j}\frac{\partial \mu_j}{\partial x^i} = \sum\limits_{j=1}^3 p_j^0 \frac{D^j_i}{D} \end{equation}
where $D^j_i$ is the minor relative to the element $\frac{\partial x^i}{\partial \mu_j}$ of the determinant $D$.

The quantity $\frac{J_i^1}{J_{x \lambda}}$ is therefore a continuous and bounded function of the three variables $\lambda_1$, $\lambda_2$, $\lambda_3$ in $W$. When we compute the value of this function for $\lambda_1=0$, we find
\begin{equation*} \lim_{\lambda_1 \to 0}\frac{J_i^1}{J_{x \lambda} }= - p_i^0. \end{equation*}
Indeed:
\begin{equation*} \lim_{\lambda_1 \to 0} \frac{\partial \lambda_1}{\partial x^i} = \lim_{\lambda_1 \to 0} \frac{\partial x^4}{\partial x^i}, \end{equation*}
or one has constantly, over $\Sigma_0$, $\frac{\partial x^4}{\partial x^i}=-p_i$. One deduces from the formulas $(\ref{eq:4.43})$ and $(\ref{eq:4.44}$) that
\begin{equation*} J_i^1 = \sum\limits_{j=1}^3 (\lambda_1)^2 sin(\lambda_2) p_j^0 D^j_i. \end{equation*}
One then sees that the quantity $(\lambda_1)^2 \sum\limits_{j=1}^3 p_j^0D^j_i$ is a polynomial of the functions $p_i$, $y^j_i$, of the coefficients $[A^{\lambda \mu}]$ and of the three $p_h^0$, homogeneous of second degree with respect  to the functions $y^j_i$.

$\textbf{Minors relative to the second and third line of $J_{x \lambda}$:}$ $J_i^u$ is a polynomial of the functions $X(p_i, y^j_i)$, $ [A^{\lambda \mu}]$ and of $sin(\lambda_u)$, $cos(\lambda_u)$, homogeneous of first degree with respect to the functions $y^j_i$.

$\frac{J_i^u}{\lambda_1}$ is a continuous and bounded function of $\lambda_1$, $\lambda_2$, $\lambda_3$ in $V$.

Let us study the quantity $\sum\limits_{u=2}^3 \frac{\partial p_h^0}{\partial \lambda_u} \frac{J_i^u}{J_{x \lambda}}$. One has
\begin{equation*} \sum\limits_{u=2}^3 \frac{\partial p_h^0}{\partial \lambda_u} \frac{J_i^u}{J_{x \lambda}}= \sum\limits_{u=2}^3 \frac{\partial p_h^0}{\partial \lambda_u}\frac{\partial \lambda_u}{\partial x^i} = \sum\limits_{u=2}^3 \sum\limits_{j=1}^3 \frac{1}{\lambda_1}\frac{\partial \mu_h}{\partial \lambda_u} \frac{\partial \lambda_u}{\partial \mu_j} \frac{\partial \mu_j}{\partial x^i} = \sum\limits_{j=1}^3 \frac{1}{\lambda_1} (\delta^h_j - p_j^0 p_h^0 ) \frac{D_i^j}{D}. \end{equation*}
The quantity $\lambda_1 \sum\limits_{u=2}^3 \frac{\partial  p_h^0}{\partial \lambda_u}\frac{J_i^u}{J_ {x \lambda}}$ is a rational fraction with nonvanishing denominator in the domain $W$ of the functions $X(p_i)$, $\hat{X}\big{(} \frac{y^j_i}{\lambda_1} \big{)}$, $[A^{\lambda \mu}]$ and of the three $p_i^0$. It is therefore a continuous and bounded function of $\lambda_1$, $\lambda_2$, $\lambda_3$ in the domain $W$; the value of this function for $\lambda_1=0$ is computed as follows. One has on one hand
\begin{equation*} \frac{\partial x^h}{\partial \lambda_u} = \sum\limits_{j=1}^3 \frac{\partial x^h}{\partial p_j^0}\frac{\partial p_j^0}{\partial \lambda_u} = \sum\limits_{j=1}^3 y^h_j \frac{\partial p_j^0}{\partial \lambda_u}, \end{equation*}
from which
\begin{equation*} \lim_{\lambda_1 \to 0} \frac{1}{\lambda_1}\frac{\partial x^h}{\partial \lambda_u} = - \sum\limits_{j=1}^3 \delta^h_j \frac{\partial p_j^0}{\partial \lambda_u} = - \frac{\partial p_h^0}{\partial \lambda_u}. \end{equation*}
One knows on the other hand that
\begin{equation*} \frac{J_i^u}{J_{x \lambda}}= \frac{\partial \lambda_u}{\partial x^i}, \end{equation*}
from which
\begin{equation*}\lim_{\lambda_1 \to 0} \lambda_1 \sum\limits_{u=2}^3 \frac{\partial p_h^0}{\partial \lambda_u} \frac{J^u_i}{J_{x \lambda}} = - \lim_{\lambda_1 \to 0} \sum\limits_{u=2}^3 \frac{\partial x^h}{\partial \lambda_u} \frac{\partial \lambda_u}{\partial x^i} = - \delta^h_i + \lim_{\lambda_1 \to 0} \frac{\partial x^h}{\partial \lambda_1} \frac{\partial \lambda_1}{\partial x^i}, \end{equation*}
from which eventually
\begin{equation*} \lim_{\lambda_1 \to 0} \lambda_1 \sum\limits_{u=2}^3 \frac{\partial p_h^0}{\partial \lambda_u} \frac{J^u_i}{J_{x \lambda}} = - \delta^h_i + p_i^0 p_h^0. \end{equation*}
By a reasoning analogous to the one of previous remarks, one sees that the quantity $\lambda_1 \sum\limits_{j=1}^3 (\delta^h_j - p_j^0 p_h^0)D_i^j$ is a polynomial homogeneous of first degree with respect to the $y^j_i$, of the functions $X(p_i, y^j_i)$, $[A^{\lambda \mu}]$, $p_i^0$.

\section{The First Derivatives}
The first partial derivatives of an arbitrary function $\varphi$ satisfy, in light of the identity $(\ref{eq:4.40})$ and of the previous results, the relation
\begin{equation*} \frac{\partial \varphi}{\partial x^i} = \frac{\partial \varphi}{\partial \lambda_1} \sum\limits_{j=1}^3 \frac{p_j^0 D_i^j}{D} + \frac{1}{\lambda_1} \sum\limits_{h,j=1}^3 \frac{\partial \varphi}{\partial p_h^0} (\delta_j^h - p_j^0 p_h^0) \frac{D_i^j}{D}. \end{equation*}
Let us apply this formula to the functions $p_h^0$ and $X$:
\begin{equation} \label{eq:4.45}
\left \{ \begin{array} {l}
\frac{\partial p_h^0}{\partial x^i} = \frac{1}{\lambda_1} \sum\limits_{j=1}^3 (\delta_j^h - p_j^0 p_h^0 ) \frac{D_i^j}{D},\\
\frac{\partial p_h}{\partial x^i} = R_h \sum\limits_{j=1}^3 \frac{p_j^0 D_i^j}{D} + \sum\limits_{k,j=1}^3 \delta^h_k \frac{1}{\lambda_1}(\delta^k_j - p_j^0 p_k^0) \frac{D_i^j}{D}, \\
\frac{\partial y^k_h}{\partial x^i} = \sum\limits_{j=1}^3 T^k_h \frac{p_j^0 D_i^j}{E} + \frac{1}{\lambda_1} \sum\limits_{l=1}^3 y^k_{hl} (\delta^l_j - p_j^0 p_l^0) \frac{D_i^j}{D}, \\
\frac{\partial z^k_h}{\partial x^i} = \sum\limits_{j=1}^3 R^k_h \frac{p_j^0 D_i^j}{D} + \frac{1}{\lambda_1} \sum\limits_{j,l=1}^3 z^k_{hl} (\delta^l_j - p_j^0 p_l^0) \frac{D_i^j}{D}.
\end{array}\right.
\end{equation}
These equations and the analogous equations verified by $\frac{\partial y^k_h}{\partial x^i}$, $\frac{\partial z^k_h}{\partial x^i}$, $\frac{\partial \omega^r_s}{\partial x^i}$, $\frac{\partial \omega^r_{si}}{\partial x^i}$ show that the quantities $\lambda_1 \frac{\partial p_h^0}{\partial x^i}$, $\lambda_1 \frac{\partial p_h}{\partial x^i}$, $\lambda_1 \frac{\partial z^k_h}{\partial x^i}$, $\lambda_1 \frac{\partial z^k_{hl}}{\partial x^i}$, $\frac{\partial y^k_h}{\partial x^i}$, $\frac{\partial y^k_{hl}}{\partial x^i}$, $\frac{\partial \omega^r_s}{\partial x^i}$, $\frac{\partial \omega^r_{si}}{\partial x^i}$ are rational fractions with denominator $D$ of the functions $X$, $\tilde{X}$, $\Omega$, $\tilde{\Omega}$, $[A^{\lambda \mu}]$, $\big{[} \frac{\partial A^{\lambda \mu}}{\partial x^\alpha} \big{]}$, $\big{[} \frac{\partial^2 A^{\lambda \mu}}{\partial x^\alpha \partial x^\beta} \big{]}$, $p_i^0$. These are bounded and continuous functions, within $W$, of the three variables $\lambda_1$, $\lambda_2$, $\lambda_3$.

\section{Study of $\sigma$ and its derivatives}
To begin the study of $\sigma$ and its derivatives we need first to revert to the functions $\sigma^r_s$ and to study their partial derivatives with respect to $x^i$. The previous results and the identity ($\ref{eq:4.40}$) show then that the quantities $\frac{1}{\lambda_1} \frac{\partial}{\partial x^i} (\lambda_1^2 D)$, $\frac{1}{\lambda_1} \frac{\partial}{\partial x^i} \sum\limits_{j=1}^3 ((\lambda_1)^2p_j^0 D^j_i)$, $\sum\limits_{i,j=1}^3 \frac{\partial}{\partial x^i} (\lambda_1 (\delta^h_j - p_j^0 p_h^0)D_i^j)$ are rational fractions with denominator $D$ of the functions  $X(p_i, y^j_i, z^j_i)$, $\tilde{X}\big{(} \frac{y^j_i}{\lambda_1}, \frac{y^j_{ih}}{\lambda_1}\big{)}$, $[A^{\lambda \mu}]$, $\big{[} \frac{\partial A^{\lambda \mu}}{\partial x^\alpha} \big{]}$, $p_i^0$. They are therefore continuous and bounded functions of $\lambda_1$, $\lambda_2$, $\lambda_3$ in $W$.

In the study of second partial derivatives of the function $\sigma$ with respect to the $x^i$ we will use the second partial derivatives $\frac{\partial^2 ((\lambda_1)^2 D)}{\partial x^i \partial x^j}$. The first-order partial derivatives of $(\lambda_1)^2D$ can be written
\begin{equation*} \frac{\partial ((\lambda_1)^2D)}{\partial x^i}= \frac{ P_1}{(\lambda_1)^2D}, \end{equation*}
where $P_1$ is a polynomial of the functions $X(p_i, y^j_i, z^j_i, y^j_{ih})$, $[A^{\lambda \mu}]$, $\big{[} \frac{\partial A^{\lambda \mu}}{\partial x^\alpha}\big{]}$, $p_i^0$ whose terms are of the third degree with respect to the set of functions $y^j_i$, $y^j_{ih}$. As a matter of fact, the partial derivatives $\frac{\partial p_h}{\partial x^i}$ and $\frac{\partial p_h^0}{\partial x^i}$ can be put in form of rational fractions, by multiplying denominator and numerator of the right-hand side of the equations by $(\lambda_1)^2$, with denominator $(\lambda_1)^2D$ and whose numerators are polynomials of the functions $X(p_i, y^j_i, z^j_i)$, $[A^{\lambda \mu}]$, $p_i^0$ whose terms are of first degree with respect to the $y^j_i$, and the partial derivatives $\frac{\partial p_h^k}{\partial x^i}$ can be put in form of rational fractions with denominator $(\lambda_1)^2D$ and whose numerators are polynomials of the functions $X(p_i, y^j_i, z^j_i, y^j_{hk})$, $[A^{\lambda \mu}]$, $\big{[} \frac{\partial A^{\lambda \mu}}{\partial x^\alpha}\big{]}$, $p_i^0$ homogeneous of second degree with respect to the set of functions $y^0_i$, $y^i_{hk}$. The polynomial $(\lambda_1)^2D$ being homogeneous of first degree with respect to the $y^j_i$, its first partial derivatives have for sure the desired form.

Let us then consider the second partial derivatives:
\begin{equation*} \frac{\partial^2((\lambda_1)^2D)}{\partial x^i \partial x^j} = \frac{1}{(\lambda_1)^2D}\frac{\partial p_1}{\partial x^i} = \frac{P_1}{((\lambda_1)^2D)^2} \frac{\partial ((\lambda_1)^2D)}{\partial x^i}. \end{equation*}

It turns out from the form of the polynomial $P_1$ and from the previous results that:
\begin{description}
\item[(1)] $\frac{P_1}{(\lambda_1)^3}$ is a polynomial of the functions $X(p_i, y^j_i, z^j_i, y^j_{ih})$, $X \big{(} \frac{y^j_i}{\lambda_1}, \frac{y^j_{ih}}{\lambda_1}\big{)}$, $[A^{\lambda \mu}]$, $\big{[} \frac{\partial A^{\lambda \mu}}{\partial x^\alpha} \big{]}$, $p_i^0$.
\item[(2)]$\frac{1}{(\lambda_1)^2}\frac{\partial P_1}{\partial x^i}$ is a rational fraction with denominator $D$ of the functions $X(p_i, y^j_i, z^j_i, \\ y^j_{ih}, z^j_{ih}, y^j_{ihk})$, $\tilde{X} \big{(} \frac{y^j_i}{\lambda_1}, \frac{y^j_{ih}}{\lambda_1}, \frac{\partial y^j_{ihk}}{\lambda_1}\big{)}$, $[A^{\lambda \mu}]$, $\big{[} \frac{\partial A^{\lambda \mu}}{\partial x^\alpha} \big{]}$,  $\big{[} \frac{\partial^2 A^{\lambda \mu}}{\partial x^\alpha \partial x^\beta} \big{]}$ $p_i^0$.
\end{description}
The derivatives $\frac{\partial^2 ((\lambda_1)^2 D)}{\partial x^i \partial x^j}$ are therefore rational functions with denominator $D^3$ of the functions we have just listed.

To pursue our aim, let us proceed with the study of $\sigma$ and its derivatives. The auxiliary functions $\sigma$ has been defined by $\sigma= \bigg{|}{\frac{sin(\lambda_2)}{J_{x \lambda}} \bigg{|}}^\frac{1}{2}$. Since $J_{x \lambda}=D(\lambda_1)^2 sin(\lambda_2)$, we have
\begin{equation*} \sigma= \frac{1}{{|(\lambda_1)^2D|}^\frac{1}{2}}. \end{equation*}
Thus we deduce that, in the domain $W$, the function $\sigma \lambda=\frac{1}{{|D|}^\frac{1}{2}}$ is the square root of a rational fraction, bounded and non-vanishing, of the function $X$, $\tilde{X}$, $[A^{\lambda \mu}]$, $p_i^0$; it is a continuous and bounded function of the three variables $\lambda_i$, whose value for $\lambda_1=D$ is $\lim_{\lambda_1 \to 0} \sigma \lambda_1=1$. 

The first partial derivatives of $\sigma$ with respect to the $x^i$ are
\begin{equation*} \frac{\partial \sigma}{\partial x^i} = \frac{\sigma}{2} \frac{1}{(\lambda_1)^2D} \frac{\partial ((\lambda_1)^2D)}{\partial x^i}. \end{equation*}
Thus we can conclude that, in the domain $W$, the function
\begin{equation*} (\lambda_1)^2\frac{\partial \sigma}{\partial x^i}= - \frac{\sigma}{2}\frac{\lambda_1}{D} \frac{1}{\lambda_1}\frac{\partial ((\lambda_1)^2D)}{\partial x^i} \end{equation*}
is the product of the square root of a non-vanishing bounded rational fraction with a bounded rational fraction of the functions $X$, $\tilde{X}$, $[A^{\lambda \mu}]$, $\big{[} \frac{\partial A^{\lambda \mu}}{\partial x^\alpha} \big{]}$, $p_i^0$. It is a continuous and bounded function of $\lambda_1$, $\lambda_2$, $\lambda_3$ of which we are going to compute the value for $\lambda_1=0$.

The identities $\frac{\partial \sigma}{\partial \lambda_1}=\sum\limits_{i=1}^3T^i \frac{\partial \sigma}{\partial x^i}$ and $\frac{\partial \sigma}{\partial p_h^0}= \sum\limits_{i=1}^3 \frac{\partial \sigma}{\partial x^i}y^i_h$ show that the functions $(\lambda_1)^2 \frac{\partial \sigma}{\partial \lambda_1}$ and $\lambda_1 \frac{\partial \sigma}{\partial p_h^0}$ are continuous and bounded in $W$. We can therefore differentiate $\lim_{\lambda_1 \to 0} \sigma \lambda_1 = 1$ with respect to $p_h^0$, and we find $\lim_{\lambda_1 \to 0} \lambda_1 \frac{\partial \sigma}{\partial p_h^0} = 0$. Furthermore, we can write
\begin{equation*} \frac{\partial (\sigma (\lambda_1)^2)}{\partial \lambda_1} = 2 \lambda_1 \sigma + (\lambda_1)^2 \frac{\partial \sigma}{\partial \lambda_1} \end{equation*}
and $\lim_{\lambda_1 \to 0} \frac{\partial (\sigma (\lambda_1)^2)}{\partial \lambda_1} =\lim_{\lambda_1 \to 0} \lambda_1 \sigma$, from which
\begin{equation*} \lim_{\lambda_1 \to 0}(\lambda_1)^2 \frac{\partial \sigma}{\partial \lambda_1}= - \lim_{\lambda_1 \to 0} \lambda_1 \sigma = -1. \end{equation*}
In order to compute the value for $\lambda_1=0$ of the function $(\lambda_1)^2 \frac{\partial \sigma}{\partial x^i}$ we shall use the identity
\begin{equation*} (\lambda_1)^2 \frac{\partial \sigma}{\partial x^i} = (\lambda_1)^2 \frac{\partial \sigma}{\partial \lambda_1} \frac{J_1^i}{J_{x \lambda}} + \lambda_1 \sum\limits_{h,u=2}^3 \frac{\partial \sigma}{\partial p_h^0} \lambda_1 \frac{\partial p_h^0}{\partial \lambda_u} \frac{J_u^i}{J_{x \lambda}}, \end{equation*}
from which we have $\lim_{\lambda_1 \to 0} (\lambda_1)^2 \frac{\partial \sigma}{\partial x^i} = p_i^0$.

The second partial derivatives of $\sigma$ with respect to the $x^i$ are
\begin{equation*}\begin{split} \frac{\partial^2 \sigma}{\partial x^i \partial x^j}=& - \frac{\sigma}{2} \frac{1}{((\lambda_1)^2 D)} \frac{\partial^2( (\lambda_1)^2D)}{\partial x^i \partial x^j} + \frac{\sigma}{2((\lambda_1)^2 D)^2} \frac{\partial( (\lambda_1)^2D)}{\partial x^i} \frac{\partial((\lambda_1)^2 D)}{\partial x^j} \\
&- \frac{1}{2(\lambda_1)^2 D} \frac{\partial \sigma}{\partial x^j} \frac{\partial ((\lambda_1)^2D)}{\partial x^i}. \end{split}\end{equation*}
It is easily seen that in the domain $W$ the function $(\lambda_1)^3 \frac{\partial^2 \sigma}{\partial x^i \partial x^j}$ is the product of the square root of a non-vanishing bounded rational fraction with a bounded rational fraction, having denominator $D^4$, of the functions $X$, $\tilde{X}$, $[A^{\lambda \mu}]$, $\big{[} \frac{\partial A^{\lambda \mu}}{\partial x^\alpha} \big{]}$, $ \big{[} \frac{\partial^2 A^{\lambda \mu}}{\partial x^\alpha \partial x^\beta} \big{]}$, $p_i^0$. It is a continuous and bounded function of the three variables $\lambda_i$. We are going to compute the value for $\lambda_1=0$ of the function $(\lambda_1)^3 \sum\limits_{i=0}^3\frac{\partial^2 \sigma}{\partial {x^i}^2}$ which, only, we will need: the second partial derivatives of $\sigma$ do not occur actually in the fundamental equations except for the quantity $\sum\limits_{i,j=1}^3 [A^{ij}] \frac{\partial^2 \sigma}{\partial x^i \partial x^j}$, and one has
\begin{equation*} \lim_{\lambda_1 \to 0} \sum\limits_{i,j=1}^3 [A^{ij}](\lambda_1)^3 \frac{\partial^2 \sigma}{\partial x^i \partial x^j} = \lim_{\lambda_1 \to 0}(\lambda_1)^3 \sum\limits_{i=1}^3 \frac{\partial^2 \sigma}{\partial ({x^i})^2}. \end{equation*}
Furthermore, by differentiating $\lim_{\lambda_1 \to 0} (\lambda_1)^2 \frac{\partial \sigma}{\partial x^i}=p_i^0$ with respect to $p_h^0$ we have
\begin{equation*} \lim_{\lambda_1 \to 0} (\lambda_1)^2 \frac{\partial}{\partial p_h^0} \bigg{(}\frac{\partial \sigma}{\partial x^i}\bigg{)}=\delta^i_h; \end{equation*}
but, on the other hand, we also have
\begin{equation*} \frac{\partial}{\partial \lambda_1} \bigg{[}(\lambda_1)^3 \frac{\partial \sigma}{\partial x^i} \bigg{]} = 3 (\lambda_1)^2 \frac{\partial \sigma}{\partial x^i} + (\lambda_1)^3 \frac{\partial}{\partial \lambda_1} \bigg{(}\frac{\partial \sigma}{\partial x^i}\bigg{)}, \end{equation*}
from which
\begin{equation*} \lim_{\lambda_1 \to 0} (\lambda_1)^3 \frac{\partial}{\partial \lambda_1} \bigg{(} \frac{\partial \sigma}{\partial x^i} \bigg{)} = \lim_{\lambda_1 \to 0} \bigg{(} - 2 (\lambda_1)^2 \frac{\partial \sigma}{\partial x^i} \bigg{)} = - p_i^0. \end{equation*}
We find therefore, by using the identity
\begin{equation*} \sum\limits_{i=1}^3 (\lambda_1)^3 \frac{\partial^2 \sigma}{\partial (x^i)^2} = (\lambda_1)^3 \sum\limits_{i=1}^3 	\frac{\partial}{\partial \lambda_1}\bigg{(} \frac{\partial \sigma}{\partial x^i} \bigg{)} \frac{J_i^1}{J_{x \lambda}} + (\lambda_1)^3 \sum\limits_{h=1}^3 \sum\limits_{u=2}^3 \frac{\partial}{\partial p_h^0} \bigg{(}\frac{\partial \sigma}{\partial x^i} \bigg{)} \frac{\partial p_h^0}{\partial \lambda_u} \frac{J_i^u}{J_{x \lambda}} \end{equation*}
and the previous results, that
\begin{equation*} \lim_{\lambda_1 \to 0} \sum\limits_{i=1}^3 (\lambda_1)^3 \frac{\partial^2 \sigma}{\partial (x^i)^2} =0. \end{equation*}
Let us show that the function
\begin{equation*} (\lambda_1)^2 \sum\limits_{i,j=1}^3 [A^{ij}]\frac{\partial^2 \sigma}{\partial x^i \partial x^j} \end{equation*}
is a continuous and bounded function of the three variables $\lambda_i$, in the neighbourhood of $\lambda_1=0$.

We have seen that $(\lambda_1)^3\sum\limits_{i,j=1}^3 \frac{\partial^2 \sigma}{\partial x^i \partial x^j}[A^{ij}]$ is the product of a square root of a non-vanishing bounded rational fraction $\big{(} \frac{1}{D} \big{)}$ with a rational fraction having denominator $D^4$, whose numerator, polynomial of the functions $X$, $\tilde{X}$, $[A^{\lambda \mu}]$, $\big{[} \frac{\partial A^{\lambda \mu}}{\partial x^\alpha} \big{]}$, $ \big{[} \frac{\partial^2 A^{\lambda \mu}}{\partial x^\alpha \partial x^\beta} \big{]}$, $p_i^0$, vanishes for the values of these functions corresponding to $\lambda_1=0$. We have
\begin{equation*} \sum\limits_{i,j=1}^3 (\lambda_1)^3 [A^{ij}] \frac{\partial^2 \sigma}{\partial x^i \partial x^j} = \frac{P \bigg{(} X, \tilde{X}, [A^{\lambda \mu}], \big{[} \frac{\partial A^{\lambda \mu}}{\partial x^\alpha} \big{]},  \big{[} \frac{\partial^2 A^{\lambda \mu}}{\partial x^\alpha \partial x^\beta} \big{]}, p_i^0 \bigg{)}}{D^4} \frac{1}{{|D|}^\frac{1}{2}} \end{equation*}
with
\begin{equation*} P_0 = P \bigg{(} X_0, \tilde{X}_0, \pm \delta^\mu_\lambda, \bigg{[} \frac{\partial A^{\lambda \mu}}{\partial x^\alpha} \bigg{]}_0,  \bigg{[} \frac{\partial^2 A^{\lambda \mu}}{\partial x^\alpha \partial x^\beta} \bigg{]}_0, p_i^0 \bigg{)}=0. \end{equation*}
We then write:
\begin{equation} \label{eq:4.46} (\lambda_1)^3 \sum\limits_{i,j=1}^3 [A^{ij}] \frac{\partial^2 \sigma}{\partial x^i \partial x^j}= \frac{(P - P_0)}{D^4} \frac{1}{{|D|}^\frac{1}{2}}. \end{equation}
By applying the Taylor formula for $P$ one sees that the quantity ($\ref{eq:4.46}$) is a polynomial of the functions $X- X_0$, $\tilde{X} - \tilde{X}_0$, $A^{\lambda \mu} \pm \delta^\mu_\lambda$, ..., whose terms are of first degree with respect to the set of these functions.

To show that $(\lambda_1)^2 \sum\limits_{i,j=1}^3[A^{ij}]\frac{\partial^2 \sigma}{\partial x^i \partial x^j}$ is a continuous and bounded function of $\lambda_1$, $\lambda_2$, $\lambda_3$ in the domain $W$, it is enough to show that the same holds for the functions
\begin{equation*} \frac{(X- X_0)}{\lambda_1}, \; \frac{(\tilde{X}- \tilde{X}_0)}{\lambda_1}, \; \frac{[A^{\lambda \mu}] - \delta^\lambda_\mu}{\lambda_1}, \; ..., \; \frac{ \big{[} \frac{\partial^2 A^{\lambda \mu}}{\partial x^\alpha \partial x^\beta} \big{]} - \big{[} \frac{\partial^2 A^{\lambda \mu}}{\partial x^\alpha \partial x^\beta} \big{]}_0}{\lambda_1}. \end{equation*}
The functions $X$ verify
\begin{equation*} X= X_0 + \int\limits_0^{\lambda_1} E(X) d \lambda, \end{equation*}
$\frac{(X- X_0)}{\lambda_1}$ is therefore a continuous and bounded function of the $\lambda_i$ in $V$: 
$$|X-X_0| \leq \lambda_1 M.$$
The coefficients $A^{\lambda \mu}$ possessing in $D$ partial derivatives continuous and bounded up to the fourth order with respect to the $x^\alpha$, the $x^\alpha$ fulfilling the previous inequalities, we see that
\begin{equation} \label{eq:4.47} [A^{\lambda \mu}] - \delta^\lambda_\mu \leq \lambda_1 A, \; ..., \; \bigg{[} \frac{\partial^3 A^{\lambda \mu}}{\partial x^\alpha \partial x^\beta \partial x^\gamma} \bigg{]} - \bigg{[} \frac{\partial^3 A^{\lambda \mu}}{\partial x^\alpha \partial x^\beta \partial x^\gamma} \bigg{]}_0 \leq \lambda_1 A. \end{equation}
Let us consider $\frac{(X- X_0)}{\lambda_1}$. The corresponding $X$ functions are $y^j_i$, $y^j_{ih}$, $y^j_{ihk}$ which verify the equation $X= \int\limits_{0}^{\lambda_1} E(X) d \lambda$, $E(X)$ being a polynomial of the functions $X$, of the $A^{\lambda \mu}$ and of their partial derivatives up to the third order $\big{[}\frac{\partial A^{\lambda \mu}}{\partial x^\alpha} \big{]}$, ..., $\big{[} \frac{\partial^3 A^{\lambda \mu}}{\partial x^\alpha \partial x^\beta \partial x^\gamma} \big{]}$. We have
\begin{equation*} \tilde{X} - \tilde{X}_0 = \frac{ \int\limits_{0}^{\lambda_1} (E(X) - E(X)_0) d \lambda}{(\lambda_1)^2}. \end{equation*}
The Taylor formula applied to the polynomial $E$ shows that $E(X) - E(X)_0$ is a polynomial of the functions $X_0$, $\delta^\mu_\lambda$, ..., $\big{[}\frac{\partial^3 A^{\lambda \mu}}{\partial x^\alpha \partial x^\beta \partial x^\gamma} \big{]}_0$ and of the functions $X - X_0$, $[A^{\lambda \mu}] - \delta^\mu_\lambda$, ..., $\big{(}\big{[} \frac{\partial^3 A^{\lambda \mu}}{\partial x^\alpha \partial x^\beta \partial x^\gamma} \big{]} - \big{[} \frac{\partial^3 A^{\lambda \mu}}{\partial x^\alpha \partial x^\beta \partial x^\gamma} \big{]}_0\big{)}$ whose terms are of first degree with respect to this last set of terms.

All these functions being bounded in $V$ and satisfying 
\begin{equation*} |X - X_0| \leq \lambda_1 M, \; [A^{\lambda \mu}] - \delta^\lambda_\mu \leq \lambda_1 A, \; ..., \; \bigg{[} \frac{\partial^3 A^{\lambda \mu}}{\partial x^\alpha \partial x^\beta \partial x^\gamma} \bigg{]} - \bigg{[} \frac{\partial^3 A^{\lambda \mu}}{\partial x^\alpha \partial x^\beta \partial x^\gamma} \bigg{]}_0 \leq \lambda_1 A, \end{equation*}
we see easily that $\frac{(\tilde{X} - \tilde{X}_0)}{\lambda_1}$ is continuous and bounded in $V$. The function $(\lambda_1)^2 \sum\limits_{i,j=1}^3[A^{ij}]\frac{\partial^2 \sigma}{\partial x^i \partial x^j}$ is therefore continuous and bounded in $W$.

\section{Derivatives of the $\omega^r_s$}
Let us now study the first and second partial derivatives of the $\omega^r_s$ with respect to the $x^i$.
Our aim is to prove that the first and second partial derivatives of the $\omega^r_s$ with respect to the $x^i$ are, as $\sigma$ and its partial derivatives, simple algebraic functions of the functions $X$ and $\Omega$, $\tilde{X}$ and $\tilde{\Omega}$, and of the values on the conoid $\Sigma_0$ of the coefficients of the given equations and of their partial derivatives.

The first partial derivatives of the $\omega^r_s$ with respect to the $x^i$ are expressed as functions of their partial derivatives with respect to the $\lambda_i$
\begin{equation*} \frac{\partial \omega^r_s}{\partial x^i} = \sum\limits_{j=1}^3 \frac{\partial \omega^r_s}{\partial \lambda_j} \frac{J_i^j}{J_{x \lambda}}, \end{equation*}
therefore
 \begin{equation} \label{eq:4.48} \frac{\partial \omega^r_s}{\partial x^i} = \bigg{(} \sum\limits_{t=1}^n \mathcal{Q}^r_t \omega^t_s + \mathcal{Q}\omega^r_s \bigg{)} \sum\limits_{j=1}^3 \frac{P_j^0 D^j_i}{D} + \sum\limits_{h,j=1}^3 \frac{\omega^r_{sh}}{\lambda_1} \frac{(\delta^h_j - P_j^0 P_h^0)D^j_i}{D}. \end{equation}
The first partial derivatives of the $\omega^r_s$ with respect to the $x^i$ are therefore rational fraction with denominator $D$ of the functions $X(P_i, y^j_i )$, $\Omega(\omega^r_s)$, $\tilde{X}\big{(} \frac{y^j_i}{\lambda_1} \big{)}$, $\tilde{\Omega} \big{(}\frac{\omega^r_{sh}}{\lambda_1} \big{)}$, $[A^{\lambda \mu}]$, $\big{[} \frac{\partial A^{\lambda \mu}}{\partial x^\alpha} \big{]}$, $[B^{s \lambda}]$ and $P_i^0$. These are continuous and bounded functions in $W$.

The second partial derivatives of the $\omega^r_s$ with respect to the $x^i$ can be evaluated by writing $\frac{\partial \omega^r_s}{\partial x^i}$ in the form $\frac{\partial \omega^r_s}{\partial x^i} = \frac{P_2}{(\lambda_1)^2D}$. The equality ($\ref{eq:4.48}$) and the previous remarks show that $P_2$ is a homogeneous polynomial of second degree with respect to the set of functions $y^j_i$, $\omega^r_s$. By differentiating the previous equality, we have
\begin{equation*} \frac{\partial^2 \omega^r_s}{\partial x^i \partial x^j} = \frac{1}{(\lambda_1)^2D} \frac{\partial P_2}{\partial x^j} - \frac{P_2}{((\lambda_1)^2D)^2} \frac{\partial ((\lambda_1)^2D)}{\partial x^i}. \end{equation*}
These functions $\lambda_1 \frac{\partial^2 \omega^r_s}{\partial x^i \partial x^j} $ are rational fractions with denominator $D^3$ of the functions $X$, $\Omega$, $\tilde{X}$, $\tilde{\Omega}$, $[A^{\lambda \mu}]$, $\big{[}\frac{\partial A^{\lambda \mu}}{\partial x^\alpha} \big{]}$, $\big{[}\frac{\partial^2 A^{\lambda \mu}}{\partial x^\alpha \partial x^\beta} \big{]}$, $[B^{s \lambda}_r]$, $\big{[} \frac{\partial B^{s \lambda}_r}{\partial x^\alpha} \big{]}$. These are therefore continuous and bounded functions in $W$.

\section{Kirchhoff Formulae}
We can now study in a more precise way the fundamental equations 
\begin{equation*} \begin{split}& \underbrace{\int \int \int}_{V_\eta} \sum\limits_{r=1}^n \{ [u_r]L^r_s + \sigma^r_s[f_r] \} dV + \underbrace{\int \int}_{S_\eta} \sum\limits_{i=1}^3 E^i_s cos(n,x^i)dS \\
& - \underbrace{\int \int}_{S_0} \sum\limits_{i=1}^3 E^i_s cos(n,x^i)dS=0, \end{split}\end{equation*}
and look for their limit as $\eta$ approaches zero. We have seen that the functional determinant $D= \frac{D(x^i)}{D(\lambda_j)}$ is equal to -1 for $\lambda_1=0$. The correspondence between the parameters $x^i$ and $\lambda_j$ is therefore surjective in a neighbourhood  of the vertex $M_0$ of $\Sigma_0$. One derives from this that the correspondence between the parameters $x^i$ and $\lambda_j$ is one-to-one in a domain $\Lambda_\eta$ defined by
\begin{equation*} \eta \leq \lambda_1 \leq \epsilon_3, \hspace{0.5cm} 0 \leq \lambda_2 \leq \pi, \hspace{0.5cm} 0 \leq \lambda_3 \leq 2 \pi, \end{equation*}
where $\epsilon_3$ is a given number and where $\eta$ is arbitrarily small.

To the domain $\Lambda_\eta$ of variations of the $\lambda_i$ parameters there corresponds, in a one-to-one way, a domain $W_\eta$ of $\Sigma_0$, because the correspondence between $(x^4, \lambda_2, \lambda_3)$ and $(\lambda_1, \lambda_2, \lambda_3)$ is one-to-one. We shall then assume that the coordinate $x^4_0$ of the vertex $M_0$ of $\Sigma_0$ is sufficiently small to ensure that the domain $V_\eta \subset V$, previously considered, is interior to the domains $W$ and $W_\eta$. We can, under these conditions, compute the integrals by means of the parameters $\lambda_i$, the integrals that we are going to obtain being convergent. For this purpose, let us evaluate the Area and Volume elements. 

We have $dV= dx^1\; dx^2\; dx^3 = d\lambda_1 \; d\lambda_2 \;d\lambda_3$. Then, we begin by computing the element of Area $dS$. The surfaces $S_0$ and $S_\eta$ are $x^4={\rm const.}$ surfaces on the characteristic conoid $\Sigma_0$. Thus, they satisfy the differentiation relation
\begin{equation*} \sum\limits_{i=1}^3 p_i dx^i =0, \end{equation*}
from which we have
\begin{equation*} cos(n, x^i)= \frac{p_i}{{\bigg{(} \sum\limits_{j=1}^3 (p_j)^2} \bigg{)}^\frac{1}{2}}. \end{equation*}
In order to evaluate $dS$ we shall write an alternative expression of the Volume element $dV$ in which the surfaces $S(x^4= cost.)$ and the bicharacteristics are involved
\begin{equation*}dV = cos (\nu) {|T|}^\frac{1}{2} d\lambda_1 dS, \end{equation*}
where ${|T|}^\frac{1}{2} d\lambda_1$ denotes the length element of the bicharacteristic, and $\nu$ is the angle formed by the bicharacteristic with the normal to the surface $S$ at the point considered. 

A system of directional parameters of the tangent to the bicharacteristic being
\begin{equation*} T^h = \sum\limits_{j=1}^3 [A^{hj}]p_j + [A^{h4}], \end{equation*}
we have
\begin{equation*} cos(\nu) {|T|}^\frac{1}{2} = \sum\limits_{h=1}^3 \Biggl\{ \sum\limits_{j=1}^3 [A^{hj}]p_j + [A^{h4}] \Biggr\} cos (n, x^h), \end{equation*}
from which, by comparing the two expressions of $dV$,
\begin{equation*} cos(n, x^i)dS= \frac{J_{x \lambda}p_i d\lambda_2 d\lambda_3}{\sum\limits_{h,j=1}^3[A^{hj}]p_j p_h + \sum\limits_{h=1}^3[A^{h4}]p_h} = \frac{- J_{x \lambda} p_i}{[A^{44}] + \sum\limits_{j=1}^3 [A^{j4}]p_j} d\lambda_2 d\lambda_3. \end{equation*}
Hence the integral relations read in terms of the $\lambda_i$ as
\begin{equation} \begin{split} \label{eq:4.49} & \int \int \int_{V_\eta} \sum\limits_{r=1}^n ([u_r]L^r_s + \sigma^r_s[f_r])d\lambda_1 d\lambda_2 d\lambda_3 - \int\limits_{0}^{2 \pi} \int\limits_{0}^{\pi} \frac{\sum\limits_{i=1}^3 E^i_s J_{x \lambda} p_i}{[A^{44}] + \sum\limits_{i=1}^3 [A^{i4}]p_i}d\lambda_{2|_{x^4=0}}d\lambda_3 \\
&= -  \int\limits_{0}^{2 \pi} \int\limits_{0}^{\pi} \Biggl\{ \frac{\sum\limits_{i=1}^3 E^i_s J_{x \lambda} p_i}{[A^{44}] + \sum\limits_{i=1}^3 [A^{i4}]p_i}\Biggr\}_{x^4 = x^4_0 - \eta} d\lambda_2 d\lambda_3. \end{split} \end{equation}
The previous results prove that the quantities to be integrated are continuous and bounded functions of the variables $\lambda_i$. They read actually as:
\begin{equation*} (\lambda_1)^2 \sum\limits_{r=1}^n \{ [u_r]L^r_s + \sigma^r_s[f_r] \} \frac{J_{x \lambda}}{(\lambda_1)^2} \; {\rm and} \; (\lambda_1)^2 \sum\limits_{i=1}^n E^i_s \frac{J_{x \lambda}}{(\lambda_1)^2} \frac{p_i}{T^4}. \end{equation*}
$E^i_s$ and $L^r_s$ being given by the equalities $(\ref{eq:4.28})$ and ($\ref{eq:4.29}$), the quantities considered are continuous and bounded in $W$ if the functions $u_r$ and $\frac{\partial u_r}{\partial x^\alpha}$ are continuous and bounded in $D$. Thus, when $\eta$ approaches zero, the two sides of $(\ref{eq:4.49})$ tend towards a finite limit. In particular, the triple integral tends to the value of this integral taken over the portion $V_0$ of hypersurface of the conoid $\Sigma_0$ in between the vertex $M_0$ and the initial surface $x^4=0$. Let us evaluate the limit of the double integral on the right-hand side. All terms of the quantity $(\lambda_1)^2E^i_s$ approach uniformly zero with $\lambda_1$ but 
\begin{equation*} - (\lambda_1)^2 \sum\limits_{r=1}^n \sum\limits_{j=1}^3 [u_r][A^{ij}]\omega^r_s \frac{\partial \sigma}{\partial x_j}, \end{equation*}
tends, when $\lambda_1$ approaches zero, to
\begin{equation*} \sum\limits_{r=1}^n \sum\limits_{j=1}^3[u_r(x^\alpha_0)]\delta^j_i \delta^r_s p_j^0 = u_s (x^\alpha_0)p_i^0. \end{equation*}
Hence we obtain
\begin{equation*} \lim_{\lambda_1 \to 0} \frac{\sum\limits_{i=1}^3 E^i_s J_{x \lambda} p_i}{[A^{44}] + \sum\limits_{i=1}^3 [A^{i4}]p_i}= - u_s(x^\alpha_0) sin(\lambda_2). \end{equation*}
The right-hand side of Eq. ($\ref{eq:4.49}$), when $\eta$ approaches zero, tends to
\begin{equation*} \int\limits_{0}^{2 \pi} \int\limits_{0}^{\pi} u_s(x^\alpha_0) sin(\lambda_2) d\lambda_2 d\lambda_3 = 4 \pi u_s(x^\alpha_0). \end{equation*}
Eventually, under the limit for $\eta \rightarrow 0$, the Eqs. ($\ref{eq:4.49}$) become
\begin{equation} \begin{split} \label{eq:4.50} 4 \pi u_s(x^\alpha_0) &= \int \int \int_{V_\eta} \sum\limits_{r=1}^n ([u_r]L^r_s + \sigma^r_s[f_r]) J_{x \lambda} d\lambda_1 d\lambda_2 d\lambda_3 \\
& +\int\limits_{0}^{2 \pi} \int\limits_{0}^{\pi} \Biggl\{ \frac{\sum\limits_{i=1}^3 E^i_s J_{x \lambda}p_i}{T^4} \Biggr\}_{x^4=0} d\lambda_2 d\lambda_3, \end{split} \end{equation}
known as the $\textit{Kirchhoff formulae}$. In order to compute its right-hand side, it will be convenient to take for parameters, on the hypersurface of the conoid $\Sigma_0$, the three independent variables $x^4$, $\lambda_2$, $\lambda_3$. Thus, the previous formulae read as
\begin{equation} \begin{split} \label{eq:4.51} 4 \pi u_s(x_j) &= \int_{x^4_0}^{0} \int_{0}^{2 \pi} \int_{0}^\pi \sum\limits_{r=1}^n ([u_r]L^r_s + \sigma^r_s[f_r]) \frac{J_{x \lambda}}{T^4} dx^4 d\lambda_2 d\lambda_3 \\
& +\int\limits_{0}^{2 \pi} \int\limits_{0}^{\pi} \Biggl\{ \frac{\sum\limits_{i=1}^3 E^i_s J_{x \lambda}p_i}{T^4} \Biggr\}_{x^4=0} d\lambda_2 d\lambda_3. \end{split} \end{equation}
The quantity under the sign of triple integral is expressed by means of the functions $[u]$ and of the functions $X(\lambda_1, \lambda_2, \lambda_3)$ and $\Omega(\lambda_1, \lambda_2, \lambda_3)$, solutions of the integral equations ($\ref{eq:4.7}$), ($\ref{eq:4.8}$), ($\ref{eq:4.9}$) and ($\ref{eq:4.35}$).

We shall obtain the expression of the $X$ and $\Omega$ as functions of the new variables $x^4$, $\lambda_2$, $\lambda_3$ by replacing $\lambda_1$ with its value defined by Eq. ($\ref{eq:4.16}$), function of the $x^4$, $\lambda_2$, $\lambda_3$.

These functions satisfy the integral equations
\begin{equation*} X(x^4, \lambda_2, \lambda_3)= X_0(x^4_0, \lambda_2, \lambda_3) + \int_{x^4_0}^{x^4} \frac{E(X)}{T^4} d \omega^4, \end{equation*}
\begin{equation*} \Omega(x^4, \lambda_2, \lambda_3)= \Omega_0(x^4_0, \lambda_2, \lambda_3) + \int_{x^4_0}^{x^4} \frac{F(X,\Omega)}{T^4} d \omega^4.\end{equation*}
The quantity under sign of double integral is expressed by means of the values for $x^4=0$ of the Cauchy data, $[u]$ and $\big{[} \frac{\partial u}{\partial x^\alpha}\big{]}$, and of the values for $x^4=0$ of the functions $X$ and $\Omega$. 
Thus, it is possible to conclude that:

$\mathbf{Conclusion}$ Every solution of the equations
\begin{equation*}E_r = \sum\limits_{\lambda,\mu=1}^4 A^{\lambda \mu} \frac{\partial^2 u_s}{\partial x^\lambda \partial x^\mu} + \sum\limits_{s=1}^n \sum\limits_{\mu=1}^4 {{B^s}_r}^\mu \frac{\partial u_s}{\partial x^\mu} + f_r =0, \hspace{1cm} r=1, 2, ..., n. \end{equation*}
continuous, bounded and with first partial derivatives continuous and bounded in $D$ verifies the integral relations ($\ref{eq:4.51}$) if the coordinates $x^\alpha_0$ of $M_0$ satisfy the inequalities of the form
\begin{equation*} |x^4_0| \leq \epsilon_0, \hspace{1cm} |x^i_0 - \tilde{x}^i_0| \leq d, \end{equation*}
defining a domain $D_0 \subset D$.

\section{Application of the Results}
We are going to estabilish formulae analogous to ($\ref{eq:4.51}$), verified by the solutions of the given equations $[E]$ at every point of a domain $D_0$ of space-time, where the values of coefficients will be restricted uniquely by the requirement of having to verify some conditions of normal hyperbolicity and differentiability.

Let us consider the system $[E]$ of equations
\begin{equation*} E_r = \sum\limits_{\lambda,\mu=1}^4 A^{\lambda \mu} \frac{\partial^2 u_s}{\partial x^\lambda \partial x^\mu} + \sum\limits_{s=1}^n \sum\limits_{\mu=1}^4 {B^{r \lambda}_s} \frac{\partial u_r}{\partial x^\lambda} + f_s =0, \hspace{0.5cm} s=1, 2, ..., n. \end{equation*} 
We assume that in the space-time domain $D$, defined by
\begin{equation*} |x^4| \leq \epsilon, \hspace{1cm} |x^i - \tilde{x}^i| \leq d, \end{equation*}
where the three $\tilde{x}^i$ are given numbers, the equations $[E]$ are of the normal hyperbolic type, i.e.
\begin{equation*} A^{44}>0, \, \rm{the} \; \rm{quadratic} \; \rm{form} \; \sum\limits_{i,j=1}^3A^{ij}X_iX_j \; \rm{is} \; \rm{negative-definite}. \end{equation*}
At every point $M_0(x_j)$ of the domain $D$ we can associate to the values $A^{\lambda \mu}(x^\alpha_0)$ of the coefficients $A$ a system of real numbers $a^{\alpha \beta}_0$, algebraic functions, defined and indefinitely differentiable of the $A^{\lambda \mu}_0= A^{\lambda \mu}(x^\alpha_0)$, satisfying the identity
\begin{equation*}\sum\limits_{i,j=1}^4 A^{\lambda \mu}_0 X_\lambda X_\mu = \bigg{(}\sum\limits_{\alpha=1}^4 a^{4 \alpha}_0 X_\alpha \bigg{)}^2 - \bigg{(}\sum\limits_{\alpha=1}^4 a^{i \alpha}_0 X_\alpha \bigg{)}^2. \end{equation*}
We shall denote by $a^0_{\alpha \beta}$ the quotient by the determinant $a_0$ of elements $a^{\alpha \beta}_0$ of the minor relative to the element $a^{\alpha \beta}_0$ of this determinant. The quantities $a^0_{\alpha \beta}$ are algebraic functions defined and indefinitely differentiable of the $A^{\lambda \mu}_0$ in $D$. The square of the determinant $a_0$, being equal to the absolute value $A$ of the determinant having elements $A^{\lambda \mu}$, $a_0$, is different from zero in $D$.

Let us perform the linear change of variables
\begin{equation*} y_\alpha \equiv \sum\limits_{\beta=1}^4 a^0_{\alpha \beta} x^\beta. \end{equation*}
The partial derivatives of the unknown functions $u_s$ are covariant under such a change of variables, hence the equations $[E]$ read as 
\begin{equation} \label{eq:4.52}  \sum\limits_{\alpha ,\beta=1}^4 A^{*\alpha \beta} \frac{\partial^2 u_s}{\partial y^\alpha \partial y^\beta} + \sum\limits_{r=1}^n \sum\limits_{\alpha=1}^4 {{B_s}^{*r \alpha}} \frac{\partial u_s}{\partial y^\alpha} + f_r =0, \end{equation} 
with 
\begin{equation} \label{eq:4.53} A^{* \alpha \beta}= \sum\limits_{\lambda, \mu=1}^4 A^{\lambda \mu} a^0_{\alpha \lambda} a^0_{\beta \mu}, \end{equation}
\begin{equation} \label{eq:4.54} B_s^{*r\alpha} = \sum\limits_{\lambda=1}^4 B_s^{r \lambda}a^0_{\alpha \lambda}. \end{equation}
The coefficients of Eq. ($\ref{eq:4.52}$) take at the point $M_0$ the values ($\ref{eq:4.11}$). As a matter of fact:
\begin{equation*} \begin{split} A_0^{* \alpha \beta} &= \sum\limits_{\lambda, \mu=1}^4 A^{\lambda \mu} a^0_{\alpha \lambda} a^0_{\beta \mu}= - \sum\limits_{\lambda, \mu, \gamma=1}^4 a_0^{\gamma \lambda } a_0^{\gamma \mu}   a^0_{\alpha \lambda} a^0_{\beta \mu} + 2 \sum\limits_{\lambda, \mu=1}^4 a_0^{4 \lambda} a_0^{4 \mu} a^0_{\alpha \lambda} a^0_{\beta \mu} \\
&= - \delta^\beta_\alpha + 2 \delta^4_\alpha \delta^4_\beta,  \end{split}\end{equation*}
hence one has 
\begin{equation*} A^{*44}=1, \hspace{0.5cm} A^{*i4}= 0, \hspace{0.5cm} A^{*ij}= - \delta^{ij}. \end{equation*}
We can apply to the equations $[E]$, written in the form $(\ref{eq:4.52})$, in the variables $y^\alpha$ and for the corresponding point $M_0$, the results that we obtained before. The integration parameters so introduced will be $y^4$, $\lambda_2$, $\lambda_3$ but, the surface carrying the Cauchy data being always $x^4\equiv a_0^{\alpha 4} y^4=0$, the integration domains will be determined from $M_0$ and the intersection of this surface with the characteristic conoid with vertex $M_0$. We see that it will be convenient, in order to evaluate these integrals, to choose the variables $y^\alpha$ relative to a point $M_0$ whatsoever in such a way that the initial space section, $x^4=0$, is a hypersurface $y^4=0$. It will be enough for that purpose to choose the coefficients $a_0^{\alpha \beta}$ in such a way that $a_0^{i4}=0$. We shall then have
\begin{equation*} a^0_{4i}=0, \; a^0_{44}=\frac{1}{a_0^{44}}= (A^{44}_0)^{-\frac{1}{2}} \; \rm{and} \; y_4 = a^0_{44} x^4, \end{equation*}
where $a^0_{44}$ is a bounded positive number.

The application of the results proves then the existence of a domain $D_0 \subset D$, defined by $|x^4_0| \leq \epsilon$, which implies at every point $M_0 \in D_0$, $|y^4_0| \leq \eta$, such that one can write at every point $M_0$ of $D_0$ a Kirchhoff formula whose first member is the value at $M_0$ of the unknown $u_s$, in terms of the quantities $y^\alpha_0 = \sum\limits_{\beta=1}^4 a^0_{\alpha \beta} x_0^\beta$, and whose right-hand side consists of a triple integral and of a double integral. The quantities to be integrated are expressed by means of the functions $X(y^4, \lambda_2, \lambda_3, y^\alpha_0)$ representing $(y^\alpha, p_i, y^j_i, z^j_i, ..., z^j_{ihk})$ and $\Omega(y^4, \lambda_2, \lambda_3)(\omega^r_s, ..., \omega^r_{sij})$, solutions of an equation of the kind
\begin{equation} \label{eq:4.55} X = X_0 + \int\limits_{y^4_0}^{y^4} E^*(X) dY^4, \hspace{0.5cm}  \Omega = \Omega_0 + \int\limits_{y^4_0}^{y^4} F^*(X, \Omega) dY^4, \end{equation}
where the functions $E^*$ and $F^*$ are the functions $E$ and $F$ considered before, but evaluated starting from the coefficients ($\ref{eq:4.53}$) and ($\ref{eq:4.54}$) and from their partial derivatives with respect to the $y^\alpha$, and where $\Omega_0$, $X_0$ denote the values for $y^4=y^4_0$ of the corresponding functions $\Omega$, $X$.

In order to obtain, under a simpler form, some integral equations holding in the whole domain $D_0$, we will take as integration parameter, in place of $y^4$, $x^4$. Also, we shall replace those of the auxiliary unknown functions $X$ which are the values (in terms of the three parameters) of the coordinates $y^\alpha$ of a point of the conoid $\Sigma_0$ of vertex $M_0$, with the values of the original coordinates $x^\alpha$ of a point of this conoid.

We shall replace, for that purpose, those of the integral equations which have on the left-hand side $y^\alpha$ with their linear combinations of coefficients $a^{\alpha \beta}_0$, i.e. with the equations of the same kind 
\begin{equation*} \sum\limits_{\beta=1}^4 a^{\alpha \beta}_0 y^\beta = x^\alpha = x^\alpha_0 + \int\limits_{x^4_0}^{x^4} \sum\limits_{\beta=1}^4 a^{\alpha \beta}_0 \frac{T^{*\alpha \beta}}{T^{*4}} a^0_{44} d\omega^4, \end{equation*}
and we will replace the quantities under integration signs of all our equations in terms of the $x^\alpha$ in place of the $y^\beta$ by replacing in these equations the $y^\beta$ with the linear combinations $\sum\limits_{\alpha=1}^4 a^0_{\alpha \beta}x^\alpha$.

The system of integral equations obtained in such a way has, for every point $M_0$ of the domain $D$, solutions which are of the form $X(x^\alpha_0, x^4, \lambda_2, \lambda_3)$. 

At this stage of our argumentation, we are able to consider a more complex case which is the study of non-linear systems of partial differential equations. Since this is the aim of the next Chapter, we can state the results of our study of linear systems of partial differential equations which will be useful for that purpose.

$\mathbf{Conclusion.}$ Every solution of Eqs. $[E]$, possessing in $D$ first partial derivatives with respect to the $x^\alpha$ continuous and bounded, verifies, if $x^\alpha$ are the coordinates of a point $M_0$ of the domain $D_0$ defined by
\begin{equation*} |x^4_0| \leq \epsilon_0 \leq \epsilon; \hspace{2cm} |x^i_0 - \tilde{x}^i | \leq d_0 \leq d, \end{equation*}
some Kirchhoff formulae whose left-hand side are the values at the point $M_0$ of the unknown functions $u_s$ and whose right-hand side consists of a triple integral in the parameters $x^4$, $\lambda_2$ and $\lambda_3$, and of a double integral in the parameters $\lambda_2$ and $\lambda_3$. The quantities to be integrated are expressed by means of the functions $X(x^\alpha_0, x^4, \lambda_2, \lambda_3)$ and $\Omega(x^\alpha_0, x^4, \lambda_2, \lambda_3)$, themselves solutions of given integral equations ($\ref{eq:4.55}$), and of the unknown functions $[u_s]$; the quantity under the sign of double integral, which is taken for the zero value of the $x^4$ parameter, contains, besides the previous functions, the first partial derivatives of the unknown functions $\big{[}\frac{\partial u_s}{\partial x^\alpha} \big{]}$ (value over $\Sigma_0$ of the Cauchy data). We obtain in such a way a system of integral equations verified in $D_0$ from the solutions of Eqs. $[E]$. We write this system in the following reduced form \cite{foures1952theoreme}:
\begin{equation*} X= X_0 + \int\limits_{x^4_0}^{x^4} E(X) d\omega^4, \end{equation*}
\begin{equation*} 4 \pi U= \int\limits_{x^4_0}^0 \int\limits_{0}^{2 \pi} \int\limits_{0}^{\pi} H dx^4 d\lambda_2 d \lambda_3 + \int\limits_{0}^{2 \pi} \int\limits_{0}^{\pi} I d\lambda_2 d\lambda_3. \end{equation*}

\chapter{Linear System from a Non-linear Hyperbolic System}
\chaptermark{Non-linear hyperbolic systems}
\epigraph{Curiouser and curiouser.}{Lewis Carroll, Alice's Adventures in Wonderland and Through the Looking-Glass}
At this point of our analysis, we focus the attention on the non-linear hyperbolic systems of partial differential equations. We will prove that it is possible to begin with a non-linear system and turn it into a linear system for which the results obtained in the previous chapter hold.
In particular, we consider a system $[F]$ of $n$ second-order partial differential equations, with $n$ unknown functions and four variables, $\textit{non linear}$ of the following type:
$$ \sum\limits_{\lambda, \mu=1}^4 A^{\lambda \mu} \frac{\partial^2 W_s}{\partial x^\lambda \partial x^\mu} + f_s =0, \hspace{3.5cm} s=1, 2, ..., n. \hspace{2cm}  [F]$$
The coefficients $A^{\lambda \mu}$, which are the same for the $n$ equations, and $f_s$ are given functions of the four variables $x^\alpha$, the unknown functions $W_s$, and of their first derivatives $\frac{\partial W_s}{\partial x^\alpha}$. 
The calculations made in the previous chapter for the linear equations $[E]$ are valid for the non-linear equations $[F]$: it suffices to consider in these calculations the functions $W_s$ as functions of the four variables $x^\alpha$; the coefficients $A^{\lambda \mu}$ and $f_s$ are then functions of these four variables  and the previous calculations are valid, subject to considering, in all formulae where there is occurrence of partial derivatives of the coefficients with respect to $x^\alpha$, these derivations as having been performed. 

Furthermore, we do not apply directly to the equations $[F]$ the results of previous chapters; but we are going to show that, by differentiating five times with respect to the variables $x^\alpha$ the given equations $[F]$, and by applying to the obtained equations the result of Chapter 4, one obtains a system of integral equations whose left-hand side are the unknown functions $W_s$, their partial derivatives with respect to the $x^\alpha$ up to the fifth order and some auxiliary functions $X$, $\Omega$, and whose right-hand sides contain only these functions and the integration parameters.

Then, in order to solve the Cauchy problem for the nonlinear equations $[F]$ we will try to solve, independently of these equations, the system of integral equations verified by the solutions. Unfortunately, some difficulties arise for this solution: we have seen in the previous chapter that the quantities occurring under the integral sign are continuous and bounded, upon assuming differentiability of the coefficients $A^{\lambda \mu}$, viewed as given functions of the variables $x^\alpha$, these conditions not being realized when the functions $W_s$, $W_{s \alpha}$, ..., $U_S$ are independent; the quantity $[A^{ij}] \frac{\partial^2 \sigma}{\partial x^i \partial x^j}J_{x \lambda}$ will then fail to be bounded and continuous. 

Moreover, to pursue our purpose, we will have to pass through the intermediate stage of approximate equations $[F_1]$, where the coefficients $A^{\lambda \mu}$ will be some functions of the $x^\alpha$. We will then be in a position to solve the integral equations and show that their solution is a solution of the equations $[F_1]$ and to show which are the partial solution of $W_s$; but we will see that the obtained solution $W_s$ will be only five times differentiable and the method we are going to use is therefore applicable only if the $A^{\lambda \mu}$ depend uniquely on the $W_s$ and not on the $W_{s \alpha}$: it will be then enough to assume the approximation function five times differentiable. 

Eventually, we will solve the Cauchy problem for the system $[G]$
$$ \sum\limits_{\lambda, \mu=1}^4 A^{\lambda \mu} \frac{\partial^2 W_s}{\partial x^\lambda \partial x^\mu} + f_s =0, \hspace{3.5cm} s=1, 2, ..., n. \hspace{2cm}  [G] $$
where the coefficients $A^{\lambda \mu}$ do not depend on first partial derivatives of the unknown functions. It will be enough to apply the results of Chapter 4 to the equations $[G']$ deduced from the equations $[G]$ by four differentiations with respect to the variables $x^\alpha$ in order to obtain a system of integral equations whose right-hand sides do not contain other functions than those occurring on the left-hand sides. 

The integral equations $[J]$, verified by the bounded solutions and with bounded first derivatives of equations $[G']$, will only involve the coefficients $A^{\lambda \mu}$ and $B^{T \lambda}_S$ and their partial derivatives up to the orders four and two, respectively, as well as of the coefficients $F_S$. We would face clearly, in order to solve the system of integral equations $[J]$ directly, the same difficulty as in the general case:  the quantity under the sign of triple integral is not bounded in general if $W_s$, $W_{s \alpha}$, ..., $U_S$ are independent functions. We shall be able however, in the case in which the $A^{\lambda \mu}$ depend only on the first derivatives of the $W_s$, to solve the Cauchy problem by using the results obtained on the system of integral equations verified in a certain domain, from the solutions of the given equations $[G]$, by considering a system $[G_1]$, which is the approximate version of $[G]$. This system is obtained by substitution in $A^{\lambda \mu}$ of the $W_s$ with their approximate values $W_s^{(1)}$. 

We will prove that the system of integral equations $[J_1]$, verified by the solutions of the Cauchy problem assigned with respect to the equations $[G_1]$, admits of a unique, continuous and bounded solution in a domain $D$. 

Then, we will prove that the solutions of $[J_1]$ are solutions of the Cauchy problem given for the equations $[G_1]$ in the whole domain $D$, and that the functions $W_s$ obtained admit of partial derivatives up to the fourth order equal to $W_{s \alpha}$, ..., $U_S$. 

Eventually, since the solution of the Cauchy problem given for the equations $[G_1]$ defines a representation of the space of the functions $W_s^{(1)}$ into itself, we will prove that this representation admits a fixed point, belonging to the space. The corresponding $W_s$ are solutions of the given equations $[G]$. This solution is unique and possesses partial derivatives continuous and bounded up to the fourth order.

\section{The Equations $[F]$}
Let us consider the system of $n$ second-order partial differential equations, with $n$ unknown functions and four variables 
$$ \sum\limits_{\lambda, \mu=1}^4 A^{\lambda \mu} \frac{\partial^2 W_s}{\partial x^\lambda \partial x^\mu} + f_s =0, \hspace{3.5cm} s=1, 2, ..., n. \hspace{2cm}  [F]$$
We assume that in a space-domain $D$, centered at the point $\bar{M}$ with coordinates $x^i$, 0 and defined by
\begin{equation*} |x^i - \bar{x}^i | \leq d, \hspace{2cm} |x^4| \leq \epsilon \end{equation*}
and for values of the unknown functions $W_s$ and their first partial derivatives satisfying
\begin{equation} \label{eq:5.1} |W_s - \bar{W}_s | \leq l, \hspace{2cm} \bigg{|} \frac{\partial W_s}{\partial x^\alpha} -\frac{\overline{\partial W_s}}{\partial x^\alpha} \bigg{|} \leq l, \end{equation}
where $\bar{W}_s$ and $\bar{\frac{\partial W_s}{\partial x^\alpha}}$ are the values of the functions $W_s$ and $\frac{\partial W_s}{\partial x^\alpha}$ at the point $\bar{M}$, the coefficients $A^{\lambda \mu}$ and $f_s$ admit of partial derivatives with respect to all their arguments up to the fifth order.

We shall then obtain, by differentiating five times the equations $[F]$ with respect to the variables $x^\alpha$, a system of $N$ equations, where $N$ is the product by $n$ of the number of derivatives of order five of a function of four variables, verified in the domain $D$ by the solutions of the equations $[F]$ which satisfy the inequalities $(\ref{eq:5.1})$ and possess derivatives with respect to the $x^\alpha$ up to the seventh order.

Let us write this system of $N$ equations. We set
\begin{equation*} \frac{\partial W_s}{\partial x^\alpha}=W_{s \alpha}, \hspace{2cm} \frac{\partial^2 W_s}{\partial x^\alpha \partial x^\beta}= W_{s \alpha \beta} \end{equation*}
and we denote by $U_S$ the partial derivatives of order five of $W_s$
\begin{equation*} \frac{\partial^5 W_s}{\partial x^\alpha \partial x^\beta \partial x^\gamma \partial x^\delta \partial x^\epsilon} = W_{s \alpha \beta \gamma \delta \epsilon} =U_S, \hspace{2cm} s=1, 2, ..., N. \end{equation*}
Let us differentiate the given equations $[F]$ with respect to any whatsoever of the variables $x^\alpha$; we obtain $n$ equations of the form
\begin{equation*} \begin{split} & A^{\lambda \mu} \frac{\partial^2 W_{s \alpha}}{\partial x^\lambda \partial x^\mu} + \Biggl\{\frac{\partial A^{\lambda \mu}}{\partial W_r}W_{r \alpha} + \frac{\partial A^{\lambda \mu}}{\partial W_{r \nu}}\frac{\partial W_{r \nu}}{\partial x^\alpha} + \frac{\partial A^{\lambda \mu}}{\partial x^\alpha} \Biggr\} \frac{\partial W_{s \mu}}{\partial x^\lambda} + \frac{\partial f_s}{\partial W_r}W_{r \alpha} \\
&+ \frac{\partial f_s}{\partial W_{r \nu}}\frac{\partial}{\partial x^\alpha}W_{r \nu} + \frac{\partial f_s}{\partial x^\alpha}=0. \end{split} \end{equation*}
If we differentiate the previous equations four times, we obtain the following system of $N$ equations:
\begin{equation} \begin{split} \label{eq:5.2}& A^{\lambda \mu} \frac{\partial^2 W_{s \alpha \beta \gamma \delta \epsilon}}{\partial x^\lambda \partial x^\mu} + \Biggl\{ \frac{\partial A^{\lambda \mu}}{\partial W_r}W_{r \alpha} + \frac{\partial A^{\lambda \mu}}{\partial W_{r \nu}}W_{r \nu \alpha} + \frac{A^{\lambda \mu}}{\partial x^\alpha} \Biggr\} \frac{\partial}{\partial x^\lambda} W_{s \beta \gamma \delta \epsilon \mu} \\
& + \Biggl\{ \frac{\partial A^{\lambda \mu}}{\partial W_r}W_{r \beta} + \frac{\partial A^{\lambda \mu}}{\partial W_{r \nu}}W_{r \nu \beta} +\frac{A^{\lambda \mu}}{\partial x^\beta}  \Biggr\} \frac{\partial}{\partial x^\lambda} W_{s \alpha \gamma \delta \epsilon \mu} \dots \\
&+ \Biggl\{ \frac{\partial A^{\lambda \mu}}{\partial W_r}W_{r \epsilon} + \frac{\partial A^{\lambda \mu}}{\partial W_{r \nu}}W_{r \nu \epsilon} +\frac{A^{\lambda \mu}}{\partial x^\epsilon}  \Biggr\} \frac{\partial}{\partial x^\lambda} W_{s \alpha \beta \gamma \delta \mu} + \frac{\partial A^{\lambda \mu}}{\partial W_{r \nu}} \frac{\partial W_{r \nu \alpha \beta \gamma \delta}}{\partial x^\epsilon} \\
& + \frac{\partial f_s}{\partial W_{r \nu}}\frac{\partial W_{r \nu \alpha \beta \gamma \delta}}{\partial x^\epsilon} + F_S=0, \end{split} \end{equation}
where $F_S$ is a function of the variables $x^\alpha$, of the unknown functions $W_s$ and of their partial derivatives up to the fifth order included, but not of the derivatives of higher order.

The fifth derivatives $U_S$ of the functions $W_s$ satisfy therefore, in the domain $D$ and under the conditions specified, a system of $N$ equations $[F']$ of the following type:
\begin{equation} \label{eq:5.3} A^{\lambda \mu} \frac{\partial^2U_S}{\partial x^\lambda \partial x^\mu} + B_S^{T \lambda} \frac{\partial U_T}{\partial x^\lambda} + F_S=0. \end{equation}
The coefficients $A^{\lambda \mu}$, $B_S^{T \lambda}$ and $F_S$ of these equations are polynomials of the coefficients $A^{\lambda \mu}$ and $f_s$, of the given equations $[F]$ and of their partial derivatives with respect to all arguments up to the fifth order, as well as of the unknown functions $W_s$ and of their partial derivatives with respect to the $x^\alpha$ up to the fifth order. The coefficients $A^{\lambda \mu}$ depend only on the variables $x^\alpha$, the unknown functions $W_s$ and their first partial derivatives $W_{s \alpha}$, the coefficients $B_S^{T \lambda}$ depend only on the variables $x^\alpha$, on the unknown functions $W_s$ and their first and second partial derivatives $W_{s \alpha}$ and $W_{s \alpha \beta}$.

Thus, we apply to Eqs. ($\ref{eq:5.3}$), which is a system of $N$ linear equations of second order, with the unknown functions $U_S$, the result of the previous chapter. We obtain a system of integral equations whose left-hand sides will be some auxiliary functions $\Omega$, $X$, and the unknown functions $U_S$ whereas, their right-hand sides have, under the sign of integral, quantities expressed by means of the auxiliary functions $X$, of the unknown functions $U_S$ and of the value for $x^4=0$ of their first partial derivatives $\frac{\partial U_S}{\partial x^\alpha}$, of the integration parameters, as well as of the coefficients $A^{\lambda \mu}$, $B^{T \lambda}_S$ and $F_S$ and of their partial derivatives up to the orders four, three and zero.

$A^{\lambda \mu}$, $B^{T \lambda}_S$ and $F_S$ not involving the partial derivatives of the functions $W_s$ except for the orders up to one, two and five, respectively, the right-hand sides of the integral equations considered do not contain, besides the auxiliary functions $X$, $\Omega$, the functions $U_S$ and the value for $x^4=0$ of their first derivatives, and the integration parameters, nothing but the unknown functions $W_s$ and their partial derivatives up to the fifth order included.

If the functions $W_s$ and their partial derivatives up to the fifth order $W_{s\alpha}$, $W_{s\alpha\beta}$, ..., $W_{s \alpha \beta \gamma \delta \epsilon}=U_S$ are continuous and bounded in a spacetime domain $D$ of equations $|x^i - \bar{x^i}| \leq d$, $|x^4| \leq \epsilon$, they verify in this domain the integral relations
\begin{equation} \begin{split}\label{eq:5.4}
&W_s(x^\alpha)= \int_0^{x^4} W_{s 4} (x^i, t) dt + W_s(x^i, 0), \\
& \dots \\
&W_{s \alpha \beta \gamma \delta \epsilon} (x^\alpha)= \int_0^{x^4} W_{s \alpha \beta \gamma \delta 4} (x^i, t) dt + W_{s\alpha \beta \gamma \delta \epsilon}(x^i, 0).
\end{split}
\end{equation}
By adjoining this system to the system of integral equations, we are able to obtain a system of integral equations, verified, under certain assumptions, by the solutions of the given equations $[F]$, whose right-hand sides contain the functions occurring on the left-hand sides.

We search for solutions $W_s$ of the equations $[F]$ which take, as well as their first partial derivatives, some values given in a domain $(d)$ of the initial hypersurface $x^4=0$:
\begin{equation*} W_s(x^i, 0)= \varphi_s(x^i), \hspace{2cm} \frac{\partial W_s}{\partial x^4}(x^i, 0)= \psi_s(x^i); \end{equation*}
where $\varphi_s$ and $\psi_s$ are given functions of the three variables $x^i$ in the domain $(d)$. We will prove that the data $\varphi_s$ and $\psi_s$ determine the values in $(d)$ of the partial derivatives up to the sixth order of the solution $W_s$ of the equations $[F]$.

$\mathbf{Assumptions}$
\begin{description}
\item[(1)] In the domain $(d)$, defined by
\begin{equation*} |x^i - \bar{x}^i| \leq d, \end{equation*}
the functions $\varphi_s$ and $\psi_s$ admit of partial derivatives continuous and bounded with respect to the three variables $x^i$ and satisfy the inequalities 
\begin{equation} \label{eq:5.5} |\varphi_s - \bar{\varphi}_s | \leq l_0 \leq l, \; \; | \psi_s - \bar{\varphi}_s| \leq l_0 \leq l, \; \; \bigg{|} \frac{\partial \varphi_s}{\partial x^i} - \frac{\overline{\partial \varphi_s}}{\partial x^i}\bigg{|} \leq l_0 \leq l. \end{equation}
\item[(2)] In the domain $(d)$ and for values of the functions
\begin{equation*} W_s= \varphi_s, \hspace{0.5cm} \frac{\partial W_s}{\partial x^4}= \psi_s , \hspace{0.5cm}  \frac{\partial W_s}{\partial x^i}= \frac{\partial \varphi_s}{\partial x^i}, \end{equation*}
satisfying the inequalities ($\ref{eq:5.5}$), the coefficients $A^{\lambda \mu}$ and $f_s$ have partial derivatives continuous and bounded with respect to all their arguments, up to the fifth order.
\item[(3)] In the domain $(d)$ and for the functions $\varphi_s$ and $\psi_s$ considered, the coefficient $A^{44}$ is different from zero.
\end{description}
It follows, from the assumption $\textbf{(1)}$, that the values in $(d)$ of partial derivatives up to the sixth order, corresponding to a differentiation at most with respect to $x^4$, of the solutions $W_s$ of the assigned Cauchy problem are equal to the corresponding partial derivatives of the functions $\varphi_s$ and $\psi_s$, and they are continuous and bounded in $(d)$.

The values in $(d)$ of partial derivatives up to the sixth order of the functions $W_s$, corresponding to more than one derivative with respect to $x^4$, are expressed in terms of the previous ones, of the coefficients $A^{\lambda \mu}$ and $f_s$ of the equations $[F]$ and of their partial derivatives up to the fourth order. 

Moreover, from the assumption $\textbf{(3)}$, it follows that the equations $[F]$ make it possible to evaluate (being given within $(d)$ the values of the functions $W_s$, $W_{s \alpha}$, $W_{s \alpha i}$) the value in $(d)$ of $W_{s 44}$, from which one will deduce by differentiation the value in $(d)$ of the partial derivatives corresponding to two differentiations with respect to $x^4$. 

The equations that are derivatives of the equations $[F]$ with respect to the variables $x^\alpha$, up to the fourth order, make it possible to evaluate in $(d)$ the values of partial derivatives up to the sixth order of the functions $W_s$. 

It turns out from the three previous assumptions that all functions obtained are continuous and bounded in $(d)$.

We shall set
\begin{equation*} W_{s j}(x^i,0)=\varphi_{s j}(x^i), \hspace{0.5cm}  U_S(x^i,0)=\Phi_S(x^i),  \hspace{0.5cm} \frac{\partial U_S}{\partial x^4}(x^i,0)=\Psi_s(x^i). \end{equation*}
At this stage, it is useful to make a sum up of the assumptions made and the results obtained.

$\mathbf{Assumptions}$
\begin{description}
\item[(a)] In the domain $D$ defined by $|x^i - \bar{x}^i | \leq d$, $|x^4| \leq \epsilon$ and for values of the unknown functions satisfying 
\begin{equation*}|W_s - \bar{\varphi}_s | \leq l,  \hspace{0.5cm} \bigg{|} \frac{\partial W_s}{\partial x^i} - \frac{\overline{\partial \varphi_s}}{\partial x^i} \bigg{|} \leq l,  \hspace{0.5cm} \bigg{|} \frac{\partial W_s}{\partial x^4} - \bar{\psi}_s \bigg{|} \leq l;\end{equation*}
\item[(b)] In the domain of the initial surface $x^4=0$, defined by $|x^i - \bar{x}^i | \leq d$, the Cauchy data $\varphi_s$ and $\psi_s$ admit of partial derivatives continuous and bounded up to the orders six and five.
\end{description}

It follows from the assumption $\textbf{(a)}$ that the coefficients $A^{\lambda \mu}$ and $f_s$ have partial derivatives with respect to all their arguments up to the fifth order continuous and bounded, the derivatives of order five satisfying some Lipschitz conditions. 

Furthermore, the quadratic form $A^{\lambda \mu}X_\lambda X_\mu$ is of normal hyperbolic form, i.e. $A^{44} >0$ and the form $A^{ij}X_iX_j$ is negative definite.

In conclusion, we have seen that if we consider a solution $W_s$ seven times differentiable of the assigned Cauchy problem, possessing partial derivatives with respect to the $x^\alpha$ up to the sixth order, continuous and bounded and satisfying the inequalities $(\ref{eq:5.1})$ in $D$, it satisfies in this domain the equations $(\ref{eq:5.3}$), which are linear equations in the unknown functions $U_S$. 

These equations satisfy the assumptions of Chapter 4 and therefore there exists a domain $D_0 \subset D$ in which the functions $W_s$ verify the following system of integral equations.

This system consists of equations having the form
\begin{description}
\item[(1)]
\begin{equation*} X= X_0 + \int_{x^4_0}^{x^4} E(X) d\omega^4 \end{equation*}
where $X$ is a function of the three parameters $x^4$, $\lambda_2$ and $\lambda_3$, representatives of a point of the characteristic conoid of vertex $M_0(x_0)$, and of the four coordinates $x^\alpha_0$ of a point $M_0 \in D_0$. These functions $X$ are the functions $x^i$, $p_i$, $y^j_i$, $z^j_i$, $y^j_{ih}$, $z^j_{ih}$, $y^j_{ihk}$, $z^j_{ihk}$, whereas $X_0$ is the value of $X$ for $x^4=x^4_0$ and it is a given function of $x^\alpha_0$, $\lambda_2$, $\lambda_3$.
\item[(2)] 
\begin{equation*} \Omega = \Omega_0 + \int_{x^4_0}^{x^4} F(X, \Omega) d\omega^4, \end{equation*}
where $\Omega$ is a function of $x^\alpha_0$, $x^4_0$, $\lambda_2$ and $\lambda_3$. These functions are $\omega^r_s$, $\omega^r_{si}$ and $\omega^r_{sij}$, whereas $\Omega_0$ is the value of $\Omega$ for $x^4=x^4_0$ and it is a function of $x^4_0$, $\lambda_2$ and $\lambda_3$.
\item[(3)]
\begin{equation*} W=  W_0 + \int_{0}^{x^4} G(W, U) d\omega^4, \end{equation*}
where $W$ is a function of the four coordinates $x^\alpha$ of a point $M \in D$. These functions are $W_s$, $W_{s \alpha}$, $W_{s \alpha \beta}$, $W_{s \alpha \beta \gamma}$ and $W_{s \alpha \beta \gamma \delta}$, whereas $W_0$ is the value of $W$ for $x^4=0$ and it is a given function of the three variables $x^i$.
\item[(4)]
\begin{equation*} U = \int_{x^4_0}^{0} \int_{0}^{2 \pi}\int_{0}^{\pi} H d\omega^4 d\lambda_2 d\lambda_3 + \int_{0}^{2 \pi}\int_{0}^{\pi} I d\lambda_2 d\lambda_3, \end{equation*}
the Kirchhoff formulae, where $U$ is a function of the four coordinates $x^\alpha_0$ of a point $M_0 \in D_0$. These functions are the functions $U_S$.
\end{description}
The quantities $E$, $F$, $G$, $H$ and $I$ are formally identical to the corresponding quantities evaluated for the equations $[E]$. The quantity $G$ is a function of $W$ or $U$. All these quantities are therefore expressed by means of the functions $X$, $\Omega$, $W$ and $U$, occurring on the left-hand sides of the integral equations considered, and involve the partial derivatives of the $A^{\lambda \mu}$ and $f_s$ with respect to all their arguments, up to the fifth order, and the partial derivatives of the Cauchy data $\varphi_s$ and $\psi_s$ up to the orders six and five, in the quantity $I$ and by means of $W_0$.

Let us now try to solve the system of integral equations verified by the solutions of the non-linear equations $[F]$. We have seen in Chapter 4 that the quantities occurring under the integral sign, in particular $H$, are continuous and bounded, upon assuming differentiability of the coefficients $A^{\lambda \mu}$, viewed as given functions of the variables $x^\alpha$, these conditions not being realized when the functions $W_s$, $W_{s \alpha}$, ..., $U_S$ are independent; the quantity $[A^{ij}]\frac{\partial^2 \sigma}{\partial x^i \partial x^j} J_{x \lambda}$ will then fail to be bounded and continuous. Thus, it is possible to overcome this difficulty on the way towards solving the Cauchy problem by passing through the intermediate stage of approximate equations $[F_1]$, where the coefficients $A^{\lambda \mu}$ are some given functions of the $x^\alpha$, obtained by replacing $W_s$ with a given function $W_s^{(1)}$. The quantities occurring under the integration signs of the integral equations verified by the solutions will be continuous and bounded if the same holds for the functions $W_s$, ..., $U_S$ considered as independent. 

We will then be in a position to solve the integral equations and show that their solution $W_s$, ..., $U_S$ is a solution of the equations $[F_1]$, and that $W_{s \alpha}$, ..., $U_S$ are the partial derivatives  of $W_s$; but we need for that purpose to take as a function $W_s^{(1)}$ a function six times differentiable because the integral equations involve fifth derivatives of the $A^{\lambda \mu}$. Since the obtained solution $W_s$ is merely five times differentiable, it will be impossible for us to iterate the procedure. 

The method described will be therefore applicable only if the $A^{\lambda \mu}$ depend uniquely on the $W_s$ and not on its first derivatives with respect to the $x^\alpha$. Hence, from now on, it will be enough for us to assume that the approximation function is five times differentiable. 

In the general case, where $A^{\lambda \mu}$ are functions of $W_s$ and $W_{s \alpha}$, it is possible to solve the Cauchy problem by passing through the intermediate step of approximate forms not of the equations $[F]$ themselves, but of equations previously differentiated with respect to the $x^\alpha$ and viewed as integro-differential equations in the unknown functions $W_{s \alpha}$.

\section{Solution of the Cauchy problem for the system $[G]$ in which the coefficients $A^{\lambda \mu}$ do not depend on first partial derivatives of the unknown functions}
Following Bruhat \cite{foures1952theoreme}, we will now proceed by showing the solution of the Cauchy problem for the system $[F]$ when the coefficients $A^{\lambda \mu}$ depend only on the variables $x^\alpha$, on the functions $W_s$ but not on their first derivatives with respect to the $x^\alpha$, i.e. $W_{s \alpha}$.

Let us consider the system $[G]$ of $n$ partial differential equations of second order with $n$ unknown functions and four variables
$$ A^{\lambda \mu} \frac{\partial^2 W_s}{\partial x^\lambda \partial x^\mu} + f_s =0, \hspace{8cm} [G] $$
where the coefficients $A^{\lambda \mu}$ depend only on the variables $x^\alpha$ and on the unknown functions $W_s$, and not on the first partial derivatives $W_{s \alpha}$ of these functions. The coefficients $f_s$ are functions of the variables $x^\alpha$, of the unknown functions $W_s$ and of their first partial derivatives $W_{s \alpha}$. 

We shall obtain a system of integral equations verified by the solutions of the equations $[G]$ by applying the methods used for the equations $[F]$. 

Since the $A^{\lambda \mu}$ do not contain $W_{s \alpha}$, it will be enough to apply the results of Chapter 4 to the equations $[G']$ deduced from the $[G]$ by four differentiation with respect to the variables $x^\alpha$ in order to obtain a system of integral equations whose right-hand sides do not contain other functions than those which occur on the left-hand sides. If we denote by $U_S$ any whatsoever of the fourth derivatives of the unknown functions $W_s$, the equations obtained with the previous calculations read as
$$A^{\lambda \mu} \frac{\partial^2 U_S}{\partial x^\lambda \partial x^\mu} + B^{T \lambda}_S \frac{\partial U_T}{\partial x^\lambda} + F_S =0. \hspace{6cm} [G']$$

$A^{\lambda \mu}$ depend only on the variables $x^\alpha$ and the functions $W_{s}$. 

$B^{T \lambda}_S$ are a sum of first partial derivatives of the functions $A^{\lambda \mu}$, viewed as functions of the variables $x^\alpha$ and of first partial derivatives of a function $f_s$ with respect to the first partial derivatives $W_{s \alpha}$ of the unknown functions, depend on the variables $x^\alpha$, on the unknown functions $W_s$ and on their first partial derivatives $W_{s \alpha}$. 

Eventually, $F_S$ is a polynomial of the coefficients $A^{\lambda \mu}$, of $f_s$ and of their partial derivatives with respect to all their arguments up to the fourth order, as well as of the functions $W_s$ and of their partial derivatives with respect to the variables $x^\alpha$ up to the fourth order.

In order to solve the Cauchy problem we proceed as follows.

$\mathbf{Assumptions}$
\begin{description}
\item[(1)] In the domain $D$, defined by $|x^i - \bar{x}^i | \leq d$, $|x^4| \leq \epsilon$, and for values of the unknown functions satisfying 
\begin{equation} \label{eq:5.6} |W_s - \bar{\varphi}_s| \leq l,  \hspace{0.5cm}\bigg{|} \frac{\partial W_s}{\partial x^i} - \frac{\overline{\partial \varphi_s}}{\partial x^i} \bigg{|} \leq l,  \hspace{0.5cm} \bigg{|} \frac{\partial W_s}{\partial x^4} - \bar{\psi}_s \bigg{|} \leq l, \end{equation}
one has that
\begin{description}
\item[(a)] The coefficients $A^{\lambda \mu}$ and $f_s$ admit partial derivatives with respect to all their arguments up to the fourth order, continuous, bounded and satisfying Lipschitz conditions.
\item[(b)] The quadratic form $A^{\lambda \mu}X_\lambda X_\mu$ is of normal hyperbolic type, i.e. $A^{44} >0$ and $\sum_{i, j=1}^3 A^{ij}X_i X_j$ is negative-definite.
\end{description}
\item[(2)] In the domain $(d)$ of the initial surface, defined by $|x^i - \bar{x}^i | \leq d$, the Cauchy data $\varphi_s$ and $\psi_s$ possess partial derivatives continuous and bounded up to the orders five and four, respectively, satisfying some Lipschitz conditions.
\end{description}

The integral equations $[J]$, verified by the bounded solutions and with bounded first derivatives of equations $[G']$, only involve the coefficients $A^{\lambda \mu}$ and $B^{T \lambda}_S$ and their partial derivatives up to the orders four and two respectively, as well as of the coefficients $F_S$. These equations $[J]$ contain only partial derivatives of the functions $W_s$ of order higher than four.

When we try to solve the system of integral equations $[J]$ directly, we see that the quantity $H$ under the sign of triple integral is not bounded in general if $W_S$, $W_{s \alpha}$, ... $U_S$ are independent functions. We shall be able however, in the case in which the $A^{\lambda \mu}$ depend only on $W_{s \alpha}$, to solve the Cauchy problem by using the results obtained on the system of integral equations verified in a certain domain, from the solutions of the given equations $[G]$. 

We shall consider a system $[G_1]$, which is the approximation of $[G]$, obtained by replacing in $A^{\lambda \mu}$ and not in $f_s$ the unknown $W_s$ with some approximate values $W_s^{(1)}$ which admit of partial derivatives continuous and bounded up to the fourth order, $W_{s \alpha}$, ..., $U_S$, in the domain $D$ and satisfy the inequalities 
\begin{equation*} |W_s^{(1)} - \bar{\varphi}_s| \leq l,  \hspace{0.5cm} \bigg{|} \frac{\partial W_s^{(1)}}{\partial x^i} - \frac{\overline{\partial \varphi_s}}{\partial x^i} \bigg{|} \leq l,  \hspace{0.5cm} \bigg{|} \frac{\partial W_s^{(1)}}{\partial x^4} - \bar{\psi}_s \bigg{|} \leq l. \end{equation*}
This approximated system reads as
$$ {A^{\lambda \mu}}^{(1)} \frac{\partial^2 W_s}{\partial x^\lambda \partial x^\mu} + f_s =0. \hspace{7cm} [G_1] $$
A solution $W_1$, six times differentiable and satisfying the inequalities ($\ref{eq:5.6}$), of the equations $[G_1]$ verifies therefore, in $D$, the following equations:  
$$ {A^{\lambda \mu}}^{(1)} \frac{\partial^2 U_S}{\partial x^\lambda \partial x^\mu} + {B^{T \lambda}_S}^{(1)} \frac{\partial U_T}{\partial x^\lambda} + {F_S}^{(1)} =0. \hspace{4cm} [{G'}_1] $$

${A^{\lambda \mu}}^{(1)}$ is a function of the variables $x^\alpha$ and of the unknown functions ${W_s}^{(1)}$.

${B^{T \lambda}_S}^{(1)}$ is a sum of the first partial derivatives of the ${A^{\lambda \mu}}^{(1)}$, viewed as functions of the variables $x^\alpha$ (hence as functions of the variables $x^\alpha$ and of the functions ${W_s}^{(1)}$ and ${W_{s \alpha}}^{(1)}$) and of the first partial derivatives of a function $f_s$ with respect to the functions $W_{r \nu}$ (therefore of the functions of $x^\alpha$, ${W_s}$ and $W_{s \alpha}$). 

Eventually, $F^{(1)}_S$ is a polynomial of the coefficients ${A^{\lambda \mu}}^{(1)}$, of $f_s$ and of their partial derivatives with respect to all their arguments up to the fourth order, as the functions ${W_s}^{(1)}$ and ${W_{s \alpha}}^{(1)}$ and of their partial derivatives with respect to the $x^\alpha$ up to the fourth order.

All these coefficients of equations $[G'_1]$, viewed as linear equation of type $[E]$ in the unknown functions $U_S$, satisfy in the domain $D$ the assumptions made in Chapter 4. 

Thus, there exists a domain $D_0 \subset D$ in which the fifth derivatives $U_S$ of a solution $W_s$ of the given Cauchy problem, which possess partial derivatives continuous and bounded up to the sixth order and satisfy the inequalities ($\ref{eq:5.6}$), verify some Kirchhoff formulae, whose left-hand sides are the values at the point $M_0 \in D_0$ of those functions $U_S$. 

These equations, together with the integral equations having on the left-hand side some auxiliary functions $X$ and $\Omega$, and with some integral equations analogous to the previous ones, form a system of integral equations that we denote by $[J_1]$. 

\subsection{The integral equations $[J_1]$}
Let us consider the set of integral equations $[J_1]$ as a system of integral equations with four groups of unknown functions $X$, $\Omega$, $W$ and $U$. The system consists of the following four group of equations:
\begin{description}
\item[(1)] Equations having on the left-hand side a function $X$ of the four coordinates $x^\alpha_0$ and of three parameters $x^4$, $\lambda_2$ and $\lambda_3$. These functions $X$ are $x^i$, $p_i$, $y^j_i$, $z^j_i$, ..., $z^j_{ihk}$ which define the characteristic conoids. These equations are of the form
$$ X = X_0 + \int_{x^4_0}^{x^4} E(X) d\omega^4, \hspace{6cm} [1] $$
where $X_0$ is the value of $X$ for $x^4=x^4_0$, whereas $E$ is a rational function, with denominator
\begin{equation*} T^{*4}= {A^{*44}}^{(1)} + {A^{*i4}}^{(1)}p_i, \end{equation*}
of the following quantities:
\begin{description}
\item[(a)] The coefficients ${A^{\lambda \mu}}^{(1)}$ and their partial derivatives with respect to all their arguments up to the fourth order (which are functions of ${W_s}^{(1)}(x^\alpha)$ and $x^\alpha$ where $x^i$ is replaced by the corresponding $X$ function), functions ${W_s}^{(1)}$ and partial derivatives up to the fourth order;
\item[(b)] The functions $X$;
\item[(c)] The quantities ${a^{\alpha \beta}_0}^{(1)}$ and  ${a_{\alpha \beta}^0}^{(1)}$, which are algebraic functions of the values of the coefficients ${A^{\lambda \mu}}^{(1)}$ for the values $x^\alpha_0$ and ${W_s}^{(1)}(x^\alpha_0)$ of their arguments.
\end{description}
\item[(2)] Equations having on the left-hand side a function $\Omega$ of the $x^\alpha_0$ and of the parameters $x^4$, $\lambda_2$, $\lambda_3$. These functions $\Omega$ correspond to $\omega^r_s$, $\omega^r_{si}$ and $\omega^r_{sij}$. These equations are of the form
$$ \Omega = \Omega_0 + \int_{x^4_0}^{x^4} F(X, \Omega) d\omega^4, \hspace{6cm} [2] $$
where $\Omega_0$ is the value of $\Omega$ for $x^4=x^4_0$, whereas $F$ is a rational fraction, with denominator
\begin{equation*} T^{*4} = {A^{*44}}^{(1)} + {A^{*i4}}^{(1)}p_i, \end{equation*}
of the following quantities:
\begin{description}
\item[(a)] The coefficients ${A^{\lambda \mu}}^{(1)}$ and ${B^{T \lambda}_S}^{(1)}$ and their partial derivatives with respect to all their arguments up to the orders three and two, respectively, i.e. coefficients ${A^{\lambda \mu}}^{(1)}$, $f_s$ and their partial derivatives up to the third order;
\item[(b)] The functions ${W_s}^{(1)}(x^\alpha)$ and their partial derivatives up to the third order and functions $W_{s \alpha}(x^\alpha)$, $W_{s \alpha \beta}(x^\alpha)$ and $W_{s \alpha \beta \gamma}(x^\alpha)$. The $x^i$ are always replaced by the corresponding functions $X$;
\item[(c)] The functions $X$ and $\Omega$;
\item[(d)] The quantities ${a^{\alpha \beta}_0}^{(1)}$ and  ${a_{\alpha \beta}^0}^{(1)}$.
\end{description}
\item[(3)] Equations having on the left-hand side a function $W$ of the four coordinates $x^\alpha$. These equations are of the form
$$ W(x^\alpha) = W_0 (x^i) + \int_{0}^{x^4} G  d\omega^4, \hspace{6cm} [3] $$
where $W_0$ is the value of $W$ for $x^4=0$, whereas $G$ is a function $W$ or a function $U$.
\item[(4)] Equations having on the left-hand side a function $U$ of the four coordinates $x^\alpha_0$, known as Kirchhoff formulae, of the form
$$ 4 \pi U (x^\alpha_0) = \int_{x^4_0}^{0} \int_0^{2 \pi} \int_0^\pi H d\omega^4 d \lambda_2 d \lambda_3 + \int_0^{2 \pi} \int_0^\pi I d \lambda_2 d\lambda_3, \hspace{1cm} [4] $$
where $H$ is the product of the square root of a rational fraction with denominator $D^*$, which is a polynomial of ${A^{\lambda \mu}}^{(1)}$, $X$, $\tilde{X}$ and $p_i^0$, and numerator 1, with the sum of the two following rational fractions:
\begin{description}
\item[(A)] A rational fraction $H_a$ with denominator $(D^*)^3(x^4_0 - x^4)T^{*4}$, which results only from those terms of the operator $L^r_s$ which contain the second partial derivatives of the function $\sigma$, whereas its numerator is a polynomial of the functions:

${A^{\lambda \mu}}^{(1)}$ and their first and second partial derivatives with respect to all their arguments, i.e. functions of ${W_s}^{(1)}(x^\alpha)$ and $x^\alpha$, where $x^i$ are replaced by the corresponding $X$ functions; 

${W_{s}}^{(1)}(x^\alpha)$, ${W_{s \alpha}}^{(1)}(x^\alpha)$ and ${W_{s \alpha \beta}}^{(1)}(x^\alpha)$;
 
$X$ and $\tilde{X}$, where $\tilde{X}$ is the quotient by $(x^4_0 - x^4)$ of the functions $X$ for which $X_0=0$; 

$U(x^\alpha)$ and $\Omega$, which only occur in the product $[U_r]\omega^r_s$ in the polynomial considered. 

This polynomial, which is a function of $x^\alpha_0$, $x^4$, $\lambda_2$ and $\lambda_3$, vanishes for $x^4=x^4_0$.
\item[(B)] A rational fraction $H_{1b}$ with denominator $(D^*)^2T^{*4}$ of the following quantities:

The coefficients ${A^{\lambda \mu}}^{(1)}$, ${B^{T \lambda}_s}^{(1)}$ and ${F_S}^{(1)}$, and their partial derivatives of up to the orders two and one, respectively, with respect to the $x^\alpha$. More precisely, the quantities involved are:
 
the coefficients ${A^{\lambda \mu}}^{(1)}$ and $f_s$ and their partial derivatives with respect to all their arguments up to the fourth order;

the functions ${W_s}^{(1)}(x^\alpha)$, $\dots$, ${U_S}^{(1)}(x^\alpha)$, ${W_s}(x^\alpha)$, $\dots$, ${U_S}(x^\alpha)$; 

the functions $X$ and $\tilde{X}$;

the functions $\Omega$, $\tilde{\Omega}$, where $\tilde{\Omega}$ is the quotient by $(x^4_0 - x^4)$ of the functions $\Omega$ for which $\Omega_0=0$.
\end{description}

Eventually, $I$ is the value for $x^4=0$ of the product of the square root of a rational fraction with denominator $D^*$ and numerator 1, with a rational fraction having denominator $(D^*)^2{A^{*44}}^{(1)}T^{*4}$ of the following functions:

${A^{\lambda \mu}}^{(1)}$ and their first partial derivatives with respect to all their arguments; 

the first partial derivatives of $f_s$ with respect to $W_{r \nu}$, which contribute through ${B^{T \lambda}_S}^{(1)}$, functions of $W_s(x^\alpha)$, $W_{s \alpha}(x^\alpha)$ and $X^{\alpha}$; 

${W_s}^{(1)}(x^\alpha)$ and ${W_{s \alpha}}^{(1)}(x^\alpha)$, $X$ and $\tilde{X}$, $\Omega$ and $\tilde{\Omega}$; 

the Cauchy data $\varphi_s (x^i)$ and $\psi_s (x^i)$ and their partial derivatives with respect to the $x^i$ up to the orders five and four, respectively.
\end{description}
Since the equations $[1]$ do not contain other unknown functions besides the functions $X$, we shall solve them first. 

Furthermore, the $H_{a}$ is a known function when the $X$ are known. We shall then be in a position to restrict the quantity $H$ without making assumptions on the derivatives of the functions $U$ and $W$, viewed as independent, and to solve the remaining equations $[2]$, $[3]$ and $[4]$.

Hence, we are going to prove that the system of integral equations $J_1$ admits a unique solution, by making use of the assumptions made on the coefficients $A^{\lambda \mu}$ and $f_s$ and of the assumptions on the functions ${W_s}^{(1)}$.

\subsection{Assumptions on the coefficients $A^{\lambda \mu}$, $f_s$ and on the functions ${W_s}^{(1)}$}
$\textbf{Assumptions B}$
\begin{description}
\item[($B_1$)] In the domain $D$ defined by $ |x^i - \bar{x}^i| \leq d$, $|x^4| \leq \epsilon$ and for the values of the functions $W_s$ and $W_{s \alpha}$ satisfying:
\begin{equation} \label{eq:5.7} | W_s - \varphi_s | \leq l, \hspace{0.5cm} |W_{si} - \varphi_{si}| \leq l, \hspace{0.5cm} |W_{s 4} - \psi_s | \leq l: \end{equation}
\begin{description}
\item[(a)] The coefficients $A^{\lambda \mu}$ and $f_s$ admit partial derivatives with respect to all their arguments up to the fourth order, continuous and bounded by a given number.
\item[(b)] The quadratic form $ \sum_{\lambda, \mu=1}^4 A^{\lambda \mu}X_\lambda X_\mu$ is of normal hyperbolic type. The coefficient $A^{44}$ is bigger than a given positive number.
\end{description}
The coefficients ${a^{\alpha \beta}_0}$ and  ${a_{\alpha \beta}^0}$ relative to the values of the coefficients $A^{\lambda \mu}$ at a point of the previous domain are bounded by a given number.
\item[($B_2$)] The approximating functions ${W_s}^{(1)}$ admit in the domain $D$ of partial derivatives up to the fourth order continuous, bounded and satisfying the inequalities
\begin{equation*} | {W_s}^{(1)} - \varphi_s | \leq l, \hspace{0.5cm}|{W_{si}}^{(1)} - \varphi_{si}| \leq l, \hspace{0.5cm} |{W_{s 4}}^{(1)} - \psi_s | \leq l \end{equation*}
and analogous identities 
$$| {W}^{(1)} - W_0| \leq l \hspace{0.5cm} {\rm up} \; {\rm to} \hspace{0.5cm} | {U_S}^{(1)} - \Phi_s | \leq l.$$
\item[($B_3$)] In the domain $(d)$ defined by $ |x^i - \bar{x}^i| \leq d$, the Cauchy data $\varphi_s(x^i)$ and $\psi_s(x^i)$ possess partial derivatives continuous and bounded with respect to the variables $x^i$ up to the orders five and four, respectively.
\end{description}
$\textbf{Assumptions B'}$
\begin{description}
\item[($B'_1$)] In the domain $D$ and for the values of the functions $W_s$ and $W_{s \alpha}$ satisfying the inequalities $(\ref{eq:5.7}$), the partial derivatives of order four of the coefficients $A^{\lambda \mu}$ and $f_s$ satisfy a Lipschitz condition assigned with respect to all their arguments.
\item[($B'_2$)] The assumptions B imply that, in the domain $D$ and for the values of the functions $W_s$ satisfying $(\ref{eq:5.7}$), the coefficients ${a^{\alpha \beta}_0}$ and ${a_{\alpha \beta}^0}$, as long as their partial derivatives up to the fourth order, verify a Lipschitz condition given with respect to their arguments $x^\alpha_0$, $W_s(x^\alpha_0)$. 
\item[($B'_3$)] The partial derivatives of order four of the functions $W_s$ satisfy a Lipschitz condition with respect to the three arguments $x^i$.
From the assumption $(B_3)$ one obtains the inequality
\begin{equation*} \big{|} {W_s}^{(1)}(x'^{\alpha}) - {W_s}^{(1)}(x^\alpha) \big{|} \leq l' \sum | x'^{\alpha} - x^{\alpha} | \end{equation*}
and the analogous inequalities for the partial derivatives of the ${W_s}^{(1)}$ up to the third order.
We shall have in addition:
\begin{equation*} \big{|} {U_S}^{(1)}(x'^{i}, x^4) - {U_S}^{(1)}(x^i, x^4) \big{|} \leq l \sum |x'^{i} - x^i |. \end{equation*}
\item[($B'_4$)] In the domain $(d)$ the partial derivatives of Cauchy data $\varphi_s$ and $\psi_s$ of orders five and four, respectively, satisfy a Lipschitz condition with respect to the variables $x^i$.

From the assumptions B, one finds the inequality
\begin{equation*} \big{|} {\varphi_s}(x'^{i}) - {\varphi_s}(x^i) \big{|} \leq l'_0 \sum | x'^{ i} - x^{i} | \end{equation*}
and the analogous inequalities for the functions $\psi_s$ and the partial derivatives of $\psi_s$ and $\varphi_s$ up to the orders three and four.

We have in addition:
\begin{equation*} \big{|} {\phi_{sj}}(x'^{i}) - {\phi_{sj}}(x^i) \big{|} \leq l' \sum | x'^{i} - x^{i} |, \end{equation*}
\begin{equation*} \big{|} {\psi_{s}}(x'^{i}) - {\psi_{s}}(x^i) \big{|} \leq l'_0 \sum | x'^{ i} - x^{i} |, \end{equation*}
where $l'$ and $l'_0$ are given numbers which satisfy $l' >l'_0$.
\end{description}
We are now able to proceed with the calculation of the solution of equations $[1]$.

\subsection{Solution of equations $[1]$}
We shall solve first the equations $[1]$ defining the characteristic conoid
$$ X= X_0 + \int_{x^4_0}^{x^4} E(X) d\omega^4. \hspace{7cm} [1] $$
These non-linear integral equations, having on the left-hand side a function $X$, do not contain other unknown functions besides the functions $X$. 

We shall solve this equations by considering a functional space $\Upsilon$, the $m$ coordinates of a point of $\Upsilon$ (where $m$ is the number of functions $X$) being some functions $X_1$ continuous and bounded of $x^\alpha_0$, $x^4$, $\lambda_2$ and $\lambda_3$ in the domain $\Lambda$ defined by
\begin{equation*}\begin{split}& |x^i_0 - \tilde{x}^i| \leq d, \hspace{0.5cm} |x^4_0| \leq \Upsilon(x^i_0), \\
& 0 \leq x^4 \leq x^4_0, \hspace{0.5cm} 0 \leq \lambda_2 \leq \pi, \hspace{0.5cm} 0 \leq \lambda_3 \leq 2 \pi, \end{split}\end{equation*}
with $\Upsilon(x^i_0) \leq \epsilon$.

The functions $X_1$ take for $x^4=x^4_0$ the assigned values $X_0$. We denote by $\bar{M}_0$ the point of $\Upsilon$ having coordinates $\tilde{X}_0$, which are the values of the functions $X_0$ for $x^i_0=\bar{x}^i$ and $x^4_0=0$, and we assume that the functions $X_1$ satisfy the inequalities
\begin{equation} \label{eq:5.8} |X_1 - \tilde{X}_0| \leq d \; {\rm and} \; |X_1 - X_0| \leq M |x^4_0 - x^4 |, \end{equation}
where $M$ is a given number. 

We shall define in the space $\Upsilon$ the distance of two points $M_1$ and $M_1'$ by the maximum in the domain $\Lambda$ of the sum of absolute values of the differences of their coordinates:
\begin{equation*} d(M_1, M_1')= Max_{\Lambda} \sum |X_1' - X_1 |. \end{equation*}
The norm introduced in such a way endows the space $\Upsilon$ of the topology of uniform convergence, and then $\Upsilon$ is a normed, complete and compact space.

To the point $M_1$ of $\Upsilon$ having coordinates $X_1$ we associate a point $M_2$ whose coordinates $X_2$ are defined by
\begin{equation} \label{eq:5.9} X_2 = X_0 + \int_{x^4_0}^{x^4} E_1 d\omega^4, \end{equation}
where $E_1$ denotes the quantity $E$ occurring in the equations $[1]$ and the functions $X$ are replaced by the corresponding coordinates $X_1$ of $M_1$.

Since this representation is a representation of $\Upsilon$ in itself, the $X_2$ are continuous and bounded functions of their arguments, they take for $x^4=x^4_0$ the values $X_0$ and satisfy the same inequalities ($\ref{eq:5.8}$) fulfilled by $X_1$, if $\epsilon(x^i_0)$, which defines the domain of variation of the argument $x^4_0$ of $X_1$ is suitably chosen.

The $E_1$ are indeed expressed rationally by means of the ${W_{s1}}^{(1)}$, ${A^{\lambda \mu}_1}^{(1)}$, of their partial derivatives up to the fourth order and $x^i$ are replaced in all its functions by the corresponding $X_1$ function: $X_1$, ${a^{\alpha \beta}_0}^{(1)}$, ${a_{\alpha \beta}^0}^{(1)}$.

All these functions are, by virtue of the assumptions B and of the assumptions made upon the $X_1$, functions continuous and bounded of $x^\alpha_0$, $x^4$, $\lambda_2$ and $\lambda_3$. On the other hand the denominator of the functions $E_1$ is
\begin{equation*} {T^{*4}_1}^{(1)}= \bigg{(} {A^{*44}}^{(1)} + {A^{*i4}}^{(1)}p_i \bigg{)}_1 \end{equation*}
and takes the value 1 for $x^4=x^4_0$, $X_1=X_0$. If follows from the assumptions B and B' and from the inequalities verified by the $X_1$ that ${T^{*4}}^{(1)}$ satisfies some Lipschitz conditions
\begin{equation*} \big{|} {T^{*4}_1}^{(1)} - 1 \big{|} \leq T' \Biggl\{ \sum \big{|} X_1 - X_0 \big{|} + |x^4 - x^4_0 | \Biggr\} \leq T' (m M + 1) |x^4_0 - x^4|, \end{equation*}
where $T'$ depends only on the bounds B and B'.

Therefore, we shall be in a position to choose $\epsilon (x^i_0)$ sufficiently small so that the denominator considered differs from zero in $\Lambda$.
The quantities $E_1$ are then continuous functions of their seven arguments in the domain $\Lambda$, and are bounded by a number $M$ which depends only on the bounds B, $E_1 \leq M$. This implies that the functions $X_2$ are continuous and bounded in their seven arguments. They fulfill the inequalities
\begin{equation} \label{eq:5.10} |X_2 - X_0| \leq M |x^4_0 - x^4 |, \end{equation}
where $M$ has been chosen in such a way that the functions $X_2$ verify the same inequality as the functions $X_1$.
It will be therefore enough to take $\epsilon(x^i_0)$ in such a way that
\begin{equation} \label{eq:5.11} \epsilon (x^i_0) \leq \frac{ d - |x^i_0 - \bar{x}^i_0|}{M}, \end{equation}
in order to obtain $|X_2 - \bar{X}_0| \leq d$.

The point $M_2$ will be therefore a point of $\Upsilon$ if $\epsilon(x^i_0)$ verifies the inequality $(\ref{eq:5.11})$.

Let us now show that the distance of two points $M_2$, $M_2'$ is less than the distance of the initial points $M_1$, $M_1'$ if $\epsilon(x^i_0)$ is suitably chosen. From the equations $(\ref{eq:5.9})$ there follows the inequality
\begin{equation} \label{eq:5.12} |X_2' - X_2 | \leq |x^4_0 - x^4 | \cdot Max|E_1' - E_1|, \end{equation}
where $E_1$ are rational fractions with non-vanishing denominators of boundend functions verifying Lipschitz conditions with respect to the $X_1$. We have on the other hand 
\begin{equation*} |E_1' - E_1| \leq M' \cdot \sum |X_1' - X_1|, \end{equation*}
where $M'$ is a number which depends only on the bounds B and B'. From which
\begin{equation*} d(M_2, M_2') \leq mM' \cdot Max_{\Lambda} \epsilon (x^i_0) \cdot d(M_1, M_1'). \end{equation*}
In order for the representation ($\ref{eq:5.9}$) of the space $\Upsilon$ into itself to reduce the distance it will be then enough that 
\begin{equation} \label{eq:5.13} \epsilon(x^i_0) < \frac{1}{m M'}. \end{equation}
We shall therefore choose $\epsilon(x^i_0)$ as satisfying the inequalities ($\ref{eq:5.11})$ and ($\ref{eq:5.13}$). The representation ($\ref{eq:5.9}$) of the space $\Upsilon$ normed, complete and compact into itself, reducing the distances, will then admit a unique fixed point belonging to this space.

$\textbf{Conclusion.}$ In the domain
\begin{equation} \begin{split} \label{eq:5.14} &|x^i_0 - \bar{x}^i | \leq d, \hspace{0.5cm}|x^4_0| < \epsilon(x^i_0), \\
& 0 \leq x^4 \leq x^4_0, \hspace{0.5cm} 0 \leq \lambda_2 \leq \pi, \hspace{0.5cm} 0 \leq \lambda_3 \leq 2\pi \end{split}\end{equation}
the system of integral equations $[1]$ admits a solution that is unique, continuous and bounded and verifying the inequalities 
\begin{equation} \label{eq:5.15} |X - \bar{X}_0| \leq d, \end{equation}
where the three functions $X$ corresponding to the $x^i$ define, with the variable $x^4$, a point belonging to the domain $D$.

Hence, having shown that there exists a unique solution of the equations $[1]$, and recalling that the quantities $E$ which are involved on the right-hand side of $[1]$ are only ${A^{\lambda \mu}}^{(1)}$ and their partial derivatives, possessing the same properties of Chapter 4, it is possible to apply the same method and to see that:

\begin{description}
\item[(1)] The functions $\frac{X- X_0}{x^4_0 - x^4}$ are continuous and bounded in $\Lambda$. The functions $\tilde{X}$, quotients by $x^4_0 - x^4$ of the $X$ which vanish for $x^4_0=x^4$, are continuous and bounded in $\Lambda$:
\begin{equation*} |X - X_0| < M|x^4_0 - x^4|, \hspace{0.5cm} |\tilde{X}| \leq M. \end{equation*}
\item[(2)] The functions 
\begin{equation*} \frac{ \tilde{X} - \tilde{X}_0}{x^4_0 - x^4} = \frac{\int_{x^4_0}^{x^4} (E - E_0)d\omega^4}{x^4_0 - x^4}, \end{equation*}
where $\tilde{X}_0$ and $E_0$ denote the values for $x^4_0=x^4$ of $\tilde{X}$ and $E$, are continuous and bounded in $\Lambda$. The bound on these functions is deduced from the Lipschitz conditions, verified by $E$ with respect to the $X$ and $x^4$:
\begin{equation*} |E - E_0| \leq M'' \Biggl\{ \sum |X- X_0| + |x^4 - x^4_0| \Biggr\}, \end{equation*}
where $M''$ depends only on the bounds B and B'. Thus, we have
\begin{equation} \label{eq:5.16} | \tilde{X} - \tilde{X}_0| \leq \frac{M}{2}(Mm+1) |x^4 - x^4_0|. \end{equation}
\item[(3)] The functions $X$ verify Lipschitz conditions with respect to the $x^i_0$.

It is sufficient, in order to prove it, to impose on the space $\Upsilon$ the following supplementary assumptions:

The functions $X_1$ verify a Lipschitz condition with respect to the $x^i_0$
\begin{equation} \label{eq:5.17} \big{|} X_1(x'^{i}_0, x^4_0, ...) - X_1(x^i_0, x^4_0, ...)\big{|} \leq d' \sum |x'^{i}_0 - x^i_0|, \end{equation}
where $d'$ is a given number. We have
\begin{equation*} X_2(x'^{i}_0, ...) - X_2(x^i_0, ...) = \int_{x^4_0}^{x^4} \bigg{(} E_1(x'^{i}_0, ...) - E_1(x^i_0, ...) \bigg{)} d\omega^4, \end{equation*}
where $ E_1(x'^{i}_0, ...)$ and $ E_1(x^i_0, ...)$ are evaluated with the help of the functions $X_1(x'^{i}_0, ...)$, in particular $x^i_1(x'^{i}_0, ...)$ and  $X_1(x^{i}_0, ...)$, respectively. Since the quantities $E_1$ verify a Lipschitz condition with respect to the $X_1$, one deduces from $(\ref{eq:5.17}$):
\begin{equation*} |X_2(x'^{i}_0, ...) - X_2(x^i_0, ...) | \leq  |x^4_0 - x^4| M'd' |x'^{i}_0 - x^i_0|, \end{equation*}
from which, for $\epsilon(x^i_0) \leq \frac{1}{M'}$, one has
\begin{equation*} |X_2(x'^{i}_0, ...) - X_2(x^i_0, ...) | \leq  d' \sum |x'^{i}_0 - x^i_0|. \end{equation*}
The point $M_2$, representative of $M_1$ by virtue of $(\ref{eq:5.9})$, with the supplementary assumption made, is still a point of $\Upsilon$, and the fixed point has coordinates verifying
\begin{equation*} |X(x'^{i}_0, ...) - X(x^i_0, ...) | \leq  d' \sum |x'^{i}_0 - x^i_0|, \end{equation*}
and
\begin{equation*} |X(x'^{i}_0, ...) - X(x^i_0, ...) | \leq  |x^4_0 - x^4| M'd' |x'^{i}_0 - x^i_0|, \end{equation*}
from which we have
\begin{equation*} |\tilde{X}(x^{i}_0, ...) - \tilde{X}(x^i_0, ...) | \leq M' d' \sum |x'^{i}_0 - x^i_0|. \end{equation*}
\end{description}

\subsection{Solution of equations $[2]$, $[3]$ and $[4]$}
Let us now consider the system of integral equations with the unknown functions $\Omega$, $W$ and $U$, obtained by replacing in the equations $[2]$, $[3]$ and $[4]$ the functions $X$ with the solutions found of equations $[1]$:
$$ \Omega = \Omega_0 + \int_{x^4_0}^{x^4} F(X, \Omega) d\omega^4, \hspace{7cm} [2] $$
$$ W = W_0 + \int_{0}^{x^4} G d\omega^4, \hspace{8cm} [3] $$
$$ 4 \pi U (x^\alpha_0) = \int_{x^4_0}^{0} \int_0^{2 \pi} \int_0^\pi H d\omega^4 d \lambda_2 d \lambda_3 + \int_0^{2 \pi} \int_0^\pi I d \lambda_2 d\lambda_3. \hspace{2cm} [4] $$
We shall solve these equations by considering a functional space $\mathcal{F}$, the coordinates of a point of $\mathcal{F}$ being defined by:
\begin{description}
\item[(1)] $m_1$ of these coordinates, that is the number of functions $\Omega$, are functions $\Omega_1$ continuous and bounded of $x^\alpha_0$, $x^4$, $\lambda_2$ and $\lambda_3$ in the domain $\Lambda$:
\begin{equation*} \begin{split} & |x^i_0 - \bar{x}^i | \leq d, \hspace{0.5cm} |x^4_0| \leq \epsilon(x^i_0), \\
& 0 \leq x^4 \leq x^4_0, \hspace{0.5cm} 0 \leq \lambda_2 \leq \pi, \hspace{0.5cm} 0 \leq \lambda_3 \leq 2 \pi. \end{split} \end{equation*}
These functions take for $x^4_0=x^4$ the given values $\Omega_0$ and satisfy the inequalities 
\begin{equation} \label{eq:5.18} |\Omega_1 - \Omega_0| \leq h, \end{equation}
where $h$ is a given number. We shall suppose in addiction
\begin{equation*} |\Omega_1 - \Omega_0| \leq N |x^4 - x^4_0|, \end{equation*}
where $N$ is a given number. The functions $\tilde{\Omega}_1$, quotients by $x^4 - x^4_0$ of the functions $\Omega_1$ that vanish identically for $x^4=x^4_0$, are then bounded in the domain $\Lambda$:
\begin{equation} \label{eq:5.19} |\Omega_1| \leq N. \end{equation} 
The functions $\Omega_1$ will be assumed continuous in $\Lambda$.
\item[(2)] $m_2$ of these coordinates, that is the number of functions $W$ and $U$, are functions $W_1$ and $U_1$ continuous and bounded of the four variables $x^\alpha$ in the domain $D$: $|x^i - \bar{x}^i| \leq d$, $|x^4| \leq \epsilon(x^i_0)$. 

These functions take for $x^4=0$ the values $W_0$ and $U_0$ defined by the Cauchy data and satisfy the inequalities
\begin{equation} \label{eq:5.20} |W_1 - W_0| \leq l, \hspace{0.5cm} |U_1 - U_0| \leq l, \end{equation}
where $l$ is the same number occurring in the assumptions B. The functions $\Omega_0$, $W_0$ and $U_0$ define a point $M_0 \in \mathcal{F}$.
\end{description}
Let us now define in the space $\mathcal{F}$ the distance of two points $M_1$ and $M_1'$ by the sum of the upper bounds of the absolute values of differences of their coordinates:
\begin{equation*} d(M_1, M_1')=Max \Biggl\{ \sum |\Omega_1' - \Omega_1| + \sum |W_1' - W_1 | + \sum |U_1' - U_1| \Biggr\}. \end{equation*}
The space $\mathcal{F}$ is then a normed, complete and compact space.

To the point $M_1$ of the space $\mathcal{F}$ we associate a point $M_2$ whose coordinates $\Omega_2$, $W_2$, $U_2$ are defined by
\begin{equation} \begin{split} \label{eq:5.21} &\Omega_2 = \Omega_0 + \int_{x^4_0}^{x^4} F_1 d\omega^4, \\
& W_2 = W_0 + \int_{0}^{x^4} G_1 d\omega^4, \\
& 4 \pi U_2 (x^\alpha_0) = \int_{x^4_0}^{0} \int_0^{2 \pi} \int_0^\pi H_1 d\omega^4 d \lambda_2 d \lambda_3 + \int_0^{2 \pi} \int_0^\pi I_1 d \lambda_2 d\lambda_3. \end{split}\end{equation}
where $F_1$, $G_1$, $H_1$ and $I_1$ denote the quantities $F$, $G$, $H$ and $I$ occurring in the equations $[2]$, $[3]$ and $[4]$, evaluated with the help of the functions $X$, solutions of the equations $[1]$, and by replacing the unknown functions $\Omega$, $W$ and $U$ with the coordinates $\Omega_1$, $W_1$ and $U_1$ of the point $M_1$.

Let us now prove that the representation $(\ref{eq:5.21}$) is a representation of the space $\mathcal{F}$ into itself if $\epsilon(x^i_0)$ is suitably chosen.

\begin{description}
\item[(1)] $F_1$ is expressed rationally by means of ${A^{\lambda \mu}}^{(1)}$, $f_s$, ${W_s}^{(1)}$ and of their partial derivatives up to the third order as long as of the ${a^{\alpha \beta}_0}$, ${a_{\alpha \beta}^0}$, and of $\Omega_1$. All these functions are continuous and bounded functions of $x^\alpha_0$, $x^4$, $\lambda_2$ and $\lambda_3$. The denominator ${T^{*4}}^{(1)}$ of these fractions $F_1$ being nonvanishing, the $F_1$ are continuous and bounded functions of $x^\alpha_0$, $x^4$, $\lambda_2$ and $\lambda_3$: $|F_1| \leq N$, where $N$ depends only on the bounds B and on $h$.

Hence, $\Omega_2$ and $\tilde{\Omega}_2$ are continuous and bounded functions of their arguments and verify
\begin{equation} \label{eq:5.22} |\Omega_2 - \Omega_0| \leq N |x^4_0 - x^4|, \hspace{0.5cm} \tilde{\Omega}_2 \leq N. \end{equation}
If $\epsilon(x^i_0)\leq \frac{h}{N}$, we shall have $|\Omega_2 - \Omega_0| \leq h$.

Then $\Omega_2$ satisfies the same conditions as $\Omega_1$ and the number $N$, which is the upper bound of the $F_1$ in $\Lambda$, occurring in the inequality $(\ref{eq:5.19}$), have been chosen for this purpose.
\item[(2)] $G_1$ being an $U_1$ or a $W_1$, the $W_2$ are continuous and bounded in $D$ by a number $P$ which depends only on the bounds B: 
$$|W_2 - W_0| \leq |x^4P|,$$
from which, for $\epsilon(x^i_0) \leq \frac{l}{p}$, we have
$$|W_2 - W_0| \leq l.$$
\item[(3)] Let us show that the functions $H_1$ are bounded by a number which only depends on the bounds B, B' and on $h$.

\begin{description}
\item[(a)] Let us consider the quantity ${D^*}^{(1)}$ occurring in the denominator: It is a polynomial of the functions ${A^{*\lambda \mu}}^{(1)}$, $X$, $\tilde{X}$ and $p_i^0$ which takes the value -1 for $x^4=x^4_0$ and $X=X_0$. By virtue of the inequalities $(\ref{eq:5.17})$ and ($\ref{eq:5.14}$), verified by the functions $x^i$ and the variable $x^4$ in the domain $\Lambda$, ${A^{*\lambda \mu}}^{(1)}$ verifies Lipschitz conditions with respect to the $x^\alpha$ in $\Lambda$. 

Hence, we obtain some inequalities verified by the functions $X$ and $\tilde{X}$ and some assumptions B stating that
\begin{equation*} \bigg{|} {D^*}^{(1)} + 1 \bigg{|} \leq D' \Biggl\{ \sum |X - X_0| + |x^4 - x^4_0| \Biggr\} \leq D'(mM+1)\epsilon(x^i_0), \end{equation*}
where $D'$ is a number which depends only on the bounds B and B'. Thus, we are able to choose $\epsilon(x^i_0)$ sufficiently small so that ${D^{*}}^{(1)}$ does not vanish. 
\item[(b)] Let us consider the rational fraction $H_{1a}$ with denominator $\big{(}{D^{*}}^{(1)}\big{)}^3 \\ \times (x^4_0 - x^4){T^{*4}}^{(1)}$. Its numerator is the product by $\big{(}[U_R]_1 \omega^{R}_{s_1} \big{)}$ of a polynomial of the functions $X$, $\tilde{X}$ and $p_i^0$: quantities that are all known, possessing the same properties as Chapter 4. Thus, the quotient $x^4_0 - x^4$ of the polynomial $p$ is a function continuous and bounded in $\Lambda$. The bound on this function is deduced from the Lipschitz conditions verified by $p$: 
$$p\leq P' \Bigl\{ \sum |X - X_0| + |x^4 - x^4_0| \Bigr\},$$
where $P'$ is a number which depends only on the bounds B and B'.

Thus, we have
\begin{equation*} \frac{p}{x^4_0 - x^4} \leq P' (mM + 1). \end{equation*}
The $H_{1 a}$ can be therefore put in the form of fractions with numerator 
\begin{equation*} [U_R]_1 \omega^R_{s_1} \frac{p}{x^4_0 - x^4} \end{equation*}
continuous and bounded in $\Lambda$, with denominator ${D^{*}}^{(1)}{T^{*4}}^{(1)}$ continuous and bounded in $\Lambda$. Hence, the $H_{1a}$ are continuous and bounded in $\Lambda$ and their bound depends only on the bounds B, B' and $h$.

\item[(c)] The $H_{1b}$, which are rational fractions with nonvanishing denominator of the functions continuous and bounded in $\Lambda$, are continuous and bounded in $\Lambda$. Eventually, we see that $H_1$ are continuous and bounded in $\Lambda$: 
$$ |H_1| \leq Q,$$
where $Q$ depends only on B, B' and on $h$.
\end{description}
\item[(4)] Let us consider $I_1$. Since
\begin{equation} \label{eq:5.23} I = \Biggl\{ E^{*i}_S \frac{D^*p_i}{T^{*4}}(x^4_0 - x^4)^2 sin (\lambda_2) \Biggr\}_{x^4=0}, \end{equation}
where the $E^{*i}_S$ involve the partial derivatives of the $\sigma^R_S$ with respect to the $x^i$ of first order only and linearly. Moreover, if we apply the results of Chapter 4, we see that $E^{i*}_{S_1}(x^4_0 - x^4)^2$ are continuous and bounded in $\Lambda$ because $X$, $\tilde{X}$, ${D}^{(1)}$ and ${D^*}^{(1)}$ and their partial derivatives possess the same properties as in Chapter 4, and that the $\Omega_1$ and $\tilde{\Omega}_1$ are continuous and bounded. 

Furthermore, the products of all terms of $(E^{*i}_S)_1$ by $x^4_0 - x^4$ are bounded by a number $R_1$ depending only on the bounds B, B' and on $h$, with the exception of the term
\begin{equation} \label{eq:5.24} - [U_R]_1 \omega^R_{S_1} \big{[}{A^{ij}}^{(1)} \big{]} \frac{\partial {\sigma}^{(1)}}{\partial x^j}. \end{equation}
Thus, we have
\begin{equation} \label{eq:5.25} I_1 \leq R_1 |x^4_0| + \Phi_R \big{(}\omega^R_{S_1} \big{)}_{x^4=0} \Biggl\{ {A^{ij}}^{(1)}\frac{\partial {\sigma}^{(1)}}{\partial x^j}p_i \frac{{D^*}^{(1)}}{{T^{*4}}^{(1)}}(x^4_0 - x^4)^2 \Biggr\}_{x^4=0} sin(\lambda_2), \end{equation}
where $J= \Bigl\{ {A^{ij}}^{(1)}\frac{\partial {\sigma}^{(1)}}{\partial x^j}p_i \frac{{D^*}^{(1)}}{{T^{*4}}^{(1)}}(x^4_0 - x^4)^2 \Bigr\}$ is a known quantity, which verifies a Lipschitz condition with respect to the functions $X$, $\tilde{X}$ and the variable $x^4$ and which takes the value 1 for $x^4=x^4_0$.

Therefore we have in $\Lambda$:
\begin{equation} \label{eq:5.26} |J - 1|\leq R_2 |x^4 - x^4_0|\hspace{0.5cm} {\rm and} \hspace{0.5cm} |(J)_{x^4=0} - 1 | \leq R_2 |x^4_0|, \end{equation}
where $R_2$ is a number that depends only on the bounds B, B' and on $h$. Furthermore, from the inequality ($\ref{eq:5.18}$), it follows that
\begin{equation} \label{eq:5.27} \bigg{|} \big{(} \omega^R_{S_1} \big{)}_{x^4=0} - \delta^R_S \bigg{|} \leq N |x^4_0|, \end{equation}
and the inequalities ($\ref{eq:5.25}$),  ($\ref{eq:5.26}$) and ($\ref{eq:5.27}$) imply that
\begin{equation*} |I_1 - \Phi_s sin(\lambda_2) | \leq R_3 |x^4_0|, \end{equation*}
where $R_3$ is a number which depends only on the bounds B, B' and on $h$.

The previous inequality is verified at every point $x^i(x^i_0, 0, \lambda_2, \lambda_3)$ of the domain $(d)$. We have assumed in B' that the $\Phi_S$ were verifying some Lipschitz conditions with respect to the $x^i$:
\begin{equation*} |\Phi_S(x^i) - \Phi_S(x^i_0)| \leq l'_0 |x^i - x^i_0|. \end{equation*}
The $x^i$ verify 
$$|x^i - x^i_0| \leq M_1 |x^4_0 - x^4|$$
and, having taken here for value $x^4=0$, we have
\begin{equation} \label{eq:5.28} |\Phi_S(x^i) - \Phi_S(x^i_0)| \leq l'_0 M|x^4_0|. \end{equation}
Eventually, we see that there exists a number $R$, which depends only on the bounds B, B' and on $h$, such that
\begin{equation*} |I_1 - \Phi_S(x^i_0)sin(\lambda_2)| \leq R|x^4_0|. \end{equation*}
\end{description}
The functions 
\begin{equation*} U_2 = \frac{1}{4 \pi} \int_{x^4_0}^{0} \int_0^{2 \pi} \int_{0}^{\pi} H_1 d\omega^4 d\lambda_2 d \lambda_3 + \frac{1}{4 \pi}  \int_0^{2 \pi} \int_{0}^{\pi} I_1 d\lambda_2 d \lambda_3 \end{equation*}
are hence continuous and bounded functions of the $x^\alpha_0$ and verify, by denoting $\Phi_S(x^i_0)$ by $U_0$, the inequality
\begin{equation*} |U_2 - U_0| \leq |x^4_0| \frac{\pi}{2} (Q + R), \end{equation*}
from which, for
\begin{equation} \label{eq:5.29} \epsilon(x^i_0)\leq \frac{2l}{\pi(Q+R)} \end{equation}
we shall have 
$$|U_2 - U_0| \leq l.$$
The functions $\Omega_2$, $W_2$ and $U_2$ possess then the same properties as $\Omega_1$, $W_1$ and $U_1$. Thus, the point $M_2$ is a point of $\mathcal{F}$ if $x^i_0$ verifies, besides the inequalities that were imposed upon it in the solution of the equations $[1]$, the inequalities ($\ref{eq:5.28}$), ($\ref{eq:5.29}$) and ($\ref{eq:5.22}$).

At this stage, it is possible to evaluate the distance of the points $M_2$ and $M_2'$ representative of $M_1$ and $M_1'$. From the Eqs. ($\ref{eq:5.20}$), defining the representation, we have that in the domain $\Omega$
$$ \Omega_2' - \Omega_2 \leq |x^4_0 - x^4| Max_{\Lambda} |F'_1 - F_1|. \hspace{5cm} (1)$$
It turns out from the expression $F_1$, from the assumptions B and the assumptions made on $\Omega_1$ and $W_1$, that $F_1$ verifies a Lipschitz condition with respect to the functions $\Omega_1$ and $W_1$ whose $N'$ coefficient depends only on the bounds B and on $h$. It implies the inequality
\begin{equation} \label{eq:5.30} |\Omega_2' - \Omega_2| \leq N'|x^4_0 - x^4| Max \Biggl\{ \sum |\Omega_1' - \Omega_1| + \sum |W_1' - W_1| \Biggr\}, \end{equation}
$$ |W_2' - W_2| \leq |x^4|Max_D |G_1' - G_1|, \hspace{5cm} (2)$$
where $G_1$ is a function $W_1$ or $U_1$; we have
\begin{equation*} |W_2' - W_2| \leq |x^4|Max_D \Biggl\{  \sum |W_1' - W_1| + \sum |U_1' - U_1| \Biggr\}, \end{equation*}
$$|U_2' - U_2| \leq \frac{\pi}{2} |x^4_0| Max_D |H_1' - H_1| + \frac{\pi}{2} Max_d (I_1' - I_1). \hspace{1.5cm} (3)$$
\begin{description}
\item[(a)] It turns out, from the fact that the polynomial $p$ occurring in the numerator of the function $H_{a}$ is independent of the point $M_1$ of $\mathcal{F}$ that we consider, from the assumptions B and from the previous inequalities, that $H_1$ verifies a Lipschitz condition with respect to the functions $\Omega$, $\tilde{\Omega_1}$, $W_1$ and $U_1$ whose $R'_1$ coefficient depends only on the bounds $B$, $B'$ and on $h$:
\begin{equation*}\begin{split} |H_1' - H_1| \leq R_1' \Biggl\{ & \sum |\Omega_1' - \Omega_1| + \sum |{\tilde{\Omega}}_1' - {\tilde{\Omega}}_1|\\
& + \sum |W_1' - W_1| + \sum |U_1' - U_1| \Biggr\}. \end{split}\end{equation*}
\item[(b)] Let us consider the quantity $I_1$, given by ($\ref{eq:5.21}$), where the only unknown functions are the functions $(\Omega_1)_{x^4=0}$. The expression of the $E^i_S$, the results of the Chapter 4 and those obtained from the solution of the equations $[1]$, the assumptions B and those made upon $\Omega_1$, show that the product 
$$\Bigl\{ E^i_{S_1}(x^4_0-x^4)^2 \Bigr\}_{x^4=0}$$ 
verifies a Lipschitz condition with respect to the functions $(\Omega_1)_{x^4=0}$ whose $R_2'$ coefficient depends only on the bounds B, B' and $h$:
\begin{equation*} |I_1' - I_1| \leq R_2' \sum |\Omega_1' - \Omega_1|_{x^4=0}. \end{equation*}
Therefore, we have
\begin{equation} \begin{split} \label{eq:5.31} & |U_2' - U_2| \leq R_2' |x^4_0| Max_D \Biggl\{ \sum|\Omega_1' - \Omega_1| + \sum |{\tilde{\Omega}}_1' - {\tilde{\Omega}}_1| \\
& + \sum |W_1' - W_1| + \sum |U_1' - U_1| \Biggr\} + \frac{\pi}{2}R_2' Max_d \sum |\Omega_1' - \Omega_1|_{x^4=0}. \end{split}\end{equation}
Let us then consider the point $M_3$ representative of the point $M_2$. The transformation mapping $M_1$ into $M_3$ is a representation of the space $\mathcal{F}$ into itself. Let us compute the distance of two representative points.

We shall deduce from the inequality ($\ref{eq:5.30}$) that
\begin{equation} \label{eq:5.32} |{\tilde{\Omega}}_2' - {\tilde{\Omega}}_2| \leq N' Max_{\Lambda} \Biggl\{ \sum |{\Omega}_1' - {\Omega}_1| + \sum |W_1' - W_1| \Biggr\}. \end{equation}
The inequalities $(\ref{eq:5.30}$), ($\ref{eq:5.31}$) and ($\ref{eq:5.32}$), written for the representations $M_1 \rightarrow M_2$ and $M_2 \rightarrow M_3$, show that there exists a number $\alpha$ non vanishing, depending only on the bounds B, B' and on $h$, such that, for 
$$\epsilon(x^i_0) < \alpha,$$
one has
\begin{equation*} d(M_3, M_3') < kd(M_1, M_1'), \end{equation*}
where $k$ is a given number less than 1.

Hence, the representation of the space $\mathcal{F}$ into itself which leads from $M_1$ to $M_3$ admits a unique fixed point, and the same holds for the representation ($\ref{eq:5.21})$ originally given.
\end{description}
$\textbf{Conclusion.}$ The exists a number $\epsilon(x^i_0)$, which depends only on the bounds $B$, $B'$ and on $h$ and nonvanishing, such that, in the representative domains:
$$|x^i_0 - \bar{x}^i| \leq d, \;|x^4_0| \leq \epsilon(x^i_0), \; 0 \leq x^4 \leq x^4_0, \; 0 \leq \lambda_2 \leq  \pi, \; 0 \leq \lambda_3 \leq 2 \pi; \hspace{0.5cm} (1) $$
$$  |x^i - \bar{x}^i| \leq d, \; |x^4| \leq \epsilon(x^i_0).  \hspace{3cm} (2) $$
The equations $[2]$, $[3]$ and $[4]$ have a unique solution, continuous and bounded $\Omega(x^\alpha_0, x^4, \lambda_2, \lambda_3)$ and $W(x^\alpha)$, $U(x^\alpha)$ verifying the inequalities 
\begin{equation*} |\Omega - \Omega_2| \leq l, \hspace{0.5cm}|W-W_0| \leq l, \hspace{0.5cm} |U-U_0| \leq l. \end{equation*}

We shall prove in addition that the functions $W$ and $U$ obtained satisfy, as ${W^{(1)}}$ and $U^{(1)}$, some Lipschitz conditions with respect to the variables $x^i$. 

In order to prove it, it is enough to make on the functional space $\mathcal{F}$ the following assumptions:

$\mathbf{Assumptions}$
\begin{description}
\item[(1)] The functions $\Omega_1$ and $\tilde{\Omega}_1$ satisfy Lipschitz conditions with respect to the three arguments $x^i_0$
\begin{equation} \label{eq:5.33} \bigg{|} \Omega_1(x^i_0, x^4_0, x^4, \lambda_2, \lambda_3) - \Omega_1(x'^{i}_0, x^4_0,x^4, \lambda_2, \lambda_3) \bigg{|} \leq h' \sum_{i=1}^3 | x'^{i}_0 - x^i_0| \end{equation}
with $h' \leq |x^4_0 - x^4| N'$; in particular
\begin{equation} \label{eq:5.34} \bigg{|} \tilde{\Omega}_1(x^i_0, ...) - \tilde{\Omega}_1(x'^{i}_0, ...) \bigg{|} \leq N' \sum_{i=1}^3 |x'^{i}_0 - x^i_0|, \end{equation}
where $h'$ is an arbitrary given number, $N'$ is a function of the previous bounds.
\item[(2)] The functions $W_1$ and $U_1$ satisfy Lipschitz conditions with respect to the $x^i$:
\begin{equation} \begin{split} \label{eq:5.35} &|W_1(x'^{i},x^4) - W_1(x^i,x^4)| \leq l \sum_{i=1}^3 |x'^{i} - x^i|,\\
&|U_1(x'^{i},x^4) - U_1(x^i, x^4)| \leq l \sum_{i=1}^3 |x'^{i} - x^i|. \end{split} \end{equation}
\end{description}
Hence, $\mathcal{F}$ endowed with the previous norm, is still a normed, complete and compact space. Then, let us now show that the representative points $M_2$ of the points $M_1 \in \mathcal{F}$ are still points of $\mathcal{F}$ if $\epsilon(x^i_0)$ is suitably chosen.
\begin{equation} \label{eq:5.36} \Omega_2(x'^{i}_0, ...) - \Omega_2(x^i_0, ...) = \int_{x^4_0}^{x^4} \bigg{(} F_1(x'^{i}_0, ...) - F_1(x^i_0, ...) \bigg{)} d \omega^4, \end{equation}
where the quantities $F_1(x'^{i}_0,...)$ and $F_1(x^i_0, ...)$ are evaluated with the help of the functions $X(x'^{i}_0, ...)$, more precisely of $x^i(x'^{i}_0, ...)$, $\Omega_1((x'^{i}_0, ...)$ and $x^i(x^i_0, ...)$, $\Omega_1(x^i_0, ...)$, respectively.

It turns out from the expression $F_1$ and from the inequalities $(\ref{eq:5.17}$) and $(\ref{eq:5.22}$) that
\begin{equation} \begin{split} \label{eq:5.37} &|F_1(x'^{i}_0, ...) - F_1(x^i_0, ...)| \leq N' \sum_{i=1}^3 |x'^{i}_0 - x^i_0|, \\
&|\Omega_2(x'^{i}_0, ...) - \Omega_2(x^i_0, ...)| \leq |x^4_0 - x^4| N'\sum_{i=1}^3 |x'^{i}_0 - x^i_0|, \end{split} \end{equation}
and hence, if 
$$\epsilon(x^i_0) \leq \frac{h'}{N'},$$
we will have
\begin{equation} \label{eq:5.38}|\Omega_2(x'^{i}_0, ...) - \Omega_2(x^i_0, ...)| \leq h' \sum_{i=1}^3|x'^{i}_0 - x^i_0|. \end{equation}
If $N'$ denotes the number, that depends only on the bounds B, B' and $h$, occurring in Eq. ($\ref{eq:5.34}$), we will have
\begin{equation*} |{\tilde{\Omega}}_2(x'^{i}_0, ...) - {\tilde{\Omega}}_2(x^i_0, ...)| \leq N'\sum_{i=1}^3 |x'^{i}_0 - x^i_0|. \end{equation*} 
Furthermore, if we consider
\begin{equation} \begin{split} \label{eq:5.39} |W_2(x'^{i},x^4) - W_2(x^i, x^4)| =& \int_{x^4_0}^{x^4} \bigg{(} G_1(x'^{i}, x^4) - G_1(x^i, x^4) \bigg{)} dt \\
& + W_0( x'^{i}) - W_0(x^i), \end{split} \end{equation}
where $G_1$ is a function $W_1$ or $U_1$, the Eq. ($\ref{eq:5.35}$) shows that, under the assumptions B' on the Cauchy data, we have
\begin{equation*} |W_2(x'^{i},x^4) - W_2(x^i, x^4)| \leq |x^4| l \sum_{i=1}^3 |x'^{i} - x^i|  + l_0 \sum_{i=1}^3 | x'^{i} - x^i|. \end{equation*}
Hence, we see that
 $$\epsilon(x^i_0) \leq \frac{ l - l_0}{l}$$
implies 
\begin{equation*} \begin{split} & |W_2(x'^{i},x^4) - W_2(x^i, x^4)| \leq l \sum_{i=1}^3 |x'^{i} - x^i|, \\
& U_2(x'^{i},x^4) - U_2(x^i, x^4) = \int_{x^4_0}^0 \int_0^{2 \pi} \int_0^{\pi} [H_1(x'^{i}_0, ...) - H_1(x^i_0, ...)] d\omega^4 d\lambda_2 d \lambda_3 \\
&+ \int_0^{2 \pi} \int_0^{\pi} [I_1(x'^{i}_0, ...) - I_1(x^i_0, ...)] d\lambda_2 d \lambda_3. \end{split} \end{equation*}
The quantities $H_1(x'^{i}_0)$, $I_1(x'^{i}_0)$ and $H_1(x^{i}_0)$, $I_1(x^{i}_0)$ are evaluated by means of the functions $X(x'^{i}_0, ...)$, $\Omega_1(x'^{i}_0, ...)$ and $X(x^{i}_0, ...)$, $\Omega_1(x^{i}_0, ...)$, respectively.

$\mathbf{Quantity \; H_1}$
\begin{description}
\item[(a)] Let us consider the polynomial $p$ occurring in the denominator of $H_{1a}$, $p$ is a polynomial of the functions $[{A^{\lambda \mu}}^{(1)}]$, ${W_s}^{(1)}(x^\alpha)$, of their first and second partial derivatives, of the functions $X$, $\tilde{X}$ and $p_i^0$. 

The Taylor series expansion of this polynomial, starting from the values 
\begin{equation*} \begin{split}& [{A^{\lambda \mu}}^{(1)}]_0= \delta^\mu_\lambda, \; \bigg{[} \frac{\partial {A^{\lambda \mu}}^{(1)}}{\partial x^\alpha} \bigg{]} = \bigg{[} \frac{\partial {A^{\lambda \mu}}^{(1)}}{\partial x^\alpha} \bigg{]}_0,\; \dots,\;\\
& {W_s}^{(1)}(x^\alpha) ={W_s}^{(1)}(x^\alpha_0), \;{W_{s \alpha}}^{(1)}(x^\alpha) ={W_{s \alpha}}^{(1)}(x^\alpha_0), \; \dots, \\
&\; X=X_0,\; \tilde{X} = \tilde{X}_0  \end{split}\end{equation*}
for which the polynomial $p$ vanishes, shows that $p$ is a polynomial of the functions already listed, and of the functions $[{A^{\lambda \mu}}^{(1)}]- \delta^\mu_\lambda$,$\dots$, ${W_s}^{(1)}(x^\alpha) -{W_s}^{(1)}(x^\alpha_0)$, ..., $\tilde{X} - \tilde{X}_0$, $X-X_0$ whose terms are at least of first degree with respect to the set of these last functions.

The quantity $\frac{p}{x^4_0 - x^4}$ is therefore a polynomial of the functions 
$${A^{\lambda \mu}}^{(1)}, \; \dots, \; {W_s}^{(1)}(x^\alpha), \; \dots, \; X, \; \tilde{X}, \; p_i^0$$
and of the functions 
$$\frac{\big{[} {A^{*\lambda \mu}}^{(1)}\big{]} - \delta^{\mu}_\lambda}{x^4_0 -x^4}, \; \frac{{W_s}^{(1)}(x^\alpha) - {W_s}^{(1)}(x^\alpha_0)}{x^4_0 - x^4}, \; \dots, \;\frac{X - X_0}{x^4_0 - x^4}, \; \frac{\tilde{X} - \tilde{X}_0}{x^4_0 - x^4}.$$
Since the coefficients ${A^{*\lambda \mu}}^{(1)}$ and the functions ${W_s}^{(1)}$ admit bounded derivatives with respect to the $x^\alpha$ up to the fourth order, whereas the functions considered involve only derivatives of the first two orders, it turns out from the assumptions B and the inequalities ($\ref{eq:5.15}$) and ($\ref{eq:5.16}$) that all the listed functions are bounded in $\Lambda$ by a number which only depends on the bounds B and B'.

Thus, the polynomial $\frac{p}{x^4_0 - x^4}$ verifies a Lipschitz condition with respect to each of these functions, whose coefficient depends only on the bounds B and B'. Then, we are going to prove that these functions themselves verify Lipschitz conditions with respect to the $x^i_0$. It will be enough for us, by virtue of the assumptions B and the previous inequalities to prove this result for:
\begin{description}
\item[(1)] the functions $\frac{\big{[} {A^{*\lambda \mu}}^{(1)}\big{]} - \delta^{\mu}_\lambda}{x^4_0 -x^4}$ and $\frac{{W_s}^{(1)}(x^\alpha) - {W_s}^{(1)}(x^\alpha_0)}{x^4_0 - x^4}$ and the analogous functions written with first and second partial derivatives of ${A^{*\lambda \mu}}^{(1)}$ and ${W_s}^{(1)}$ with respect to the $x^\alpha$;
\item[(2)] The functions $\frac{X - X_0}{x^4_0 - x^4}$.
\end{description}
Let us begin with $\mathbf{(1)}$ by setting 
\begin{equation*} F(x^i_0, x^4_0, x^4, \lambda_2, \lambda_3) = \frac{{A^{*\lambda \mu}}^{(1)} - \delta^\mu_\lambda}{x^4_0 - x^4}, \end{equation*}
where ${A^{*\lambda \mu}}^{(1)} - \delta^\mu_\lambda = {A^{*\lambda \mu}}({W_s}^{(1)}(x^i, x^4), {W_s}^{(1)}(x^i_0, x^4_0), x^i, x^4, x^i_0, x^4_0)$\\ $- {A^{*\lambda \mu}}^{(1)}({W_s}^{(1)}(x^i, x^4), {W_s}^{(1)}(x^i_0, x^4_0), x^i_0, x^4_0, x^i_0, x^4_0)$, with 
$$x^i = x^i(x^i_0, x^4_0, x^4, \lambda_2, \lambda_3).$$
Let us consider the quantity $F(x'^{i}_0, ...)- F(x^i_0, ...)$. The function occurring in the numerator vanishes for $x^4=x^4_0$, because the two functions $F(x'^{i}_0, ...)$ and $F(x^i_0, ...)$ vanish, and it admits a derivative with respect to $x^4$ continuous and bounded in the domain $\Lambda$. Thus, we have
 \begin{eqnarray}
\;&\;& F(x'^{i}_0, ...)- F(x^i_0, ...) \nonumber \\
&\;& = \Biggl\{ \frac{\partial}{\partial x^4} \bigg{[} ({A^{*\lambda \mu}}^{(1)}(x'^{i}_0, ...) - \delta^\mu_\lambda) - ({A^{*\lambda \mu}}^{(1)}(x^{i}_0, ...) - \delta^\mu_\lambda) \bigg{]}\Biggr\}_{x^4=x^4_0 - \theta(x^4 - x^4_0)} \end{eqnarray}
where $0 \leq \theta \leq 1$.

Since the derivative of the function ${A^{*\lambda \mu}}^{(1)}(x'^{i}_0, ...)$ with respect to the parameter $x^4$ verifies a Lipschitz condition with respect to the $x^i_0$, whose coefficient depends only on the bounds B and B', we see eventually that
\begin{equation*} F(x'^{i}_0, ...)- F(x^i_0, ...) \leq L_1 \sum_{i=1}^3 |x'^{i}_0 - x^i_0| \end{equation*}
where $L_1$ depends only on the bounds B and B'.

The same proof holds for the function $\frac{{W_s}^{(1)}(x^\alpha) - {W_s}^{(1)}(x^\alpha_0)}{x^4_0 - x^4}$ and for the functions built with the partial derivatives of the ${A^{*\lambda \mu}}^{(1)}$ or ${W_s}^{(1)}$ up to the third order included.

Eventually, we can prove the same result for $\mathbf{(2)}$. We have
\begin{equation*} \tilde{X} - \tilde{X}_0 = \frac{ \int_{x^4_0}^{x^4} (E- E_0)d\omega^4}{x^4_0 - x^4}, \end{equation*}
from which
\begin{equation*} (\tilde{X} - \tilde{X}_0)_{x'^{i}_0} - (\tilde{X} - \tilde{X}_0)_{x^i_0} = \frac{\int_{x^4_0}^{x^4} [ (E- E_0)_{x'^{i}_0} - (E- E_0)_{x^i_0} ] d \omega^4}{x^4_0 - x^4}, \end{equation*}
where $E$ is a rational fraction with denominator ${T^{*4}}^{(1)}$ of the coefficients ${A^{*\lambda \mu}}^{(1)}$ and of their partial derivatives up to the third order and of the functions $X$. We can write $E-E_0$ in the form of a rational fraction with denominator ${T^{*4}}^{(1)}$, because ${T^{*4}}^{(1)}=1$ for $x^4=x^4_0$, of the previous functions and of the functions $X-X_0$, ${A^{*\lambda \mu}}^{(1)} -{ {\delta}^\mu}_\lambda$, $\dots$, whose denominator has all its terms of first degree at least with respect to the set of these functions. Then, we can write
$$E-E_0=(x^4_0 - x^4)F,$$ 
where $F$ is a rational fraction with denominator ${T^{*4}}^{(1)}$ of the previous functions and of the functions 
$$\frac{X - X_0}{x^4_0 - x^4}, \; \frac{{A^{*\lambda \mu}}^{(1)} - \delta^\mu_\lambda}{x^4_0 - x^4}, \; \dots .$$
Since all these functions verify Lipschitz conditions with respect to the $x^i_0$, we have
\begin{equation*} |(E- E_0)_{x'^{i}_0} -(E- E_0)_{x^{i}_0} | \leq L_2 |x^4_0 - x^4| \sum_{i=1}^3 |x'^{i}_0 - x^i_0|, \end{equation*}
from which
\begin{equation*} |(X- X_0)_{x'^{i}_0} - (X- X_0)_{x^{i}_0}| \leq \frac{L_1}{2} |x^4_0 - x^4| \end{equation*}
and
\begin{equation*} \bigg{|} \bigg{(} \frac{X - X_0}{x^4_0 - x^4} \bigg{)}_{x'^{i}_0} -  \bigg{(} \frac{X - X_0}{x^4_0 - x^4} \bigg{)}_{x^{i}_0} \bigg{|} \leq \frac{L_2}{2}. \end{equation*}
Thus, we have proven that the quantity $\frac{p}{x^4_0 - x^4}$ verifies a Lipschitz condition with respect to the $x^i_0$, whose coefficient depends only on the bounds B and B'.
\item[(b)] There remains to prove that the quantity $H_1$, that is the product of the square root of a rational fraction with numerator 1 and non-vanishing denominator with a rational fraction with non-vanishing denominator of the bounded functions verifying all Lipschitz conditions with respect to the $x^i_0$, verifies in $\Lambda$ a Lipschitz condition with respect to the $x^i_0$ whose coefficients $Q'$ depends only on the bounds B, B', $h$ and on $h'$. Hence we have
\begin{equation*} |H'_1 - H_1| \leq Q' \sum_{i=1}^3 |x'^{i}_0 - x^i_0|. \end{equation*}
\end{description}
$\mathbf{Quantity\; I_1}$

By considering the expression of $I_1$ and the previous inequalities, we can prove that all terms of $I_1$, with the exception of the term ($\ref{eq:5.23}$), verify Lipschitz conditions with respect to the $x^i_0$ whose coefficient is of the form $R'_1|x^4_0|$, where $R'_1$ is a number that depends only on the bounds B and B'.

Let us consider $(\ref{eq:5.23}$). We find that $\frac{J(x^i_0) - 1}{x^4_0 - x^4}$ verifies a Lipschitz condition with respect to the variables $x^i_0$, from which
\begin{equation*} |J(x'^{i}_0) -J(x^{i}_0)|_{x^4=0} \leq R'_1|x^4_0| \sum_{i=1}^3 |x'^{i}_0 - x^i_0|, \end{equation*}
from which, by using the inequality $(\ref{eq:5.33}$) and the inequalities on $I$, we have
\begin{equation*} |I_0(x'^{i}_0) - I_0(x^{i}_0)| \leq R''_0|x^4_0| \sum_{i=1}^3 |x'^{i}_0 - x^i_0| + |U_0(x'^{i}_0) - U_0(x^i_0)|(1+ R''_2|x^4_0|). \end{equation*}
Then, we obtain Lipschitz conditions verified by $U_0$
\begin{equation*} |I_1(x'^{i}_0) -I_1(x^{i}_0)| \leq (R'|x^4_0| + l_0) \sum_{i=1}^3 |x'^{i}_0 - x^i_0|, \end{equation*}
where $R'$ is a number that depends only on the bounds B and B'.

Eventually, we shall deduce from the Lipschitz conditions verified by $H_1$ and $I_1$
\begin{equation*} |U_2(x'^{i}_0, x^4) - U_2(x^{i}_0, x^4)| \leq \frac{\pi}{2} [(Q' + R')|x^4_0| + l_0]\sum_{i=1}^3|x'^{i}_0 - x^i_0| \end{equation*}
hence that inequality 
$$\epsilon(x^i_0) \leq \frac{l - l_0}{Q' + R'} \frac{2}{\pi}$$
implies
\begin{equation*} |U_2(x'^{i}_0, x^4) - U_2(x^{i}, x^4)| \leq l \sum_{i=1}^3 |x'^{i}_0 - x^i_0|. \end{equation*}

$\mathbf{Conclusion.}$ The previous inequalities prove that, if $\epsilon(x^i_0)$ satisfies the corresponding inequalities, the point $M_2$ is, under the assumptions made, a point of $\mathcal{F}$. The application of the fixed-point theorem shows that, in the domain $D$, the functions $W$ and $U$ satisfy Lipschitz conditions with respect to the $x^i$ with coefficient $l$.

The functions $W$ and $U$, solutions of the integral equations $[J_1]$, satisfy therefore, in $D$, the same inequalities holding for the functions $W^{(1)}$, $\dots$, $U^{(1)}$.

\subsection{Solution of the equations $G_1$}
We will now prove that the functions $W_s$, which are solutions of the equations $I_1$, solve the equations $G_1$, and that the functions $W_{s \alpha}$, $\dots$, $U_S$, which are solutions of the equations $I_1$, are the partial derivatives up to the fourth order of the $W_s$ in a domain $D$ that depends only on the bounds B and B'. We shall use for the proof the approximation of continuous functions by means of analytic functions.

Let us consider some equations $[G_1]$:
\begin{equation*} {A^{\lambda \mu}}^{(1)} \frac{\partial^2 W_s}{\partial x^\lambda \partial x^\mu} + f_s =0, \hspace{7cm} [G_1] \end{equation*}
where the coefficients $A^{\lambda \mu}$, $f_s$, ${W_s}^{(1)}$ and the Cauchy data $\psi_s$, $\varphi_s$ are analytic functions of their arguments. The Cauchy problem for the equations $[G_1]$ admits an analytic solution in a neighbourhood $V$ of the domain $(d)$ of the surface $x^4=0$ carrying the initial data. If the coefficients and the Cauchy data satisfy the assumptions made for the system $[F_1]$, there exists a neighbourhood $V$ of $(d)$ where this solution satisfies the integral equations $[I_1]$.

Furthermore, let us consider, independently of equations $[G_1]$, the integral equations $[I_1]$. We shall prove that they admit, within a domain $D$ that depends only on the bounds B and B', a unique analytic solution which coincides therefore, in the part shared by the domains $V'$ and $D^*$, with the solution of equations $[G_1]$. This principle of analytic continuation shows then that this solution of equations $[I_1]$ is solution of equations $[G_1]$ in the whole of $D$.

Let us prove for example the analyticity in $D$ of the solution of equations $[1]$
\begin{equation*} X= \int_{x^4_0}^{x^4} E d\omega^4 + X_0, \end{equation*}
when $E$ is an analytic function of the quantities $X$, $x^\alpha_0$ and $x^4$, bounded by $M$ in the domain
\begin{equation*} R: \; |X - \bar{X}_0| \leq d, \hspace{0.5cm} |x^i_0 - \bar{x}^i| \leq d, \hspace{0.5cm} |x^4| \leq |x^4_0| \leq \epsilon(x^i_0) \end{equation*}
of variation of its real arguments and it is expandable in an absolutely convergent series in the neighbourhood of every point of $R$. Thus, we can extend the definition of $E$ to a domain of variation of the complex arguments $Z= x +iy$, $z^\alpha_0 =x^\alpha_0 + i y^\alpha_0$, $z^4=x^4+iy^4$ by expressing it in the form of a convergent series, hence holomorphic in the $m$ cylinders $V$, centered at a point whatsoever of $V$ and defined by
\begin{equation*} |Z' - X| \leq a_X, \hspace{0.5cm} |z^{' \alpha}_0 - x^\alpha_0|\leq b_{x^\alpha_0} \hspace{0.5cm} |z^4 - x^4| \leq C_{x^4}. \end{equation*}
The partial derivatives $\frac{\partial E}{\partial X_1}$ being bounded by $M'$ in $R$ one can choose the bounds $a_X$, $b_{x^\alpha_0}$ and $C_{x^4}$ in such a way that in $v$ one has
\begin{equation*}\bigg{|} \frac{\partial E}{\partial Z_1} \bigg{|} \leq M' + \alpha', \end{equation*}
where $\alpha'$ is an arbitrarily small number. One can also choose the bounds $b_{x^\alpha_0}$ and $C_{x^4}$ so that in $v$ one has
\begin{equation*} | I\; E(X_1, z^\alpha_0, z^4)| \leq \beta, \hspace{0.5cm} |R\;E(X_1, z^\alpha_0, z^4)| \leq M + \beta, \end{equation*}
where $\beta$ is an arbitrarily small number. One can build on the other hand a cover of the domain $R$ by means of a finite number of projections in $R$ of the $m$ previous cylinders. The corresponding $m$ cylinders determine a domain $\bar{R}$ of the space of complex arguments $Z$, $z_0$, $z^4$, which fulfill the inequalities
\begin{equation*} \begin{split} &|X - \bar{X}_0| \leq d, \hspace{0.5cm} |x^i_0 - \bar{x}^i| \leq d, \hspace{0.5cm} |x^4| \leq |x^4_0| \leq \epsilon(x^i_0); \\
& |Y| \leq a, \hspace{0.5cm} |y^\alpha_0| \leq b, \hspace{0.5cm} |y^4| \leq c, \end{split} \end{equation*}
where $a$, $b$ and $c$ are non-vanishing numbers, and in which the complex function $E$ is defined and analytic.

Let us write:
\begin{equation*} E(Z_1, z^\alpha_0, z^4) = E(Z_1, z^\alpha_0, z^4) - E(X_1, z^\alpha_0, z^4) +E(X_1, z^\alpha_0, z^4), \end{equation*}
from which
\begin{equation*} \begin{split} & | I\; E(Z_1, z^\alpha_0, z^4)| \leq \beta + m(M' + \alpha')a, \\
& |R\;E(Z_1, z^\alpha_0, z^4)| \leq M + \beta + m(M' + \alpha')a. \end{split} \end{equation*}
Now, let us consider the equations $[1]$ extended to the complex domain $\bar{R}$
\begin{equation*} Z = \int_{z^4_0}^{z^4} E(Z, z^\alpha_0, z^4) d\omega^4 + Z_0. \hspace{6cm} [\bar{1}] \end{equation*}
In order to solve it, we consider, as in the real case, a functional space $\Upsilon$ defined by the functions of complex variables $Z_1(z^\alpha_0, z^4)$, real for $z^\alpha_0$ and $z^4$ real, analytic in the domain $\bar{D}$ defined by
\begin{equation*} |x^i_0 - \bar{x}^i| \leq d, \hspace{0.5cm} |x^4| \leq |x^4_0| \leq \epsilon(x^i_0), \hspace{0.5cm} |y^\alpha_0| \leq b, \hspace{0.5cm} |y^4| \leq c, \end{equation*}
and satisfying $|X_1 - X_0| \leq d$ and $|y_1| \leq a$.

\begin{description} 
\item[(a)] The representation
\begin{equation*} Z_2 = Z-0 + \int_{z^4_0}^{z^4} E(Z_1, z^\alpha_0, z^4)d \omega^4 \end{equation*}
is a representation of the space into itself if $\epsilon(x^i_0)$, $b$ and $c$ are suitably chosen. As a matter of fact:
\begin{description}
\item[(1)] $Z_1$ is an analytic function of $z^\alpha_0$, $z^4$ because this holds for $E$, real for $z^\alpha_0$ and $z^4$ real.
\item[(2)] The equality
\begin{equation*} Z_2 = - \int_{x^4_0}^{x^4_0 + i y^4_0}E d\omega^4 + \int_{x^4_0}^{x^4}E d\omega^4 + \int_{x^4}^{x^4+ i y^4}E d\omega^4 + Z_0\end{equation*} 
implies that
\begin{equation*}\begin{split}  |X_2 - X_0| \leq &(b+c)[m(M' + \alpha')a + \beta] \\
         &+ |x^4_0 - x^4|[m(M' + \alpha')a + \beta + M],\\
 |Y_2| \leq &(b+c) [m(M'+ \alpha')a + \beta + M] \\
&+ |x^4_0 - x^4|[m(M' + \alpha') a + b].\end{split}\end{equation*}
Thus, if $\epsilon(x^i_0) \leq \frac{d-(b+c)[m(M' + \alpha')a + \beta]}{M+ m(M'+\alpha')a+ \beta}$, we have
\begin{equation*} |X_2 - \bar{X}_0| \leq d \end{equation*}
and if $b+c \leq \frac {a[1- mM'(x^4_0 -x^4)]- (m \alpha' a+ \beta)(x^4_0 - x^4)}{M + m(M' + \alpha')a + \beta}$, we have
\begin{equation*} |Y_2|\leq a. \end{equation*}
Let us recall that the number 
\begin{equation*}\epsilon(x^i_0) \leq \frac{1}{mM}. \hspace{6cm} (A) \end{equation*}
Therefore, we have
\begin{equation*}1 - mM'(x^4_0 - x^4) >0. \hspace{5cm} (B) \end{equation*}
Thus, we have to choose $\epsilon(x^i_0)$ as satisfying $(A)$ and the inequality $(B)$ shows that one can find, without supplementary assumptions upon $\epsilon(x^i_0)$, the numbers $b$ and $c$ defining $\bar{D}$, so that $M_2$ is a point of $\mathcal{F}$. The domain $\bar{D}$ has for real part a domain as close as one wants to $D$.
\end{description}
\item[(b)] Let us prove that the representation reduces the distances. We have seen that, in $\bar{R}$, one has $\big{|}\frac{\partial E}{\partial Z_1} \big{|} \leq M' + \alpha'$, from which
\begin{equation*}|E(Z_1', z^\alpha_0, z^4) - E(Z_1, z^\alpha_0, z^4) | \leq |Z'_1 - Z_1|(M' + \alpha'). \end{equation*}
Thus, we shall have
\begin{equation*} d(M_2, M_2') \leq m(M' + \alpha')|z^4_0 - z^4| d(M_1, M_1'), \end{equation*}
from which, if $|z^4_0 - z^4| < \frac{1}{m M' + \alpha}$, $\epsilon(x^i_0) \leq \frac{1}{mM' + \alpha'} - \eta $ and $b+c < \eta$, we have  
\begin{equation*} d(M_2, M_2') \leq d(M_1, M_1'), \end{equation*}
where $\eta$ is an arbitrary small number. 

Thus, the real part of the domain $\bar{D}$ is as close as one wants to $D$.
\end{description}
We can conclude, as in the real case, that the representation $I$ admits a unique fixed point. The corresponding $Z$ functions are solutions of equations $[1]$, and analytic in the domain $\bar{D}$. The functions $X$, values of these functions $Z$ for real arguments $x^4_0$ and $x^4$ are analytic functions, solutions in a domain as close as one wants to $D$ of equations $[1]$.

Analogous results can be proved in the same way for equations $[2]$, $[3]$ and $[4]$.

\subsection{Coefficients and Cauchy data satisfying only the assumptions $B$ and $B'$}
If the coefficients $A^{\lambda \mu}$, $f_s$, the given functions ${W_s}^{(1)}$ and the Cauchy data satisfy only the assumptions B and B', we shall approach uniformly these quantities and at the same time their partial derivatives up to the fourth order, by means of analytic functions $A^{\lambda \mu}_{(n)}$, $f_{s(n)}$, ${W_{s(n)}}^{(1)}$, $\varphi_{s(n)}$ and $\psi_{s(n)}$ verifying the assumptions B and B'. We shall build in this way a family of functions $W_{s(n)}$, ..., $U_{S(n)}$, which are solutions in $D$ of equations $I_{1(n)}$ and solutions in $D$ of the Cauchy problem, relatively to the equations $[G_{1(n)}]$:
\begin{equation*} {A^{\lambda \mu}}_{(n)} \frac{ \partial^2 W_{s(n)}}{\partial x^\lambda \partial x^\mu} + f_{s(n)} =0. \hspace{6cm} [G_{1(n)}] \end{equation*}
These functions $W_{s(n)}$ possess partial derivatives up to the fourth order and satisfy the assumptions B and B'. 

We want to prove that the functions $W_{s(n)}$, ..., $U_{S(n)}$ converge uniformly to some functions $W_s$, ..., $U_S$, when the functions $A^{\lambda \mu}_{(n)}$, ${W_{s(n)}}^{(1)}$, $\varphi_{s(n)}$, $\psi_{s(n)}$ and their partial derivatives converge uniformly to the given functions $A^{\lambda \mu}$, ${W_s}^{(1)}$, $\varphi_s$, $\psi_s$. This is possible by applying the same method we used before and the fact that the functions $W_{(n)}$ and $U_{(n)}$ verify a Lipschitz condition with respect to the $x$ variables (that one has to replace by $X_{(n)}$ in the integral equations $[I_{1(n)}]$ verified by these functions). Thus, we will have 
\begin{equation} \begin{split} \label{eq:5.40}& |X_{(n)} - X_{(m)}| \leq Max_{\Lambda} \Biggl\{ \alpha \bigg{(}\sum |A^{\lambda \mu}_{(n)} - A^{\lambda \mu}_{(m)} | + \dots \\
&+ \sum |{W_{s(n)}}^{(1)} - {W_{s(m)}}^{(1)}| + \dots \bigg{)} + M' \sum |X_{(n)} - X_{(m)}|\Biggr\}|x^4_0 - x^4|,\\
&|\Omega_{(n)} - \Omega_{(m)}| \leq Max_{\Lambda} \Biggl\{ \beta \bigg{(} \sum |A^{\lambda \mu}_{(n)} - A^{\lambda \mu}_{(m)}| + \dots \\
& + \sum |{W_{(n)}}^{(1)} - {W_{(m)}}^{(1)}| + \sum |X_{(n)} - X_{(m)}|\bigg{)} \\
&+ N' \bigg{(}|\Omega_{(n)} - \Omega_{(m)}| + \sum |W_{(n)} - W_{(m)}|\bigg{)} \Biggr\} |x^4_0 - x^4|, 
 \end{split} \end{equation}
and
\begin{equation} \begin{split} \label{eq:5.41} & |W_{(n)} - W_{(m)}| \leq Max \Biggl\{ \sum |W_{(n)}- W_{(m)}| + \sum |U_{(n)} - U_{(m)}| \Biggr\} |x^4| \\
&+ |W_{0(n)}- W_{0(m)}|, \\
& |U_{(n)} - U_{(m)}| \leq Max \Biggl\{ \gamma \bigg{(} \sum |A^{\lambda \mu}_{(n)} - A^{\lambda \mu}_{(m)}| + \dots + \sum |f_{s(n)} - f_{s(m)}|\\
& + \dots + \sum|{W_{(n)}}^{(1)} - {W_{(m)}}^{(1)}| + \sum |X_{(n)} - X_{(m)}| \bigg{)} \\
&+ R'_1 \bigg{(} \sum |U_{(n)} - U_{(m)}| + \sum |W_{(n)} - W_{(m)}| + \sum |\Omega_{(n)} - \Omega_{(m)}|\\
& +\sum |\tilde{\Omega}_{(n)} - \tilde{ \Omega}_{(m)}| \bigg{)} \Biggr\} |x^4_0| + Max \Biggl\{  \delta \bigg{(} \sum |X_{(n)} - X_{(m)}| + \sum |A^{\lambda \mu}_{(n)} - A^{\lambda \mu}_{(m)}| \\
& + \dots + \sum | \Phi_{s(n)} - \Phi_{s(m)}|  \bigg{)} + R'_2 \sum |\Omega_{(n)} - \Omega_{(m)}| \Biggr\}_{x^4=0}, \end{split}\end{equation}
where $\alpha$, $\beta$, $\gamma$, $\delta$ are bounded numbers which only depend on the bounds B, B' and on $h$ and $h'$. The previous inequalities show that the functions $X_{(n)}$, $\Omega_{(n)}$, $W_{(n)}$ and $U_{(n)}$ converge uniformly towards functions $X$, $\Omega$ and $W$, $U$ in their respective domains of definition, $\Lambda$ and $D$, when the approximating functions converge uniformly towards the given functions.

These functions $W$, $U$, uniform limit of the functions $W_{s(n)}$, $U_{(n)}$ satisfy the following properties.
\begin{description}
\item[(p.1)] The functions $W_{s \alpha}$, ..., $U_S$ are partial derivatives up to the fourth order of the functions $W_s$, and all these functions satisfy the same assumptions B and B' as the functions ${W_s}^{(1)}$ in $D$.
\item[(p.2)] The functions $W_s$ verify the partial differential equations $[G_1]$ in the domain $D$.
\end{description} 

\subsection{Solution of the equations $[G]$}
We consider the functional space $W$ defined by the functions ${W_s}^{(1)}$ and satisfying the assumptions B and B' in the domain $D$. We have just proved that the solution evaluated of the Cauchy problem for the equations $[G_1]$ defines a representation of this space into itself. Let us denote by ${W_s}^{(1)}$ this solution.

The space $W$ is a normed, complete and compact space if one defines the distance of two of its points by
\begin{equation*} d(M_1, M_1')=Max_{D} \bigg{(} \sum |{W_s}^{(1)}- {W'_s}^{(1)}| + ... + |{U_S}^{(1)} - {U'_S}^{(1)}| \bigg{)}. \end{equation*}
The distance of two representative points $M_2$, $M_2'$ from $M_1$, $M_1'$ will be compared to the distance of these points with the help of inequalities analogous to ($\ref{eq:5.40}$) and $(\ref{eq:5.41}$). 

Then, there exists a number $\eta$ bounded, non-vanishing and such that if
$$\epsilon(x^i_0)< \eta,$$ 
the distance of two representative points 
$$\big{(} {W'_s}^{(2)}, \dots, {U'_S}^{(2)} \big{)} \hspace{0.5cm} {\rm and} \hspace{0.5cm} \big{(} {W_s}^{(2)}, \dots, {U_S}^{(2)} \big{)}$$
is less than the distance of the initial points.

The representation considered admits then a unique fixed point $(W_s, ..., U_S)$ which belongs to the space. The functions $W_s$ corresponding to this fixed point are solutions of the Cauchy problem associated to the equations $[G]$, in the domain $D$. They possess partial derivatives up to the fourth order, continuous, bounded and satisfying some Lipschitz conditions with respect to the variables $x^i$.

Furthermore, the Cauchy problem relative to the system of non-linear partial differential equations $[G]$, admits in the domain $D$, under the assumptions $H$, a solution possessing partial derivatives up to the fourth order, continuous, bounded and satisfying Lipschitz conditions with respect to the variables $x^i$. This concerns the existence of the solution. 

Another implication of our argumentation is the uniqueness of this solution. As a matter of fact, if we consider the system of integral equations verified by the solutions of the given equations $[G]$, it has only one solution $W_s$, $W_{s \alpha}$, ..., $U_S$ where the $W_{s\alpha}$, ..., $U_S$ are partial derivatives of the $W_s$. In this case it is possible to write inequalities analogous to the inequalities for $[G_{1(n)}]$, where $W_{(n)}$, ..., $U_{(n)}$; ${W_{(n)}}^{(1)}$, ..., ${U_{(n)}}^{(1)}$ and $W_{(m)}$, ..., $U_{(m)}$; ${W_{(m)}}^{(1)}$, ..., ${U_{(m)}}^{(1)}$ are replaced by two solutions of equations $[G]$, respectively. From these inequalities one derives the coincidence of these two solutions.

$\mathbf{Conclusion.}$ We consider a system of non-linear, second-order, hyperbolic partial differential equations with $n$ unknown functions $W_s$ and four variables $x^\alpha$, of the form 

$$A^{\lambda \mu} \frac{\partial^2 W_S}{\partial x^\lambda \partial x^\mu} + f_s=0, \hspace{2cm} \lambda,\mu=1, ...,4, \; s=1, 2, ..., n. \hspace{1cm} [G] $$
\\
The $f_s$ are given functions of the unknown $W_s$, $W_{s \alpha}$ and of the variables $x^\alpha$. 

The $A^{\lambda \mu}$ are given functions of the $W_s$ and of the $x^\alpha$. 

The Cauchy data are, on the initial surface $x^4=0$, 
$$W_s(x^i, 0)= \varphi_s(x^i), \hspace{1cm} W_{s 4}(x^i, 0)=\psi_s(x^i).$$

On the system $[G]$ and the Cauchy data we make the following assumptions:
\begin{description}
\item[(1)] In the domain $(d)$, defined by $|x^i - \bar{x}^i| \leq d$, $\varphi_s$ and $\psi_s$ possess partial derivatives up to the orders five and four, continuous, bounded and satisfying Lipschitz conditions.
\item[(2)] For the values of the $W_s$ satisfying
\begin{equation*} |W_s - \varphi_s| \leq l,\hspace{0.5cm} |W_{si} - \varphi_{si}| \leq l, \hspace{0.5cm} |W_{s4} - \psi_s|\leq l \end{equation*}
\end{description}
and in the domain $D$, defined by 
$$|x^i - \bar{x}^i| \leq d, \hspace{1cm}|x^4| \leq \epsilon:$$
\begin{description}
\item[(a)] $A^{\lambda \mu}$ and $f_s$ possess partial derivatives up to the fourth order, continuous, bounded and satisfying Lipschitz conditions.
\item[(b)] The quadratic form $A^{\lambda \mu}X_\lambda X_\mu$ is of the normal hyperbolic type, i.e. $A^{44}>0$, $A^{ij}X_i X_j$ is negative-definite.
\end{description}
Then the Cauchy problem admits a unique solution, possessing partial derivatives continuous and bounded up to the fourth order, in relations with equations $[G]$ in a domain $\Delta$, which is a tronc of cone with base $d$, defined by
\begin{equation*} |x^i - \bar{x}^i| \leq d, \hspace{1cm} |x^4| \leq \eta(x^i). \end{equation*}

Once we have proved the existence and uniqueness of the solution of the Cauchy problem for non-linear, second-order, hyperbolic partial differential equations we are able now to apply these results to General Relativity. 

In the next Chapter, we will show the solution of the Cauchy problem for the field equations, which are ten partial differential equations of second-order that are linear in the second derivatives of the gravitational potentials and non-linear in their first derivatives.

\chapter{General Relativity and the Causal structure of Space-Time}
\chaptermark{General Relativity and the Causal structure }

\epigraph{There are more things in Heaven and Earth, Horatio, than are dreamt of in your philosophy.}{William Shakespeare, Hamlet }
Once we have argued about the existence and uniqueness of the solution of the Cauchy problem for systems of linear and non-linear equations, we are ready to discuss the applications to General Relativity. This will be the object of the discussion of the first part of this Chapter. More precisely, we will discuss how is it possible to use the results obtained in the previous chapters to solve the Cauchy problem for the field equations. 

The gravitation potentials, in a domain without matter and in absence of electromagnetic filed, must verify ten partial differential equations of second-order of the exterior case $R_{\alpha \beta}=0$, that are not independent because of the Bianchi identities. We will formulate the Cauchy problem relative to this system of equations and with initial data on a hypersurface $S$. 

The study of the values on $S$ of the consecutive partial derivatives of the potentials shows that, if $S$ is nowhere tangent to the characteristic manifold, and if the Cauchy data satisfy four given conditions, the Cauchy problem admits, with respect to the system of equations $R_{\alpha \beta}=0$, in the analytic case, a solution and this solution is unique. 

Thus, if there exist two solutions, they coincide up to a change of coordinates, conserving $S$ point-wise and the values on $S$ of the Cauchy data. 

Hence, by making use of $\textit{isothermal coordinates}$, we will solve the Cauchy problem for the equations $\mathcal{G}_{\alpha \beta}=0$. After that we have seen under which assumptions this is possible, we will define, in the second part, the causal structure of space-time. 

We will give the definition of $\textit{strong causality}$, and, since this is not enough to ensure that space-time is not just about to violate causality, we will define $\textit{stable causality}$. 

Eventually, we will deal with $\textit{global hyperbolicity}$ and its meaning in relation to Cauchy surfaces.

\section{Cauchy Problem for General Relativity}
The ten potentials, which are the metric components, $g_{\alpha \beta}$ of an Einstein universe satisfy, in the domains without matter and in absence of electromagnetic field, ten partial differential equations of second-order of the exterior case
\begin{equation*} \begin{split} R_{\alpha \beta} \equiv &\sum_{\lambda=1}^4 \Biggl\{ \partial_\lambda \Gamma \{ \lambda, [\alpha, \beta] \} -  \partial_\alpha \Gamma \{ \lambda, [\lambda, \beta] \} \Bigg\} +  \sum_{\lambda, \mu=1}^4 \Biggl\{\Gamma \{ \lambda, [\lambda, \beta] \} \Gamma \{ \mu, [\alpha, \beta] \} \\
& - \Gamma \{ \mu, [\lambda, \alpha] \} \Gamma \{ \lambda, [\mu, \beta] \} \Bigg\} =0, \end{split} \end{equation*}
where the $ \partial_\lambda = \frac{\partial}{\partial x^\lambda}$, the $x^\lambda$ are a system of four space-time coordinates whatsoever, and we have denoted by $\Gamma \{ \lambda, [\alpha, \beta] \}$ the $\Gamma^{\lambda}_{\alpha \beta}$  to stress the non-tensorial behaviour of the Christoffel symbols.

This ten equations are not independent because the Ricci Tensor satisfies the four Bianchi identities
\begin{equation*} \sum_{\lambda=1}^{4} \nabla_\lambda G^{\lambda \mu} \equiv 0, \end{equation*}
where $G^{\lambda \mu} \equiv R^{\lambda \mu} - \frac{1}{2} (g^{-1})^{\lambda \mu} R$ is the Einstein Tensor, and $R$ is the scalar of curvature.

The problem of determinism is here formulated for an exterior space-time in the form of the Cauchy problem relative to the system of partial differential equations $R_{\alpha \beta}=0$ and with initial data carried by any hypersurface $S$. 

The study of the values on $S$ of the partial derivatives of $g_{\alpha \beta}$ shows that, if $S$ is nowhere tangent to a characteristic manifold, and if the Cauchy data satisfy four given conditions, the Cauchy problem for $R_{\alpha \beta}=0$ admits in the analytic case a unique solution. 

Thus, if $S$ is defined by the equation $x^4=0$, the four conditions that the initial data must verify are the four equations $G^4_\lambda=0$ which are expressed in terms of the data only. We want to remark that $G^{4}_\lambda$ is obtained from the $(1, 1)$ form of the Einstein tensor, by fixing the controvariant index to the component 4 and letting to vary the covariant component.

It is possible to use the results of Chapter 5, since once a space-time and a hypersurface $S$ are given, there always exists a coordinate change $\tilde{x}^{\lambda}=f(x^\mu)$, with $\tilde{x}=0$ for $x^4=0$, so that every equation $R_{\alpha \beta}=0$ does not contain, in the new coordinates, second derivatives besides those of ${g}_{\alpha \beta}$ and the system of Einstein equations takes then the form of the systems studied in Chapter 5. 

The vacuum Einstein equations, in every coordinates (Levi-Civita \cite{levi2013n}) read as
\begin{equation*} R_{\alpha \beta} \equiv - \mathcal{G}_{\alpha \beta} - L_{\alpha \beta} =0 \end{equation*}
where $\mathcal{G}_{\alpha \beta}$ is
\begin{equation*} \mathcal{G}_{\alpha \beta} \equiv \frac{1}{2} \sum_{\lambda, \mu=1}^4 (g^{-1})^{\lambda \mu} \frac{\partial^2 g_{\alpha \beta}}{\partial x^\lambda \partial x^\mu} + H_{\alpha \beta} \end{equation*}
with $H_{\alpha \beta}$ as a polynomial of the $g_{\lambda \mu}$ and $g^{\lambda \mu}$; and $L_{\alpha \beta}$ is
\begin{equation} \label{eq:6.1} L_{\alpha \beta} \equiv \frac{1}{2} \sum_{\mu=1}^4 \bigg{[} g_{\beta \mu} \partial_{\alpha} F^{\mu} + g_{\alpha \mu} \partial_{\beta} F^{\mu} \bigg{]}. \end{equation}
We see that with a choice of coordinates, more precisely if $x^\lambda$ are four $\textit{isothermal}$ $\textit{coordinates}$, it is possible to assume, without restricting the generality of the hypersurface $S$, that the initial data satisfy, besides the four conditions $G^4_\lambda=0$, the so-called $\textit{conditions of isothermy}$:
\begin{equation} \label{eq:6.2} F^{\mu} \equiv \frac{1}{\sqrt{- g}} \sum_{\lambda=1}^4 \frac{ \partial (\sqrt{- g} (g^{-1})^{\lambda \mu})}{\partial x^\lambda}=0 \;{\rm for}\; x^4=0, \end{equation}
which are first-order partial differential equations satisfied by the potentials. 

Thus, as we desired, every equation $R_{\alpha \beta}=0$ does not contain second derivatives besides those of $g_{\alpha \beta}$. The reason why these coordinates are called $\textit{isothermal}$ is that they satisfy the wave equation associated with the metric. 

A function $u$ solving the Laplace equation in the Euclidean setting can be thought of as corresponding to a static solution to the heat equation, and the surfaces of constant $u$ are thus isothermal; thinking of the wave equation associated with the metric as the analogue of the Laplace equation, surfaces on which an isothermal coordinate is constant are thus $\textit{isothermal}$ with respect to that coordinates.   

We shall solve this Cauchy problem for the equations $\mathcal{G}_{\alpha \beta}=0$, verified by the potentials in isothermal coordinates, and we shall prove afterwards that the potentials obtained define indeed a space-time, related to isothermal coordinates, and verify the equations of gravitation $R_{\alpha \beta}=0$.

\subsection{Solution of the Cauchy Problem for the Equations $\mathcal{G}_{\alpha \beta}=0$}
We shall apply to the system
\begin{equation*} \mathcal{G}_{\alpha \beta} \equiv \sum_{\lambda, \mu=1}^4 (g^{-1})^{\lambda \mu} \frac{\partial^2 g_{\alpha \beta}}{\partial x^\lambda \partial x^\mu} + H_{\alpha \beta}=0 \end{equation*}
the results of Chapter 5, by setting $(g^{-1})^{\lambda \mu}=A^{\lambda \mu}$, $g_{\alpha \beta}=W_s$, $H_{\alpha \beta}= f_s$, whereas on the Cauchy data we should make two assumptions.

$\mathbf{Assumptions}$\\
In a domain $(d)$ of the initial surface $S$, $x^4=0$, defined by 
$$|x^i - \bar{x}^i | \leq d:$$
\begin{description}
\item[(1)] The Cauchy data $\varphi_s$ and $\psi_s$ possess partial derivatives continuous and bounded up to the orders five and four, respectively.
\item[(2)] The quadratic form $\sum_{\lambda, \mu=1}^4 (g^{-1})^{\lambda \mu} X_\lambda X_\mu$ is of normal hyperbolic form, i.e. $(g^{-1})^{44}>0$ and $\sum_{i,j=1}^3 (g^{-1})^{ij}X_i X_j$ is negative-definite. In particular, $g=det(g_{\lambda \mu}) \neq 0$.
\end{description}
We deduce from these assumptions the existence of a number $l$ such that for $|g_{\alpha \beta} - \bar{\varphi}_s| \leq l $ one has $g \neq 0$ and we see that, for some unknown functions $g_{\alpha \beta}=W_s$, the inequalities
\begin{equation} \label{eq:6.3} |W_s - \bar{\varphi}_s| \leq l, \hspace{0.5cm} \bigg{|}\frac{\partial W_s}{\partial x^i} - \frac{\partial \bar{\varphi}_s}{\partial x^i} \bigg{|}\leq l, \hspace{0.5cm} \bigg{|} \frac{\partial W_s}{\partial x^4} - \bar{\psi}_s \bigg{|} \leq l \end{equation}
are satisfied. The coefficients of the equations $\mathcal{G}_{\alpha \beta}=0$ (which are here independent of the variables $x^\alpha$) satisfy, as the Cauchy data, the assumptions of Chapter 5, i.e.:

\begin{description}
\item[(1)] The coefficients $A^{\lambda \mu}= (g^{-1})^{\lambda \mu}$ and $f_s=H_{\alpha \beta}$ are rational fractions with denominator $g$ of the $g_{\lambda \mu}= W_s$, and of the $g_{\lambda \mu}= W_s$ and $\frac{\partial W_s}{\partial x^\alpha}$, respectively and they admit partial derivatives with respect to all their arguments up to the fourth order continuous, bounded and satisfying Lipschitz conditions .
\item[(2)]  The quadratic form $\sum_{\lambda, \mu=1}^4 A^{\lambda \mu} X_\lambda X_\mu$ is of normal hyperbolic type, i.e. $A^{44}>0$ and $\sum_{i,j=1}^3 A^{ij}X_i X_j$ is negative-definite. 
\end{description}
Hence, we can apply to the system $\mathcal{G}_{\alpha \beta}=0$ the conclusions of Chapter 5.

$\mathbf{Conclusion}$ There exists a number $\epsilon (x^i_0) \neq 0$ such that, in the domain
\begin{equation*}    |x^i - \bar{x}^i | < d, \hspace{1cm} |x^4| \leq \epsilon(x^i_0) \end{equation*}
the Cauchy problem relative to the equations $\mathcal{G}_{\alpha \beta}=0$ admits a solution which has partial derivatives continuous and bounded up to the fourth order and which verifies the inequalities $(\ref{eq:6.3}$).

Once the solution has been found, it is left to prove that it verifies the conditions of isothermy. Thus, let us show that
\begin{description}
\item[(1°)] $\textit{The solution found of the system}$ $\mathcal{G}_{\alpha \beta}=0$ $\textit{verifies the four equations}$
\begin{equation*} \partial_4 F^\mu =0 \; {\rm for} \; x^4=0. \end{equation*}
Indeed, we have assumed that the initial data satisfy the conditions 
\begin{equation} \label{eq:6.4} G^4_\lambda=0 \; {\rm and} \; F^\mu=0 \hspace{1cm} {\rm for}\; x^4=0.\end{equation}
Hence, we have
\begin{equation*} \begin{split} G^4_\lambda \equiv & - \sum_{\mu=1}^4 (g^{-1})^{4 \mu} \Biggl\{ \mathcal{G}_{\lambda \mu} - \frac{1}{2} g_{\lambda \mu} \sum_{\alpha, \beta=1}^4 (g^{-1})^{\alpha \beta} \mathcal{G}_{\alpha \beta} + L_{\lambda \mu}\\
& - \frac{1}{2} g_{\lambda \mu}\sum_{\alpha, \beta=1}^4 (g^{-1})^{\alpha \beta} L_{\alpha \beta}\Biggr\}, \end{split}\end{equation*}
where $L_{\alpha \beta}$ is defined by ($\ref{eq:6.1}$).
Thus, the solution of the system $\mathcal{G}_{\alpha \beta}=0$ verifies the equations
\begin{equation*} - \frac{1}{2} \sum_{\alpha, \mu=1}^4(g^{-1})^{4 \mu}g_{\lambda \alpha} \partial_\mu F^\alpha - \frac{1}{2}\partial_\lambda F^4 + \frac{1}{2} \sum_{\alpha=1}^4 \delta^4_\lambda \partial_\alpha F^\alpha =0 \; {\rm for}\; x^4=0, \end{equation*}
from which, by virtue of $F^\mu=0$ and $\partial_\lambda F^\mu=0$, we have
\begin{equation*} - \frac{1}{2} (g^{-1})^{44} \sum_{\alpha=1}^4 g_{\lambda \alpha} \partial_4 F^\alpha=0. \end{equation*}
Eventually, we see that the solution found verifies the four equations $\partial_4 F^\mu=0$, for $x^4=0$.
\item[(2°)] $\textit{The solution found of}$ $\mathcal{G}^{\alpha \beta}=0$ $\textit{verifies}$
\begin{equation*} F^\mu =0. \end{equation*}
This property is going to result from the conservation conditions. Indeed, the metric components $g_{\alpha \beta}$ satisfy the four Bianchi identities
\begin{equation*} \sum_{\lambda=1}^4 \nabla_\lambda \bigg{(} R^{\lambda \mu} - \frac{1}{2} (g^{-1})^{\lambda \mu} R \bigg{)}=0, \end{equation*}
where $ R^{\lambda \mu}$ is the Ricci tensor corresponding to this metric. Thus, a solution of the system $\mathcal{G}_{\alpha \beta}=0$ verifies four equations
\begin{equation*} \sum_{\lambda=1}^4 \nabla_\lambda \bigg{(} L^{\lambda \mu} - \frac{1}{2} (g^{-1})^{\lambda \mu} L \bigg{)}=0, \end{equation*}
where $ L^{\lambda \mu} = \sum_{\alpha, \beta=1}^4 (g^{-1})^{\alpha \lambda}(g^{-1})^{\beta \mu}L_{\alpha \beta}$ and $L = \sum_{\alpha, \beta=1}^4 (g^{-1})^{\alpha \beta}L_{\alpha \beta}$.

It turns out from the expression $(\ref{eq:6.1}$) that these equations read as
\begin{equation*} \begin{split}& \frac{1}{2} \sum_{\alpha, \lambda=1}^4(g^{-1})^{\alpha \lambda} \nabla_\lambda (\partial_\alpha F^\mu) + \frac{1}{2} \sum_{\beta, \lambda=1}^4 (g^{-1})^{\beta \mu} \nabla_\lambda (\partial_\beta F^\lambda) \\
&- \frac{1}{2}\sum_{\alpha, \lambda=1}^4 (g^{-1})^{\lambda \mu} \nabla_\lambda (\partial_\alpha F^\alpha)=0, \end{split} \end{equation*}
from which, by developing and simplifying, we obtain 
\begin{equation*} \frac{1}{2} \sum_{\alpha, \lambda=1}^4 (g^{-1})^{\alpha \lambda} \frac{\partial^2 F^\mu}{\partial x^\alpha \partial x^\lambda} + P^\mu(\partial_\alpha F^\lambda)=0, \end{equation*}
where $P$ is a linear combination of the $\partial_\alpha F^\lambda$ whose coefficients are polynomials of the $(g^{-1})^{\alpha \beta}$, $g_{\alpha \beta}$ and of their first derivatives.

Hence, the four quantities $F^\mu$, formed with the $g_{\alpha \beta}$ solutions of $\mathcal{G}_{\alpha \beta}=0$, verify four partial differential equations of the type previously studied. 

The coefficients $A^{\lambda \mu}=(g^{-1})^{\lambda \mu}$ and $f_s=P^\mu$ verify, in $D$, the assumptions of Chapter 5. The quantities $F^\mu$ are by hypothesis vanishing on the domain $(d)$ of $x^4=0$, and we have proved that the same was true of their first derivatives $\partial_\alpha F^\mu$.
\end{description}
 Then, we deduce from the uniqueness theorem that, in $D$, we have
\begin{equation*} F^\mu=0, \hspace{0.5cm} {\rm and} \hspace{0.5cm} \partial_\alpha F^\mu=0. \end{equation*}
Therefore, the metric components verify effectively in $D$ the conditions of isothermy and represent the potentials of an Einstein space-time, solutions of the vacuum Einstein equations $R_{\alpha \beta}=0$.

\subsection{Uniqueness of the Solution}
In order to prove that there exists only one exterior space-time corresponding to the initial conditions given on $S$, one has to prove that every solution of the Cauchy problem formulated in such a way with respect to the equations $R_{\alpha \beta}=0$ can be deduced by a change of coordinates from the solution of this Cauchy problem relative to the equations $\mathcal{G}_{\alpha \beta}=0$. This last solution is unique.

Thus, let us consider a solution $g_{\alpha \beta}$ of the Cauchy problem relative to the equations $R_{\alpha \beta}=0$ and look for a transformation of coordinates
\begin{equation*} \tilde{x}^\alpha = f^\alpha (x^\beta). \end{equation*}
By conserving $S$ point-wise and in such a way that the potentials in the new system of coordinates $\tilde{g}_{\alpha \beta}$ verify the four equations
\begin{equation*} \tilde{F}^\lambda =0, \end{equation*}
we know that the four quantities $\tilde{F}^\lambda$ are invariant which verify the identities
\begin{equation*} \tilde{F}^\lambda \equiv \tilde{\Delta}_2 \tilde{x}^\lambda = \Delta_2 f^\lambda. \end{equation*}
In order for the equations $\tilde{F}^\lambda =0$ to be verified it is therefore necessary and sufficient that the functions $f^\alpha$ satisfy the equations
\begin{equation} \label{eq:6.5} \Delta_2 f^\alpha \equiv \sum_{\lambda, \mu=1}^4 (g^{-1})^{\lambda \mu} \bigg{(} \frac{\partial^2 f^\alpha}{\partial x^\lambda \partial x^\mu} - \sum_{\rho=1}^4 \Gamma \{ \rho, [\lambda, \mu] \} \frac{\partial f^\alpha}{\partial x^\rho} \bigg{)}=0 \end{equation}
which are partial differential equations of second-order, linear, normal hyperbolic in the domain $D$.

If we take for values of the functions $f^\alpha$ and of their first derivatives upon $S$, the following values
\begin{equation} \begin{split} \label{eq:6.6} & f^4=0, \hspace{1cm} \partial_\alpha f^4= \delta^4_\alpha, \\
& f^i=x^i, \hspace{1cm} \partial_\alpha f^i = \delta^i_\alpha, \end{split} \end{equation}
we see that the Cauchy problems formulated in such a way admit in $D$ solutions possessing their partial derivatives up to the fourth order continuous and bounded.

Thus, we have defined a change of coordinates $\tilde{x}^\lambda=f^\lambda (x^\alpha)$ such that, in the new system of coordinates, the potentials $\tilde{g}_{\alpha \beta}$ verify the conditions of isothermy $\tilde{F}^\lambda=0$. It remains to prove that this change of coordinates determines in a unique way the Cauchy data $\tilde{g}_{\alpha \beta}$ and $\tilde{\partial}_4 \tilde{g}_{\alpha \beta}$ for $x^4=0$, in terms of the original data $g_{\alpha \beta}$ and $\partial_4 g_{\alpha \beta}$ for $x^4=0$.

We know that $g_{\alpha \beta}$ are the components of a covariant rank-two tensor
\begin{equation} \label{eq:6.7} g_{\alpha \beta}= \sum_{\lambda, \mu=1}^4 \tilde{g}_{\lambda \mu} \bigg{(} \partial_\alpha f^\lambda \bigg{)} \bigg{(} \partial_\beta f^\mu \bigg{)}, \end{equation}
from which, by making use of $(\ref{eq:6.6}$), we have
\begin{equation*} g_{\alpha \beta}= \tilde{g}_{\alpha \beta}, \hspace{0.5cm} \partial_i g_{\alpha \beta}= \tilde{\partial}_i \tilde{g}_{\alpha \beta} \hspace{1cm} {\rm for}\; x^4= \tilde{x}^4= 0.\end{equation*}
It remains to evaluate the derivatives of the potentials with respect to $x^4$ and $\tilde{x}^4$ for $x^4= \tilde{x}^4=0$. Since $\varphi$ is an arbitrary function of a space-time point we have
\begin{equation*} \partial_4 \varphi= \sum_{\lambda=1}^4 \bigg{(} \tilde{\partial}_{\lambda} \varphi \bigg{)} \bigg{(} \partial_4 f^\lambda \bigg{)},\end{equation*}
from which
\begin{equation} \label{eq:6.8} \partial_4 \varphi= \tilde{\partial}_4 \varphi. \end{equation}
Furthermore, we find by differentiating the equality $(\ref{eq:6.7}$) with respect to $x^4$
\begin{equation*} \partial_4 g_{\alpha \beta}= \sum_{\lambda, \mu=1}^4 \bigg{[} (\partial_4 \tilde{g}_{\lambda \mu})(\partial_\alpha {f}^\lambda)(\partial_\beta f^\mu) + \tilde{g}_{\lambda \mu} \bigg{(} (\partial^2_{\alpha 4} f^\lambda)(\partial_\beta f^\mu) + (\partial^2_{\beta 4} f^\mu)(\partial_\alpha f^\lambda ) \bigg{)} \bigg{]}, \end{equation*}
from which
\begin{equation} \label{eq:6.9} \partial_4 g_{\alpha \beta}= \partial_4 \tilde{g}_{\alpha \beta} + \sum_{\lambda=1}^4 (\tilde{g}_{\lambda \beta}) \bigg{(} \partial^2_{\alpha 4} f^\lambda \bigg{)} + \sum_{\mu=1}^4 (\tilde{g}_{\mu \alpha}) \bigg{(} \partial^2_{\beta 4} f^\mu \bigg{)}. \end{equation}
We deduce also from the initial values $(\ref{eq:6.6}$):
\begin{equation*} \partial^2_{\alpha i} f^\lambda =0. \end{equation*}
The $f^\lambda$ verify on the other hand the conditions of isothermy $(\ref{eq:6.5}$), from which
\begin{equation*} (g^{-1})^{44}\partial^2_{44}f^\lambda= \sum_{\alpha, \beta=1}^4 (g^{-1})^{\alpha \beta} \Gamma \{ \lambda, [\alpha, \beta] \}. \end{equation*}
Hence, $\partial^2_{44}f^\lambda$ is determined in a unique way by the original Cauchy data; this is also equally true of $\partial_4 \tilde{g}_{\alpha \beta}$ for $x^4=0$.

Thus, we have
\begin{thm} Once a solution $g_{\alpha \beta}$ of the Cauchy problem is given in relation to the equations $R_{\alpha \beta}=0$, with the initial data satisfying upon $S$ the stated assumptions, there exists a change of coordinates, conserving $S$ point-wise, such that the potentials $\tilde{g}_{\alpha \beta}$ in the new system of coordinates verify everywhere the conditions of isothermy and represent the solution, unique, of a Cauchy problem, determined in a unique way, relative to the equations $\mathcal{G}_{\alpha \beta}=0$.
\end{thm}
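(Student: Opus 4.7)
The plan is to produce the required change of coordinates by solving an auxiliary linear hyperbolic Cauchy problem for four scalars $f^\alpha$, and then to reduce the uniqueness claim to the uniqueness theorem for the quasilinear system $\mathcal{G}_{\alpha\beta}=0$ proved in Chapter 5. First, I would seek $\tilde{x}^\alpha = f^\alpha(x^\beta)$ fixing $S$ pointwise and making the transformed potentials $\tilde g_{\alpha\beta}$ satisfy $\tilde F^\lambda = 0$. Exploiting the fact that the four quantities $F^\lambda$ transform as $\tilde F^\lambda = \tilde\Delta_2 \tilde x^\lambda = \Delta_2 f^\lambda$, this reduces to the system $\Delta_2 f^\alpha = 0$, which, with $g_{\alpha\beta}$ viewed as known coefficients, is a decoupled system of four second-order linear normal hyperbolic equations of exactly the type treated in Chapter 4. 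Prescribing on $S$ the Cauchy data $f^4=0,\ f^i=x^i,\ \partial_\alpha f^4=\delta^4_\alpha,\ \partial_\alpha f^i=\delta^i_\alpha$, Chapter 4 provides a unique solution $f^\alpha$ with bounded derivatives up to fourth order in a suitable domain $D$; its Jacobian at $S$ is the identity, so $f$ defines a diffeomorphism onto its image near $S$.

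Next, I would determine the Cauchy data of $\tilde g_{\alpha\beta}$ in a unique way from those of $g_{\alpha\beta}$. The tensor law $g_{\alpha\beta}=\sum_{\lambda\mu}\tilde g_{\lambda\mu}(\partial_\alpha f^\lambda)(\partial_\beta f^\mu)$ together with the initial values of $f^\alpha$ gives $\tilde g_{\alpha\beta}=g_{\alpha\beta}$ and $\tilde\partial_i \tilde g_{\alpha\beta}=\partial_i g_{\alpha\beta}$ on $S$. Differentiating once more with respect to $x^4$, the normal derivative $\tilde\partial_4 \tilde g_{\alpha\beta}$ is expressed via $\partial_4 g_{\alpha\beta}$ and the second derivatives $\partial^2_{\alpha 4} f^\lambda$; the mixed tangential--normal ones vanish from the prescribed data, while $\partial^2_{44}f^\lambda|_S$ is fixed algebraically by the isothermality equation $\Delta_2 f^\lambda=0$ restricted to $S$, which is solvable because $(g^{-1})^{44}\neq 0$ by the normal hyperbolicity assumption. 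This pins down $(\tilde g_{\alpha\beta},\tilde\partial_4\tilde g_{\alpha\beta})|_S$ uniquely in terms of the original data.

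Having $\tilde F^\lambda=0$ identically and $R_{\alpha\beta}=0$ tensorially, the identity $R_{\alpha\beta}=-\mathcal{G}_{\alpha\beta}-L_{\alpha\beta}$ with $L_{\alpha\beta}=0$ forces $\mathcal{G}_{\alpha\beta}=0$ for $\tilde g_{\alpha\beta}$ throughout $D$. Thus $\tilde g_{\alpha\beta}$ solves a Cauchy problem for $\mathcal{G}_{\alpha\beta}=0$ whose data were just shown to be uniquely determined, and Chapter 5 provides the one and only such solution. Any other coordinate representative of the given space-time in which the $f^\alpha$ obey the same $\Delta_2 f^\alpha=0$ with the same initial values must yield the same $\tilde g_{\alpha\beta}$, establishing uniqueness up to the specified diffeomorphism.

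The main obstacle I anticipate is the consistency check on $S$ between three layers of conditions: the imposed isothermality $F^\lambda=0$, the Hamiltonian/momentum constraints $G^4_\lambda=0$ satisfied by the Cauchy data, and the Bianchi-derived homogeneous wave equation $\tfrac12\sum(g^{-1})^{\alpha\lambda}\partial^2_{\alpha\lambda}F^\mu + P^\mu(\partial F)=0$ which propagates $F^\lambda=0$ off $S$ provided $\partial_4 F^\mu|_S=0$. Verifying this last vanishing from the constraint equations alone, via the relation expressing $G^4_\lambda$ linearly in $\partial_\mu F^\nu$ on the initial surface and the nondegeneracy of $(g^{-1})^{44} g_{\lambda\alpha}$, is the delicate step that allows the uniqueness argument above to close. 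Once this compatibility is secured, the rest of the proof assembles from the results of Chapters 4 and 5 essentially by bookkeeping.
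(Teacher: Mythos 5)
Your proposal is correct and follows essentially the same route as the paper: you construct the isothermal coordinates by solving the linear normal hyperbolic system $\Delta_2 f^\alpha=0$ with the identity Cauchy data $f^4=0$, $f^i=x^i$, $\partial_\alpha f^\lambda=\delta^\lambda_\alpha$ on $S$, show via the tensor transformation law and the relation $(g^{-1})^{44}\partial^2_{44}f^\lambda=\sum_{\alpha,\beta}(g^{-1})^{\alpha\beta}\Gamma\{\lambda,[\alpha,\beta]\}$ that the transformed Cauchy data are uniquely determined, and then invoke the uniqueness of the solution of $\mathcal{G}_{\alpha\beta}=0$ from Chapter 5. The compatibility issue you flag (propagation of $F^\lambda=0$ via $\partial_4 F^\mu|_S=0$ and the Bianchi-derived wave equation) is handled in the paper in the preceding existence subsection and does not affect this uniqueness argument.
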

Therefore, we can conclude that, in gravitational physics:
\begin{thm} There exists one and only one exterior space-time corresponding to the initial conditions assigned upon $S$. \end{thm}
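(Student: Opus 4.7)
The plan is to derive this statement as an essentially immediate corollary of the two results already in hand: (i) the existence and uniqueness of a solution to the Cauchy problem for the reduced system $\mathcal{G}_{\alpha\beta}=0$ in isothermal coordinates, established via the non-linear hyperbolic theory of Chapter 5; and (ii) the preceding theorem stating that every solution of $R_{\alpha\beta}=0$ with the prescribed Cauchy data on $S$ can be brought, by a coordinate change fixing $S$ pointwise, into an isothermal gauge in which it solves a uniquely determined Cauchy problem for $\mathcal{G}_{\alpha\beta}=0$.

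For existence, I would simply invoke the construction already carried out: given Cauchy data $\varphi_s,\psi_s$ satisfying the regularity and hyperbolicity assumptions together with the four constraint equations $G^4_\lambda=0$ and the isothermy conditions $F^\mu=0$ on $S$, the Chapter~5 machinery produces a solution $g_{\alpha\beta}$ of $\mathcal{G}_{\alpha\beta}=0$ in a domain $D$. The two arguments already given (using $G^4_\lambda=0$ to obtain $\partial_4 F^\mu=0$ on $S$, and then the Bianchi identities to derive a linear homogeneous hyperbolic system for the $F^\mu$ themselves with vanishing Cauchy data) then force $F^\mu\equiv 0$ throughout $D$, so $L_{\alpha\beta}\equiv 0$ and hence $R_{\alpha\beta}=0$. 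This produces an exterior space-time realizing the given initial conditions.

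For uniqueness, I would take two solutions $g^{(1)}_{\alpha\beta}$ and $g^{(2)}_{\alpha\beta}$ of $R_{\alpha\beta}=0$ with the same Cauchy data on $S$. By the preceding theorem, each admits a coordinate change $\tilde x^\alpha=f^\alpha_{(k)}(x^\beta)$, $k=1,2$, fixing $S$ pointwise, obtained by solving the linear hyperbolic system $\Delta_2 f^\alpha=0$ with the initial conditions \eqref{eq:6.6}, such that in the new coordinates the transformed potentials $\tilde g^{(k)}_{\alpha\beta}$ satisfy the isothermy conditions $\tilde F^\lambda=0$. The computation carried out with equations \eqref{eq:6.7}--\eqref{eq:6.9} shows that the induced Cauchy data for $\tilde g^{(k)}_{\alpha\beta}$ on $S$ are completely determined by the original data $g_{\alpha\beta},\partial_4 g_{\alpha\beta}$ (through the isothermy condition, which fixes $\partial^2_{44}f^\lambda$ on $S$). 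Thus both $\tilde g^{(1)}_{\alpha\beta}$ and $\tilde g^{(2)}_{\alpha\beta}$ solve the \emph{same} Cauchy problem for $\mathcal{G}_{\alpha\beta}=0$, and the uniqueness statement of Chapter~5 (applied in the isothermal gauge as in Section~6.1.1) forces $\tilde g^{(1)}_{\alpha\beta}=\tilde g^{(2)}_{\alpha\beta}$ in a common domain.

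Consequently $g^{(1)}_{\alpha\beta}$ and $g^{(2)}_{\alpha\beta}$ differ only by the composition $f_{(2)}\circ f_{(1)}^{-1}$ of the two gauge diffeomorphisms, both of which fix $S$ pointwise and preserve the Cauchy data; recalling the equivalence relation on pairs $(M,g)$ introduced in Chapter~2, this means the two solutions represent the \emph{same} exterior space-time. The main conceptual obstacle is not any new estimate but rather the correct bookkeeping of gauge freedom: one must verify that the two auxiliary hyperbolic problems for $f^\alpha_{(k)}$ are solvable in a common neighbourhood of $S$ and that the induced isothermal Cauchy data genuinely coincide, which is exactly the content of the transformation computation already displayed. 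Everything else reduces to a direct appeal to results already proved.
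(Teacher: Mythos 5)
Your proposal is correct and follows essentially the same route as the paper: existence via the reduced system $\mathcal{G}_{\alpha\beta}=0$ together with the propagation of the isothermy conditions $F^\mu=0$ through the Bianchi identities, and uniqueness by reducing any solution of $R_{\alpha\beta}=0$ to the uniquely determined isothermal Cauchy problem via the gauge transformation $\Delta_2 f^\alpha=0$ fixing $S$ pointwise. Your explicit two-solution formulation and the appeal to the equivalence relation on pairs $(M,g)$ are merely a slightly more careful restatement of the paper's argument, not a different method.
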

Once we have proved that there exists a unique solution of the Cauchy Problem for Einstein Equations, we will proceed, in the next part of this Chapter, with the study of the causal structure of space-time.

\section{Causal Structure of Space-Time}
Given a space-time, from a physical point of view, it would seem reasonable to suppose that there is a local thermodynamic arrow of time defined continuously at every of its point, but for our purpose we shall only require that it should be possible to define continuously a division of non-spacelike vectors into two classes, which we arbitrarily label future-directed and past-directed. If this is the case, we shall say that space-time is $\textit{time-orientable}$. 

Thus, following Hawking-Ellis \cite{hawking1973large}, by letting $(M, g)$ be a space-time which is time-orientable as explained and given two sets $\mathcal{L}$ and $\mathcal{U}$, we can give the following definitions:

The $\textit{chronological future}$ $I^+(\mathcal{L}, \mathcal{U})$ of $\mathcal{L}$ relative to $\mathcal{U}$ is the set of all points in $\mathcal{U}$ which can be reached from $\mathcal{L}$ by a future-directed timelike curve in $\mathcal{U}$. We shall denote $I^+(\mathcal{L}, M)$ as $I^+(\mathcal{L})$ and it is an open set, since if $p \in M$ can be reached by a future-directed time-like curve from $\mathcal{L}$, then there is a small neighbourhood of $p$ which can be so reached. Hence, if $p \in M$ we can define:
\begin{equation} \label{eq:6.10} I^{+} (p) \equiv \{ q \in M : p <<q \}, \end{equation}
i.e. $I^{+}(p)$ is the set of all points $q$ of $M$ such that there is a future-directed timelike curve from $p$ to $q$. Similarly, one defines the $\textit{chronological past}$ of $p$
\begin{equation} \label{eq:6.11} I^{-}(p) \equiv \{ q \in M: q << p \}. \end{equation}
The $\textit{causal future}$ of $\mathcal{L}$ relative to $\mathcal{U}$ is denoted by $J^+(\mathcal{L}, \mathcal{U})$ and it is defined as the union of $\mathcal{L} \cap \mathcal{U}$ with the set of all points in $\mathcal{U}$ which can be reached from $\mathcal{L}$ by a future-directed non-spacelike curve in $\mathcal{U}$. We denote $J^+(\mathcal{L}, M)$ as $J^+(\mathcal{L})$ and it is the region of space-time which can be causally affected by events in $\mathcal{L}$. It is not necessarily a closed set even when $\mathcal{L}$ is a single point. Therefore, if $p \in M$ we can define
\begin{equation} \label{eq:6.12} J^{+}(p) \equiv \{ q \in M: p \leq q \}, \end{equation}
and similarly for the $\textit{causal past}$
\begin{equation} \label{eq:6.13} J^{-}(p) \equiv \{ q \in M: q \leq p \}, \end{equation}
where $a \leq b$ means that there exists a future-directed non-spacelike curve from $a$ to $b$. 
\begin{figure}
\centering
\includegraphics{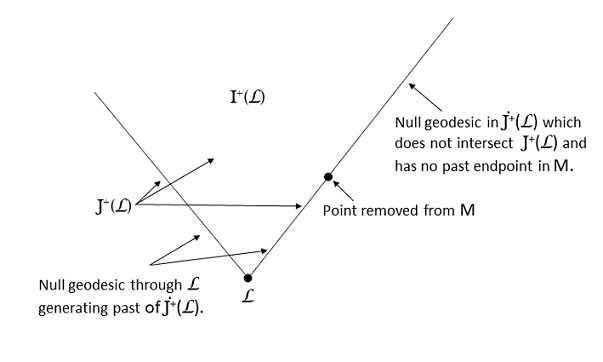} 
\caption{When a point has been removed from Minkowski space, the causal future $J^+(\mathcal{L})$ of a closed set $\mathcal{L}$ is not necessarily closed. Further parts of the boundary of the future of $\mathcal{L}$ may be generated by null geodesic segments which have no past endpoints in $M$.}\label{fig:6}
\end{figure}
A non-spacelike curve between two points which was not a null geodesic curve could be deformed into a timelike curve between two points. Thus, if $\mathcal{U}$ is an open set and $p$, $q$ and $r \in \mathcal{U}$, then we have
 \begin{equation*}
\left\{ \begin{array}{l}
q \in J^+(p, \mathcal{U}), r \in I^+(q, \mathcal{U}) \\
q \in I^+(p, \mathcal{U}), r \in J^+(q, \mathcal{U})
\end{array}\right\}
\end{equation*}
both imply $r \in I^+(p, \mathcal{U})$. From this follows that $\overline{I^+}(p, \mathcal{U})= \overline{J^+}(p, \mathcal{U})$ and $\dot{I^+}(p, \mathcal{U})= \dot{J^+}(p, \mathcal{U})$, where for any set $I$, $\bar{I}$ is the $\textit{closure}$ of $I$ and $\dot{I} \equiv \overline{I} \cap \overline{(M - I)}$ denotes the $\textit{boundary}$ of $I$. This example, illustrates a useful technique for constructing space-times with given causal properties: one starts with some simple space-time, such as Minkowski space, cuts out any closed set and, if desired, pastes it together in an appropriate way. The result is still a manifold with a Lorentz metric and therefore still a space-time even though it may look incomplete where the points have been cut out. This incompleteness can be resolved by a conformal transformation which sends the cut out points to infinity. For our purpose, we give a few more definitions.

\begin{defn}
The $\textit{future horismos}$ of $\mathcal{L}$ relative to $\mathcal{U}$, denoted by $E^+(\mathcal{L}, \mathcal{U})$, is defined has
\begin{equation} E^+(\mathcal{L}, \mathcal{U}) \equiv J^+(\mathcal{L}, \mathcal{U}) - I^+(\mathcal{L}, \mathcal{U}); \end{equation}
we write $E^+(\mathcal{L})$ for $E^+(\mathcal{L}, M)$.
\end{defn} 
If $\mathcal{U}$ is an open set, points of $E^+(\mathcal{L}, \mathcal{U})$ must lie on future-directed null geodesics from $\mathcal{L}$. Similary, we can define the $\textit{past horismos}$ $E^-(\mathcal{L}, \mathcal{U})$. 

\begin{defn} A point $p$ is a $\textit{future endpoint}$ of a future-directed non-spacelike curve $\lambda: F \rightarrow M$, if for every neighbourhood $V$ of $p$ there is a $t \in F$ such that $\lambda(t_1) \in V$ for every $t_1 \in F$ with $t_1 \geq t$. \end{defn}

\begin{defn} A non-spacelike curve is future-inextendible in a set $\mathcal{L}$ if it has no future endpoint in $\mathcal{L}$. \end{defn}

At this stage, to derive the properties of the boundaries we introduce the concepts of achronal and future sets.

\begin{defn}
A set $\mathcal{L}$ is said to be $\textit{achronal}$ if $I^+(\mathcal{L}) \cap \mathcal{L}$ is empty, in other words if no two points of $\mathcal{L}$ can be joined by a timelike curve.
\end{defn} 
\begin{defn}
A set $\mathcal{L}$ is said to be a $\textit{future set}$ if $I^+(\mathcal{L}) \subset \mathcal{L}$. Hence, $M - \mathcal{L}$ is a $\textit{past set}$.
\end{defn} 
Examples of future sets include $I^+(\mathcal{N})$ and $J^+(\mathcal{N})$ where $\mathcal{N}$ is any set.
The causal structure of $(M, g)$ is the collection of past and future sets at all points of $M$ together with their properties as shown in figure ($\ref{fig:6}$).

\begin{prop} \label{prop:6.2} If $\mathcal{L}$ is a future set then $\dot{\mathcal{L}}$ is a closed, imbedded, achronal three-dimensional $C^1$ submanifold.
\end{prop}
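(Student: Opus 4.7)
The plan is to establish the three properties in turn. Closedness is immediate from the definition $\dot{\mathcal{L}} = \overline{\mathcal{L}} \cap \overline{M \setminus \mathcal{L}}$, so the real work lies in proving achronality and producing local graph charts that exhibit $\dot{\mathcal{L}}$ as a three-dimensional submanifold. A guiding observation I would record at the outset is: if $p \in \dot{\mathcal{L}}$ then $I^{+}(p) \subset \mathcal{L}$ and $I^{-}(p) \subset M \setminus \mathcal{L}$. Indeed, for any $q \in I^{+}(p)$, openness of $I^{-}(q)$ together with $p \in \overline{\mathcal{L}}$ yields some $r \in \mathcal{L} \cap I^{-}(q)$, so $q \in I^{+}(r) \subset I^{+}(\mathcal{L}) \subset \mathcal{L}$; the dual statement for $I^{-}(p)$ follows because $M \setminus \mathcal{L}$ is a past set.

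Achronality is then immediate by contradiction. Suppose $p,q \in \dot{\mathcal{L}}$ with $q \in I^{+}(p)$. By the observation above applied to $p$, the point $q$ lies in $\mathcal{L}$, and in fact in the open set $I^{+}(r) \subset \mathcal{L}$ constructed in the preceding paragraph, so $q$ has a whole open neighborhood inside $\mathcal{L}$. This contradicts $q \in \overline{M \setminus \mathcal{L}}$.

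For the submanifold structure, I would introduce a local timelike congruence near an arbitrary $p \in \dot{\mathcal{L}}$. Choose a convex normal neighborhood $U$ of $p$ and a smooth timelike vector field $V$ on $U$ (the time-orientation field restricted to $U$ will do), and pick adapted coordinates $(x^{0},x^{1},x^{2},x^{3})$ on $U$ in which the integral curves of $V$ are the lines $(x^{1},x^{2},x^{3}) = \text{const}$. By the achronality already established, each such integral curve meets $\dot{\mathcal{L}}$ in at most one point. By the opening observation, the integral curve through $p$ itself moves from $M \setminus \mathcal{L}$ into $\mathcal{L}$ as $x^{0}$ crosses $x^{0}(p)$, and a continuity/connectedness argument shows that every integral curve starting in $I^{-}(p) \cap U$ and ending in $I^{+}(p) \cap U$ must cross $\dot{\mathcal{L}}$ at some value of $x^{0}$, which is then unique. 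Shrinking $U$ if necessary, this identifies $\dot{\mathcal{L}} \cap U$ with the graph $x^{0} = f(x^{1},x^{2},x^{3})$ over an open subset of $\mathbb{R}^{3}$, and the chart $(x^{1},x^{2},x^{3}) \mapsto (f(x^{1},x^{2},x^{3}),x^{1},x^{2},x^{3})$ realizes it as an imbedded topological three-submanifold.

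The main obstacle will be controlling the regularity of the graph function $f$. Continuity of $f$ follows from the openness of $I^{\pm}$ and the uniqueness of the crossing; to promote this to the Lipschitz regularity underlying the $C^{1}$ claim, I would exploit a cone of admissible congruences: any smooth vector field $V'$ on $U$ sufficiently close to $V$ in direction is still timelike, so its integral curves likewise meet $\dot{\mathcal{L}}$ in unique points, and letting $V'$ range over such a cone bounds the slope of $f$ by the opening angle of the cone of timelike directions at each point, yielding a uniform Lipschitz estimate on $f$. Rademacher's theorem then furnishes differentiability of $f$ almost everywhere; strict $C^{1}$ regularity at every point is in fact delicate (null geodesic generators of the horismos are a well-known source of ridges), so the $C^{1}$ assertion in the statement should be understood in the standard Hawking--Ellis sense of a Lipschitz imbedded submanifold, with the argument above providing the full proof at that level of regularity.
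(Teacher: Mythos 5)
The paper states this proposition without proof (it is imported verbatim from Hawking--Ellis, Proposition~6.3.1), so there is no in-text argument to compare against; judged on its own, your proof is correct and is the standard one. The key observation $I^{+}(p)\subset\mathcal{L}$, $I^{-}(p)\subset M\setminus\mathcal{L}$ for boundary points, the resulting achronality, and the local-graph construction over a timelike congruence are exactly the Hawking--Ellis argument, and your connectedness step is sound once one notes (as your own lemma gives) that $I^{+}(p)\subset\operatorname{int}\mathcal{L}$ and $I^{-}(p)\subset M\setminus\overline{\mathcal{L}}$, so the integral curves pass from one open set to the other and must cross the boundary. Your closing caveat is also right: achronal boundaries are in general only locally Lipschitz (Hawking--Ellis write $C^{1-}$), and the ``$C^{1}$'' in the statement as transcribed here should be read in that weaker sense, which is precisely the regularity your cone argument delivers.
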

We shall call a set with the properties of $\dot{\mathcal{L}}$ an $\textit{achronal boundary}$. Such a set can be divided into four disjoint subset $\dot{\mathcal{L}}_N$, $\dot{\mathcal{L}}_+$, $\dot{\mathcal{L}}_-$ and $\dot{\mathcal{L}}_0$. For a point $q \in \dot{\mathcal{L}}$ there may or may not exist points $p, r \in \dot{\mathcal{L}}$ with $p \in E^-(q)-q$, $r \in E^+(q)-q$. The different possibilities define the subset of $\dot{\mathcal{L}}$ according to the scheme in table ($\ref{table:1}$).
\begin{table}
\caption{Scheme of $\dot{\mathcal{L}}$ subsets; see the discussion below.}
\label{table:1}
\center
$q \in$ \begin{tabular}{|c|c|c|}
\hline
$ \exists p$ & $\not \exists p$ & $\null$ \\
\hline
$\dot{\mathcal{L}}_N$ & $\dot{\mathcal{L}}_+$ & $\exists r$  \\
\hline
$\dot{\mathcal{L}}_-$ & $\dot{\mathcal{L}}_0$ & $\not \exists r$  \\
\hline
\end{tabular}

\end{table}
If $q \in \dot{\mathcal{L}}_N$, then $r \in E^+(p)$ since $r \in J^+(p)$ and $r \not \in I^+(p)$. This means that there is a null geodesic segment in $\dot{\mathcal{L}}$ through $q$. If $q \in \dot{\mathcal{L}}_+$ (respectively $\dot{\mathcal{L}}_-$) then $q$ is the future (past) endpoint of a null geodesic in $\dot{\mathcal{L}}$. The subset $\dot{\mathcal{L}}_0$ is spacelike. A useful condition for a point to lie in $\dot{\mathcal{L}}_N$, $\dot{\mathcal{L}}_+$ or $\dot{\mathcal{L}}_-$ is given by the following lemma due to Penrose:

\begin{lem} Let $W$ be a neighbourhood of $q \in \dot{\mathcal{L}}$ where $\mathcal{L}$ is a future set. Then
\begin{description}
\item[(i)] $I^+(q) \subset I^+ (\mathcal{L} - W)$ implies $q \in \dot{\mathcal{L}}_N \cup \dot{\mathcal{L}}_+$, 
\item[(ii)] $I^-(q) \subset I^+ (M - \mathcal{L} - W)$ implies $q \in \dot{\mathcal{L}}_N \cup \dot{\mathcal{L}}_-$.
\end{description}
\end{lem}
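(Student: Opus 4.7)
The strategy is to prove (i) by a limit curve argument that produces a past null-connected point of $\dot{\mathcal{L}}$ at $q$ (so that $q$ becomes the future endpoint of a null geodesic lying in $\dot{\mathcal{L}}$, placing it in $\dot{\mathcal{L}}_N \cup \dot{\mathcal{L}}_+$); part (ii) then follows from the time-reversed argument applied to the past set $M - \mathcal{L}$ in place of $\mathcal{L}$. The hypothesis of (i) is engineered precisely so that the points from which $q$'s chronological future is generated can be kept away from $q$, thereby supplying a non-trivial past endpoint for the limit curve.

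First I would pick a sequence $q_n \in I^+(q)$ with $q_n \to q$, which exists because $I^+(q)$ is open and $q$ lies on its boundary in any Lorentzian manifold. By the assumption $I^+(q) \subset I^+(\mathcal{L} - W)$, each $q_n$ lies in $I^+(p_n)$ for some $p_n \in \mathcal{L} - W$, connected to $q_n$ by a future-directed timelike curve $\gamma_n$; because $\mathcal{L}$ is a future set and $p_n \in \mathcal{L}$, the whole curve $\gamma_n$ lies in $\mathcal{L}$. I would then invoke the standard Lorentzian limit-curve lemma (normalising the $\gamma_n$ via an auxiliary Riemannian metric and extracting a Cauchy subsequence uniformly on compact parameter intervals): a subsequence converges $C^0$ to a past-directed non-spacelike curve $\gamma$ emanating from $q$, every point of which lies in $\bar{\mathcal{L}}$. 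Since the $p_n$ remain in the closed set $M - W$, any accumulation point of $(p_n)$ is a past endpoint $p$ of $\gamma$ with $p \in (M - W) \cap \bar{\mathcal{L}}$ and $p \neq q$ (as $q \in W$); if no such accumulation point exists, one instead selects any intermediate point $p \neq q$ on $\gamma$ with $p \in \bar{\mathcal{L}}$ and runs the remainder of the argument with this $p$.

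Two rigidity arguments then close the proof. First, $\gamma$ contains no timelike segment: otherwise $q \in I^+(p)$, and using $I^+(\bar{\mathcal{L}}) = I^+(\mathcal{L}) \subset \mathcal{L}$ together with the openness of $I^+(\mathcal{L})$, the point $q$ would lie in the interior of $\mathcal{L}$, contradicting $q \in \dot{\mathcal{L}}$; consequently $\gamma$ is a future-directed null geodesic from $p$ to $q$. Second, $p$ cannot lie in $\mathrm{int}(\mathcal{L})$: otherwise one chooses $p' \in I^-(p) \cap \mathrm{int}(\mathcal{L})$, and the standard composition rule $p' \ll p \leq q \Rightarrow p' \ll q$ again places $q$ in the open set $I^+(\mathcal{L})$, the same contradiction. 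Hence $p \in \bar{\mathcal{L}} \setminus \mathrm{int}(\mathcal{L}) = \dot{\mathcal{L}}$, and $p \in \dot{\mathcal{L}} \cap (E^-(q) \setminus \{q\})$ exhibits $q$ as the future endpoint of a null geodesic in $\dot{\mathcal{L}}$, i.e.\ $q \in \dot{\mathcal{L}}_N \cup \dot{\mathcal{L}}_+$, as required.

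The main obstacle is the limit-curve step: one must guarantee that the timelike curves $\gamma_n$ yield a non-degenerate limit, meaning one containing a point $p \neq q$. The role of the removed neighbourhood $W$ in the hypothesis is exactly to confine the starting points $p_n$ to the closed set $M - W$, keeping them bounded away from $q$ and supplying either (a) a genuine past endpoint of the limit curve outside $W$ when the $p_n$ cluster, or (b) a non-trivial past portion of a past-inextendible limit curve otherwise. Verifying that the standard limit-curve construction (which requires no global causality hypothesis, but careful reparametrisation and compactness of parameter intervals) produces the desired null geodesic segment in both cases is the delicate technical point one must handle with care.
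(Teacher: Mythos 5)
Your argument is correct, but there is nothing in the paper to compare it against: the lemma is stated as a quoted result of Penrose (Lemma 6.3.2 of Hawking--Ellis) and the paper supplies no proof. What you have written is essentially the standard literature proof: a limit-curve construction applied to the future-directed timelike curves $\gamma_n \subset \mathcal{L}$ from $p_n \in \mathcal{L} - W$ to $q_n \rightarrow q$, with non-degeneracy of the limit guaranteed precisely because the $p_n$ are confined to the closed set $M - W$ and hence every $\gamma_n$ must cross a fixed annulus around $q$; this is followed by the two rigidity steps (the limit curve is an achronal null geodesic, since $q \in I^+(p)$ would put $q$ in the open set $I^+(\mathcal{L}) \subset {\rm int}(\mathcal{L})$, and $p \notin {\rm int}(\mathcal{L})$ by the push-up property $p' \in I^-(p)$, $p \leq q$ $\Rightarrow$ $p' \in I^-(q)$). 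All of these steps are sound, and your identification of the limit-curve step as the delicate point is accurate.

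Two remarks on how your conclusion meshes with the paper's text, neither of which is a fault of your proof. First, your argument establishes the existence of $p \in \dot{\mathcal{L}} \cap (E^-(q) - q)$, i.e.\ membership in the ``$\exists p$'' column of the classification; for this to equal $\dot{\mathcal{L}}_N \cup \dot{\mathcal{L}}_+$ one must read $\dot{\mathcal{L}}_+$ as the cell $(\exists p, \not\exists r)$, which is consistent with the paper's prose (``$q$ is the future endpoint of a null geodesic in $\dot{\mathcal{L}}$'') and with the subsequent application to $\dot{J}^+(K) - K$, but not with Table 7.1 as printed, where the entries $\dot{\mathcal{L}}_+$ and $\dot{\mathcal{L}}_-$ are interchanged; the table is mis-transcribed, and your reading is the correct one. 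Second, the hypothesis of part (ii) as printed, $I^-(q) \subset I^+(M - \mathcal{L} - W)$, should read $I^-(q) \subset I^-(M - \mathcal{L} - W)$; your one-line reduction of (ii) to the time-dual of (i) applied to the past set $M - \mathcal{L}$ (whose boundary is again $\dot{\mathcal{L}}$) implicitly uses this corrected form, and is then valid.
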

An example is given by $\dot{J}^+(K)= \dot{I}^+(K)$, that it the boundary of the future of a closed set $K$. It is an achronal manifold and by the above lemma, every point of $\dot{J}(K) - K$ belongs to $[\dot{J}^+(K)]_N$ or $[\dot{J}^+(K)]_+$. This means that $\dot{J}(K) - K$ is generated by null geodesic segments which may have future endpoints in $\dot{J}^+(K) - K$ but which, if they do have past endpoints, can have them only on $K$ itself. 

We shall say that an open set $\mathcal{U}$ is $\textit{causally simple}$ if for every compact set $K \subset U$,
\begin{equation*} \dot{J}^+(K)  \cap \mathcal{U} = E^+(K) \cap \mathcal{U} \hspace{0.5cm} {\rm and}  \hspace{0.5cm} \dot{J}^-(K)  \cap \mathcal{U} = E^-(K) \cap \mathcal{U}. \end{equation*}
This is equivalent to say that $\dot{J}^+(K)$ and $\dot{J}^-(K)$ are closed in $U$.

\subsection{Causality conditions}
Since the causality holds only locally the global question is left open. Thus we did not rule out the possibility that on large scale there might be closed timelike curves. However the existence of such curves would seem to lead the possibilities of logical paradoxes. Thus, we are more ready to believe that space-time satisfies the $\textit{chronology condition}$, i.e. there are no closed timelike curves. However, we must bear in mind the possibility that there might be points of space-time at which this condition does not hold. The set of all such points will be called the $\textit{chronology violating}$ set of $M$ and it is defined as follows:

\begin{prop} The chronology violating set of $M$ is the disjoint union of sets of the form $I^+(q) \cap I^-(q)$, with $q \in M$. \end{prop}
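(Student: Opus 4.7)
My plan is to proceed by defining an equivalence relation on the chronology violating set whose equivalence classes are precisely the sets $I^+(q)\cap I^-(q)$, and then argue that these classes form the required disjoint union.

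First I would unpack the definition: a point $p$ lies in the chronology violating set $\mathcal{C}$ iff there is a closed timelike curve through $p$, which is equivalent to $p \in I^+(p)$. Observe that $p \in I^+(p)$ also forces $p \in I^-(p)$ (traverse the closed timelike curve in the reverse time orientation), so $p \in \mathcal{C}$ iff $p \in I^+(p) \cap I^-(p)$. In particular, for every $p \in \mathcal{C}$ the set $I^+(p)\cap I^-(p)$ is a non-empty open neighbourhood of $p$, and conversely every point $r$ of such a set automatically lies in $\mathcal{C}$ (concatenate a future-directed timelike curve $p \to r$ with a future-directed timelike curve $r \to p$ to obtain a closed timelike curve through $r$). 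Therefore $\mathcal{C} = \bigcup_{q \in \mathcal{C}} \bigl(I^+(q)\cap I^-(q)\bigr)$, establishing that $\mathcal{C}$ is covered by sets of the required form.

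The key step is then to promote this covering to a partition. Define on $\mathcal{C}$ the relation $p \sim q$ iff $p \in I^+(q) \cap I^-(q)$. I would show that $\sim$ is an equivalence relation and that the class of $p$ equals $I^+(p)\cap I^-(p)$. Reflexivity is exactly the rewriting of $p \in \mathcal{C}$ done above; symmetry is immediate from the definition. The main work is transitivity, which reduces to the following invariance lemma: \emph{if $q \in I^+(p)\cap I^-(p)$ then $I^+(p)\cap I^-(p) = I^+(q)\cap I^-(q)$.} To prove $\subseteq$, take $r \in I^+(p)\cap I^-(p)$; from $q \in I^-(p)$ we get $p \in I^+(q)$, and concatenating with a future-directed timelike curve $p \to r$ gives $r \in I^+(q)$; similarly $r \in I^-(q)$. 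The reverse inclusion is the symmetric argument. The only non-trivial ingredient is the transitivity of the chronological relation $\ll$, i.e.\ that a future-directed timelike curve from $a$ to $b$ concatenated with a future-directed timelike curve from $b$ to $c$ can be deformed to a future-directed timelike curve from $a$ to $c$; I would assume this as a standard local fact about Lorentzian manifolds.

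With the invariance lemma in hand, the equivalence classes of $\sim$ are exactly the sets $I^+(q)\cap I^-(q)$, and they partition $\mathcal{C}$ into pairwise disjoint open subsets, each of the required form. The main obstacle—and really the only subtlety—is the invariance lemma: everything else is bookkeeping about definitions. Once that lemma is in place, the statement follows by reading off the disjoint union of equivalence classes.
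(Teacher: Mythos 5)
Your argument is correct, and it is essentially the standard Hawking--Ellis proof of this fact: the paper itself states the proposition without proof (it is quoted from Hawking--Ellis, where the argument is exactly the covering $\mathcal{C}=\bigcup_{q\in\mathcal{C}}\bigl(I^+(q)\cap I^-(q)\bigr)$ together with the observation that two such sets either coincide or are disjoint). Your invariance lemma is precisely that observation, and the only external input, transitivity of the chronological relation $\ll$, is indeed standard (concatenation of piecewise timelike curves already suffices; no deformation is needed). One could streamline slightly: $p\in I^+(p)$ means $p\ll p$, which by definition of $I^-(p)$ already gives $p\in I^-(p)$ without reversing any curve; but this does not affect the validity of the proof.
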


\begin{prop} If $M$ is compact, the chronology set of $M$ is non-empty. \end{prop}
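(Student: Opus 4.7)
The plan is to argue by compactness that among the open sets $I^{+}(p)$ as $p$ ranges over $M$ there must exist a finite cover, and then to show by a minimality argument that one of the generating points $p$ must satisfy $p \in I^{+}(p)$, which by the preceding proposition places $p$ in the chronology violating set.

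First I would verify that $\{I^{+}(p) : p \in M\}$ is an open cover of $M$. Each $I^{+}(p)$ is open by the definitions already recorded. To see that it covers $M$, fix any $x \in M$; using time-orientability one chooses a past-directed timelike tangent vector at $x$ and a local coordinate chart in which a short integral curve of that vector field reaches a point $q \in I^{-}(x)$. Then $x \in I^{+}(q)$, so $x$ belongs to the cover. By compactness of $M$, there is a finite subcover
\[
M = I^{+}(p_{1}) \cup I^{+}(p_{2}) \cup \cdots \cup I^{+}(p_{n}),
\]
and we may assume $n$ has been chosen minimal, so that no $I^{+}(p_{j})$ is contained in the union of the others.

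Next I would locate $p_{1}$ in this cover: there exists some index $i$ with $p_{1} \in I^{+}(p_{i})$. If $i = 1$, then $p_{1} \ll p_{1}$ yields $p_{1} \in I^{+}(p_{1}) \cap I^{-}(p_{1})$, and by the preceding proposition this point lies in the chronology violating set, finishing the proof. Otherwise $i \neq 1$, and using transitivity of the chronological relation ($p_{1} \ll p_{i}$ together with $q \in I^{+}(p_{1})$ implies $p_{i} \ll q$), one obtains the inclusion $I^{+}(p_{1}) \subseteq I^{+}(p_{i})$. But then $I^{+}(p_{1})$ can be removed from the subcover without losing coverage, contradicting minimality of $n$. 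Hence the first case must occur and the chronology violating set is non-empty.

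The argument is largely formal; the main subtlety to verify carefully is the first step, namely that $\{I^{+}(p)\}_{p \in M}$ really does cover $M$. This is where time-orientability (assumed at the outset of the causal structure discussion) is essential, since without a consistent global choice of future-directed timelike vectors one cannot produce a point $q \in I^{-}(x)$ in a uniform way. The other potential concern --- ensuring the minimality reduction is legitimate --- is straightforward because any finite subcover admits a minimal sub-subcover obtained by repeatedly discarding redundant members.
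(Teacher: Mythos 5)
Your proof is correct, and it is precisely the classical Hawking--Ellis argument: cover $M$ by the open sets $I^{+}(p)$, extract an irredundant finite subcover by compactness, and observe that $p_{1}$ must lie in $I^{+}(p_{1})$ itself, since $p_{1}\in I^{+}(p_{i})$ for $i\neq 1$ would force $I^{+}(p_{1})\subseteq I^{+}(p_{i})$ and contradict irredundancy. The paper states this proposition without proof (deferring to the cited literature), so there is nothing to compare against; your argument, including the transitivity step and the reduction to a minimal subcover, is complete and sound.
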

From this result it would seem reasonable to assume that space-time is non-compact. Another argument against compactness is that any compact, four-dimensional manifold on which there is a Lorentz metric cannot be simply connected. Thus, a compact space-time is really a non-compact manifold in which points have been identified. It would seem physically reasonable to regard the covering manifold as representing space-time.

We shall say that the $\textit{causality condition}$ holds if there are no closed non-spacelike curves.

\begin{prop} The set of points at which the causality condition does not hold is the disjoint union of sets of the form $J^-(q) \cap J^+(q)$, with $q \in M$. \end{prop}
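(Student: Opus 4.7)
The plan is to recognize the statement as asserting that the causality-violating set $V$ admits a canonical partition into its equivalence classes under the relation ``mutual causal accessibility.'' The proof splits naturally into four short steps, the first two establishing set-theoretic equalities and the last two packaging them as a disjoint union.

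First, I would unpack the definition: a point $p$ lies in $V$ iff there exists a closed future-directed non-spacelike curve $\gamma$ through $p$. Traversing $\gamma$ once (non-trivially) shows $p \leq p$ via a non-degenerate causal curve, so $p \in J^+(p) \cap J^-(p)$. Conversely, suppose $q \in V$ and $r \in J^+(q) \cap J^-(q)$. Then there exist future-directed non-spacelike curves $\alpha$ from $q$ to $r$ and $\beta$ from $r$ to $q$; concatenating $\beta$ with $\alpha$ produces a closed non-spacelike curve through $r$, hence $r \in V$. This gives the two inclusions
\[
V \;=\; \bigcup_{q \in V} \bigl( J^+(q) \cap J^-(q) \bigr),
\]
since every $p \in V$ is covered by its own set (take $q = p$ in step one), and every set in the union is contained in $V$ by step two.

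Next, I would define the relation $p \sim q$ on $V$ by $p \in J^+(q) \cap J^-(q)$, and verify it is an equivalence relation. Reflexivity is exactly the first step above. Symmetry is immediate from the symmetry of the definition. Transitivity reduces, as in step two, to concatenating non-spacelike curves: if $p \sim q$ and $q \sim r$, one chains the four curves realizing $q \leq p$, $p \leq q$, $r \leq q$, $q \leq r$ to obtain causal curves from $p$ to $r$ and back. The equivalence class of $p$ is then precisely $\{ q \in V : q \sim p \} = J^+(p) \cap J^-(p)$, using that $J^+(p) \cap J^-(p) \subseteq V$ (from step two) so no intersection with $V$ is needed on the right-hand side.

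Finally, since the equivalence classes of any equivalence relation on a set partition it, $V$ is the disjoint union of the sets $J^+(q) \cap J^-(q)$ as $q$ ranges over a choice of representatives (equivalently, over all of $V$, with each class appearing once). I do not expect any serious obstacle: the whole argument rests on the concatenation of non-spacelike curves, which is elementary. The only mildly delicate point is ensuring that closed non-spacelike curves, once concatenated with other causal curves, remain non-spacelike and future-directed at the junction; this is guaranteed because concatenation of piecewise-smooth future-directed causal curves at an endpoint yields again a piecewise-smooth future-directed causal curve, which is all the definitions of $J^\pm$ and $V$ require.
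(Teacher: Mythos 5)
Your proof is correct, and it follows the standard argument (the one in Hawking--Ellis, from which the paper imports this proposition). The paper itself states the result without proof, so there is no in-text argument to compare against; your decomposition of the causality-violating set $V$ into equivalence classes of the relation ``$p \leq q$ and $q \leq p$'' is exactly the intended one.

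Two small remarks. First, your step one is slightly misleading as stated: $p \in J^+(p) \cap J^-(p)$ holds for \emph{every} $p \in M$ by the paper's definition of $J^\pm$ (which includes the point itself), so no closed curve is needed for that inclusion; the actual content lies entirely in your step two, namely that for $q \in V$ the whole set $J^+(q) \cap J^-(q)$ is contained in $V$. Second, precisely because $q \in J^+(q) \cap J^-(q)$ always, the proposition cannot mean the union over all $q \in M$ (that would be $M$ itself); it means $V$ is expressible as a disjoint union of sets of that form, indexed by representatives of the equivalence classes. You handle this correctly by restricting to $q \in V$ and invoking the partition into classes, which is the right reading.
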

In particular, if the causality condition is violated at $q \in M$ but the chronology condition holds, there must be a closed null geodesic curve $\gamma$ through $q$. For physically realistic solutions, the causality and chronology conditions are equivalent. It would seem reasonable to exclude situations in which there were non-spacelike curves which returned arbitrarily close to their point of origin or which passed arbitrarily close to other non-spacelike curves which then passed arbitrarily close to the origin of the first curve and so on. We shall describe the first three of these conditions.

\begin{defn}The $\textit{future distinguishing condition}$ is said to hold at $p \in M$ if every neighbourhood of $p$ contains a neighbourhood of $p$ which no future directed non-spacelike curves from $p$ intersects more than once. An equivalent statement is that $I^+(q)=I^+(p)$ implies that $q=p$. \end{defn}
Similarly, it is possible to define the $\textit{past distinguishing condition}$ by exchanging the future with the past in the previous definition.

\begin{defn}The $\textit{strong causality condition}$ is said to hold at $p$ if every neighbourhood of $p$ contains a neighbourhood of $p$ which no non-spacelike curve intersects more than once. \end{defn}
Another definition of strong causality can be given, by following Penrose, if we exclude the null curves. It is defined as follows:

\begin{defn} $\textit{Strong causality}$ holds at $p \in M$ if arbitrarily small neighbourhoods of $p$ exist which each intersect no timelike curve in a disconnected set. \end{defn}

\begin{cor} The past and the future distinguishing conditions would also hold on $M$ since they are implied by strong causality. \end{cor}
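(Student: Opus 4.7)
The plan is to read off both distinguishing conditions directly from the (first, neighbourhood) formulation of strong causality, since the class of curves that strong causality controls strictly contains the classes referenced by either distinguishing condition.

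Fix $p\in M$ and assume strong causality holds at $p$: every neighbourhood $U$ of $p$ contains a neighbourhood $V\subset U$ of $p$ which no non-spacelike curve intersects more than once. A future-directed non-spacelike curve issuing from $p$ is in particular a non-spacelike curve, so the same $V$ cannot be met more than once by any such curve. This is precisely the future distinguishing condition at $p$. The past distinguishing condition follows from the identical argument with time orientation reversed, which is permissible because $(M,g)$ is time-orientable.

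For completeness, I would also verify the equivalent characterisation $I^+(q)=I^+(p)\Rightarrow q=p$ by contradiction. Suppose $q\neq p$ with $I^+(q)=I^+(p)$. Since every neighbourhood of a point $x$ meets $I^+(x)$, one has both $p\in\overline{I^+(q)}$ and $q\in\overline{I^+(p)}$. Choose disjoint neighbourhoods $U_p\ni p$ and $U_q\ni q$, and shrink to a strongly causal $V\subset U_p$ furnished by the hypothesis. Pick $s\in V\cap I^+(q)$; since $I^-(s)$ is open and contains $q$, one can find $r\in U_q\cap I^+(p)\cap I^-(s)$. Concatenating the future-directed timelike curves $p\to r$ and $r\to s$ yields a future-directed timelike curve from $p$ to $s$ whose intersection with $V$ is disconnected (the curve exits $V$, passes through $r\in U_q$, and re-enters $V$ at $s$). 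This contradicts strong causality at $p$.

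The main technical point is locating $r\in I^+(p)\cap I^-(s)\cap U_q$: it rests on openness of the chronological past together with the closure statements $p\in\overline{I^+(q)}$ and $q\in\overline{I^+(p)}$, which together ensure that $I^+(p)\cap I^-(s)$ is a nonempty open set meeting every neighbourhood of $q$, so the curve construction goes through and produces the required violation of strong causality.
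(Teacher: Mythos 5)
Your proposal is correct and is essentially the argument the paper intends (the corollary is stated without proof precisely because, as you observe, the class of curves controlled by strong causality contains those referenced by either distinguishing condition, so the same neighbourhood $V$ works). Your additional verification of the equivalent characterisation $I^+(q)=I^+(p)\Rightarrow q=p$ is also sound — the concatenated timelike curve $p\to r\to s$ does meet $V$ in a disconnected set, using only openness of $I^-(s)$, the closure facts, and the Hausdorff property assumed in the paper's definition of a space-time.
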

Closely related to these three higher degree causality conditions is the phenomenon of $\textit{imprisonment}$.

A non spacelike curve $\lambda$ that is future-inextendible can do one of the three things as one follows it to the future. It can
\begin{description} 
\item[(i)] enter and remain within a compact set $\mathcal{L}$;
\item[(ii)] not remain in any compact set $\mathcal{L}$ and not re-enter a compact set $\mathcal{L}$;
\item[(iii)] not remain within any compact set $\mathcal{L}$ and not re-enter any such set more than a finite number of times.
\end{description}
In the third case, $\lambda$ can be thought as going off to the edge of space-time, that is either to infinity or a singularity. In the first and second cases we shall say that $\lambda$ is $\textit{totally}$ and $\textit{partially future imprisoned}$ in $\mathcal{L}$, respectively. Furthermore, we have the following result:

\begin{prop} \label{prop:6.6} If the strong causality condition holds on a compact set $\mathcal{L}$, there can be no future-inextendible non-spacelike curve totally or partially future imprisoned in $\mathcal{L}$. \end{prop}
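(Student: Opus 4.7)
The plan is to argue by contradiction: assume $\lambda : [a,b) \to M$ is a future-inextendible non-spacelike curve that is either totally or partially future-imprisoned in the compact set $\mathcal{L}$, and deduce a violation of the strong causality condition on $\mathcal{L}$ by exhibiting, at some point $p \in \mathcal{L}$, a future endpoint of $\lambda$, which contradicts future-inextendibility. Both the total and partial imprisonment cases reduce to the same construction: in either case there exists a sequence of parameter values $t_n \nearrow b$ with $\lambda(t_n) \in \mathcal{L}$ for all $n$ (trivially in the total case, and by the defining property of partial imprisonment in the second case).

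First I would use the compactness of $\mathcal{L}$ to extract from $\{\lambda(t_n)\}$ a subsequence, still denoted $\{\lambda(t_n)\}$, converging to some cluster point $p \in \mathcal{L}$. Then I would invoke the strong causality condition at $p$ in its Penrose form (Definition just before Proposition~\ref{prop:6.6}): given any neighbourhood $V$ of $p$, there is a smaller neighbourhood $W \subset V$ of $p$ such that $\lambda^{-1}(W) \subset [a,b)$ is connected, i.e.\ an interval. Since $\lambda(t_n) \to p$, the points $\lambda(t_n)$ eventually lie in $W$, so $\lambda^{-1}(W)$ contains $t_n$ for arbitrarily large $n$; being a connected subset of $[a,b)$ containing parameter values arbitrarily close to $b$, it must be of the form $(\alpha, b)$ for some $\alpha \geq a$. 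Consequently $\lambda(t) \in W \subset V$ for all $t > \alpha$.

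Because $V$ was an arbitrary neighbourhood of $p$, this establishes precisely that $p$ is a future endpoint of $\lambda$ in the sense of the definition given earlier in the chapter: for every neighbourhood $V$ of $p$ there exists $t \in [a,b)$ such that $\lambda(t_1) \in V$ for every $t_1 \geq t$. This contradicts the hypothesis that $\lambda$ is future-inextendible, completing the argument.

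The main obstacle I anticipate is a subtle point about what strong causality really grants us. The Penrose-style definition gives a neighbourhood $W$ on which no \emph{timelike} curve re-enters in a disconnected way, whereas $\lambda$ is only non-spacelike; one must argue (using the more general formulation stated earlier, that every neighbourhood contains a neighbourhood which no \emph{non-spacelike} curve intersects more than once) that the two versions agree at a point where strong causality holds, so that the connectedness of $\lambda^{-1}(W)$ is genuine. A secondary, but purely bookkeeping, issue is to verify that in the partially imprisoned case one does obtain a sequence $t_n \nearrow b$ with $\lambda(t_n) \in \mathcal{L}$; this follows from the definition because ``re-enters $\mathcal{L}$ infinitely often'' provides a discrete sequence of entry times converging to the future parameter boundary $b$, allowing the compactness argument above to run without change.
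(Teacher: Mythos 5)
Your argument is correct. Note first that the paper states Proposition \ref{prop:6.6} without proof, deferring to Hawking--Ellis, so the natural comparison is with the standard argument there: one covers the compact set $\mathcal{L}$ by finitely many local causality neighbourhoods, observes that a non-spacelike curve meets each in at most one connected parameter interval and that no such neighbourhood can imprison an inextendible causal curve, and concludes that after entering and leaving each of the finitely many neighbourhoods the curve can never return to $\mathcal{L}$, contradicting both forms of imprisonment. Your route is genuinely different and, if anything, more economical. You extract a cluster point $p$ of the imprisoned tail (using that $\mathcal{L}$, being compact in a Hausdorff manifold, is closed, so $p\in\mathcal{L}$ and strong causality applies there), and then the connectedness of $\lambda^{-1}(W)$ for the strong-causality neighbourhoods $W\subset V$ of $p$ forces $\lambda^{-1}(W)$ to be an interval with supremum equal to the future parameter boundary, hence a terminal interval; since $V$ was arbitrary, $p$ is a future endpoint in exactly the sense of the paper's definition, contradicting inextendibility directly. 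What this buys is that strong causality is invoked only at the single point $p$ rather than via a global finite cover, and the contradiction lands immediately on the definition of future endpoint; what the covering argument buys in exchange is the slightly stronger structural conclusion that the curve eventually leaves $\mathcal{L}$ permanently. You also correctly flag and resolve the only real pitfall, namely that one must use the non-spacelike (not merely timelike) form of the strong causality condition to get connectedness of $\lambda^{-1}(W)$ for a causal curve; and your reading of partial imprisonment as ``re-enters $\mathcal{L}$ at parameter times accumulating at the future boundary'' is the intended one (the paper's item (ii) has an evident typo, omitting the word ``continually''). I see no gap.
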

and 
\begin{prop} If the future or past distinguishing condition holds on a compact set $\mathcal{L}$, there can be no future-inextendible non-spacelike curve totally future imprisoned in $\mathcal{L}$. \end{prop}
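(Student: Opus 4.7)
My plan is to argue by contradiction. Assume $\lambda\colon [0,\infty)\to\mathcal{L}$ is a future-directed, future-inextendible non-spacelike curve totally imprisoned in the compact set $\mathcal{L}$, and that either the future or the past distinguishing condition holds on $\mathcal{L}$. The strategy is to produce two distinct accumulation points $p\neq p'$ of $\lambda$ in $\mathcal{L}$ together with future-directed non-spacelike curves from $p$ to $p'$ and from $p'$ back to $p$; concatenating them yields a closed future-directed non-spacelike loop at $p$ that passes through $p'$, and such a loop negates the defining neighbourhood property of either distinguishing condition at $p$.

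First I would exhibit two distinct accumulation points. Set $A := \bigcap_{T\geq 0}\overline{\lambda([T,\infty))}$; by compactness of $\mathcal{L}$ this $\omega$-limit set is non-empty and closed. If $A=\{p\}$ consisted of a single point, then for every open $U\ni p$ the preimage $\lambda^{-1}(\mathcal{L}\setminus U)$ must be bounded in $[0,\infty)$, for otherwise compactness of $\mathcal{L}\setminus U$ would yield a second accumulation point outside $U$; but then $\lambda(t)\in U$ for all sufficiently large $t$, exhibiting $p$ as a future endpoint of $\lambda$ and contradicting future-inextendibility. Hence $A$ contains distinct points $p\neq p'$. Using that $M$ is Hausdorff, fix disjoint open neighbourhoods $U_p\ni p$ and $U_{p'}\ni p'$, and extract strictly increasing parameter values $t_1<u_1<t_2<u_2<\cdots$ with $\lambda(t_n)\in U_p$, $\lambda(u_n)\in U_{p'}$, $\lambda(t_n)\to p$, and $\lambda(u_n)\to p'$.

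The essential step is then to pass to the limit on the two interleaved causal segments $\sigma_n := \lambda\vert_{[t_n,u_n]}$ and $\tau_n := \lambda\vert_{[u_n,t_{n+1}]}$, which lie wholly inside the compact set $\mathcal{L}$ and are future-directed non-spacelike, with endpoints converging to $(p,p')$ and $(p',p)$ respectively. Invoking the standard limit-curve lemma for sequences of causal curves in a compact set (reparametrise each segment by arc length of a fixed complete auxiliary Riemannian metric on $M$, apply Arzel\`{a}--Ascoli to the reparametrised curves, and use closedness of the set of future-directed causal tangent directions at each point) produces future-directed non-spacelike curves $\sigma$ from $p$ to $p'$ and $\tau$ from $p'$ to $p$ within $\mathcal{L}$. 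Their concatenation $\sigma\ast\tau$ is a future-directed non-spacelike curve from $p$ back to $p$ which passes through $p'\notin U_p$; hence for every open $V\subset U_p$ containing $p$, the loop lies in $V$ at its initial and final parameter values but leaves $V$ to visit $p'$ in between, so it intersects $V$ at two distinct parameter values. This is a non-spacelike curve from $p$ witnessing the failure of the future distinguishing condition at $p$; reading the loop backwards as a past-directed non-spacelike curve from $p$ returning to $p$ equally negates the past distinguishing condition at $p$, contradicting whichever hypothesis was assumed.

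The hard part will be the limit-curve step. The $h$-arc lengths of the $\sigma_n$ are not a priori bounded---a causal curve can spiral indefinitely through a compact set---so one cannot immediately apply Arzel\`{a}--Ascoli on a fixed interval; one must instead extract a causal limit curve on $[0,\infty)$ and then, via a further diagonal extraction exploiting that $\lambda(u_n)\in U_{p'}$ occurs precisely at the right endpoint of $\sigma_n$, force the limit curve to actually reach $p'$ (alternatively, pass to a subsequence on which the $h$-lengths remain bounded by using that the segments are trapped in $\mathcal{L}$). Without this lemma the naive closure argument only yields $p'\in\overline{J^+(p)}$ and $p\in\overline{J^+(p')}$, and in a general space-time these closures can strictly exceed $J^\pm(p)$ and $J^\pm(p')$, so they do not supply an actual causal curve closing the loop and the contradiction does not go through.
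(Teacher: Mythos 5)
The paper itself states this proposition without proof (it is quoted from Hawking--Ellis), so your attempt has to be measured against the standard argument. Your first step is fine: the $\omega$-limit set $A=\bigcap_{T}\overline{\lambda([T,\infty))}$ is non-empty by compactness, and if it were a single point that point would be a future endpoint of $\lambda$, contradicting future-inextendibility; so there are distinct $\omega$-limit points $p\neq p'$ in $\mathcal{L}$. The gap is exactly where you suspect it is, but it is worse than a technical difficulty: the limit-curve step cannot be repaired, because the intermediate claim you are driving at --- a closed future-directed non-spacelike loop through $p$ and $p'$ --- is strictly stronger than the negation of the distinguishing condition and is false in general. Total imprisonment in a compact set does \emph{not} imply the existence of closed causal curves: there are causal space-times (no closed non-spacelike curves at all) containing totally imprisoned inextendible causal curves. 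In any such example your construction would have to fail, so no amount of diagonal extraction can force the limit curve of the segments $\sigma_n$ to reach $p'$. Your two proposed fixes do not work either: imprisonment in a compact set gives no bound on the auxiliary $h$-arc lengths of the segments (indeed the total $h$-length of $\lambda$ is infinite precisely because $\lambda$ is inextendible and trapped), and the honest output of the limit-curve lemma is only $p'\in\overline{J^+(p)}$ and $p\in\overline{J^+(p')}$, which, as you note yourself, closes nothing.

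The way out is to stop trying to close the loop. The distinguishing condition, in the neighbourhood form used in this chapter, is violated by a single future-directed non-spacelike curve \emph{from} $q$ that re-enters arbitrarily small neighbourhoods of $q$; it does not require the curve to return to $q$ itself. The Hawking--Ellis proof therefore takes a point $q$ of the $\omega$-limit set, uses the limit-curve lemma only in its safe form to produce a future-inextendible non-spacelike limit curve $\gamma$ issuing from $q$ and contained in the $\omega$-limit set, and then uses a minimality (Zorn's lemma) argument --- choosing $q$ in a minimal non-empty closed subset that contains a future-inextendible causal curve through each of its points --- to guarantee that the $\omega$-limit set of $\gamma$ is this whole minimal set and hence contains $q$. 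Then $\gamma$ is a future-directed causal curve from $q$ that leaves and re-enters every neighbourhood of $q$, contradicting the future distinguishing condition at $q$ directly (and a time-reversed version handles the past distinguishing condition). If you want to salvage your write-up, replace the two-point loop construction by this single-curve recurrence argument; the part of your proof that identifies the $\omega$-limit set and rules out the single-point case can be kept.
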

The causal relations on $(M, g)$ may be used to put a topology on $M$ called the $\textit{Alexandrov topology}$.

\begin{defn} The Alexandrov topology, is a topology in which a set is defined to be open if and only if it is the union of one or more sets of the form $I^+(q) \cap I^-(q)$, with $p, q \in M$. \end{defn}
As $I^+(q) \cap I^-(q)$ is open in the manifold topology, any set which is open in the Alexandrov topology will be open in the manifold topology, though the converse is not necessarily true.

\begin{thm}The following three requirements on a space-time $(M,g)$ are equivalent:
\begin{description}
\item[(1)] $(M,g)$ is strongly causal;
\item[(2)] the Alexandrov topology agrees with the manifold topology;
\item[(3)] the Alexandrov topology is Hausdorff.
\end{description} \end{thm}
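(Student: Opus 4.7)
The plan is to prove the three implications $(2)\Rightarrow(3)\Rightarrow(1)\Rightarrow(2)$, exploiting the fact that every Alexandrov-basic set $I^+(a)\cap I^-(b)$ is already open in the manifold topology, so only one inclusion of topologies ever needs real work.

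The implication $(2)\Rightarrow(3)$ is immediate: the manifold $M$ is Hausdorff by the definition of space-time given at the start of Chapter~6, so if the Alexandrov topology coincides with the manifold topology it inherits the Hausdorff property. For $(1)\Rightarrow(2)$, one inclusion is automatic since each $I^+(a)\cap I^-(b)$ is manifold-open. For the reverse, fix $p\in M$ and a manifold-open neighbourhood $U$ of $p$. I would first pass to a convex normal coordinate neighbourhood $N\subset U$ of $p$ with compact closure and, using strong causality at $p$, shrink $N$ so that no non-spacelike curve meets $\overline{N}$ in a disconnected set. Then I pick $a\in I^-(p)\cap N$ and $b\in I^+(p)\cap N$ so that $p\in I^+(a)\cap I^-(b)$. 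The claim is $I^+(a)\cap I^-(b)\subset N$: if some $q$ in this intersection lay outside $N$, concatenating the timelike curve from $a$ to $q$ with the one from $q$ to $b$ would produce a non-spacelike curve from $a$ to $b$ leaving and re-entering $N$, contradicting the strong-causality-induced property of $N$.

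The substantive direction is $(3)\Rightarrow(1)$, which I would prove by contrapositive. Assume strong causality fails at some $p\in M$. Then there exists a manifold-open neighbourhood $U$ of $p$ with compact closure such that every smaller neighbourhood $V_n\subset U$ of $p$ (take a basis $V_n\downarrow\{p\}$) contains points $p_n,q_n\in V_n$ joined by a future-directed non-spacelike curve $\gamma_n$ that leaves $U$. By the standard limit-curve construction (essentially the compactness argument underlying Proposition~\ref{prop:6.6}, applied on $\overline{U}$), after passing to subsequences the $\gamma_n$ accumulate onto an inextendible non-spacelike curve $\gamma$ passing through $p$ whose image meets $\partial U$, hence contains a point $r\neq p$ in $\overline{U}$ that lies in the causal closure of $p$ in a strong sense: for every basic Alexandrov set $I^+(a)\cap I^-(b)$ containing $p$ and every basic Alexandrov set $I^+(a')\cap I^-(b')$ containing $r$, the curves $\gamma_n$ eventually produce points in the intersection. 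The main obstacle is to carry out this limit argument cleanly enough to pin down a specific pair of distinct points that cannot be Alexandrov-separated; the delicate point is that $r$ might coincide with $p$ in pathological configurations, in which case one must instead exhibit the Hausdorff failure by using two limit points $r_1\neq r_2$ on the accumulated null geodesic through $p$ that both have the property that every Alexandrov basic neighbourhood of $r_i$ contains points reached from $p$ along the $\gamma_n$. Either way, one produces two distinct points in $M$ whose Alexandrov neighbourhoods cannot be disjoint, contradicting the Hausdorff hypothesis on the Alexandrov topology, and completing the proof.
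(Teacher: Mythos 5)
Your cycle $(2)\Rightarrow(3)\Rightarrow(1)\Rightarrow(2)$ is a sound strategy, and two of the three legs are essentially complete: $(2)\Rightarrow(3)$ is immediate from the Hausdorff property of the underlying manifold, and your $(1)\Rightarrow(2)$ argument via a causally convex sub-neighbourhood is the standard one (it is also the only implication the text itself sketches, by covering $M$ with local causality neighbourhoods). The genuine gap is in $(3)\Rightarrow(1)$, which you yourself flag as ``the main obstacle'': as written you never actually exhibit the two points that the Alexandrov topology fails to separate, and the detour through inextendible limit curves creates the spurious worry that the limit point might coincide with $p$. There is also an unjustified step earlier in that paragraph, namely the assertion that the offending causal curves leave $U$; the bare negation of strong causality only gives curves meeting each small $V_n$ in a disconnected set.

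Both defects disappear if you track boundary-crossing points rather than limit curves. Work inside a convex normal neighbourhood $N$ of $p$ with $\overline{N}$ compact; the sets $I^+(a,N)\cap I^-(b,N)$ with $a\ll p\ll b$ in $N$ form a neighbourhood basis at $p$ which no causal curve can meet in a disconnected set without leaving $N$ — this is precisely what forces the $\gamma_n$ to exit. Fix a smaller causally convex $N'\ni p$ with $\overline{N'}\subset N$; each $\gamma_n$, running from $x_n\to p$ to $y_n\to p$, must cross $\partial N'$ at some $z_n$, and by compactness a subsequence gives $z_n\to z\in\partial N'$, so $z\neq p$ automatically. Now take any basic Alexandrov sets $A=I^+(a)\cap I^-(b)\ni p$ and $B=I^+(c)\cap I^-(d)\ni z$. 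For large $n$ one has $a\ll x_n$ and $y_n\ll b$ by openness of $I^+(a)$ and $I^-(b)$, and since $x_n\leq z_n\leq y_n$ along $\gamma_n$ it follows that $a\ll z_n\ll b$, i.e.\ $z_n\in A$; while $z_n\to z$ puts $z_n\in B$ eventually. Hence $A\cap B\neq\emptyset$ for every such choice, the distinct points $p$ and $z$ cannot be separated in the Alexandrov topology, and the contrapositive of $(3)\Rightarrow(1)$ is established. With this substitution your proof is complete.
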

However, suppose that the strong causality condition holds on $M$. Then, about any point $p \in M$ one can find a local causality neighbourhood $\mathcal{U}$. The Alexandrov topology of $(\mathcal{U}, {g|}_{\mathcal{U}})$ regarded as a space-time in its own right, is the same as the manifold topology of $\mathcal{U}$. Thus the Alexandrov topology of $M$ is the same as the manifold topology since $M$ can be covered by local causality neighbourhoods. This means that if the strong causality holds, one can determine the topological structure of space-time by observation of causal relationships. 

Even imposition of strong causality condition does not rule out all causal pathologies and to ensure that space-time is not just about to violate chronology condition. Thus, in order to be physically significant, a property of space-time ought to have some form of stability. The situation can be considerably improved if $\textit{stable causality condition}$ holds. To be able to define properly this concept, one has to define a topology on the set of all space-times, that is, all non-compact four-dimensional manifolds and all Lorentz metric on them. Essentially, three topologies seem of major interest: compact-open topology, open topology and fine topology.
\begin{description}
\item[(1)] $\textbf{Compact-Open Topology}$

$\forall n=0, 1, ..., r$, let $\epsilon_n$ be a set of continuous positive functions on $M$, $\mathcal{U} \subset M$ a compact set and $g$ the Lorentz metric under study. We then define: $G(\mathcal{U}, \epsilon_n, g)$ the set of all Lorentz metrics $\tilde{g}$ such that

\begin{equation} |g - \tilde{g}|_n < \epsilon_n \; {\rm on} \; \mathcal{U} \; \forall n, \end{equation}
where
\begin{eqnarray*} 
\; & \; & | g - \tilde{g}|_n \nonumber \\
& \equiv & \sqrt{\sum\limits_{a_i, b_j, r, s, u, v} \bigg{[} \nabla_{a_1} ...\nabla_{a_n} (g_{rs} - \tilde{g}_{rs}) \bigg{]}\bigg{[} \nabla_{b_1} ...\nabla_{b_n} (g_{uv} - \tilde{g}_{uv}) \bigg{]} h^{a_1 b_1}...h^{sv}}, \end{eqnarray*}
where $\nabla_a$ is the covariant derivative operator on $M$ and $\sum_{a,b=1}^4 h_{ab} dx^a$ $\otimes dx^b$ is any positive-definite metric on $M$.

In the compact-open topology, open sets are obtained from the $G(\mathcal{U}, \epsilon_i, g)$ through the operations of arbitrary union and finite intersection.

\item[(2)] $\textbf{Open Topology}$

We no longer require $\mathcal{U}$ to be compact, and we take $\mathcal{U}=M$ in section (1).

\item[(3)] $\textbf{Fine Topology}$

We define $H(\mathcal{U}, \epsilon_i, g)$ as the set of all Lorentz metrics $\tilde{g}$ such that
\begin{equation} |g - \tilde{g}|_i < \epsilon_i, \end{equation}
and $\tilde{g}=g$ out of the compact set $\mathcal{U}$. Moreover, we set $G'(\epsilon_i, g)$ \\$= \cup H(\mathcal{U}, \epsilon_i, g)$. A sub-basis for the fine topology is then given by the neighbourhoods $G'(\epsilon_i, g)$.
\end{description}
Now, the underlying idea for stable causality is that space-time must not contain closed timelike curves, and we still fail to find closed timelike curves if we open out the null cones. Thus, for our purpose, we are interested in the $C^0$ open topology.

\begin{defn} The stable causality condition holds on $M$ if the space-time metric $g$ has an open neighbourhood in the $C^0$ open topology such that there are no closed timelike curves in any metric belonging to the neighbourhood. \end{defn} 
In other words, what this condition means is that one can expand the light cones slightly at every point without introducing closed timelike curves.
The Minkowski, Friedmann-Robertson-Walker, Schwarzschild and Reissner-Nordström space-times are all stably causal. If stable causality condition holds, the differentiable and conformal structure can be determined from the causal structure, and space-time cannot be compact (because in a compact space-time there exist closed timelike curves). A very important characterization of stable causality is given by the following proposition:

\begin{prop} The stable causality condition holds everywhere on $M$ if and only if there is a function $f$ on $M$ whose gradient is everywhere timelike. \end{prop}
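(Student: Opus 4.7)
The proposition has two directions, of quite different difficulty, and I would treat them separately.

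The plan for the easier direction (existence of $f$ implies stable causality) is as follows. Assuming $f\in C^1(M)$ has everywhere timelike gradient, I first observe that $\mathrm{d}f$ paired with any non-spacelike vector is nonzero: for if $\nabla f$ is timelike and $v$ is non-spacelike and nonzero, then $g(\nabla f, v) \neq 0$ by the standard Lorentz-geometric fact that a timelike vector cannot be orthogonal to a non-spacelike vector. Orienting so that $g(\nabla f,v)<0$ for every future-directed non-spacelike $v$, the function $f$ is strictly monotone along any future-directed non-spacelike curve, which rules out closed non-spacelike curves for the metric $g$. To upgrade this to stable causality I note that the condition ``$\mathrm{d}f$ is timelike'' is pointwise an open condition on the inverse metric $\tilde g^{-1}$ in the $C^0$ open topology: the set of metrics for which $\tilde g^{-1}(\mathrm{d}f,\mathrm{d}f)$ has the sign making $\mathrm{d}f$ timelike is open, and since this is an open condition fibrewise, I can find a $C^0$ open neighbourhood $\mathcal{N}$ of $g$ such that for every $\tilde g\in\mathcal{N}$ the form $\mathrm{d}f$ remains $\tilde g$-timelike and consistently oriented. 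Then the monotonicity argument repeats verbatim for every $\tilde g\in\mathcal{N}$, giving stable causality.

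The plan for the hard direction (stable causality implies existence of $f$) is essentially Hawking's construction via a volume function, and this is where the work lies. By stable causality, I can choose a one-parameter family $\{g_a\}_{a\in[0,1]}$ of Lorentz metrics on $M$ with $g_0=g$, each $g_a$ having strictly wider null cones than $g_b$ whenever $a>b$, the family depending smoothly on $a$ and lying inside the chronology-preserving $C^0$ neighbourhood given by stable causality, so that every $g_a$ is causal. I next fix a smooth positive measure $\mu$ on $M$ of finite total mass (for instance $\mathrm{e}^{-\varphi}\,\mathrm{d}\mathrm{vol}_h$ for an auxiliary Riemannian $h$ and a suitable exhaustion function $\varphi$), and define, for each $a$,
\begin{equation*}
f_a(p) \;\equiv\; \mu\bigl(I^{-}_{g_a}(p)\bigr),
\end{equation*}
which is bounded, monotonically non-decreasing along every $g_a$-future-directed non-spacelike curve, and in fact strictly increasing because $I^-_{g_a}$ is an open set that genuinely grows whenever $p$ moves to the $g_a$-future. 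I then average over the family, setting
\begin{equation*}
f(p) \;\equiv\; \int_0^1 f_a(p)\,\mathrm{d}a.
\end{equation*}
The idea is that, as one moves along a $g$-future-directed non-spacelike curve, this curve is strictly timelike with respect to every $g_a$ for $a>0$ (because the cones of $g_a$ are strictly wider than those of $g$), so $f_a$ strictly increases along it for every $a>0$, and the integral therefore strictly increases.

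The hard part will be upgrading this strict monotonicity of the continuous function $f$ to the statement that $f$ is $C^1$ (indeed smooth) with an everywhere timelike gradient. Two technical difficulties must be overcome. First, $f_a$ need not be continuous: discontinuities arise precisely where the past set $I^-_{g_a}(p)$ jumps, namely at points through which there pass past-inextendible null geodesics that are limits of timelike curves from elsewhere; the standard remedy is to use the strong causality that is implied by stable causality (Proposition~\ref{prop:6.6} and its neighbours in the preceding section) to show $f_a$ is continuous. Second, even the averaged $f$ is only continuous and monotone, not a priori differentiable; to produce a smooth function with timelike gradient I would convolve $f$ against a smooth kernel supported in small geodesic balls of an auxiliary Riemannian metric, and then argue that, because the monotonicity is \emph{stable} (it holds for a whole open family of wider cones around $g$), the smoothing can be made fine enough that the gradient of the smoothed function, expressed as an average of difference quotients along $g_a$-timelike directions, remains $g$-timelike uniformly on compact sets. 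A partition-of-unity argument then globalises this. The delicate point, and the one I expect to be the main obstacle, is controlling the smoothing so that the strict monotonicity of $f$ along $g$-timelike directions is not washed out by the contribution of transverse directions — and this is exactly where the extra room provided by stable causality (as opposed to mere causality) is essential.
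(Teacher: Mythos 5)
The paper states this proposition without proof (it is quoted from Hawking-Ellis \cite{hawking1973large}); the only related material in the text is the family of metrics $h(a)$, $a\in[0,3]$, with progressively opened null cones introduced immediately afterwards. Your overall strategy is exactly the intended one: the easy direction via strict monotonicity of $f$ along non-spacelike curves together with openness of the timelike-gradient condition in the $C^0$ open topology, and the hard direction via Hawking's averaged volume function $f(p)=\int_0^1 f_a(p)\,\mathrm{d}a$ with $f_a(p)=\mu\bigl(I^{-}_{g_a}(p)\bigr)$. The easy direction as you present it is sound.

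There are, however, two genuine problems in the hard direction. First, your proposed remedy for the discontinuity of the individual $f_a$ --- invoking the strong causality implied by stable causality to conclude that each $f_a$ is continuous --- does not work. Continuity of the volume function of the chronological past is equivalent to causal continuity (reflectivity plus distinction), which sits strictly above stable causality in the causal hierarchy: Minkowski space with a suitable closed spacelike set deleted is stably causal yet has a discontinuous past-volume function. What actually rescues continuity is the averaging itself, via the sandwich $I^{-}_{g_{a-\delta}}(q)\subset I^{-}_{g_a}(p)\subset I^{-}_{g_{a+\delta}}(q)$ for $q$ sufficiently close to $p$; integrating this over $a$ makes $|f(q)-f(p)|$ small even though each $f_a$ may jump. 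Your proposal assigns the averaging only the role of producing strict monotonicity, and so leaves the continuity of $f$ unestablished. Second, the passage from a continuous, strictly increasing $f$ to a $C^1$ function with everywhere timelike gradient is not carried out: you correctly flag it as the main obstacle, but the convolution argument you sketch does not obviously preserve strict monotonicity uniformly in all causal directions, and this is precisely the step that remained incomplete in the classical literature for decades. As it stands, the proposal establishes the existence of a continuous time function, which is weaker than the statement to be proved.
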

The function $f$ can be thought as a sort of $\textit{cosmic time}$ in the sense that it increases along every future-directed non-spacelike curve.

Now if the stable causality condition holds one can find a family of $C^r$ Lorentz metrics $h(a)$, with $a \in [0, 3]$, such that (Hawking-Ellis 1973):
\begin{description}
\item[(1)] $h(0)$ is the space-time metric $g$;
\item[(2)] there are no closed timelike curves in the metric $h(a)$ for each $a \in [0, 3]$;
\item[(3)] if $a_1$, $a_2 \in [0, 3]$ with $a_1 < a_2$, then every non-spacelike vector in the metric $h(a_1)$ is timelike in the metric $h(a_2)$.
\end{description}

\subsection{Cauchy developments}
In Newtonian theory there is instantaneous action-at-a-distance and hence in order to predict events at future points in space-time one has to know the state of the entire universe at the present time and also to assume the boundary conditions at infinity, such as that the potential goes to zero. 

On the other hand, in relativity theory, events at different points of space-time can be causally related only if they can be joined by a non-spacelike curve. 

Thus a knowledge of the appropriate data on a closed set $\mathcal{L}$ would determine events in a region $D^+(\mathcal{L})$ to the future of $\mathcal{L}$ called the $\textit{future Cauchy}$ $\textit{development}$ or $\textit{domain of dependence}$ of $\mathcal{L}$, and it is defined as the set of all points $p\in M$ such that every past-inextendible non-spacelike curve through $p$ intersects $\mathcal{L}$. Similarly, the $\textit{past Cauchy development}$, $D^-(\mathcal{L})$, is defined by exchanging the past with the future in the previous definition. The $\textit{total Cauchy development}$ is given by $D(\mathcal{L})=D^+(\mathcal{L}) \cup D^-(\mathcal{L})$.

Penrose defines the Cauchy development of $\mathcal{L}$ slightly differently, as the set of all points $p \in M$ such that every past-inextendible timelike curve through $p$ intersect $\mathcal{L}$. We shall denote this set $\tilde{D}^+(\mathcal{L})$. Thus, one has $\tilde{D}^+(\mathcal{L}) = \overline{D^+}(\mathcal{L})$.

The future boundary of ${D}^+(\mathcal{L})$, that is $\overline{D^+}(\mathcal{L}) - I^-({D}^+(\mathcal{L}))$, marks the limit of the region that can be predicted from knowledge of data on $\mathcal{L}$. We call this closed achronal set the $\textit{future Cauchy horizon}$ of $\mathcal{L}$ and denote it by $H^+(\mathcal{L})$.

\begin{defn}The future Cauchy horizon $H^+(\mathcal{L})$ of $\mathcal{L}$ is given by 
\begin{equation} H^+(\mathcal{L}) \equiv \{ X: X \in D^+(\mathcal{L}), I^+(X) \cap D^+(\mathcal{L})= \phi \}. \end{equation} 
Similarly, the past Cauchy horizon $H^-(\mathcal{L})$ is defined as
\begin{equation} H^-(\mathcal{L}) \equiv \{ X: X \in D^-(\mathcal{L}), I^-(X) \cap D^-(\mathcal{L})= \phi \}. \end{equation} 
 \end{defn}
The future Cauchy horizon of $\mathcal{L}$ will intersect $\mathcal{L}$ if $\mathcal{L}$ is null or if $\mathcal{L}$ has an edge. To make this precise we define the $\textit{edge}(\mathcal{L})$ as follows.

\begin{defn} The edge($\mathcal{L}$) for an achronal set $\mathcal{L}$ is the set of all points $q \in \bar{\mathcal{L}}$ such that in every neighbourhood $\mathcal{U}$ of $q$ there are points $p \in I^-(q, \mathcal{U})$ and $r \in I^+(q, \mathcal{U})$ which can be joined by a timelike curve in $\mathcal{U}$ which does not intersect $\mathcal{L}$. \end{defn}

It follows that if the edge($\mathcal{L}$) is empty for a non-empty achronal set $\mathcal{L}$, then $\mathcal{L}$ is a three-dimensional imbedded $C^1$-submanifold.

\begin{prop} For a closed achronal set $\mathcal{L}$,
\begin{equation*} {\rm edge}(H^+(\mathcal{L}))={\rm edge}(\mathcal{L}). \end{equation*}
\end{prop}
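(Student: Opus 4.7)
The plan is to prove the equality by double inclusion, exploiting the fact that $H^+(\mathcal{L})$ is itself a closed achronal set (by the standard properties of Cauchy horizons, which follow from Proposition \ref{prop:6.2} applied to a suitable future set such as $I^+(D^+(\mathcal{L}))$). The guiding observation is that edge$(\mathcal{L})$ coincides with $\mathcal{L}\cap H^+(\mathcal{L})$, so the two edges sit on the same locus and the task reduces to checking that the local edge condition transfers between the two sets.

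First I would prove edge$(\mathcal{L})\subseteq$ edge$(H^+(\mathcal{L}))$. Let $q\in$ edge$(\mathcal{L})$. Since $\mathcal{L}$ is closed, $q\in\mathcal{L}\subseteq D^+(\mathcal{L})$. To place $q$ in $H^+(\mathcal{L})$ I must show $I^+(q)\cap D^+(\mathcal{L})=\emptyset$. Given any $r\in I^+(q)$, pick a convex normal neighbourhood $\mathcal{U}$ of $q$; the edge condition supplies $p'\in I^-(q,\mathcal{U})$ and $r'\in I^+(q,\mathcal{U})$ joined by a timelike curve $\gamma'\subset\mathcal{U}$ missing $\mathcal{L}$. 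Concatenating $\gamma'$ (traversed to the past from $r'$ through $p'$) with any past-inextendible timelike extension of $p'$ into $I^-(\mathcal{U})$, then joining $r'$ to $r$ by a timelike arc, produces a past-inextendible timelike curve through $r$ that avoids $\mathcal{L}$, contradicting $r\in D^+(\mathcal{L})$. Hence $q\in H^+(\mathcal{L})$. The same auxiliary pair $(p',r')$ then certifies $q\in$ edge$(H^+(\mathcal{L}))$: since $\gamma'$ lies outside $\mathcal{L}$ and, for $\mathcal{U}$ small, every past-inextendible timelike extension of $\gamma'$ can be arranged to miss $\mathcal{L}$, the points on $\gamma'$ lie outside $D^+(\mathcal{L})$ and hence outside $H^+(\mathcal{L})\subseteq D^+(\mathcal{L})$.

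Next I would prove edge$(H^+(\mathcal{L}))\subseteq$ edge$(\mathcal{L})$. Let $q\in$ edge$(H^+(\mathcal{L}))$, so $q\in H^+(\mathcal{L})$ (closedness) and in particular $q\in D^+(\mathcal{L})$. I first show $q\in\mathcal{L}$. If $q\notin\mathcal{L}$ then by the standard generator analysis for achronal boundaries (the null-generator structure from Proposition \ref{prop:6.2} applied to $\dot{I}^-(I^+(D^+(\mathcal{L})))$) there is a null geodesic segment of $H^+(\mathcal{L})$ through $q$ that extends strictly to the past within $H^+(\mathcal{L})\setminus\{q\}$; this places $q$ in the interior three-manifold portion of $H^+(\mathcal{L})$ and contradicts $q\in$ edge$(H^+(\mathcal{L}))$. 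Hence $q\in\mathcal{L}\cap H^+(\mathcal{L})$. The edge condition at $q$ for $H^+(\mathcal{L})$ then supplies, in every neighbourhood $\mathcal{U}$ of $q$, points $p\in I^-(q,\mathcal{U})$ and $r\in I^+(q,\mathcal{U})$ joined by a timelike curve $\gamma\subset\mathcal{U}$ avoiding $H^+(\mathcal{L})$. Because $\gamma\subset I^+(p)$ and $p\in I^-(\mathcal{L})$ (for small $\mathcal{U}$), any crossing of $\mathcal{L}$ by $\gamma$ would yield a future-directed timelike curve from $\mathcal{L}$ to a point of $\gamma\subset I^-(q)$, contradicting the achronality of $\mathcal{L}$ together with $q\in\mathcal{L}$; so $\gamma$ also misses $\mathcal{L}$, and $q\in$ edge$(\mathcal{L})$.

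The chief obstacle is the middle step of each inclusion: showing that the timelike curve avoiding one of the sets can be taken (or promoted) to avoid the other. In inclusion one this needs a controlled past extension of $\gamma'$ through $p'$ that remains outside $\mathcal{L}$, which I intend to achieve using a convex normal neighbourhood so that achronality of $\mathcal{L}$ rules out re-entry. In inclusion two it rests on the null-generator description of the Cauchy horizon — the fact that generators of $H^+(\mathcal{L})$ have past endpoints only on edge$(\mathcal{L})$ — which is the deepest ingredient and is exactly where Proposition \ref{prop:6.2} does its work. Once these local arguments are in place, the double inclusion closes and the identification edge$(H^+(\mathcal{L}))=$ edge$(\mathcal{L})$ follows.
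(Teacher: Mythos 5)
The paper states this proposition without proof (it is one of the results quoted from Hawking--Ellis), so there is no proof of the paper's to compare yours against; I can only assess your argument on its own terms. Your first inclusion, ${\rm edge}(\mathcal{L})\subseteq{\rm edge}(H^{+}(\mathcal{L}))$, is essentially sound once you make explicit that the past extension of $p'$ lies in $I^{-}(p')\subseteq I^{-}(q)$ and the arc from $r'$ to $r$ lies in $I^{+}(q)$, both of which miss $\mathcal{L}$ by achronality since $q\in\mathcal{L}$; this simultaneously gives $q\in H^{+}(\mathcal{L})$ and shows every point of $\gamma'$ lies outside $\overline{D^{+}(\mathcal{L})}\supseteq H^{+}(\mathcal{L})$.

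The second inclusion, however, has two genuine gaps. First, to get ${\rm edge}(H^{+}(\mathcal{L}))\subseteq\mathcal{L}$ you argue that a point of $H^{+}(\mathcal{L})\setminus\mathcal{L}$ lies on a past-directed null generator and is therefore ``in the interior three-manifold portion'' and not an edge point. That inference is invalid: a single complete null geodesic in four-dimensional Minkowski space is a closed achronal set, every point of which lies on a null generator of the set and every point of which is nevertheless an edge point (a timelike curve from $I^{-}(q,\mathcal{U})$ to $I^{+}(q,\mathcal{U})$ can be routed around a one-dimensional set). Lying on a generator does not exclude edge membership; what does the work is a separation property: $H^{+}(\mathcal{L})\subseteq\dot{I}^{-}[D^{+}(\mathcal{L})]$, and for $q\in H^{+}(\mathcal{L})\setminus\mathcal{L}$ with $\mathcal{U}$ disjoint from $\mathcal{L}$ one has $I^{-}(q,\mathcal{U})\subseteq I^{-}[D^{+}(\mathcal{L})]$ (because $q\in\overline{D^{+}(\mathcal{L})}$) while $I^{+}(q,\mathcal{U})$ is disjoint from $I^{-}[D^{+}(\mathcal{L})]$ (because $I^{+}(q)\cap D^{+}(\mathcal{L})=\emptyset$), so every connecting timelike curve crosses the boundary, and one must then show the crossing point lies in $H^{+}(\mathcal{L})$. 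Moreover the generator theorem for Cauchy horizons is standardly \emph{derived from} the edge identity you are proving, so invoking it here risks circularity. Second, in the transfer step you claim a timelike curve $\gamma$ avoiding $H^{+}(\mathcal{L})$ must avoid $\mathcal{L}$ because a crossing would contradict achronality with $q\in\mathcal{L}$. It would not: a point $w\in\gamma\cap\mathcal{L}$ can be spacelike-separated from $q$, since $\gamma$ is not contained in $I^{-}(q)\cup I^{+}(q)$. The conclusion is true, but the correct argument is indirect: if $\gamma$ met $\mathcal{L}$ at $w$, then, since its future endpoint $r\in I^{+}(q)$ lies outside $\overline{D^{+}(\mathcal{L})}$, the last point $x$ of $\gamma$ in $\overline{D^{+}(\mathcal{L})}$ beyond $w$ would satisfy $I^{+}(x)\cap D^{+}(\mathcal{L})=\emptyset$ (any point of $I^{+}(x)\cap D^{+}(\mathcal{L})$ would force a second intersection of a past-inextendible timelike curve with $\mathcal{L}$ in $I^{+}(w)$, violating achronality), hence $x\in H^{+}(\mathcal{L})\cap\gamma$, a contradiction. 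As written, the second inclusion is therefore not established.
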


\subsection{Global Hyperbolicity}
Closely related to Cauchy developments is the property of global hyperbolicity. The notion of $\textit{global hyperbolicity}$ was introduced by Leray in order to deal with questions of existence and uniqueness of solutions of hyperbolic differential equations on a manifold. It plays a key role in developing a rigorous theory of geodesics in Lorentzian geometry, in proving singularity theorems and its ultimate meaning can be seen as requiring the existence of Cauchy surfaces, i.e. spacelike hypersurfaces which each non-spacelike curve intersects exactly once. We shall here follow Geroch \cite{geroch1970domain} and Hawking-Ellis \cite{hawking1973large}, defining and proving in part what follows.

\begin{defn} A space-time $(M, g)$ is said to be globally hyperbolic if
\begin{description}
\item[(1)]the strong causality assumption holds on $(M, g)$;
\item[(2)]if for any two points $p$, $q \in M$, $J^+(p) \cap J^-(q)$ is compact and contained in $M$.
\end{description} \end{defn} 
Condition $\textbf{(2)}$ can be thought of as saying that $J^+(p) \cap J^-(q)$ does not contain any points on the edge of space-time, i.e. at infinity or at a singularity. The reason for the nomenclature $\textit{global hyperbolicity}$ is that on $M$, the wave equation for a $\delta$-function source at $p \in M$ has a unique solution which vanishes outside $M- J^+(p, M)$.

Recall that $M$ is said to be causally simple if for every compact set $\mathcal{K}$ contained in $M$, $J^+(\mathcal{K}) \cap M$ and $J^-(\mathcal{K}) \cap M$ are closed in $M$.

\begin{prop} An open globally hyperbolic set $M$ is causally simple. \end{prop}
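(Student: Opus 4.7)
Let $K\subset M$ be a compact set; the task is to show $J^+(K)\cap M$ is closed in $M$ (the argument for $J^-(K)\cap M$ is analogous by time-reversal). I will take a sequence $\{x_n\}\subset J^+(K)\cap M$ with $x_n\to x\in M$ and produce a future-directed non-spacelike curve from some $p\in K$ to $x$, from which $x\in J^+(p)\subset J^+(K)$ follows. For each $n$, choose $p_n\in K$ and a future-directed non-spacelike curve $\lambda_n$ from $p_n$ to $x_n$. By compactness of $K$ I can assume, after passing to a subsequence, that $p_n\to p\in K$.

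The first main step is to trap the tails of the $\lambda_n$ inside a single compact subset of $M$. Since $M$ is open and $I^+(x)$ is open and non-empty, I can pick $q\in I^+(x)\cap M$; for $n$ large, $x_n\in I^-(q)$, so $\lambda_n\subset J^+(p_n)\cap J^-(q)\subset J^+(K)\cap J^-(q)$. To replace this by a single compact set, I would cover the compact set $K$ by finitely many open sets of the form $I^+(r_j)$ (where $r_j\in M$ is chosen in the chronological past of some point of $K$, which is possible by openness of $I^+(r)$ and compactness of $K$), so that
\[
J^+(K)\cap J^-(q)\subset \bigcup_{j=1}^N \bigl(J^+(r_j)\cap J^-(q)\bigr) =: C.
\]
Each summand is compact and contained in $M$ by global hyperbolicity, so $C\subset M$ is compact and $\lambda_n\subset C$ for all sufficiently large $n$; in particular $p,x\in C$.

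The second step is a limit-curve argument on $C$. Fix an auxiliary Riemannian metric $h$ on $M$. Strong causality holds on $C$, so by Proposition~\ref{prop:6.6} no future-inextendible non-spacelike curve is imprisoned in $C$; combining this with a finite covering of $C$ by local causality neighborhoods yields a uniform bound $L$ on the $h$-arc-length of any non-spacelike curve contained in $C$. Parametrizing each $\lambda_n$ by $h$-arc length on a common interval $[0,L]$ (extending by the constant value at the endpoint where needed), the resulting maps are uniformly Lipschitz, so Arzel\`a--Ascoli yields a subsequence converging uniformly to a Lipschitz curve $\lambda:[0,L]\to C$ joining $p$ to $x$. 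The standard fact that the class of future-directed non-spacelike curves is closed under uniform $C^0$ limits (in a strongly causal region) then gives $\lambda$ future-directed non-spacelike, hence $p\leq x$ and $x\in J^+(K)$.

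\textbf{Main obstacle.} The delicate point is the uniform length bound in the limit-curve step: one must rule out the possibility that the $\lambda_n$ wrap ever longer paths inside $C$. Precisely here strong causality together with Proposition~\ref{prop:6.6} is used, since an unbounded-length sequence would, via a diagonal-extraction variant of Arzel\`a--Ascoli, generate a future-inextendible non-spacelike curve totally imprisoned in the compact set $C$, contradicting that proposition. Once the bound is in place the rest is routine, with the degenerate case $p=x$ handled by the constant curve.
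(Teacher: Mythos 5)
Your argument is correct and complete in outline: it is the standard limit-curve proof that global hyperbolicity implies causal simplicity. Be aware, though, that the paper itself gives no proof of this proposition at all (it is quoted from Hawking--Ellis without argument), so the only thing to compare against is part (1) of the theorem that follows, where the paper proves the single-point case ($J^{+}(p)$ and $J^{-}(p)$ closed for each $p\in M$) by a purely topological trick: the diamond $J^{+}(p)\cap J^{-}(q)$ is compact, hence closed in the Hausdorff manifold, and a putative point $r\in\overline{J^{+}(p)}\setminus J^{+}(p)$ together with any $q\in I^{+}(r)$ would make that diamond non-closed. That argument costs almost nothing, but it does not extend to a compact set $K$, since $J^{+}(K)=\bigcup_{p\in K}J^{+}(p)$ is a union of closed sets and unions of closed sets need not be closed; this is exactly why your heavier machinery --- trapping the curves in the compact set $C=\bigcup_{j}\bigl(J^{+}(r_{j})\cap J^{-}(q)\bigr)$, the uniform length bound from strong causality, and Arzel\`a--Ascoli --- is the right tool for the statement as actually asserted, and your proof is genuinely more general than anything the paper records. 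One refinement: the uniform $h$-length bound is obtained most cleanly from the finite cover of $C$ by causally convex neighbourhoods alone (each is met at most once by a non-spacelike curve, and each contributes a bounded length because its closure is compact and sits in a normal chart), so the detour through the no-imprisonment proposition is not strictly needed; if you do take that detour, you must additionally argue that the infinite-$h$-length limit curve produced by diagonal extraction is actually future-inextendible before the imprisonment result can be invoked, which is a small but nontrivial step in a merely strongly causal region.
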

Leray did not give the above definition of global hyperbolicity but an equivalent one that is the following 

\begin{defn} Given two points $p$, $q \in M$ such that strong causality holds on $J^+(p) \cap J^-(q)$, we define $C(p, q)$ to be the space of all non-spacelike curves from $p$ to $q$, regarding two curves $\gamma(t)$ and $\lambda(u)$ as representing the same point of $C(p,q)$ if one is a reparametrization of the other, i.e. if there exists a continuous monotonic function $f(u)$ such that $\gamma(f(u))=\lambda(u)$. \end{defn}
The topology of $C(p, q)$ is defined by saying that a neighbourhood of $\gamma$ in $C(p, q)$ consists of all curves in $C(p, q)$ whose points in $M$ lie in a neighbourhood $W$ of the points of $\gamma$ in $M$. Leray's definition is that $M$ is globally hyperbolic if $C(p, q)$ is compact for all $p$, $q \in M$. These definitions are equivalent, as shown by the condition $\textbf{(2)}$ of the following theorem.

\begin{thm}In a globally hyperbolic space-time $(M, g)$, the following properties hold:
\begin{description}
\item[(1)] $J^+(p)$ and $J^-(p)$ are closed $\forall p \in M$;
\item[(2)] strong causality holds on $M$ such that
\begin{equation*} M = J^-(M) \cap J^+(M), \end{equation*}
and, $\forall p, q \in M$, the space $C(p, q)$ of all non-spacelike curves from $p$ to $q$ is compact in a suitable topology;
\item[(3)] there exist Cauchy surfaces.
\end{description} \end{thm}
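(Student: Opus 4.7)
The plan is to prove the three properties in the order stated, since (2) and (3) make essential use of (1), and (3) in turn uses (2).

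For property (1), I would argue that $J^+(p)$ is closed by a direct compactness argument. Let $q_n \in J^+(p)$ with $q_n \to q$. Pick any point $r \in I^+(q)$ (such $r$ exists because through $q$ there always passes a future-directed timelike curve). Since $I^-(r)$ is open and contains $q$, it contains $q_n$ for all $n$ sufficiently large, so for such $n$ each $q_n$ lies in $J^+(p) \cap J^-(r)$, which is compact by the very definition of global hyperbolicity. Hence the limit $q$ lies in this set, and in particular $q \in J^+(p)$. The argument for $J^-(p)$ is dual.

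For property (2), strong causality is built into the definition, and $M = J^+(M) \cap J^-(M)$ is immediate since $p \in J^\pm(p)$. The content is the compactness of $C(p,q)$. Let $\gamma_n \in C(p,q)$; every $\gamma_n$ is contained in the compact set $K := J^+(p) \cap J^-(q)$. Because strong causality holds on $K$, Proposition \ref{prop:6.6} forbids future-inextendible imprisonment, so I can cover $K$ by finitely many local causality neighborhoods that no non-spacelike curve re-enters. Parametrizing each $\gamma_n$ by a suitable normalization and applying a standard limit-curve diagonal argument, I would extract a subsequence $\gamma_{n_k}$ converging uniformly on compacta to a continuous non-spacelike curve $\gamma$ from $p$ to $q$. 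Identifying reparametrizations, $\gamma$ represents the desired limit point of the subsequence in $C(p,q)$, and sequential compactness in this topology yields compactness.

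For property (3), I would implement Geroch's construction of a time function. Fix a finite Borel measure $\mu$ on $M$ with $\mu(M) < \infty$ and $\mu(\mathcal{U}) > 0$ for every non-empty open $\mathcal{U}$. Set $f^-(p) := \mu(J^-(p))$ and $f^+(p) := \mu(J^+(p))$. Using (1), upper semicontinuity of $f^-$ is immediate from closedness of $J^-(p)$; lower semicontinuity uses strong causality and the non-imprisonment proposition to show that past sets of nearby points cannot abruptly lose measure. Strong causality also implies that $f^-$ is strictly increasing along every future-directed non-spacelike curve. Then $t(p) := \log\bigl(f^-(p)/f^+(p)\bigr)$ is a continuous function on $M$, strictly increasing along every future-directed non-spacelike curve, and — by inextendibility arguments coupled with the monotone behavior of $\mu(J^\pm(\cdot))$ along inextendible curves — tending to $-\infty$ and $+\infty$ at the past and future ends of every inextendible non-spacelike curve. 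Hence each level set $\Sigma_c := t^{-1}(c)$ is achronal and is met exactly once by every inextendible non-spacelike curve, making $\Sigma_c$ a Cauchy surface.

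The principal obstacle will be the continuity of $f^\pm$ in step (3): upper semicontinuity is easy, but lower semicontinuity requires ruling out the pathology in which a limit of past sets loses positive measure. This is exactly where one must exploit strong causality together with Proposition \ref{prop:6.6} in a non-trivial way, and similarly for showing that $t$ surjects onto $(-\infty,+\infty)$ along inextendible non-spacelike curves. A secondary delicate point is the precise choice of topology on $C(p,q)$ in step (2) so that the limit-curve construction actually yields compactness rather than merely sequential compactness; this is handled by equivalence with Leray's original definition noted just before the theorem.
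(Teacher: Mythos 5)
Your proposal follows essentially the same route as the paper's proof: closedness of $J^{\pm}(p)$ via compactness of causal diamonds $J^+(p)\cap J^-(r)$ for a suitably chosen $r$, compactness of $C(p,q)$ via a limit-curve argument together with the non-imprisonment result (Proposition \ref{prop:6.6}), and existence of Cauchy surfaces via Geroch's volume functions $f^{\pm}$ and the level sets of their ratio. The argument is correct and the delicate points you flag (semicontinuity of $f^{\pm}$, the unboundedness of the time function along inextendible curves, and the topology on $C(p,q)$) are exactly the ones the paper addresses.
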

\begin{proof}[$\textbf{Proof (1)}$] If $(X, F)$ is Hausdorff space and $A \subset X$ is compact, then $A$ is closed. In our case, this implies that $J^+(p) \cap J^-(q)$ is closed. Moreover, it is not difficult to see that $J^+(p)$ itself must be closed. In fact, otherwise we could find a point $r \in \overline{J^+(p)}$ such that $r \not \in J^+(p)$. 

Let us now choose $q \in I^+(r)$. We would then have $r \in \overline{J^+(p) \cap J^-(q)}$ but $r \not \in J^+(p) \cap J^-(q)$, which implies that $J^+(p) \cap J^-(q)$ is not closed, contradicting what we found before. Similarly we also prove that $J^-(p)$ is closed. \end{proof}

\begin{proof}[$\textbf{Proof (2)}$]
Suppose that $C(p, q)$ is compact. Let $r_n$ be an infinite sequence of points in $J^+(p) \cap J^-(q)$ and let $\lambda_n$ be a sequence of non-spacelike curves from $p$ to $q$ through the corresponding $r_n$. As $C(p, q)$ is compact, there will be a curve $\lambda$ to which some sequence $\lambda'_n$ converges in the topology on $C(p, q)$. 

Let $\mathcal{U}$ be a neighbourhood of $\lambda$ in $\mathcal{M}$ such that $\bar{\mathcal{U}}$ is compact. Then $\mathcal{U}$ will contain all $\lambda'_n$ and hence all $r'_n$ for $n$ sufficiently large, and so there will be a point $r \in \mathcal{U}$ which is a limit point of the $r'_n$. Clearly $r$ lies on $\lambda$. Thus, every infinite sequence in $J^+(p) \cap J^-(q)$ has a limit point in $J^+(p) \cap J^-(q)$. Therefore, $J^+(p) \cap J^-(q)$ is compact.

Conversely, suppose $J^+(p) \cap J^-(q)$ is compact. Let $\lambda_n$ be an infinite sequence of non-spacelike curves from $p$ to $q$. A lemma exists (see Hawking-Ellis 1973) which assures that given an open set, in our case $\mathcal{M} - q$, there will be a future-directed non-spacelike curve $\lambda$ from $p$ to $q$ which is inextendible in $\mathcal{M}- q$, and it is such that there is a subsequence $\lambda'_n$ which converges to $r$ for every $r \in \lambda$. The curve $\lambda$ must have a future endpoint at $q$ since by proposition it cannot be totally future imprisoned in the compact set $J^+(p) \cap J^-(q)$, and it cannot leave the set except at $q$. 

Let $\mathcal{U}$ be any neighbourhood of $\lambda$ in $\mathcal{M}$ and let $r_i$, with $1 \leq i \leq k$, be a finite set of points on $\lambda$ such that $r_1=p$, $r_k=q$ and each $r_i$ has a neighbourhood $V_i$ with $J^+(V_i) \cap J^-(V_{i+1})$ contained in $\mathcal{U}$. Then, for sufficiently large $n$, $\lambda'_n$ will be contained in $\mathcal{U}$. Thus, $\lambda'_n$ converges to $\lambda$ in the topology on $C(p, q)$ and so $C(p, q)$ is compact. \end{proof}

\begin{proof}[$\textbf{Proof (3)}$]
We put a measure $\mu$ on $M$ such that the total volume of $M$ in this measure is equal to 1. For $p \in M$, we define $f^+ : p \in M \rightarrow V$, to be the volume $V$ of $J^+(p, M)$ in the measure $\mu$. Clearly, $f^+(p)$ is a bounded function on $M$ which decreases along every future-directed non-spacelike curve. We shall show that global hyperbolicity implies that $f^+(p)$ is continuous on $M$. To do this, it will be sufficient to show that $f^+(p)$ is continuous on any non-spacelike curve $\lambda$.

Let $r \in \lambda$ and let $x_n$ be an infinite sequence of points on $\lambda$ strictly to the past of $r$. Let $T\equiv \cap_n J^+(x_n, M)$. Suppose that $f^+(p)$ was not upper semi-continuous on $\lambda$ at $r$. There would be a point $q \in T - J^+(r, M)$. Then $r \not \in J^-(q, M)$; but each $x_n \in J^-(q, M)$ and so $r \in \overline{J^-}(q, M)$, which is impossible as $J^-(q, M)$ is closed in $M$. The proof that it is lower semi-continuous is similar.

As $p$ is moved to the future along an inextendible non-spacelike curve $\lambda$ in $M$, the value of $f^+(p)$ must tend to zero. For suppose there were some point $q$ which lies to the future of every point of $\lambda$. Then the future-directed curve $\lambda$ would enter and remain within the compact set $J^+(r) \cap J^-(q)$ for every $r \in \lambda$ which would be impossible, by proposition ($\ref{prop:6.6}$), as the strong causality condition holds on $M$.  It becomes then trivial to prove the continuity of the function $f^+ : p \in M \rightarrow V^1$, where $V^1$ is the volume of $I^+(p, M)$. From now on, we shall mean by $f^+$ the volume function of $I^+(p, M)$. 

Now we consider a function $f(p)$ defined on $M$ by
\begin{equation*} f: p \in M \rightarrow f(p) \equiv \frac{f^-(p)}{f^+(p)}. \end{equation*}
Any surface of constant $f$ will be an acausal set and, by proposition ($\ref{prop:6.2}$), will be a three-dimensional $C^1$-manifold imbedded in $M$. The function $f(p)$ is also continuous and strictly decreasing along each past-directed timelike curve. Let $\mathcal{L}$ be the set of points at which $f=1$, since $f$ is strictly decreasing along timelike curves, $\mathcal{L}$ is achronal.

To show that $\mathcal{L}$ is also a $\textit{Cauchy surface}$, we shall prove the following:

\begin{prop} Let $\mathcal{L}$ be the set of points where $f=1$, and let $p \in M$ be such that $f(p) >1$ and $\gamma$ be any past-directed timelike curve, without a past endpoint, from $p$. Since $f$ is continuous, $\gamma$ must intersect $\mathcal{L}$, provided that  $p \in D^+(\mathcal{L}$). Similarily, if $f(p) <1$, $p \in D^-(p)$. \end{prop}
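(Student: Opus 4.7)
The plan is to apply the intermediate value theorem to the continuous function $f$ restricted to $\gamma$, after establishing that $f\to 0$ along $\gamma$ toward the past. Since $f(p)>1$ and $f$ has already been shown to be continuous on $M$ (and strictly monotone along timelike curves), the conclusion will follow as soon as the past-directed limit of $f$ along any past-inextendible $\gamma$ is driven below $1$.

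First I would record the monotonicity of the two volume functions along past-directed non-spacelike curves: if $q$ lies strictly in the past of $p$ along $\gamma$, then $I^-(q,M)\subset I^-(p,M)$ and $I^+(q,M)\supset I^+(p,M)$, so $f^-(q)\le f^-(p)$ and $f^+(q)\ge f^+(p)$. Since $\gamma$ is timelike these inclusions are strict, and in particular $f^+(q)\ge f^+(p)>0$ for every $q$ on the past portion of $\gamma$. This positive lower bound on the denominator of $f$ is the technical point that makes the eventual ratio estimate work.

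The crucial step is the limit $f^-(q)\to 0$ as $q$ recedes along $\gamma$, which is the past-directed mirror of the argument already used in the text for $f^+$ along future-inextendible curves. If this limit failed, there would exist a point $r$ lying to the past of every point of $\gamma$, and the past-inextendible curve $\gamma$ would be confined to the set $J^+(r)\cap J^-(p)$, which is compact by global hyperbolicity. Proposition \ref{prop:6.6}, applied under time reversal, then yields a contradiction, since strong causality forbids a past-inextendible non-spacelike curve from being totally past-imprisoned in a compact set. Combining $f^-(q)\to 0$ with the lower bound $f^+(q)\ge f^+(p)>0$, one obtains
\[
f(q)=\frac{f^-(q)}{f^+(q)}\;\le\;\frac{f^-(q)}{f^+(p)}\;\longrightarrow\;0.
\]

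Finally, because $f(p)>1$, $f(q)\to 0$ along $\gamma$, and $f$ is continuous on $M$ (hence on the image of $\gamma$), the intermediate value theorem produces a parameter value at which $f$ equals $1$, giving a point $q_0\in\gamma\cap\mathcal{L}$. Since $\gamma$ was an arbitrary past-inextendible timelike curve through $p$, this shows $p\in D^+(\mathcal{L})$; the dual statement $f(p)<1\Rightarrow p\in D^-(\mathcal{L})$ follows by interchanging the roles of past and future throughout. I expect the main obstacle to be precisely the limit $f^-(q)\to 0$: it is the only step that genuinely invokes the hypotheses, requiring global hyperbolicity for the compactness of $J^+(r)\cap J^-(p)$ and strong causality to apply the non-imprisonment result. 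All remaining steps, monotonicity, positivity of $f^+$ on the past of $p$, and the final continuity/IVT argument, are essentially mechanical once the continuity of $f$ established earlier in the chapter is taken for granted.
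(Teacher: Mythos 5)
Your proof is correct, and it reaches the crucial fact ($f^-\to 0$ along $\gamma$) by a genuinely different route from the paper. The paper, following Geroch, argues by contradiction inside the space of causal curves: assuming a point $q$ lies to the past of every point of $\gamma$, it builds a sequence of timelike curves from $p$ to $q$ each containing a longer initial segment of $\gamma$, extracts a limit curve $\Gamma$ using the compactness of $C(p,q)$, observes that $\Gamma$ contains $\gamma$, and derives a contradiction because a past-inextendible curve cannot sit inside a compact causal curve with a past endpoint. You instead note that the point $q$ (your $r$) forces $\gamma$ into the compact set $J^+(r)\cap J^-(p)$ and invoke the time-reversal of Proposition \ref{prop:6.6} on non-imprisonment under strong causality — which is exactly the argument the chapter already used to show $f^+\to 0$ along future-inextendible curves. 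Your version is the more economical one and keeps the two halves of the construction symmetric; the paper's version is self-contained in the sense that it re-derives the needed non-imprisonment statement from Leray's characterization of global hyperbolicity (compactness of $C(p,q)$) rather than citing the earlier proposition. One further merit of your write-up is that you make explicit the two points the paper leaves implicit: that a nonzero limit of $f^-$ yields a point in $\bigcap_{q\in\gamma} I^-(q)$ (via finiteness of the measure and monotone convergence), and that the denominator $f^+(q)\ge f^+(p)>0$ stays bounded away from zero so that the ratio $f$ genuinely tends to $0$ before the intermediate value theorem is applied.
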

Eventually, the previous proposition, implies that $\mathcal{L}$ is indeed a Cauchy surface. Hence, we consider any past-directed timelike curve $\gamma$ without past endpoint from $p$. In view of the continuity of $f$, such a curve $\gamma$ must intersect $\mathcal{L}$, provided one can show that there exists $\epsilon \rightarrow 0^+: f|_{\gamma}= \epsilon$, where $\epsilon$ is arbitrary. Furthermore, given $q \in M$, e we denote a set $U \subset M$ such that $U \subset I^+(q)$. A subset $U$ of this form covers $M$. Moreover, any $U$ cannot be in $I^-(r)$, $ \forall r \in \gamma$. This is forbidden by global hyperbolicity. 

Suppose, on the contrary, that $q \in \cap_{ r \in \gamma} I^-(r)$. Then we choose a sequence of points $\{t_i\}$ on $\gamma$ such that $t_{i+1} \in I^-(t_i)$ and such that every point of $\gamma$ lies to the past of at least one $t_i$. For each $i$, draw a timelike curve $\gamma'$ which:
\begin{description}
\item[(a)]begins at $p$,
\item[(b)]  $\gamma'=\gamma$ to $t_i$,
\item[(c)] $\gamma'$ continues to $q$.
\end{description}
Since $M$ is globally hyperbolic, this sequence has a limit curve, $\Gamma$. The limit curve evidently contains $\gamma$. But this is impossible, for $\gamma$, if it were contained in a compact causal curve from $p$ to $q$, would then have a past endpoint. Hence, there must be some point $r$ of $\gamma$ such that $U \not \subset I^-(r)$. Since $M$ may be covered by such $U$'s, we conclude that $f^-(r)$ approaches zero as $r$ continues into the past on $\gamma$, and, therefore, that $\gamma$ intersect $\mathcal{L}$. We have shown that every past-directed timelike curve from $p$ intersect $\mathcal{L}$, i.e., that $p \in D^+(\mathcal{L})$. Similarly, if $f(p) < 1$, then $p \in D^-(\mathcal{L})$. Hence, $\mathcal{L}$ is a $\textit{Cauchy surface}$.  
\end{proof}

Global hyperbolicity is a stable property of space-times, i.e., arbitrary, sufficiently small variations in the metric will not destroy global hyperbolicity. The proof can be found in Geroch \cite{geroch1970domain}. An useful example of globally hyperbolic manifolds is given by the $\textit{Hyperbolic Riemannian manifolds}$ \cite{lichnerowicz2018republication}.

$\textbf{Example.}$ Let us consider an oriented differentiable manifold $V_n$ of dimension $n$ and class $C^{\infty}$, endowed with a volume element $\eta$ and let us introduce orthonormal frames, the elements of a principal fibre bundle $\mathcal{E}(V_n)$ over $V_n$, with structure group the Lorentz group $L(n)$. With respect to these frames $(e_0, e_A)$, where $A=1, .., (n-1)$ and $\alpha=0, 1, ..., (n-1)$, the metric can be written locally on an open neighbourhood as:
\begin{equation*} ds^2= g_{\alpha \beta} \theta^\alpha \theta^\beta, \end{equation*}
where the $\theta^\alpha$ are $1-forms$ and $g_{\alpha \beta}= \eta_{\alpha \beta}$ with $\eta_{\alpha \beta}=0$ for $\alpha \neq \beta$, $\eta_{00}=1$, $\eta_{AA}=-1$. We assume that the metric $ds^2$ of $V_\eta$ is normal hyperbolic. This metric defines in the tangent space at each point $x$ a convex cone of second order $C_x$.
 
If $A=(A^{\lambda'}_{\alpha})$ is a matrix in $L(n)$, the time signature $\rho_{\lambda}$ of the matrix $A$ is equal to $\pm 1$ depending on the sign of $A^{0'}_0$.  A $\textit{time orientation}$ $\rho$ is defined on $V_n$, with respect to the frames $y \in \mathcal{E}(V_n)$, by a indicator $\rho_y= \pm 1$ such that, if $y=y'A$, one has
\begin{equation*} \rho_y=\rho_{y'} \rho_A. \end{equation*}
We all assume that $V_n$ admits a time orientation $\rho$.

A vector $e_0$, with $e^2_0=1$, is $\textit{future-oriented}$ if the component of $\rho$ with respect to the orthonormal frames $(e_0, e_A)$ is equal to 1. Similarly, a vector $e_0$ is $\textit{past-oriented}$ if the component of $\rho$ with respect to the orthonormal frames $(e_0, e_A)$ is equal to $-1$. Thus, the time orientation $\rho$ makes it possible to distinguish the half-cones of $C$, the $\textit{future half-cone}$ $C^+$ and $\textit{the past half-cone}$ $C^-$. We want to stress that an orientable hyperbolic manifold may not admit a time orientation.

A timelike path of $V_n$ is a path whose tangent at every point $x$ lies within or on $C_x$. If $\mathcal{U}$ is a set in $V_n$, the $\textit{future}$ ${I}^+(\mathcal{U})$ is the set of points on timelike paths emanating from the points $x$ of $\mathcal{U}$ and lying in the future of $x$, the $\textit{past}$ ${I}^-(\mathcal{U})$ being the set of points on timelike paths leading to the points $x$ of $\mathcal{U}$ and lying in the past of $x$.

These definitions hold in particular in the case $\mathcal{U}= \{x'\}$. The emission ${I}(x')$ of a point $x'$ is the union of its future ${I}^+(x')$ and its past ${I}^-(x')$. The boundary $\partial {I}(x')$ of this emission is characteristic with respect to the field of cones, i.e. it is tangent at each of its points $x$ to the cone $C_x$. The boundary $\partial {I}(x')= \Gamma_{x'}$ is said to be the $\textit{characteristic conoid of vertex x'}$. This conoid consists of bicharacteristics or null geodesics emanating from $x'$.

By the use of geodesic normal coordinates centred at $x'$, one finds that as one approaches its vertex, the conoid $\Gamma_{x'}$ is diffeomorphic to a neighbourhood of the vertex of a cone, the bicharacteristics corresponding to the generators of the cone. That is no longer so away from the vertex $x'$, even under the global assumptions made below; in particular, the null geodesics emanating from $x'$ can intersect.

In the theory of hyperbolic linear systems, Leray has introduced some global assumptions which ensure the existence of elementary solutions, even in the presence of singularities of the characteristic conoid. According to Leray and Madame Choquet-Bruhat, a hyperbolic manifold $V_n$ satisfying the previous assumptions is said to be $\textit{globally hyperbolic}$ if the set of timelike paths joining two points is always either empty or compact: from every infinite set of timelike paths joining the two points, one can always extract a sequence that converges to a timelike path. It this condition is satisfied, no timelike line can ever be closed.  

On a globally hyperbolic manifold, a set $\mathcal{U}$ is said to be $\textit{compact towards}$ $\textit{the past}$ if the intersection of $\mathcal{U}$ with ${I}^-(x)$ is compact or empty for all $x$; ${I}^+(\mathcal{U})$ and every closed subset of ${I}^+(\mathcal{U})$ are then also compact towards the past. 

Similarly, one can say that $\mathcal{U}$ is $\textit{compact towards the future}$ if the intersection of $\mathcal{U}$ with ${I}^+(x)$ is compact or empty for all $x$; ${I}^-(\mathcal{U})$ and every closed subset of ${I}^-(\mathcal{U})$ are then also compact towards the future. From a fundamental lemma of Leray, it turns out that if $\mathcal{U}$ is compact towards the past and ${\mathcal{U}}'$ is compact, the intersection ${I}^+(\mathcal{U}) \cap {I}^-({\mathcal{U}}')$ is compact. 

Every point of a locally hyperbolic manifold admits a neighbourhood $\Omega$ homeomorphic to an open ball and globally hyperbolic, in such a way that the previous results hold on $\Omega$.

This example is interesting because it also provides an alternative definition of the $\textit{characteristic conoid}$ to that given in the first chapters. Its interest lies in the use of causal structure concepts and hence can be seen as more fundamental.

Eventually, global hyperbolicity plays a key role in proving singularity theorems because of the following proposition:

\begin{prop} Let $p$ and $q$ lie in a globally hyperbolic set  $M$ and $q \in J^+(p)$. Then, there exists a non-spacelike geodesic from $p$ to $q$ whose length is greater than or equal to that of any other non-spacelike curve from $p$ to $q$. \end{prop}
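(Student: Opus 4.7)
The plan is to combine the compactness of $C(p,q)$ provided by global hyperbolicity with an upper semi-continuity property of the Lorentzian arc-length, produce a maximizing curve by a standard compactness argument, and then show that any such maximizer must in fact be an unbroken non-spacelike geodesic. More precisely, I would first define the Lorentzian arc-length functional $L\colon C(p,q)\to[0,\infty)$ by $L(\gamma)=\int \sqrt{-g(\dot\gamma,\dot\gamma)}\,dt$ on piecewise smooth non-spacelike curves and extend it to merely continuous non-spacelike curves by taking the infimum of $L$ over sequences of piecewise geodesic approximations (as used in the preceding theorem). The earlier characterisation of global hyperbolicity tells me that $C(p,q)$ is compact in the $C^0$ topology on non-spacelike curves modulo reparametrization, and that $J^+(p)\cap J^-(q)$ is a compact subset of $M$, which gives the ambient compactness needed to keep any limit curve inside the geometry.

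The crucial analytic step is to prove that $L$ is upper semi-continuous on $C(p,q)$: if $\gamma_n\to\gamma$ in $C(p,q)$, then $\limsup_{n\to\infty} L(\gamma_n)\le L(\gamma)$. The idea is to cover the compact image of $\gamma$ by a finite chain of geodesically convex normal neighbourhoods $U_1,\dots,U_N$ in which any two causally related points are joined by a unique non-spacelike geodesic, and in which that geodesic is the longest non-spacelike curve between its endpoints lying in $U_i$ (the local maximising property of timelike geodesics, which follows from the Gauss lemma applied to the exponential map). Pick sample points $p=x_0,x_1,\dots,x_k=q$ along $\gamma$ with consecutive pairs in a common $U_i$; for $n$ large, $\gamma_n$ passes near corresponding points $x_j^{(n)}$ with consecutive pairs still in the same $U_i$. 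Then
\begin{equation*}
L(\gamma_n)\le \sum_{j=1}^{k} \ell\bigl(x_{j-1}^{(n)},x_j^{(n)}\bigr),
\end{equation*}
where $\ell(\cdot,\cdot)$ denotes the length of the unique connecting geodesic, and continuity of the exponential map forces $\ell(x_{j-1}^{(n)},x_j^{(n)})\to\ell(x_{j-1},x_j)$; refining the partition makes $\sum_j \ell(x_{j-1},x_j)$ approach $L(\gamma)$ from above, which yields upper semi-continuity. An upper semi-continuous real-valued function on the compact set $C(p,q)$ attains its supremum, so there exists $\gamma^\ast\in C(p,q)$ with $L(\gamma^\ast)=\sup_{C(p,q)} L$; this supremum is finite because $J^+(p)\cap J^-(q)$ is compact and the metric coefficients are bounded there.

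Finally, I would argue that $\gamma^\ast$ is a non-spacelike geodesic. Suppose, for contradiction, that it is not. Then there exist two points $r,s$ on $\gamma^\ast$, with $s$ to the future of $r$, lying inside a single geodesically convex neighbourhood $U$ such that the restriction $\gamma^\ast|_{[r,s]}$ does not coincide (up to reparametrization) with the unique non-spacelike geodesic $\eta$ from $r$ to $s$ in $U$. By the local maximising property of geodesics in $U$, $L(\eta)>L(\gamma^\ast|_{[r,s]})$, so replacing the arc from $r$ to $s$ by $\eta$ produces a non-spacelike curve in $C(p,q)$ of strictly greater total length than $\gamma^\ast$, contradicting maximality. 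Hence $\gamma^\ast$ is a geodesic everywhere, completing the proof.

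The main obstacle I expect is the upper semi-continuity step: one has to control possibly wild variations of the tangent vector along the approximating curves $\gamma_n$, and there is no Riemannian length to dominate things as in the positive-definite case. Chopping $\gamma$ into short arcs inside convex normal neighbourhoods and using the local maximality of geodesics converts this into a finite-sum estimate that depends only on continuity of the exponential map, which is the cleanest route; a secondary technical subtlety is giving a coherent definition of $L$ for continuous, merely $C^0$ non-spacelike curves, but this is handled by the infimum-over-polygonal-approximations construction that agrees with the integral formula on piecewise smooth curves.
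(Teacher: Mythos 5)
The paper itself states this proposition without proof (it is quoted from Hawking--Ellis as the payoff of global hyperbolicity for the singularity theorems), so there is no in-text argument to compare yours against; judged on its own, your proof is correct and is the standard one. The two load-bearing ingredients are exactly right: (i) the compactness of $C(p,q)$, which the paper \emph{does} establish in part (2) of the preceding theorem, and (ii) upper semi-continuity of the Lorentzian length functional, which you correctly obtain by defining $L$ on merely continuous causal curves as an \emph{infimum} over partitions into convex normal neighbourhoods (the reverse of the Riemannian convention, forced by the local \emph{maximality} of non-spacelike geodesics), so that $L$ is u.s.c. rather than l.s.c. and a maximizer on the compact set $C(p,q)$ exists; finiteness of the supremum then comes for free from u.s.c. plus compactness, and you do not really need the separate remark about bounded metric coefficients. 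Two small points deserve one extra sentence each if you write this up: in the semi-continuity estimate you need the entire arc of $\gamma_n$ between consecutive sample points, not just its endpoints, to lie in the convex neighbourhood $U_i$ (this follows from $C^0$ convergence and a Lebesgue-number argument, but should be said); and in the final step the strict inequality $L(\eta)>L(\gamma^\ast|_{[r,s]})$ requires the uniqueness clause of the local maximality lemma, which also rules out corners at the junctions, so the maximizer is an unbroken geodesic. With those remarks made explicit the argument is complete.
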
.

\chapter{Application: Green functions of Gravitational Radiation Theory}
\chaptermark{Green functions of Gravitational Radiation Theory}
\epigraph{The heavens and all the constellations rung, \\
The planets in their station listening stood.}{John Milton, Paradise Lost}
In the previous Chapters, it has been shown how the Riemann function solves a characteristic initial-value problem. Our aim is to use this method to study gravitational radiation in black hole collisions at the speed of light. More precisely, to analyse the  Green function for the perturbative field equations by studying the corresponding second-order hyperbolic operator with variable coefficients. After reduction to canonical form of this hyperbolic operator, the integral representation of the solution in terms of the Riemann kernel is obtained. The study of the axisymmetric collision of two black holes at the speed of light is useful in order to understand the more realistic collision of two black holes with a large but finite incoming Lorentz factor $\gamma$. The curved radiative region of the space-time, produced after the two incoming impulsive plane-fronted shock waves have collided, is treated using perturbation theory. To proceed with the study of the Green functions of the gravitational radiation in black hole collisions at the speed of light, following D'Eath \cite{d1992gravitational1, d1992gravitational2}, we make an introduction about its main features.

\section{Black Hole Collisions at the speed of light}
Since the time when general relativity was originally formulated by Einstein there is no analytic solution which does not possess a large number of simplifying symmetries. To study the generation of gravitational radiation by realistic physical sources it is necessary to consider isolated gravitating systems that are time dependent and which can have no simplifying features apart from axisymmetry. This can be done by making use of approximation procedures. There are two alternatives which are $\textit{numerical simulation}$ and $\textit{perturbation theory}$, respectively. In this last case, one assumes that the space-time metric differs only very slightly from some fixed background. The field equations for the metric perturbations are linear in the lowest order and mathematically tractable owing to the simple nature of the background metric. However, since the time-dependent perturbations must be small, the gravitational radiation produced is almost always correspondingly weak.  To deduce the behaviour of gravitating systems when the perturbations are not small, it is necessary to perform the weak-field limit which can provide physical insight but not quantitative results. In fact, there is only one physical process in which perturbation methods have proved successful in describing truly strong-field gravitational radiation that is the high-speed collision of two black holes. The success of perturbation theory in these space-times is due to certain special features of their geometry. 

More precisely, owing to special-relativistic effects, the gravitational field of a black hole travelling close to the speed of light becomes concentrated in the vicinity of its trajectory, which lies close to a null plane in the surrounding nearly Minkowskian space-time. At precisely the speed of light, the black hole turns into a particular sort of impulsive gravitational plane-fronted wave. Then the curvature is zero except on the null plane of its trajectory, and there is a massless particle travelling along the axis of symmetry at the center of this null plane. 

An important property of this sort of gravitational shock wave is that geodesics crossing it are not only bent inwards, but also undergo an instantaneous translation along the null surface that describes the trajectory of the wave. The nature of this translation is such that geodesics crossing the shock close to the axis of symmetry are delayed relative to those which cross the shock far out from the axis. Hence, when two such waves pass through each other in a head-on collision, the far-field region of each wave is given a large head start over its near-field counterpart, in addition to being bent slightly inwards. Because of this, the self-interaction of the far field of each wave as it propagates out towards null infinity takes place without interference from the highly nonlinear region near the axis of symmetry; and because gravity is weak in the far-field region, perturbation theory can be used to study this process. However, the radiation produced in the forward and backward null directions is not weak, for although the far fields contain only a fraction of the total energy, the solid angle into which they are focused is small, and hence the energy flux per unit solid angle in these directions is not small. Thus, the perturbation methods can successfully describe the generation of truly strong-field gravitational radiation in these space-times. 

There are two different perturbation methods that one can use to treat these high-speed collisions. In one approach, the collision was studied by large but finite $\gamma$, where $\gamma$ is the Lorentz factor of the incoming holes. It was shown that the metric of a single high-speed hole, and hence also the precollision metric in the high-speed collision, can be expressed as a perturbation series in ${\gamma}^{-1}$. Then, it is possible to use a method of matched asymptotic expansions to investigate the space-time geometry to the future of the collision. It is necessary to use a number of different asymptotic expansions  to allow for the various length and time scales characteristic of the gravitational field in different parts of the space-time. One expects that expansions holding in adjacent regions will match smoothly on to each other; the regions to the past thereby providing boundary conditions for those neighbouring regions to the future. 

Following this approach, it is possible to calculate the radiation on angular scales of $O({\gamma}^{-1})$ produced by the focusing of the far fields of the waves as they pass through each other during the collision. In this region the news function has an asymptotic expansion of the form
\begin{equation} \label{eq:7.1} c_0(\bar{\tau}, \hat{\theta}={\gamma}^{-1}\psi) \sim \sum_{n=0}^{\infty} {\gamma}^{-2n} \mathcal{Q}_{2n}(\bar{\tau}, \psi) \end{equation}
valid as $\gamma \rightarrow \infty$ with $\bar{\tau}$, $\psi$ fixed, where $\bar{\tau}$ is a suitable $\textit{retarded time}$ coordinate and $\hat{\theta}$ is the angle from the symmetry axis in the center-of-mass frame. The calculus of the leading term $\mathcal{Q}_0(\bar{\tau}, \psi)$ shows that this does not vanish and it is a regular term of $\bar{\tau}$. Since $\mathcal{Q}_0(\bar{\tau}, \psi)$ is not dumped by any power of ${\gamma}^{-1}$, the news function is of order 1, and therefore describes truly strong-field gravitational radiation. 

On angular scales of order 1, the news function should have an asymptotic expansion of the form
\begin{equation} \label{eq:7.2}c_0(\hat{\tau}, \hat{\theta}={\gamma}^{-1}\psi) \sim \sum_{n=0}^{\infty} {\gamma}^{-n} {S}_{n}(\hat{\tau}, \hat{\theta}) \end{equation}
valid as $\gamma \rightarrow \infty$ with $\hat{\tau}$, $\hat{\theta}$ fixed. The retarded time variables in $(\ref{eq:7.1})$ and $(\ref{eq:7.2})$ are not the same, since they refer to varying time delays suffered by different parts of the shocks when they collide. Here ${S}_{0}(\hat{\tau}, \hat{\theta})$ must be the news function for the collision at the speed of light, i.e. $\gamma=\infty$. If the two asymptiotic expansions $(\ref{eq:7.1})$ and $(\ref{eq:7.2})$ both hold in the intermediate region where ${\gamma}^{-1} << \hat{\theta} << 1$, then matching enables us to gain information about the angular dependence of ${S}_{0}(\hat{\tau}, \hat{\theta})$ near the axis $\hat{\theta}=0$. Furthermore, if ${S}_{0}(\hat{\tau}, \hat{\theta})$ is sufficiently regular it will possess a convergent series of the form
\begin{equation}  \label{eq:7.3}{S}_{0}(\hat{\tau}, \hat{\theta})= \sum_{n=0}^{\infty} a_{2n} (\hat{\tau}) (sin(\hat{\theta}))^{2n}, \end{equation}
since it is symmetrical about $\hat{\theta}= \frac{\pi}{2}$ in the center-of-mass frame. Since $\hat{\theta}={\gamma}^{-1}\psi$ in Eq. $(\ref{eq:7.1})$, the ${\hat{\theta}}^{2n}$ part of Eq. $(\ref{eq:7.3})$ will be found from the $({\gamma}^{-1}\psi)^{2n}={\gamma}^{-2n}\psi^{2n}$ part of $(\ref{eq:7.1})$, and then finding $\mathcal{Q}_{2n}(\bar{\tau}, \psi) $ enables one to determine the coefficient $a_{2n}(\hat{\tau})$ of $(sin(\hat{\theta}))^{2n}$ in Eq. ($\ref{eq:7.3})$. In this way, $a_{2n}(\hat{\tau})$ was found, given by the limiting form of $\mathcal{Q}_{0}(\bar{\tau}, \psi)$ as $\psi \rightarrow \infty$. Hence, perturbation methods can be used to determine the entire news function of the highly nonlinear speed-of-light collision. But to calculate high-order $\mathcal{Q}_{2n}(\bar{\tau}, \psi)$ requires the solution of inhomogeneous flat-space wave equations with complicated source terms, and it is not possible to determine the nonisotropic part of ${S}_{0}(\hat{\tau}, \hat{\theta})$.

We will follow another way of calculating ${S}_{0}(\hat{\tau}, \hat{\theta})$ using perturbation methods, which deals with the collision at the speed of light. Starting with the speed-of-light collision of two shocks which each have energy $\mu$, then we make a large Lorentz boost away from the center-of-mass frame. There the energy $\nu=\mu e^\alpha$ of the incoming shock 1, which initially lies on the hyperplane $z+t=0$ between two portions of Minkowski space, obeys $\nu >> \lambda$, where $\lambda= \mu e^{- \alpha}$ is the energy of the incoming shock 2, which initially lies on the hypersurface $z-t=0$. In the boosted frame, the metric describing the scattering of the weak shock off the strong one possesses a perturbation expansion in powers of $\frac{\lambda}{\nu}$, that is
\begin{equation} \label{eq:7.4} g_{ab} \sim \nu^2 \bigg{[} \eta_{ab} + \sum_{i=1}^\infty \bigg{(}\frac{\lambda}{\nu} \bigg{)}^i h_{ab}^{(i)} \bigg{]}, \end{equation}
with respect to suitable coordinates, where $\eta_{ab}$ is the Minkowski metric. The problem of solving the Einstein field equations becomes a singular perturbation problem of finding $h_{ab}^{(1)}$, $h_{ab}^{(2)}$, $\dots$, by successively solving the linearized field equations at first, second, ... order in $\frac{\lambda}{\nu}$, given the characteristic initial data on the surface $u=0$ just to the future of the strong shock 1. 

On boosting back to the center-of-mass frame, one finds that the perturbation series $(\ref{eq:7.4})$ gives an accurate description of the space-time geometry in the region in which gravitational radiation propagates at small angles away from the forward symmetry axis $\hat{\theta}=0$. By reflection symmetry, an analogous series also give a good description near the backward axis $\hat{\theta}=\pi$. The news function $c_0(\hat{\tau}, \hat{\theta})$, which describes the gravitational radiation arriving at future null infinity $J^+$ in the center-of-mass frame, is expected to have the convergent series expansion
\begin{equation} \label{eq:7.5} {c}_{0}(\hat{\tau}, \hat{\theta})= \sum_{n=0}^{\infty} a_{2n} \bigg{(}\frac{\hat{\tau}}{\mu}\bigg{)} (sin(\hat{\theta}))^{2n}, \end{equation}
where $\hat{\tau}$ is a suitable retarded time coordinate and where we replaced $a_{2n}(\hat{\tau})$ with $a_{2n}\big{(}\frac{\hat{\tau}}{\mu}\big{)}$, since $\hat{\tau}$ will always appear as an argument in the dimensionless combination $\frac{\hat{\tau}}{\mu}$. The first-order perturbation calculation of $h_{ab}^{(1)}$, on boosting back to the center-of-mass frame, yields $a_{0} \big{(}\frac{\hat{\tau}}{\mu}\big{)}$, in agreement with the expression of the isotropic part of the news function of the collision of two black holes at large but finite incoming Lorentz factor $\gamma$ on angular scales of order 1. The second-order calculation of $h_{ab}^{(2)}$, which consists in solving the second-order field equations in the boosted frame which take the form of inhomogeneous flat-space wave equations with complicated source terms, gives an integral expression for the first nonisotropic coefficient $a_{2} \big{(}\frac{\hat{\tau}}{\mu}\big{)}$ which cannot be evaluated numerically. Then, in what follows, we are going to show how the calculation of $a_{2} \big{(}\frac{\hat{\tau}}{\mu}\big{)}$ can be simplified analytically so as to enable us to compute this function numerically. 

This is of our interest since, if all the gravitational radiation in the space-time is accurately described by Eq. ($\ref{eq:7.5})$, then the mass of the assumed final static Schwarzshild black hole remaining after the collision can be determined from knowledge only of $a_{0} \big{(}\frac{\hat{\tau}}{\mu}\big{)}$ and $a_{2} \big{(}\frac{\hat{\tau}}{\mu}\big{)}$.

To begin the process of finding a simpler form for $a_{2} \big{(}\frac{\hat{\tau}}{\mu}\big{)}$, we note that because of the conformal symmetry at each order in perturbation theory, the field equations obeyed by the metric perturbations $h_{ab}^{(1)}$, $h_{ab}^{(2)}$, $\dots$ in Eq. ($\ref{eq:7.4})$ may all be reduced to equations in two independent variables. Indeed, a conformal transformation does not effect the intrinsic nature of the perturbation problem, it merely alters the value of the perturbation parameter. Then, once a conformal transformation is performed, in an appropriate gauge, the field equations for the $h_{ab}^{(i)}$ are all of the form 
\begin{equation} \label{eq:7.6} \Box h_{ab}^{(i)} = S_{ab}^{(i)}, \end{equation}
where $S_{ab}^{(i)}$ is a function of $h_{ab}^{(i-1)}$, ..., $h_{ab}^{(1)}$ and their derivatives. Since each $h_{ab}^{(i)}$ is, at this stage, of the form $fn(q, r){\rho}^{-k}$, its corresponding $S_{ab}^{(i)}$ must be of the form $fn(q, r){\rho}^{-(k+2)}$. This indicates that it is possible to eliminate $\rho$ from the field equations by separation of variables, thereby reducing them to two-dimensional differential equations. 
 
\section{Reduction to two dimensions}
Let us now perform the reduction to two dimensions explicitly (D'Eath \cite{d1992gravitational2}), starting with the first-order perturbations $h_{ab}^{(1)}$. These are particular cases of the general system given by the flat-space wave equation
\begin{equation} \label{eq:7.7} \Box \psi \equiv 2 \frac{\partial^2 \psi}{\partial u \partial v} + \frac{1}{\rho} \frac{\partial}{\partial \rho} \bigg{[} \rho \frac{\partial \psi}{\partial \rho} \bigg{]} + \frac{1}{\rho^2} \frac{\partial^2 \psi}{\partial \phi^2}=0, \end{equation}
supplemented by the boundary condition 
\begin{equation} \label{eq:7.8} \begin{split}
&\psi|_{u=0} = e^{im\phi} \rho^{-n}f [8 ln(v \rho) - \sqrt{2} v ], \\
&f(x)=0, \hspace{1cm} \forall x <0,
\end{split} \end{equation}
where $m$ and $n$ are integers and, apart from the above restriction, $f(x)$ is arbitrary. We know from our previous arguments that $\psi$ must be of the form $e^{i m \phi} \rho^{-n}\chi(q, r)$ if $u \geq 0$, where
\begin{equation} \begin{split} \label{eq:7.9}
& q \equiv u \rho^{-2}, \\
& r \equiv 8 log(v \rho) - \sqrt{2} v. \end{split} \end{equation}
From Eq. ($\ref{eq:7.9})$ we find
\begin{equation} \begin{split} \label{eq:7.10}
& \bigg{[} \frac{\partial}{\partial v} \bigg{]}_{v, \rho, \phi}= \frac{1}{\rho^2} \bigg{[} \frac{\partial}{\partial q} \bigg{]}_{r, \rho, \phi}, \\
&  \bigg{[} \frac{\partial}{\partial u} \bigg{]}_{u, \rho, \phi}= - \sqrt{2} \bigg{[} \frac{\partial}{\partial r} \bigg{]}_{q, \rho, \phi}, \\
&  \bigg{[} \frac{\partial}{\partial \rho} \bigg{]}_{u, v, \phi}=  \bigg{[} \frac{\partial}{\partial \rho} \bigg{]}_{q, r, \phi} - \frac{2 q}{\rho} \bigg{[} \frac{\partial}{\partial q} \bigg{]}_{r, \rho, \phi} - \frac{8}{\rho}  \bigg{[} \frac{\partial}{\partial r} \bigg{]}_{q, \rho, \phi},
\end{split} \end{equation} 
and therefore
\begin{equation}\begin{split}\label{eq:7.11} 2 \frac{\partial^2}{\partial u \partial v} + \frac{1}{\rho} \frac{\partial}{\partial \rho} \bigg{[} \rho \frac{\partial}{\partial \rho} \bigg{]} + \frac{1}{\rho^2}\frac{\partial^2}{\partial \phi^2} =& \frac{1}{\rho^2} \Biggl\{ - 2\sqrt{2} \frac{\partial^2}{\partial q \partial r} + \bigg{[} \rho \frac{\partial}{\partial \rho} - 2q \frac{\partial}{\partial q} \\
 &+ 8 \frac{\partial}{\partial r} \bigg{]} \bigg{[} \rho \frac{\partial}{\partial \rho} - 2q \frac{\partial}{\partial q} + 8 \frac{\partial}{\partial r} \bigg{]} + \frac{\partial^2}{\partial \phi^2} \Biggr\}. \end{split} \end{equation}
Thus, $\chi$ is the solution of
\begin{equation} \begin{split}\label{eq:7.12} \mathcal{L}_{m, n} \chi \equiv & \Biggl\{ - 2 \sqrt{2} \frac{\partial^2}{\partial q \partial r} + \bigg{[} -n -2q \frac{\partial}{\partial q} + 8 \frac{\partial}{\partial r} \bigg{]} \bigg{[} -n - 2q \frac{\partial}{\partial q} \\
&+ 8 \frac{\partial}{\partial r} \bigg{]} - m^2 \Biggr\} \chi=0, \end{split} \end{equation}
where the boundary condition is $\chi|_{q=0}= f(r)$.

For the homogeneous wave equation where the solution has a simple integral form, there is no advantage in eliminating $\rho$ and $\phi$ from the differential equation.  
However, the higher-order metric perturbations $h_{ab}^{(i)}$ with $i \geq 2$ turn out to obey inhomogeneous flat-space wave equations of the form
\begin{equation} \label{eq:7.13} \Box \psi = S \end{equation}
where $S$ is a source term given by $S=e^{im \phi}\rho^{- (n +2)}H(q, r)$ and the boundary condition may be taken to be $\psi|_{u=0}=0$. This leads to the following equation for $\chi\equiv e^{-im \phi} \rho^n \psi$:
\begin{equation} \label{eq:7.14} \mathcal{L}_{m, n} \chi(q, r) = H(q, r),  \end{equation}
where $\mathcal{L}_{m, n}$ is a $\textit{hyperbolic operator}$ in the independent variables $q$ and $r$. 
In contrast with the homogeneous case, the benefits  gained in the reduction of Eq. ($\ref{eq:7.13})$ to Eq. ($\ref{eq:7.14})$ are not insignificant. Previously, to calculate the solution at some space-time point $P$ we would have had to integrate the source term $S$, suitably weighted, over the past null cone of $P$. Now, once the Green's function for the differential operator $\mathcal{L}_{m, n}$, defined by
\begin{equation}\label{eq:7.15} \mathcal{L}_{m, n}G_{m, n}(q, r; q_0, r_0)= \delta(q-q_0)\delta(r-r_0),\end{equation}
(where $\mathcal{L}_{m, n}$ acts on the $(q, p)$ part of $G_{m, n}$) has been found; we need simply to integrate the product of $H$ and the Green's function for the differential operator $\mathcal{L}_{m, n}$ over some two-dimensional region in the $(q, r)$-plane, i.e.
\begin{equation} \label{eq:7.16} \chi(q, r)= \int \int G_{m, n} (q, r; q_0, r_0) H(q_0, r_0)dq_0 dr_0, \end{equation}
subject to suitable boundary conditions. 

This makes it much easier to estimate the various contributions to the solution from different parts of the integration region.

\section{Reduction to canonical form and the Riemann function}
It is more convenient to reduce Eq. ($\ref{eq:7.14})$ to canonical form, and then to find an integral representation of the solution. But first, we want to demonstrate that the differential operator $\mathcal{L}_{m, n}$ is hyperbolic. Hence, we define new coordinates 
\begin{equation} \label{eq:7.17} x=x(q, r) \hspace{2cm} y=y(q, r). \end{equation}
Now,
\begin{equation} \label{eq:7.18} \mathcal{L}_{m, n} = - ( 2 \sqrt{2} + 32 q) \frac{\partial^2}{\partial q \partial r} + 4q^2 \frac{\partial^2}{\partial q^2} + 64 \frac{\partial^2}{\partial r^2} + 4(n+1)q \frac{\partial}{\partial q} - 16n \frac{\partial}{\partial r} + n^2 - m^2. \end{equation}
We choose $x$ and $y$ so that  the coefficients $\frac{\partial^2}{\partial x^2}$ and $\frac{\partial^2}{\partial y^2}$ vanish and $\mathcal{L}_{m, n}$ is transformed to normal hyperbolic form, in which (see Chapter 1, Eq. ($\ref{eq:60})$)
\begin{equation} \label{eq:7.19} \mathcal{L}_{m, n}= f(x, y) \frac{\partial^2}{\partial x \partial y} + g(x, y) \frac{\partial}{\partial x} + h(x, y) \frac{\partial}{\partial y} + n^2 - m^2. \end{equation}
Expressing $\mathcal{L}_{m, n}$ in terms of $\frac{\partial}{\partial x}$ and $\frac{\partial}{\partial y}$ we find that
\begin{equation} \begin{split} \label{eq:7.20} \mathcal{L}_{m, n}= & \Biggl\{ - (2 \sqrt{2} + 32 q) \bigg{[}\frac{\partial x}{\partial q} \bigg{]}\bigg{[}\frac{\partial x}{\partial r} \bigg{]} + 4q^2 \bigg{[}\frac{\partial x}{\partial q} \bigg{]}^2 + 64\bigg{[}\frac{\partial x}{\partial r} \bigg{]}^2 \Biggr\} \frac{\partial^2}{\partial x^2} \\
& + \Biggl\{  - (2 \sqrt{2} + 32 q) \bigg{[}\frac{\partial y}{\partial q} \bigg{]}\bigg{[}\frac{\partial y}{\partial r} \bigg{]} + 4q^2 \bigg{[}\frac{\partial y}{\partial q} \bigg{]}^2 + 64\bigg{[}\frac{\partial y}{\partial r} \bigg{]}^2 \Biggr\} \frac{\partial^2}{\partial y^2} \\
&+ \Biggl\{ - (2 \sqrt{2} + 32 q) \bigg{[} \bigg{[}\frac{\partial x}{\partial q} \bigg{]}\bigg{[}\frac{\partial y}{\partial r} \bigg{]} + \bigg{[}\frac{\partial y}{\partial q} \bigg{]}\bigg{[}\frac{\partial x}{\partial r} \bigg{]} \bigg{]}+ 8q^2 \bigg{[}\frac{\partial x}{\partial q} \bigg{]}\bigg{[}\frac{\partial y}{\partial q} \bigg{]} \\
&+ 128\bigg{[}\frac{\partial x}{\partial r} \bigg{]}\bigg{[}\frac{\partial y}{\partial r} \bigg{]} \Biggr\} \frac{\partial^2}{\partial y \partial x} + ... \end{split}\end{equation}
where we have omitted the terms of first and zeroth order in $\frac{\partial}{\partial x}$ and $\frac{\partial}{\partial y}$.

In order that Eq. ($\ref{eq:7.19})$ be satisfied, we must have
\begin{equation} \label{eq:7.21} - (2 \sqrt{2} + 32 q)\bigg{[}\frac{\partial x}{\partial q} \bigg{]}\bigg{[}\frac{\partial x}{\partial r} \bigg{]} + 4q^2 \bigg{[}\frac{\partial x}{\partial q} \bigg{]}^2 + 64  \bigg{[}\frac{\partial x}{\partial r} \bigg{]}^2 =0, \end{equation}
\begin{equation} \label{eq:7.22} - (2 \sqrt{2} + 32 q)\bigg{[}\frac{\partial y}{\partial q} \bigg{]}\bigg{[}\frac{\partial y}{\partial r} \bigg{]} + 4q^2 \bigg{[}\frac{\partial y}{\partial q} \bigg{]}^2 + 64  \bigg{[}\frac{\partial y}{\partial r} \bigg{]}^2 =0. \end{equation}
This means that $\frac{\partial x }{\partial q}/ \frac{\partial x}{ \partial r}$ and $\frac{\partial y }{\partial q}/ \frac{\partial y}{ \partial r}$ must be the two real roots of the quadratic equation
\begin{equation} \label{eq:7.23} 4 q^2 x^2 - (2 \sqrt{2} + 32 q) x + 64 =0. \end{equation}
The discriminant of this equation is positive, hence $\mathcal{L}_{m, n}$ is hyperbolic and its characteristic coordinates $x$ and $y$ satisfy
\begin{equation} \label{eq:7.24} \bigg{[}\frac{\partial x}{\partial q}\bigg{]} = \bigg{[} \frac{ 1 + 8q \sqrt{2} + \sqrt{(1 + 16 q \sqrt{2})}}{2 q^2 \sqrt{2}} \bigg{]} \bigg{[} \frac{\partial x}{\partial r} \bigg{]}, \end{equation}
and 
\begin{equation} \label{eq:7.25} \bigg{[}\frac{\partial y}{\partial q}\bigg{]} = \bigg{[} \frac{ 1 + 8q \sqrt{2} + \sqrt{(1 + 16 q \sqrt{2})}}{2 q^2 \sqrt{2}} \bigg{]} \bigg{[} \frac{\partial y}{\partial r} \bigg{]}, \end{equation}
where we have arbitrarily assigned the plus sign to $x$ and the minus sign to $y$. Hence, we have shown the hyperbolic nature of $\mathcal{L}_{m, n}$ and we have reduced Eq. ($\ref{eq:7.14})$ to the form ($\ref{eq:7.19})$. For ease of calculation we now choose \cite{Esposito:2001ry}
\begin{equation} \label{eq:7.26} \frac{\partial x}{\partial r}=1, \hspace{2cm} \frac{\partial y}{\partial r}=1. \end{equation}
If we solve Eqs. $(\ref{eq:7.24})$ and $(\ref{eq:7.25})$, by making use of $(\ref{eq:7.26})$, we find
\begin{equation} \label{eq:7.27} x = r + 8 ln \bigg{[} \frac{\sqrt{(1+ 16 q \sqrt{2})} -1}{2} \bigg{]} - \frac{8}{\big{[}\sqrt{(1+16q\sqrt{2})} - 1\big{]}}-4, \end{equation}
\begin{equation} \label{eq:7.28} y = r + 8 ln \bigg{[} \frac{\sqrt{(1+ 16 q \sqrt{2})} +1}{2} \bigg{]} + \frac{8}{\big{[}\sqrt{(1+16q\sqrt{2})} + 1\big{]}}-4, \end{equation}
where the constants of integration have been chosen for future convenience. To simplify these formulae we define
\begin{equation} \label{eq:7.29} t\equiv \sqrt{1+16 q \sqrt{2}} = t(x, y). \end{equation}
Then Eqs. ($\ref{eq:7.27})$ and ($\ref{eq:7.28})$ read as
\begin{equation} \label{eq:7.30} x = r + 8 ln \bigg{(} \frac{t -1}{2} \bigg{)} - \frac{8}{\big{(}t - 1\big{)}}-4, \end{equation}
\begin{equation} \label{eq:7.31} y = r + 8 ln \bigg{(} \frac{t +1}{2} \bigg{)} + \frac{8}{\big{(}t+ 1\big{)}}-4. \end{equation}
If we subtract Eq. ($\ref{eq:7.31})$ to ($\ref{eq:7.30})$, we have
\begin{equation*} (x-y)= 8ln \bigg{(} \frac{t-1}{t+1} \bigg{)} - 8 \frac{2t}{(t^2 - 1)}.\end{equation*}
From which, we find
\begin{equation*} ln \bigg{(} \frac{t-1}{t+1} \bigg{)} - \frac{2t}{(t^2 - 1)} = \frac{(x-y)}{8}, \end{equation*}
that can be written in the form
\begin{equation*}  \frac{(t-1)}{(t+1)}e^{\frac{2t}{(1-t^2)}} = e^{\frac{(x-y)}{8}}. \end{equation*}
If we define 
\begin{equation*} \omega \equiv \frac{(t-1)}{(t+1)} \rightarrow t= \frac{(1 + \omega )}{(1 - \omega)} \end{equation*}
we have to solve the transcendental equation
\begin{equation*} \omega e^{\frac{(\omega^2 -1)}{2 \omega}}= e^\frac{(x-y)}{8} \end{equation*}
to obtain $\omega = \omega (x-y)$, from which we have $t= t(x-y)$. 

Now, if we exploit the formulae
\begin{equation} \label{eq:7.32} \frac{\partial x}{\partial q}= \frac{64 \sqrt{2}}{(t-1)^2} , \end{equation}
\begin{equation} \label{eq:7.33}\frac{\partial y}{\partial q}= \frac{64 \sqrt{2}}{(t+1)^2} , \end{equation}
we find that the coefficients $f(x, y)$, $g(x, y)$ and $h(x, y)$ of Eq. ($\ref{eq:7.19})$ are
\begin{equation} \begin{split} \label{eq:7.34} f(x, y) &=- (2 \sqrt{2} + 32 q) \bigg{(} \frac{\partial x}{\partial q} + \frac{\partial y}{\partial q} \bigg{)} + 8 q^2 \frac{\partial x}{\partial q}\frac{\partial y}{\partial q} + 128 \\
&= 256 \bigg{[} 1 - \frac{ 2 t^2 (t^2 + 1)}{(t-1)^2 (t+1)^2} \bigg{]}, \end{split} \end{equation}
\begin{equation} \label{eq:7.35} g(x, y)= 4(n+1) q \frac{\partial x}{\partial q} -16n= 16 \bigg{[} 1 + \frac{2(n+1)}{(t-1)} \bigg{]}, \end{equation}
\begin{equation} \label{eq:7.36} h(x, y)= 4(n+1) q \frac{\partial y}{\partial q}- 16n= 16 \bigg{[} 1 - \frac{2(n+1)}{(t+1)} \bigg{]}. \end{equation}
The resulting canonical form of Eq. ($\ref{eq:7.14})$ is
\begin{equation} \label{eq:7.37} \mathcal{L}[\chi] = \bigg{(} \frac{\partial^2}{\partial x \partial y} + a(x, y) \frac{\partial}{\partial x} + b(x, y) \frac{\partial}{\partial y} + c(x, y) \bigg{)} \chi (x, y) = \tilde{H}(x, y), \end{equation}
where
\begin{equation} \label{eq:7.38} a(x, y) \equiv \frac{g(x, y)}{f(x, y)} = \frac{1}{16} \frac{(1-t)(t+1)^2 (2n + 1 + t)}{(t^4 + 4t^2 - 1)}, \end{equation}
\begin{equation} \label{eq:7.39} b(x, y) \equiv \frac{h(x, y)}{f(x, y)} = \frac{1}{16} \frac{(t+1)(t-1)^2 (2n + 1 + t)}{(t^4 + 4t^2 - 1)},\end{equation}
\begin{equation} \label{eq:7.40} c(x, y) \equiv \frac{n^2 - m^2}{f(x, y)} = \frac{(m^2 - n^2)}{256} \frac{(t-1)^2 (t+1)^2}{(t^4 + 4t^2 - 1)},\end{equation}
 \begin{equation} \label{eq:7.41} \tilde{H}(x, y) \equiv \frac{H(x, y)}{f(x, y)} = -\frac{H(x, y)}{256} \frac{(t-1)^2 (t+1)^2}{(t^4 + 4t^2 - 1)}.\end{equation}
Note that $a(-t) =b(t)$, $b(-t)=a(t)$, $c(-t)=c(t)$ and $\tilde{H}(-t)=\tilde{H}(t)$.

For a hyperbolic equation in the form ($\ref{eq:7.37})$, we can use the Riemann integral representation of the solution. For this purpose, on denoting by $\mathcal{L}^{*}$ the adjoint of the operator $\mathcal{L}$ in $(\ref{eq:7.37})$, which acts according to
\begin{equation} \label{eq:7.42} \mathcal{L}^{*} [R]= \frac{\partial^2 R}{\partial x \partial y} - \frac{\partial (a R)}{\partial x} - \frac{\partial (b R)}{\partial y} + c R \end{equation}
we have to find the Riemann kernel $R(x, y; \xi, \eta)$ ($(\xi, \eta)$ are the coordinates of a point $P$ such that the characteristics through it intersect a curve $C$ at points $A$ and $B$) subject to the following conditions:
\begin{description}
\item[(a)] As a function of $x$ and $y$, $R$ satisfies the adjoint equation
\begin{equation} \label{eq:7.43} {\mathcal{L}^*}_{(x, y)}[R]=0. \end{equation}
\item[(b)] $R_x=bR$ on $AP$, i.e.
\begin{equation} \label{eq:7.44} \frac{\partial R}{\partial x}(x, y; \xi, \eta)=b(x, \eta)R(x, y; \xi, \eta) \hspace{1cm} {\rm on}\; y= \eta, \end{equation}
and $R_y=aR$ on $BP$, i.e.
\begin{equation} \label{eq:7.45} \frac{\partial R}{\partial y}(x, y; \xi, \eta)=a(\xi, y)R(x, y; \xi, \eta) \hspace{1cm} {\rm on} \; x= \xi. \end{equation}
\item[(c)] $R=1$ at $P$, i.e.
\begin{equation} \label{eq:7.46} R(\xi, \eta; \xi, \eta)=1. \end{equation}
\end{description}
\begin{figure}
\centering
\includegraphics{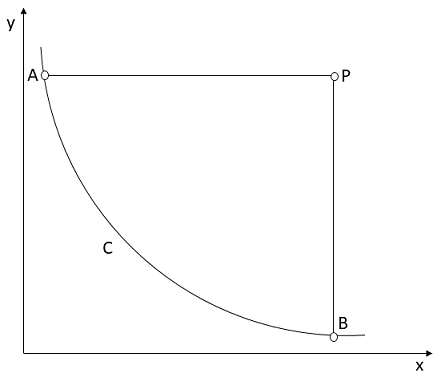}
\caption{Geometry of the characteristic initial-value problem in two independent variables.}\label{fig:7}
\end{figure}
Then, according to the formula ($\ref{eq:83})$ we have obtained in Chapter 1, it is possible to express the solution of Eq. ($\ref{eq:7.37})$ in the form
\begin{equation}\begin{split} \label{eq:7.47} \chi(P)= &\frac{ \chi(A)R(A) + \chi(B)R(B)}{2} + \int_A^B \bigg{(}\bigg{[}bR \chi +\frac{1}{2} \bigg{(} R \frac{\partial \chi}{\partial x} - \frac{\partial R}{\partial x}\chi \bigg{)} \bigg{]}dx \\
&- \bigg{[} aR \chi +  \frac{1}{2}\bigg{(} R \frac{\partial \chi}{\partial y} - \frac{\partial R}{\partial y} \chi \bigg{)} \bigg{]}dy \bigg{)}+ \int\int_\Omega R(x, y; \xi, \eta)\tilde{H}(x, y) dx dy \end{split}\end{equation}
where the path of integration is the one in fig. ($\ref{fig:7})$ and $\Omega$ is a domain with boundary. We note that the main difference between Eq. ($\ref{eq:83})$ and Eq. ($\ref{eq:7.37})$ is that ($\ref{eq:83})$ refers to the equation $L[z]=f$ with $f=0$, whereas in our case $f=\tilde{H} \neq 0$.

Eqs. ($\ref{eq:7.44})$ and $(\ref{eq:7.45})$ are ordinary differential equations for the Riemann kernel $R(x, y; \xi, \eta)$ along the characteristics parallel to the coordinate axes. By virtue of ($\ref{eq:7.46})$, their integration yields 
\begin{equation} \label{eq:7.48} R(x, \eta; \xi, \eta)= exp \bigg{(}\int_\xi^x b(\lambda, \eta) d\lambda\bigg{)}, \end{equation}
\begin{equation} \label{eq:7.49} R(\xi, y; \xi, \eta)= exp \bigg{(}\int_\eta^y a(\xi, \lambda) d\lambda \bigg{)},\end{equation}
which are the values of $R$ along the characteristics through $P$. Instead, Eq. ($\ref{eq:7.47})$ yields the solution of Eq. ($\ref{eq:7.37})$ for arbitrary initial values given along an arbitrary non-characteristic curve $C$, by means of a solution $R$ of the adjoint equation ($\ref{eq:7.43})$ which depends on $x$, $y$ and two parameters $\xi$ and $\eta$. Unlike $\chi$, $R$ solves a characteristic initial-value problem. 

\section{Goursat problem for the Riemann function}
The reduction to canonical form of Eq. ($\ref{eq:7.14})$ previously performed is based on novel features with respect to the analysis of D'Eath \cite{d1992gravitational2}, since Eq. ($\ref{eq:7.47})$ also contains the integral along $AB$ and the term $\frac{1}{2}[\chi(A)R(A) + \chi(B)R(B)]$. This representation of the solution might be more appropriate for the numerical purposes, but the task of finding the Riemann function $R$ remains extremely difficult. However, it is possible to use approximate methods for solving Eq. ($\ref{eq:7.43})$. For this purpose, by virtue of Eq. ($\ref{eq:7.42})$, equation ($\ref{eq:7.43})$ is an equation of the form \cite{Esposito:2001ry}
\begin{equation} \label{eq:7.50} \bigg{(} \frac{\partial^2}{\partial x \partial y} - a \frac{\partial}{\partial x} - b \frac{\partial}{\partial y} + c - \frac{\partial a}{\partial x} - \frac{\partial b}{\partial y}\bigg{)} R(x, y; \xi, \eta)=0. \end{equation}
Eq. ($\ref{eq:7.50})$ can be written in the form of a canonical hyperbolic equation 
\begin{equation} \label{eq:7.51} \bigg{(} \frac{\partial^2}{\partial x \partial y} +A \frac{\partial}{\partial x} +B \frac{\partial}{\partial y} + C \bigg{)} R(x, y; \xi, \eta)=0, \end{equation}
where
\begin{equation}  \left\{ \label{eq:7.52} \begin{array}{l}
A \equiv - a = - \frac{1}{16} \frac{(1-t)(t+1)^2 (2n + 1 + t)}{(t^4 + 4t^2 - 1)};\\
B \equiv -b = - \frac{1}{16} \frac{(t+1)(t-1)^2 (2n + 1 + t)}{(t^4 + 4t^2 - 1)};\\
C \equiv c - \frac{\partial a}{\partial x} - \frac{\partial b}{\partial y}=  \frac{(m^2 - n^2)}{256} \frac{(t-1)^2 (t+1)^2}{(t^4 + 4t^2 - 1)} - \frac{\partial}{\partial x} \bigg{[}\frac{1}{16} \frac{(1-t)(t+1)^2 (2n + 1 + t)}{(t^4 + 4t^2 - 1)}\bigg{]} \\
 - \frac{\partial}{\partial y} \bigg{[} \frac{1}{16} \frac{(t+1)(t-1)^2 (2n + 1 + t)}{(t^4 + 4t^2 - 1)}\bigg{]} .
\end{array}\right.  \end{equation}
Therefore, on defining
\begin{equation} \left\{ \label{eq:7.53} \begin{array}{l}
U \equiv R,\\
V \equiv \frac{\partial R}{\partial x} + BR,
\end{array}\right. \end{equation}
we have 
\begin{equation*} V \equiv \frac{\partial U}{\partial x} + B U \rightarrow \frac{\partial U}{\partial x} = V - BU. \end{equation*}
If we replace this expression of $\frac{\partial U}{\partial x}$ in Eq. $(\ref{eq:7.51})$ we have
\begin{equation*} \begin{split}& \frac{\partial}{\partial y} \frac{\partial U}{\partial x} + A \frac{\partial U}{\partial x} + B \frac{\partial U}{\partial y} + CU = \frac{\partial}{\partial y} (V - BU) + A(V- BU) + B \frac{\partial U}{\partial y} + CU\\
&= \frac{\partial V}{\partial y} - \frac{\partial B}{\partial y} U - B \frac{\partial U}{\partial y} + AV - ABU + B\frac{\partial U}{\partial y} + CU \\
&= \frac{\partial V}{\partial y} - \frac{\partial B}{\partial y}U + AV - ABU + CU=0. \end{split}\end{equation*}
Then, Eq. ($\ref{eq:7.51})$ is equivalent to the hyperbolic canonical system
\begin{equation} \label{eq:7.54} \frac{\partial U}{\partial x} = f_1(x, y)U + f_2(x, y)V, \end{equation}
\begin{equation} \label{eq:7.55} \frac{\partial V}{\partial y} = g_1(x, y)U + g_2(x, y)V, \end{equation}
where
\begin{equation} \label{eq:7.56} \left\{ \begin{array} {l}
f_1 \equiv - B= b, \\
f_2 \equiv 1, \\
g_1 \equiv AB - C + \frac{\partial B}{\partial y} = ab - c + \frac{\partial a}{\partial x}, \\
g_2 \equiv -A= a. 
\end{array} \right. \end{equation}
An existence and uniqueness theorem holds for the system described by Eqs. ($\ref{eq:7.54})$ and ($\ref{eq:7.55})$ with boundary data ($\ref{eq:7.48})$ and ($\ref{eq:7.49})$ and hence we can exploit the finite-difference method to find approximate solutions for the Riemann function $R(x, y; \xi, \eta)$ and eventually $\chi(P)$ by Eq. ($\ref{eq:7.47})$.

\section{Solution of the characteristic initial-value problem for the homogeneous hyperbolic equation}
At this stage, we have to solve a characteristic initial-value problem for a homogeneous hyperbolic equation in canonical form in two independent variables, for which we have developed formulae to be used for numerical solution with the help of a finite-differences scheme. For this purpose, we study the canonical system ($\ref{eq:7.54})$ and ($\ref{eq:7.55})$ written as
\begin{equation} \label{eq:7.57} \frac{\partial U}{\partial x} = F(x, y, U, V),\end{equation}
\begin{equation} \label{eq:7.58} \frac{\partial V}{\partial y}= G(x, y, U, V). \end{equation}
in the rectangle $R \equiv \{x, y: x \in [x_0, x_0 + a], y \in [y_0, y_0 + b] \}$ with known values of $U$ on the side $AD$ where $x=x_0$, and known values of $V$ on the side $AB$ where $y=y_0$. Then, the segments $AB$ and $AD$ are divided into $m$ and $n$ equal parts, respectively. By setting $\frac{a}{m}\equiv h$ and $\frac{b}{n} \equiv k$, the original differential equations become equations relating values of $U$ and $V$ at three intersection points of the resulting lattice, i.e.
\begin{equation} \label{eq:7.59} \frac{U(P_{r, s+1}) - U(P_{r, s})}{h}=F, \end{equation}
\begin{equation} \label{eq:7.60} \frac{V(P_{r+1, s}) - V(P_{r, s})}{k}=G. \end{equation}
It is now convenient to set $U_{r,s}\equiv U(P_{r, s})$ and $V_{r,s} \equiv V(P_{r,s})$, hence these equations read
\begin{equation} \label{eq:7.61} U_{r, s+1}= U_{r, s}+ hF(P_{r,s}, U_{r, s}, V_{r, s}), \end{equation}
\begin{equation} \label{eq:7.62} V_{r+1, s}= V_{r, s}+ kG(P_{r,s}, U_{r, s}, V_{r, s}). \end{equation}
Then, if both $U$ and $V$ are known at $P_{r, s}$, it is possible to evaluate $U$ at $P_{r, s+1}$ and $V$ at $P_{r+1, s}$. The evaluation at subsequent intersection points of the lattice goes on along horizontal or vertical segments. In the former case, the resulting algorithm is
\begin{equation} \label{eq:7.63} U_{r, s}= U_{r, 0}+ h \sum_{i=1}^{s-1}F(P_{r, i}, U_{r, i}, V_{r, i}), \end{equation}   
\begin{equation} \label{eq:7.64} V_{r, s}= V_{r-1, s}+ kG(P_{r-1, s}, U_{r-1, s}, V_{r-1, s}), \end{equation}
while in the latter case we have the algorithm expressed by the equations
\begin{equation} \label{eq:7.65} V_{r, s}= V_{0, s}+ k \sum_{i=1}^{r-1}G(P_{i, s}, U_{i, s}, V_{i, s}), \end{equation}
\begin{equation} \label{eq:7.66} U_{r, s}= U_{r, s-1}+ hF(P_{r, s-1}, U_{r, s-1}, V_{r, s-1}). \end{equation}
The stability of such solutions is closely linked with the geometry of the associated characteristics.

$\mathbf{Conclusion.}$ It is possible to evaluate the coefficient $a_2$ which appears in the news function ($\ref{eq:7.5}$) by solving the equation
\begin{equation*}\omega e^{\frac{(\omega^2 -1)}{2 \omega}}= e^\frac{(x-y)}{8} \end{equation*}
numerically for $\omega= \omega(x-y)$, from which it is possible to obtain $t(x-y)$. This yields $a$, $b$, $c$ and $H$ as functions of $(x, y)$ according to $(\ref{eq:7.38})$, $(\ref{eq:7.39})$, $(\ref{eq:7.40})$ and $(\ref{eq:7.41})$, and hence $A$, $B$ and $C$ in the equation for the Riemann function are obtained according to $(\ref{eq:7.52})$, where the derivatives with respect to $x$ and $y$ are evaluated numerically. Eventually, the system given by $(\ref{eq:7.54})$ and $(\ref{eq:7.55})$ is solved according to the finite-differences scheme with $F= f_1 U + f_2 V$ and $G=g_1U + g_2V$.

Once the Riemann function $R=U$ is obtained with the desired accuracy, numerical evaluation of the integral $(\ref{eq:7.47})$ yields $\chi(P)$, and $\chi(q, r)$ is obtained upon using equations $(\ref{eq:7.30})$ and $(\ref{eq:7.31})$ for the characteristic coordinates.

\chapter*{Conclusions}
\addcontentsline{toc}{chapter}{Conclusions}
\markboth{}{}
The study of the Fourès-Bruhat proof of existence and uniqueness of the solution of Cauchy's problem for Einstein vacuum field equations has been the main aim of the present work. 

This has been shown by first considering systems of $m$ partial differential equations in $n$ unknown functions of $n + 1$ independent variables for which we have given the definition of $\textit{characteristic manifolds}$ and introduced the concept of wavelike propagation. 

Then, we have introduced the theory of hyperbolic equations giving the definition of hyperbolic equation, first on a vector space and then on a manifold, and hence we have considered a second-order linear hyperbolic equation in two variables to discuss Riemann's method. More precisely, we have given the proof of existence of $\textit{Riemann's kernel function}$ and stressed its importance in solving hyperbolic equations that obey characteristic initial-value problems. 

Therefore, our argumentation proceeds in studying the $\textit{fundamental}$ \\ $\textit{solutions}$ and their relation with Riemann's kernel. A first definition of $\textit{characteristic conoid}$ has been given by noticing that the fundamental solution is singular not only at a point but along a certain surface. Since any singular surface of a solution of a linear differential equation must be a characteristic, such singular surface must hence satisfy a first order differential equation. Among the solutions, the one we have considered has a given point as a conic point and it is called the $\textit{characteristic conoid}$ and then, upon introducing on a connected, four-dimensional, Hausdorff four-manifold $M$ the $\textit{characteristic polynomial}$ of a linear partial differential operator $L$, it has been defined as the cone in the cotangent space at $x \in M$. Moreover, we have studied the fundamental solution with an algebraic singularity and introduced the concept of $\textit{geodesic}$ as auto-parallel curves. 

To conclude the discussion upon the fundamental solution, we have seen how to build fundamental solutions, by showing some examples with odd or even number of variables and by studying the case of scalar wave equation. 

The discussion moves towards the study of linear systems of normal hyperbolic form. We have seen that every solution of a system $[E]$ of $n$ second order partial differential equations, with $n$ unknown functions and four variables $x$, hyperbolic and linear, which possesses in a domain $D$ first partial derivatives with respect to the four variables $x$ continuous and bounded
$$ \sum_{\lambda, \mu=1}^4 A^{\lambda \mu} \frac{\partial^2 u_s}{\partial x^\lambda \partial x^\mu} + \sum_{s=1}^n \sum_{\mu=1}^4 {B^s_r}^\mu \frac{\partial u_s}{\partial x^\mu} + f_r=0 \hspace{5cm} [E]$$
verifies some $\textit{Kirchhoff formulae}$. We have then obtained a system of integral equations verified in a domain $D_0$ by these solutions. 

Then, we have considered a system $[F]$ of non-linear, second-order, hyperbolic partial differential equations with $n$ unknown functions $W_s$ and four variables $x^\alpha$
$$\sum_{\lambda, \mu=1}^4 A^{\lambda \mu}(x^\alpha, W_s, W_{s\alpha}) \frac{\partial^2 W_s}{\partial x^\mu \partial x^\lambda} + f_s(x^\alpha, W_s, W_{s\alpha}) =0 \hspace{3cm} [F]$$
to show under which assumptions it is possible to turn it into a linear system for which the results previously obtained for them hold. For this purpose, we have considered the functions $W_s$ as functions of the four variables $x^\alpha$; the coefficients $A^{\lambda \mu}$ and $f_s$ are then functions of these four variables. We apply these results to the equations $[F']$ obtained by differentiating five times with respect to the variables $x^\alpha$ the given equations $[F]$. Thus, we obtain a system of integral equations whose left-hand side are the unknown functions $W_s$, their partial derivatives with respect to the $x^\alpha$ up to the fifth order and some auxiliary functions $X$, $\Omega$, and whose right-hand sides contain only these functions and the integration parameters. 

Then, in order to solve the Cauchy problem for the nonlinear equations $[F]$ we tried to solve the system of integral equations verified by the solutions. Some difficulties arise since the quantities occurring under the integral sign must be continuous and bounded upon assuming differentiability of the coefficients $A^{\lambda \mu}$, viewed as a function of the variables $x^\alpha$. This does not hold when the functions $W_s$, $W_{s \alpha}$, ..., $U_S$ are independent, thus the quantity $[A^{ij}]\frac{\partial^2 \sigma}{\partial x^i \partial x^j} J_{x \lambda}$ fails to be bounded and continuous. 

Moreover, we have passed through the intermediate stage of approximate equations $[F_1]$, where the coefficients $A^{\lambda \mu}$ are some functions of $x^\alpha$. Therefore, we tried to solve the integral equations and to show that their solution is a solution of the equations $[F_1]$, but we have noticed that the obtained solution $W_s$ is only five times differentiable and our method is applicable only if the $A^{\lambda \mu}$ depend on the $W_s$ and not on the $W_{s \alpha}$. Hence, we have solved the Cauchy problem for the system $[G]$
$$ \sum_{\lambda, \mu=1}^4 A^{\lambda \mu}(x^\alpha, W_s) \frac{\partial^2 W_s}{\partial x^\lambda \partial x^\mu}+ f_s(x^\alpha, W_s, W_{s\alpha}) =0 \hspace{4cm} [G]$$
where the coefficients $A^{\lambda \mu}$ do not depend on the $W_{s \alpha}$. It is enough to apply the results for equations $[E]$ to the equations $[G']$ deduced from equations $[G]$ by four differentiations with respect to the variables $x^\alpha$ in order to obtain a system of integral equations whose right-hand sides contain only functions that are the same as those occurring on the left-hand sides. 

The integral equations $[J]$, verified by the bounded solutions and with bounded first derivatives of equations $[G']$, only involve the coefficients $A^{\lambda \mu}$ and $B^{T \lambda}_S$ and their partial derivatives up to the orders four and two, respectively, and the coefficients $F_S$. To solve the integral equations $[J]$ we have found the same difficulty as in the general case.

Hence, to solve the Cauchy problem we have studied the approximated system $[G_1]$ of $[G]$, by making the substitution in $A^{\lambda \mu}$ (and not in $f_s$) of the $W_s$ with some approximate values ${W_s}^{(1)}$. 

Then, we have studied the equations $[G_1']$, obtained by differentiation of $[G_1]$ five times with respect to the variables $x^\alpha$, viewed as linear equation of type $[E]$ in the unknown functions $U_S$, and we have proved that its corresponding system of integral equations $[J_1]$, admits of a unique, continuous and bounded solution in a domain $D$.

Eventually, since the solution of the Cauchy problem given for the equations $[G_1]$ defines a representation of the space of the functions ${W_S}^{(1)}$ into itself, we have proved that this representation admits a fixed point, belonging to the space. 

The corresponding $W_s$ are solutions of the given equations $[G]$. This solution is unique and possesses partial derivatives continuous and bounded up to the fourth order.

At this stage, once we have shown the existence and uniqueness of the solution of the Cauchy problem for systems of linear and non-linear equations, we have  seen how finally these results can be used to solve the Cauchy problem for the Einstein field equations. 

The Cauchy problem for the vacuum field equations, $R_{\alpha \beta}=0$ with initial data on a hypersurface $S$ has been formulated and it has been shown under which conditions this problem admits, in the analytic case, a solution and this solution is unique. 

Therefore, we refer to the vacuum field equations written for any coordinates and, by making use of $\textit{isothermal coordinates}$, we have seen that they are of the type of the nonlinear equations previously studied, i.e
$$ \mathcal{G}_{\alpha \beta}= \sum_{\lambda, \mu=1}^4(g^{-1})^{\lambda \mu} \frac{\partial^2 g_{\alpha \beta}}{\partial x^\lambda \partial x^\mu} + H_{\alpha \beta}=0.$$
Thus, the Cauchy problem for Einstein vacuum field equations can be solved, if we identify $(g^{-1})^{\lambda \mu} =A^{\lambda \mu}$, $g_{\alpha \beta}= W_s$ and $H_{\alpha \beta}=f_s$, by using the same method.

Eventually, we have studied the causal structure of space-time giving the conditions in order that causality holds locally, hence we have given the definition of $\textit{strong causality}$, $\textit{stable causality}$ and $\textit{global hyperbolicity}$. 

Moreover, we have seen the relation between global hyperbolicity and the existence of the Cauchy surfaces and hence we have given an alternative, and more fundamental, definition of the $\textit{characteristic conoid}$ that comes strictly from the causal structure of space-time.

To conclude our argumentation, we have studied, as an application of Riemann's kernel, the axisymmetric black hole collisions at the speed of light. 
More precisely, we have analyzed the Green function for the perturbative field equations  by studying the corresponding second-order hyperbolic operator with variable coefficients. Then, we have seen that the inverse of the hyperbolic operator for the inhomogeneous wave equations occurring in the perturbative analysis can be accomplished with the help of the Riemann integral representation, after solving the equation for the Riemann function. 

Hence, it is necessary to solve a characteristic initial-value problem for a homogeneous hyperbolic equation in canonical form in two independent variables, for which we have developed formulae to be used for the numerical solution with the help of a finite-differences scheme.

\appendix

\chapter{Sobolev Spaces}
\section{Introduction}
Let us consider the following problem \cite{brezis1986analisi}. Given a function $f \in C([a, b])$, we have to find a function $u(x)$ which verifies
\begin{equation} \label{eq:A.1}\left\{ \begin{array} {l}
- u'' + u = f \hspace{0.5cm}  {\rm on} \;[a, b] \\
u(a)=u(b)=0
\end{array}\right. \end{equation} 
A $\textit{classical solution}$, or $\textit{strong solution}$, is a function $C^2$ on $[a, b]$ that verifies the previous problem in the usual sense.

Upon multiplying $(\ref{eq:A.1})$ by $\varphi \in C^1([a, b])$ and after integration by parts; we have
\begin{equation} \label{eq:A.2} \int_a^b u' \varphi' + \int_a^b u \varphi= \int_a^b f \varphi, \; \forall \varphi \in C^1([a, b]), \end{equation}
with $\varphi(a)= \varphi(b)=0$. The Eq. $(\ref{eq:A.2})$ has meaning only if $u \in C^1([a, b])$ whereas Eqs. ($\ref{eq:A.1})$ hold if $u \in C^2([a, b])$.

A function $u$ of class $C^1$ verifying ($\ref{eq:A.2})$ is a $\textit{weak solution}$ of $(\ref{eq:A.1})$. The $\textit{weak solutions}$ involve the $\textit{Sobolev spaces}$ that are a basic tool. Therefore, we will give a more precise characterization of these spaces which are essential for the application of the $\textit{variational method}$ in the theory of partial differential equations.

\section{Sobolev Space $W^{1, p}(\Omega)$}
Let $\Omega$ be an open set and $p$ an integer such that $1 \leq p \leq \infty$.
\begin{defn} The Sobolev space $W^{1, p}(\Omega)$ is defined by
\begin{equation*} \begin{split} W^{1, p}(\Omega)= \Biggl\{ & u \in L^p(\Omega) : \exists g_1, g_2, ..., g_N \in L^p(\Omega) \\
& :\int_{\Omega} u \frac{\partial \varphi}{\partial x_i} = - \int_\Omega g_i \varphi , \forall \varphi \in C^\infty_c(\Omega), \forall i=1, 2, ..., N \Biggr\}\end{split}\end{equation*}\end{defn}
Let us define $H^1(\Omega)=W^{1, 2}(\Omega)$ and, for $u \in W^{1, p}(\Omega)$, we define
\begin{equation*} \frac{\partial u}{\partial x_i}={g_i}^{(2)} \; {\rm and} \; \nabla u = \bigg{(} \frac{\partial u}{\partial x_1}, \frac{\partial u}{\partial x_2}, ..., \frac{\partial u}{\partial x_N}\bigg{)} = {\rm grad} \; u. \end{equation*}
The space $W^{1, p}$ has the norm
\begin{equation*} ||u||_{W^{1, p}} = ||u||_{L^p} + \sum_{i=1}^N \bigg{|}\bigg{|}\frac{\partial u}{\partial x_i} \bigg{|} \bigg{|}_{L^p}. \end{equation*} 
or the equivalent norm
\begin{equation*} \bigg{(} ||u||^p_{L^p} + \sum_{i=1}^N \bigg{|} \bigg{|}\frac{\partial u}{\partial x_i}\bigg{|} \bigg{|}^p_{L^p} \bigg{)}^{\frac{1}{p}}. \end{equation*}
The space $H^1(\Omega)$ has the inner product
\begin{equation*} (u, v)_{H^1}=(u, v)_{L^2} + \sum_{i=1}^N \bigg{(}\frac{\partial u}{\partial x_i}, \frac{\partial v}{\partial x_i} \bigg{)}_{L^2}; \end{equation*}
the associated norm 
\begin{equation*} ||u||_{H^1} = \bigg{(}||u||^2_{L^p} + \sum_{i=1}^N||\frac{\partial u}{\partial x_i}||^2_{L^p}\bigg{)}^{\frac{1}{2}}, \end{equation*}
is equivalent to the norm of $W^{1, 2}$.

\begin{prop} The space $W^{1, p}$ is a Banach space for $1 \leq p \leq \infty$. The space $W^{1, p}$ is reflexive for $1 < p < \infty$ and separable for $1 \leq p < \infty$. The space $H^1$ is a separable Hilbert space. \end{prop}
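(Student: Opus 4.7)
The plan is to prove each of the four claims in turn, using as the main device the linear isometric embedding
$$T \colon W^{1,p}(\Omega) \hookrightarrow \bigl(L^p(\Omega)\bigr)^{N+1}, \qquad T(u) = \bigl(u, \tfrac{\partial u}{\partial x_1}, \ldots, \tfrac{\partial u}{\partial x_N}\bigr),$$
which transports the problem onto a finite product of $L^p$ spaces whose structural properties (completeness, reflexivity, separability, Hilbert structure when $p=2$) are already standard from measure theory.

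First I would prove completeness. Given a Cauchy sequence $(u_n)$ in $W^{1,p}(\Omega)$, each of $(u_n)$ and $(\partial u_n/\partial x_i)$ is Cauchy in $L^p(\Omega)$, so by the Riesz--Fischer theorem they converge in $L^p$ to some $u$ and $g_i$. It then remains to verify that $g_i$ is the $i$-th weak derivative of $u$; this is done by writing the defining identity
$$\int_\Omega u_n \frac{\partial \varphi}{\partial x_i}\,dx = -\int_\Omega \frac{\partial u_n}{\partial x_i}\,\varphi\,dx, \qquad \varphi \in C^\infty_c(\Omega),$$
and passing to the limit on both sides via H\"older's inequality (the $L^{p'}$ norms of $\varphi$ and $\partial\varphi/\partial x_i$ are finite since $\varphi$ has compact support). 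This shows $u \in W^{1,p}(\Omega)$ with $\partial u/\partial x_i = g_i$, and $u_n \to u$ in the $W^{1,p}$ norm, giving the Banach space property for all $1 \leq p \leq \infty$. The same argument shows that the image $T(W^{1,p}(\Omega))$ is a closed subspace of $\bigl(L^p(\Omega)\bigr)^{N+1}$.

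For the remaining assertions I would argue by transport along $T$. When $1 < p < \infty$ the product $\bigl(L^p(\Omega)\bigr)^{N+1}$ is reflexive (finite product of reflexive spaces), and closed subspaces of reflexive Banach spaces are reflexive; hence $T(W^{1,p})$ is reflexive, and the isometry carries this back to $W^{1,p}(\Omega)$. When $1 \leq p < \infty$ the product is separable because each $L^p(\Omega)$ is, and subspaces of separable metric spaces are separable, yielding separability of $W^{1,p}(\Omega)$. Finally, for $H^1(\Omega) = W^{1,2}(\Omega)$ the bilinear form $(u,v)_{H^1}$ displayed before the proposition is manifestly symmetric, bilinear and positive definite, and it induces a norm equivalent to the $W^{1,2}$ norm; completeness follows from the first step with $p=2$ and separability from the preceding argument, so $H^1(\Omega)$ is a separable Hilbert space.

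The only genuinely delicate step is the passage to the limit in the weak-derivative identity, i.e.\ showing that the closure of $T(W^{1,p})$ in the product coincides with $T(W^{1,p})$ itself. Every other claim is a purely functional-analytic consequence inherited from $L^p(\Omega)$ through the isometric embedding. I would therefore treat completeness carefully, and then present reflexivity, separability and the Hilbert structure as short corollaries.
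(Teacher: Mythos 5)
Your proof is correct: the isometric embedding $T(u)=(u,\nabla u)$ into $\bigl(L^p(\Omega)\bigr)^{N+1}$, the closedness of its image via passage to the limit in the weak-derivative identity, and the transport of reflexivity, separability and the Hilbert structure back along $T$ is exactly the standard argument. The paper itself states this proposition without proof, deferring implicitly to the cited reference of Brezis, where precisely this argument is given, so there is nothing to contrast.
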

The definition of $W^{1, p}$ states that $\varphi$ is a $\textbf{test function}$. Thus, we can use both $C^1_c(\Omega)$ and $C^{\infty}_c(\Omega)$, indifferently, as set of test functions.
Furthermore, if $u \in C^1(\Omega) \cap L^p(\Omega)$ and if $\frac{\partial u}{\partial x_i} \in L^p(\Omega)$ $\forall i=1, 2, ...,N$ (which are the usual partial derivatives of $u$), then $u \in W^{1, p}(\Omega)$. The usual partial derivatives of $u$ coincides with the derivatives of $u$ in the sense of $W^{1, p}$. In particular, if $\Omega$ is limited, hence
\begin{equation*} C^1({\Omega}) \subset W^{1, p}(\Omega), \end{equation*}
for $1 \leq p \leq \infty$. Conversely, if $u \in W^{1, p}(\Omega) \cap C(\Omega)$, with $1 \leq p \leq \infty$, and if $\frac{\partial u}{\partial x_i} \in C(\Omega)$, $\forall i=1, 2, ..., N$ (that are the partial derivatives in the sense of $W^{1, p}$), then $u \in C^1(\Omega)$.

\begin{obs} Let $u \in L^1_{loc}(\Omega)$; $\textit{distributions' theory}$ makes it possible to regard $\frac{ \partial u}{\partial x_i}$ as an element of distribution space $\mathcal{D}(\Omega)$ (which contains also $L^1_{loc}(\Omega)$). Making use of this theory, it is possible to define $W^{1, p}$ as the set of functions $u \in L^p(\Omega)$ such that all partial derivatives $\frac{\partial u}{\partial x_i}$, for $1 \leq i \leq N$, belong to $L^p(\Omega)$.\end{obs}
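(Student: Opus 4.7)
The plan is to verify that the distributional characterization given in the observation coincides with the original definition of $W^{1,p}(\Omega)$. I shall proceed by making precise the identification $L^1_{loc}(\Omega) \hookrightarrow \mathcal{D}'(\Omega)$ and then showing that the statement ``$\partial u/\partial x_i \in L^p(\Omega)$ in the distributional sense'' is literally the condition appearing in the original definition.

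First I would recall that every $u \in L^1_{loc}(\Omega)$ determines a distribution $T_u \in \mathcal{D}'(\Omega)$ by $\langle T_u, \varphi\rangle := \int_\Omega u\,\varphi\,dx$ for $\varphi \in C^\infty_c(\Omega)$, and that this map $u \mapsto T_u$ is linear and injective; the injectivity is the du Bois-Reymond lemma, whose proof relies on approximation of an arbitrary $L^1_{loc}$ function by convolutions with a mollifier supported in $\Omega$. Since $L^p(\Omega) \subset L^1_{loc}(\Omega)$ for all $1 \le p \le \infty$ (by Hölder on any compact subset), the same identification embeds $L^p(\Omega)$ into $\mathcal{D}'(\Omega)$ injectively; this justifies the phrase ``regarding a function as a distribution''.

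Next I would recall the definition of the distributional partial derivative: for any $T \in \mathcal{D}'(\Omega)$, $\partial T/\partial x_i$ is the distribution given by $\langle \partial T/\partial x_i, \varphi\rangle := -\langle T, \partial \varphi/\partial x_i\rangle$ for $\varphi \in C^\infty_c(\Omega)$. In particular, for $u \in L^1_{loc}(\Omega)$ the object $\partial u/\partial x_i$ always exists as a distribution on $\Omega$, namely the functional $\varphi \mapsto -\int_\Omega u\,\partial \varphi/\partial x_i\,dx$. The condition in the observation ``$\partial u/\partial x_i \in L^p(\Omega)$'' then means, via the injective embedding above, that there exists $g_i \in L^p(\Omega)$ such that $\partial u/\partial x_i = T_{g_i}$ as distributions.

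Finally, I would unwrap this equality by testing both sides against an arbitrary $\varphi \in C^\infty_c(\Omega)$: the left-hand side gives $-\int_\Omega u\,\partial\varphi/\partial x_i\,dx$, while the right-hand side gives $\int_\Omega g_i\,\varphi\,dx$. Requiring equality for every $\varphi$ is exactly the integration-by-parts condition $\int_\Omega u\,\partial\varphi/\partial x_i\,dx = -\int_\Omega g_i\,\varphi\,dx$ of the original Definition. Conversely, if $u \in L^p(\Omega)$ and $g_1,\dots,g_N \in L^p(\Omega)$ satisfy that identity, then $\partial u/\partial x_i = T_{g_i}$ as distributions, so each distributional derivative lies in $L^p(\Omega)$. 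The only subtle point, and the one I would be careful about, is the injectivity of $u \mapsto T_u$, since without it the $g_i$ would be determined only as distributions and not as $L^p$-functions; this is where the du Bois-Reymond lemma is indispensable, and it is also what guarantees that the weak derivative, whenever it exists as an $L^p$-function, is unique up to equality a.e.
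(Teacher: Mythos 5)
Your proposal is correct and is essentially the canonical justification of this observation: the paper itself states the remark without proof (it is lifted from Brezis), and the verification one would supply is exactly yours — embed $L^1_{loc}(\Omega)$ into $\mathcal{D}'(\Omega)$ via $u\mapsto T_u$, use the du Bois--Reymond lemma for injectivity, and unwrap ``$\partial u/\partial x_i=T_{g_i}$'' into the integration-by-parts identity of the Definition. Your emphasis on injectivity as the point guaranteeing that the weak derivatives are well-defined as $L^p$ functions (unique a.e.) is precisely the right subtlety to flag.
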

If $\Omega=\mathbf{R}^N$ and $p=2$, it is even possible to define Sobolev spaces making use of the Fourier transform. 

Given the Sobolev space $W^{1, p}$ the following results hold:
\begin{description}
\item[(a)] Let $u_n$ be a sequence of $W^{1, p}$ such that $u_n \rightarrow u$ in $L^p$ and $\nabla u$ converges towards a limit in $(L^p)^N$, then $u \in W^{1, p}$ and $||u_n - u||_{W^{1, p}} \rightarrow 0$. If $1 < p \leq \infty$ it is sufficient to know that $u_n \rightarrow u$ in $L^p$ and that $\nabla u_n$ remains bounded in $(L^p)^N$ to conclude that $u \in W^{1, p}$.
\item[(b)]Given a function $f$ defined on $\Omega$, we indicate with $\bar{f}$ its extension to zero outside of $\Omega$, that is
\begin{equation*} \bar{f}(x)= \left\{ \begin{array} {l}
f(x) \hspace{2cm} {\rm if} \; x \in \Omega \\
0 \hspace{2.5cm} {\rm if} \; x \in \mathbf{R}^N \setminus \Omega
\end{array}\right. \end{equation*}
\end{description}
Let $u \in W^{1, p}(\Omega)$ and $\alpha \in C^1_c(\Omega)$. Then

\begin{equation*} \overline{\alpha u} \in W^{1, p}(\mathbf{R}^N) \; {\rm and} \; \frac{\partial}{\partial x_i} (\overline{\alpha u}) = \alpha \; \overline{\bigg{(}\frac{\partial u}{\partial x_i} + \frac{\partial \alpha}{\partial x_i} u \bigg{)}}. \end{equation*}
Indeed, if $\varphi \in C^1_c(\mathbf{R}^N)$, then we have:

\begin{equation*} \begin{split} \int_{\mathbf{R}^N} \overline{\alpha u} \frac{\partial \varphi}{ \partial x_i} &=  \int_\Omega {\alpha u} \frac{\partial \varphi}{ \partial x_i} = \int_\Omega u \bigg{[} \frac{\partial}{\partial x_i} (\alpha \varphi) - \frac{\partial \alpha}{\partial x_i} \varphi \bigg{]}\\
&= - \int_\Omega \bigg{(} \frac{\partial u}{\partial x_i} \alpha \varphi + u \frac{\partial \alpha}{\partial x_i} \varphi \bigg{)} = - \int_{\mathbf{R}^N} \overline{ \bigg{(} \alpha \frac{\partial u}{\partial x_i} + \frac{\partial \alpha}{\partial x_i} u \bigg{)}} \varphi. \end{split} \end{equation*}
The same conclusion remains valid if, instead of assuming that $\alpha \in C^1_c(\Omega)$, we take $\alpha \in C^1(\mathbf{R}^N) \cap L^{\infty}(\mathbf{R}^N)$ with $\nabla \alpha \in L^{\infty}(\mathbf{R}^N)^N$ and ${\rm Supp}\; \alpha \subset \mathbf{R}^N \setminus \Gamma$.

\begin{thm} $\mathbf{(Friedrichs).}$ Let $u \in W^{1, p}(\Omega)$ with $1 \leq p < \infty$. Thus, there exists a sequence $u_n \in C^\infty_c(\mathbf{R}^N)$ such that
\begin{description}
\item[(1)]$ u_n|_{\Omega} \rightarrow u \hspace{3cm} \in L^p(\Omega) $
\item[(2)] $\nabla u_n|_{\omega} \rightarrow \nabla u|_{\omega} \hspace{2cm} \in L^p(\omega)^N, \forall \omega \subset \subset \Omega $
\end{description} \end{thm}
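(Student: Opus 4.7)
The strategy is the classical combination of \emph{truncation} and \emph{mollification}: one first extends $u$ by zero to obtain $\bar{u}\in L^p(\mathbf{R}^N)$, then regularises it by convolution with a standard mollifier $\rho_n$ (with $\text{supp}\,\rho_n\subset B(0,1/n)$, $\rho_n\ge 0$, $\int\rho_n=1$) and multiplies by a cut-off $\zeta_n\in C^\infty_c(\mathbf{R}^N)$ satisfying $0\le\zeta_n\le 1$, $\zeta_n\equiv 1$ on $B(0,n)$, and $|\nabla\zeta_n|$ uniformly bounded. I would set
\[
u_n \;\equiv\; \rho_n * (\zeta_n\,\bar{u}) \;\in\; C^\infty_c(\mathbf{R}^N),
\]
where the compact support follows since $\zeta_n\bar{u}$ has compact support and convolution with $\rho_n$ enlarges the support by at most $1/n$.

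To establish convergence (1), I would split
\[
\|u_n-u\|_{L^p(\Omega)} \;\le\; \|\rho_n*(\zeta_n\bar{u}-\bar{u})\|_{L^p(\mathbf{R}^N)}+\|\rho_n*\bar{u}-\bar{u}\|_{L^p(\mathbf{R}^N)}.
\]
The second term tends to zero by the standard density result that mollification converges to the identity in $L^p$ for $1\le p<\infty$. For the first, Young's inequality gives the bound $\|(\zeta_n-1)\bar{u}\|_{L^p(\mathbf{R}^N)}$, which vanishes as $n\to\infty$ by dominated convergence, since $(1-\zeta_n)\to 0$ pointwise and is dominated by the integrable quantity $|\bar{u}|^p$.

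The delicate point, and the main obstacle, is (2): one must show that the weak derivative commutes with the mollification on compact subsets, a fact which is \emph{not} true for $\bar{u}$ on all of $\mathbf{R}^N$ because the zero extension may create a jump across $\partial\Omega$. My plan is to exploit that $\omega\subset\subset\Omega$ has positive distance $d\equiv\text{dist}(\omega,\partial\Omega)>0$. For $n$ sufficiently large one has both $1/n<d/2$ and $\zeta_n\equiv 1$ on a neighbourhood of $\omega+B(0,1/n)$, so that for every $x\in\omega$,
\[
u_n(x)\;=\;\int_{B(0,1/n)}\rho_n(y)\,\bar{u}(x-y)\,dy\;=\;\int_{B(0,1/n)}\rho_n(y)\,u(x-y)\,dy,
\]
the second equality holding because $x-y\in\Omega$ for all $y\in\text{supp}\,\rho_n$. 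For fixed $x\in\omega$ the function $y\mapsto\rho_n(x-y)$ belongs to $C^\infty_c(\Omega)$, so I can apply the definition of the weak derivative $\partial u/\partial x_i$ with this test function: this yields the key identity
\[
\frac{\partial u_n}{\partial x_i}(x)\;=\;\Big(\rho_n * \overline{\tfrac{\partial u}{\partial x_i}}\Big)(x)\qquad\text{for every }x\in\omega,
\]
where the bar denotes extension by zero outside $\Omega$.

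Once this commutation is established on $\omega$, the proof closes by applying the same $L^p$-convergence of mollifiers used in step one, now to $\overline{\partial u/\partial x_i}\in L^p(\mathbf{R}^N)$, restricted to the set $\omega$. This gives $\nabla u_n\to\nabla u$ in $L^p(\omega)^N$, completing the argument. The essential subtlety to handle carefully is thus ensuring that $n$ is large enough, depending on $\omega$, so that the mollification window $B(0,1/n)$ fits strictly inside $\Omega$; without this, the weak derivative identity would involve boundary contributions from the zero extension and the conclusion would fail.
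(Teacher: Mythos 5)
Your proof is correct, and the overall strategy coincides with the paper's: extend $u$ by zero, mollify, and cut off to gain compact support, with the boundary difficulty confined to the derivative convergence on $\omega\subset\subset\Omega$. The one genuine difference lies in how you obtain the commutation identity $\partial_i u_n=\rho_n*\overline{\partial_i u}$ on $\omega$. The paper first multiplies $u$ by a cutoff $\alpha\in C^1_c(\Omega)$ equal to $1$ near $\omega$, so that $\overline{\alpha u}\in W^{1,p}(\mathbf{R}^N)$, then invokes its Lemma that convolution commutes with weak derivatives for functions in $W^{1,p}(\mathbf{R}^N)$ together with the product rule, and finally observes that $\rho_n*\overline{\alpha u}=\rho_n*\bar u$ on $\omega$ for large $n$. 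You bypass both the auxiliary function $\alpha u$ and that lemma by noting that for $x\in\omega$ and $1/n<\mathrm{dist}(\omega,\partial\Omega)$ the translated kernel $y\mapsto\rho_n(x-y)$ is itself an admissible test function in $C^\infty_c(\Omega)$, so the definition of the weak derivative applies directly. Your route is the more elementary and self-contained one; the paper's has the advantage of reusing machinery (the extension result and the convolution lemma) that it has already set up and uses elsewhere. The only point to state explicitly in your write-up is that the threshold on $n$ for step (2) depends on $\omega$ through both $\mathrm{dist}(\omega,\partial\Omega)$ and the requirement that $\zeta_n\equiv 1$ on $\omega+B(0,1/n)$, which you do acknowledge; with that, the argument is complete.
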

where $\omega \subset \subset \Omega$ means that $\omega$ is an open set such that $\overline{\omega} \subset \Omega$ and $\overline{\omega}$ is a compact set.
\proof Let us define
\begin{equation*} \bar{u}(x)=\left\{\begin{array} {l}
u(x) \hspace{2cm} {\rm if} \; x \in \Omega \\
0 \hspace{2cm} {\rm if} \; x \in \mathbf{R}^N \setminus \Omega 
\end{array}\right. \end{equation*}
and $v_n= \rho_n * \bar{u}$, where $\rho_n$ is a regularizing function. We know that $v_n \in C^{\infty}(\mathbf{R}^N)$ and $v_n \rightarrow \bar{u} \in L^p(\mathbf{R}^N)$. We prove that $\nabla v_n \_\omega \rightarrow \nabla u|_\omega \in L^p(\omega)$, $\forall \omega \subset \subset \Omega$. Since  
$\omega \subset \subset \Omega$, we consider a function $\alpha \in C^1_c(\Omega)$, $0 \leq \alpha \leq 1$, such that $\alpha=1$ in a neighbourhood of $\omega$. For $n$ large enough we have
\begin{equation*} (3) \hspace{2cm} \rho_n * \overline{\alpha u} = \rho_n * \bar{u} \hspace{2cm} {\rm on} \; \omega. \end{equation*}
Indeed
\begin{equation*} \begin{split}& {\rm Supp}(\rho_n * \overline{\alpha u} - \rho_n* \overline{u})={\rm Supp}[\rho_n*(1 - \bar{\alpha}) \bar{u}] \\
&{\rm Supp} \rho_n + {\rm Supp}(1 - \bar{\alpha})\bar{u} \subset B \bigg{(} 0, \frac{1}{n} \bigg{)} + {\rm Supp}(1 - \bar{\alpha}) \subset \mathbf{R}^N \setminus \omega \end{split} \end{equation*}
for $n$ sufficiently large.

Making use of the following lemma
\begin{lem} Let $\rho \in L^1(\mathbf{R}^N)$ and $v \in W^{1, p}(\mathbf{R}^N)$ with $1 \leq p \leq \infty$. Then
\begin{equation*} \rho * v \in W^{1, p} (\mathbf{R}^N) \; {\rm and}\; \frac{\partial}{\partial x_i}(\rho * v)= \rho * \frac{\partial v}{\partial x_i}, \; \forall i=1, 2, ..., N. \end{equation*} \end{lem}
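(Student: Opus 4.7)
The plan is to reduce the claim to the defining weak-derivative identity for $v$, by testing $\rho*v$ against an arbitrary $\varphi\in C_c^\infty(\mathbf{R}^N)$ and manipulating the double integral using Fubini together with a translation-invariance trick. The ambient $L^p$-membership of both $\rho*v$ and $\rho*\partial v/\partial x_i$ will be handled separately by Young's inequality, which gives $\|\rho*w\|_{L^p}\leq\|\rho\|_{L^1}\|w\|_{L^p}$ for every $w\in L^p(\mathbf{R}^N)$, $1\leq p\leq\infty$; applying this with $w=v$ and with $w=\partial v/\partial x_i$ shows both convolutions lie in $L^p$, so once the weak derivative formula is established we immediately get $\rho*v\in W^{1,p}(\mathbf{R}^N)$.

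The computation I would carry out is the following. Fix $\varphi\in C_c^\infty(\mathbf{R}^N)$ and write
\[
\int_{\mathbf{R}^N}(\rho*v)(x)\,\frac{\partial\varphi}{\partial x_i}(x)\,dx
=\int_{\mathbf{R}^N}\!\!\int_{\mathbf{R}^N}\rho(y)\,v(x-y)\,\frac{\partial\varphi}{\partial x_i}(x)\,dx\,dy,
\]
the use of Fubini being justified because $|\rho(y)|\,|v(x-y)|\,|\partial_i\varphi(x)|$ is integrable on $\mathbf{R}^{2N}$: indeed $\partial_i\varphi$ is bounded with compact support, so the inner integral in $x$ is uniformly bounded by $\|\partial_i\varphi\|_\infty|\text{Supp}\,\varphi|\,\|v\|_{L^p}^{\,?}$ up to a Hölder estimate, and the outer integral in $y$ against $\rho\in L^1$ is finite. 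For each fixed $y$, I change variables $z=x-y$ and observe that $\varphi_y(z):=\varphi(z+y)$ is again an element of $C_c^\infty(\mathbf{R}^N)$; since translation commutes with differentiation, $\partial_i\varphi(z+y)=\partial_i\varphi_y(z)$. Applying the defining property of $v\in W^{1,p}(\mathbf{R}^N)$ to the test function $\varphi_y$ yields
\[
\int v(z)\,\partial_i\varphi_y(z)\,dz=-\int \frac{\partial v}{\partial x_i}(z)\,\varphi(z+y)\,dz.
\]
Reversing the change of variables and applying Fubini once more transforms the original integral into $-\int(\rho*\partial_i v)(x)\,\varphi(x)\,dx$, which is exactly the weak derivative identity to be proved.

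The main obstacle, and the step that demands the most care, is the rigorous application of Fubini's theorem in the $p=\infty$ case, where the product $\rho(y)v(x-y)\partial_i\varphi(x)$ need not be absolutely integrable on $\mathbf{R}^{2N}$ unless one first exploits the compact support of $\partial_i\varphi$: for $1\leq p<\infty$ one estimates the double integral by Hölder in $x$ and integrability of $\rho$ in $y$ without difficulty, but for $p=\infty$ one must directly bound the integrand in $L^1(\mathbf{R}^{2N}_{x,y})$ by $\|v\|_\infty|\rho(y)|\,|\partial_i\varphi(x)|\chi_{\text{Supp}\,\varphi}(x)$, whose double integral is $\|v\|_\infty\|\rho\|_{L^1}\|\partial_i\varphi\|_{L^1}<\infty$. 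A secondary technical point that must not be glossed over is that the translate $\varphi_y$ remains in $C_c^\infty(\mathbf{R}^N)$ for every $y\in\mathbf{R}^N$, so that the hypothesis $v\in W^{1,p}(\mathbf{R}^N)$ can legitimately be invoked inside the $y$-integral; this is where the choice of ambient domain $\mathbf{R}^N$ (rather than a general $\Omega$) is essential, since on a proper open set the translate of a compactly supported test function need not remain compactly supported in $\Omega$.
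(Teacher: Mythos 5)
Your argument is correct, and it is the standard proof of this lemma. For comparison: the paper itself does not prove this statement at all — the lemma is quoted (it is Brezis's convolution lemma) and invoked as a black box inside the proof of the Friedrichs theorem — so there is no in-paper argument to measure yours against. Your route (Young's inequality for the $L^p$ membership of both $\rho*v$ and $\rho*\partial v/\partial x_i$; then Fubini plus the translation trick $\varphi_y(z)=\varphi(z+y)$ to transfer the weak-derivative identity from $v$ to $\rho*v$) is exactly the textbook one, and you correctly single out the two points that genuinely need care: the absolute integrability required to apply Fubini, and the fact that translates of test functions remain test functions only because the ambient domain is all of $\mathbf{R}^N$ — which is precisely why the lemma is stated on $\mathbf{R}^N$ and not on a general open set $\Omega$. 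One small cleanup: the Hölder bound you leave with a question mark is simply $\int|v(x-y)|\,|\partial_i\varphi(x)|\,dx\le\|v\|_{L^p}\|\partial_i\varphi\|_{L^{p'}}$, which is finite and independent of $y$ for every $1\le p\le\infty$ (including $p=\infty$, where $p'=1$ and the right-hand side is $\|v\|_{\infty}\|\partial_i\varphi\|_{L^1}$); integrating this uniform bound against $|\rho(y)|$ justifies Fubini in all cases at once, so your separate treatment of $p=\infty$ is harmless but not actually needed.
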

and of the result $(b)$, we have
\begin{equation*} \frac{\partial}{\partial x_i} (\rho_n * \overline{\alpha u}) = \rho_n * \overline{\bigg{(} \alpha \frac{\partial u}{\partial x_i} + \frac{\partial \alpha}{\partial x_i} u \bigg{)}}, \end{equation*}
and therefore
\begin{equation*} \frac{\partial}{\partial x_i} (\rho_n * \overline{\alpha u}) \rightarrow \overline{\bigg{(} \alpha \frac{\partial u}{\partial x_i} + \frac{\partial \alpha}{\partial x_i} u \bigg{)}}. \end{equation*}
In particular
\begin{equation*}  \frac{\partial}{\partial x_i} (\rho_n * \overline{\alpha u}) \rightarrow \frac{\partial u}{\partial x_i} \in L^p(\omega), \end{equation*}
and, because of $(3)$
\begin{equation*} \frac{\partial}{\partial x_i} (\rho_n * \overline{ u}) \rightarrow \frac{\partial u}{\partial x_i} \in L^p(\omega). \end{equation*}
Eventually, we set $u_n= \xi_nv_n^{(1)}$, and it is easily verified that the sequence $u_n$ satisfies the desired properties, i.e. $u_n \in C^\infty_c(\mathbf{R}^N)$, $u_n \rightarrow u \in L^p(\Omega)$ and $\nabla u_n \rightarrow \nabla u \in L^p(\omega)^N$.
 \endproof

\begin{prop} Let $u \in L^p(\Omega)$ with $1 < p \leq \infty$. The following properties are equivalent
\begin{description}
\item[(i)] $u \in W^{1, p}(\Omega)$
\item[(ii)]There exists a constant $C$ such that
\begin{equation*}\bigg{|}\int_\Omega u \frac{\partial \varphi}{\partial x_i} \bigg{|} \leq C||\varphi ||_{L^p}, \hspace{2cm} \forall \varphi \in C^\infty_c(\Omega), \; \forall i=1, 2, ..., N \end{equation*}
\item[(iii)]There exists a constant $C$ such that for every open set $\omega \subset \subset \Omega$ and $h \in \mathbf{R}^N$, with $|h| < dist(\omega, \mathbf{R}^N \setminus \Omega)$ we have
\begin{equation*} || \tau_h u - u ||_{L^p} \leq C|h|. \end{equation*}
\end{description} 
Furthermore, we can choose $C=||\nabla u ||_{L^p}$ in $(ii)$ and $(iii)$.\end{prop}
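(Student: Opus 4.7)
The plan is to establish the chain of implications $(i) \Rightarrow (ii) \Rightarrow (i)$, then $(i) \Rightarrow (iii) \Rightarrow (ii)$, exploiting at each step either a duality argument or an approximation by smooth functions.

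First I would prove $(i) \Rightarrow (ii)$, which is essentially the definition together with Hölder's inequality: if $u \in W^{1,p}(\Omega)$ with $g_i = \partial u/\partial x_i \in L^p(\Omega)$, then for $\varphi \in C_c^\infty(\Omega)$ one has $\int_\Omega u\,\partial_i \varphi = -\int_\Omega g_i \varphi$, whence $|\int_\Omega u\,\partial_i \varphi| \le \|g_i\|_{L^p}\,\|\varphi\|_{L^{p'}}$, with $p'$ the conjugate exponent (this is what (ii) must mean, the statement having an evident misprint). For the converse $(ii) \Rightarrow (i)$, the functional $\varphi \mapsto -\int_\Omega u\,\partial_i \varphi$ defined on $C_c^\infty(\Omega)$ extends, by the bound in (ii), to a continuous linear functional on the closure of $C_c^\infty(\Omega)$ in $L^{p'}(\Omega)$. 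Since $1 < p \le \infty$ forces $1 \le p' < \infty$, so that $C_c^\infty(\Omega)$ is dense in $L^{p'}(\Omega)$, one obtains a continuous functional on all of $L^{p'}(\Omega)$; the Riesz representation theorem (valid because $(L^{p'})^* = L^p$ for this range) then produces $g_i \in L^p(\Omega)$ with $-\int_\Omega u\,\partial_i \varphi = \int_\Omega g_i \varphi$, i.e. $u \in W^{1,p}(\Omega)$.

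Next I would handle $(i) \Rightarrow (iii)$ by first verifying it on smooth functions and then using Friedrichs' theorem (proved above) to pass to the limit. For $u \in C_c^\infty(\mathbb{R}^N)$ one writes $\tau_h u(x) - u(x) = \int_0^1 h\cdot \nabla u(x+th)\,dt$, so Minkowski's integral inequality yields $\|\tau_h u - u\|_{L^p(\omega)} \le |h|\,\|\nabla u\|_{L^p(\Omega)}$ provided $|h| < \mathrm{dist}(\omega, \mathbb{R}^N \setminus \Omega)$. For general $u \in W^{1,p}(\Omega)$, take an approximating sequence $u_n \in C_c^\infty(\mathbb{R}^N)$ with $u_n \to u$ in $L^p(\Omega)$ and $\nabla u_n \to \nabla u$ in $L^p(\omega')^N$ on a slightly larger $\omega \subset\subset \omega' \subset\subset \Omega$; apply the smooth estimate to each $u_n$ and take $n \to \infty$ to recover (iii) with $C = \|\nabla u\|_{L^p}$.

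Finally, for $(iii) \Rightarrow (ii)$ I would rewrite the derivative pairing as a limit of difference quotients. Fix $\varphi \in C_c^\infty(\Omega)$ supported in some $\omega \subset\subset \Omega$, and pick $h > 0$ smaller than $\mathrm{dist}(\mathrm{supp}\,\varphi, \mathbb{R}^N \setminus \Omega)$. A change of variable gives
\begin{equation*}
\int_\Omega u(x)\,\frac{\varphi(x+he_i) - \varphi(x)}{h}\,dx \;=\; \int_\Omega \frac{u(x-he_i) - u(x)}{h}\,\varphi(x)\,dx,
\end{equation*}
whose modulus is bounded, via Hölder and (iii), by $C\,\|\varphi\|_{L^{p'}}$. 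Letting $h \to 0$, the left-hand side tends to $\int_\Omega u\,\partial_i\varphi$, yielding (ii) with the same constant $C$. The main obstacle is really the use of $L^{p'}$-duality in the step $(ii) \Rightarrow (i)$: this is precisely why the case $p = 1$ must be excluded (for $p = 1$ one has $p' = \infty$ and $(L^\infty)^* \supsetneq L^1$, so the Riesz representation argument breaks down); the remaining technical care consists in keeping translates of the support of $\varphi$ safely inside $\Omega$, which is exactly ensured by the condition $|h| < \mathrm{dist}(\omega, \mathbb{R}^N \setminus \Omega)$.
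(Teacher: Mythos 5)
The paper itself does not prove this proposition: it is quoted verbatim from Brezis's textbook in Appendix A and stated without proof (only the remark about the $p=1$ case follows it). So there is no in-paper argument to compare against; your proof is the standard one from Brezis, and it is essentially correct. You rightly spotted that the norm in (ii) must be $\|\varphi\|_{L^{p'}}$ rather than $\|\varphi\|_{L^{p}}$ (a misprint carried over into the paper), and your implication chain $(i)\Rightarrow(ii)\Rightarrow(i)$, $(i)\Rightarrow(iii)\Rightarrow(ii)$ does close up all three equivalences. The duality step $(ii)\Rightarrow(i)$ correctly isolates where $p>1$ is needed, and the difference-quotient computation in $(iii)\Rightarrow(ii)$ is fine.

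One small gap: in $(i)\Rightarrow(iii)$ you invoke Friedrichs' theorem to approximate $u$ by $C_c^\infty$ functions, but that theorem (as stated in the paper and in Brezis) requires $1\le p<\infty$, whereas the proposition covers $p=\infty$ as well. For $p=\infty$ you should either argue separately (e.g.\ note that $u\in W^{1,\infty}(\Omega)$ lies in $W^{1,q}(\omega')$ for every $q<\infty$ on a bounded $\omega'\supset\supset\omega$, apply the estimate $\|\tau_h u-u\|_{L^q(\omega)}\le |h|\,\|\nabla u\|_{L^q(\omega')}$, and let $q\to\infty$), or mollify directly and use that $\rho_n * \bar u$ converges locally in a sense sufficient to pass to the limit in the $L^\infty$ bound. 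This is a routine repair, but as written the approximation step does not literally cover the full stated range of $p$.
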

If $p=1$ the following implication still hold
$$ (i) \rightarrow (ii) \iff (iii). $$
The functions that verify (ii), or (iii), with $p=1$ are the $\textit{functions with bounded}$ $\textit{variation}$, which are functions of $L^1$ and whose first derivatives, in the sense of distributions, are bounded measures.

\section{Sobolev Space $W^{m, p}(\Omega)$}
Let $m \geq 2$ be an integer and $p$ be a real number such that $1 \leq p \leq \infty$. We define by recurrence
$$ W^{m, p}(\Omega)=\{ u \in W^{m-1, p}; \; \frac{\partial u}{\partial x_i} \in W^{m-1, p}(\Omega) \; \forall i=1, 2, ..., N \}.$$
This is equivalent to the definition
\begin{equation*} \begin{split} W^{m, p}(\Omega)=\Biggl\{& u \in L^p(\Omega): \forall \alpha \; {\rm with} \; |\alpha| \leq m \; \exists g_\alpha \in L^p(\Omega) \; {\rm such} \; {\rm that} \\  
& \int_\Omega u D^\alpha \varphi= (-1)^{|\alpha|}\int_\Omega g_\alpha \varphi; \forall \varphi \in C^\infty_c(\Omega) \Biggr\} .\end{split}\end{equation*}
We set $D^\alpha u = g_\alpha$.

The space $W^{m, p}(\Omega)$ with the norm
\begin{equation*} ||u||_{W^{m, p}} = \sum_{0 \leq |\alpha| \leq m} ||Du||_{L^p} \end{equation*}
is a Banach space.

We set $H^m(\Omega)=W^{m, 2}(\Omega)$; $H^m(\Omega)$ with inner product
$$ (u, v)_{H^m}=\sum_{0 \leq |\alpha| \leq m } (D^\alpha u, D^\alpha v)_{L^2} $$
is a Hilbert space.

If $\Omega$ is sufficiently regular with $\Gamma= \partial \Omega$ bounded, then the norm of $W^{m, p}(\Omega)$ is equivalent to the norm
$$||u||_{L^p} + \sum_{|\alpha|=m} ||D^\alpha u ||_{L^p} $$
More precisely, for every $\alpha$ with $0 < \alpha \leq m$ and $\forall \epsilon >0$ there exists a constant $C$, which depends on $\Omega$, $\epsilon$ and $\alpha$, such that
$$||D^\alpha u||_{L^p} \leq \epsilon \sum_{|\beta|=m} ||D^\beta u||_{L^p} + C||u||_{L^p} \hspace{1cm} \forall u \in W^{m, p} (\Omega).$$

\section{The space $W^{1, p}_0(\Omega)$}
Let $p$ be $1 \leq p < \infty$; $W^{1, p}_0(\Omega)$ is the closure of $C^1_c(\Omega)$ in $W^{1, p}(\Omega)$. Let us set
$$ H^1_0(\Omega)=W^{1, 2}_0 (\Omega)$$
The space $W^{1, p}_0$ with the norm induced by $W^{1, p}$ is a separable Banach space; if $1 < p < \infty$ it is reflexive. $H^1_0$ is a Hilbert space for the inner product of $H^1$.

Since $C^1_c(\mathbf{R}^N)$ is dense in $W^{1, p}(\mathbf{R}^N)$, we have
$$W^{1, p}_0(\mathbf{R}^N)= W^{1, p}(\mathbf{R}^N). $$
Conversely, if $\Omega \subset \mathbf{R}^N$, then in general $W^{1, p}_0(\Omega) \neq W^{1, p}(\Omega)$. However, if $\mathbf{R}^N \setminus \Omega$ is enough small and $p <N$, we have $W^{1, p}_0(\Omega)=W^{1, p}(\Omega)$. Furthermore, $C^\infty_c(\Omega)$ is dense in $W^{1, p}_0(\Omega)$, then we can give the definition of $W^{1, p}(\Omega)$ making use of $C^\infty_c(\Omega)$ or $C^1_0(\Omega)$, indifferently.

The functions of $W^{1, p}_0(\Omega)$ are the functions of $W^{1, p}(\Omega)$ that vanish on $\Gamma= \partial \Omega$.

\begin{lem}Let $u \in W^{1, p}(\Omega)$, $1 \leq p < \infty$, with {\rm Supp}$u$ compact and belonging to $\Omega$. Then $u \in W^{1, p}_0(\Omega)$. \end{lem}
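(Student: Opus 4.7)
The plan is to approximate $u$ directly by mollification, exploiting the fact that compactness of $\operatorname{Supp}u$ inside $\Omega$ leaves a positive buffer $d = \operatorname{dist}(\operatorname{Supp}u, \partial\Omega) > 0$ in which to work. Let $\bar{u}$ denote the extension of $u$ by zero to all of $\mathbf{R}^N$, and let $\rho_\varepsilon$ be a standard mollifier. I would set $v_\varepsilon \equiv \rho_\varepsilon * \bar{u}$ and aim to show that for $\varepsilon < d$ one has $v_\varepsilon \in C^\infty_c(\Omega)$ and $v_\varepsilon \to u$ in $W^{1,p}(\Omega)$; this exhibits $u$ as a limit of elements of $C^\infty_c(\Omega) \subset W^{1,p}_0(\Omega)$, hence $u \in W^{1,p}_0(\Omega)$ since $W^{1,p}_0(\Omega)$ is closed in $W^{1,p}(\Omega)$.

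First I would verify the support property: $\operatorname{Supp} v_\varepsilon \subset \operatorname{Supp} \bar{u} + \overline{B(0,\varepsilon)} = \operatorname{Supp} u + \overline{B(0,\varepsilon)}$, and for $\varepsilon < d$ this set is still a compact subset of $\Omega$, so $v_\varepsilon \in C^\infty_c(\Omega)$. Second, $L^p$-convergence $v_\varepsilon \to \bar{u}$ in $L^p(\mathbf{R}^N)$ is the classical property of mollifiers ($1 \le p < \infty$), which immediately gives $v_\varepsilon|_\Omega \to u$ in $L^p(\Omega)$.

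The heart of the argument is the convergence of gradients. I would establish first that $\bar{u} \in W^{1,p}(\mathbf{R}^N)$ with $\partial_i \bar{u} = \overline{\partial_i u}$, the extension by zero of the weak derivative. This is the step that genuinely uses the hypothesis: for $\varphi \in C^\infty_c(\mathbf{R}^N)$, pick $\alpha \in C^\infty_c(\Omega)$ equal to $1$ in a neighbourhood of $\operatorname{Supp}u$; then $u \partial_i \varphi = u \partial_i(\alpha\varphi)$ pointwise on $\Omega$ (because $\alpha = 1$ where $u \neq 0$) and $\alpha\varphi \in C^\infty_c(\Omega)$, so
\[
\int_{\mathbf{R}^N} \bar{u}\,\partial_i\varphi = \int_\Omega u\,\partial_i(\alpha\varphi) = -\int_\Omega (\partial_i u)\,\alpha\varphi = -\int_{\mathbf{R}^N}\overline{\partial_i u}\,\varphi,
\]
where the last equality uses again that $\alpha = 1$ on $\operatorname{Supp}u$. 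Hence $\bar{u} \in W^{1,p}(\mathbf{R}^N)$. By the lemma cited in the excerpt (convolution of an $L^1$ function with a $W^{1,p}$ function), $\partial_i v_\varepsilon = \rho_\varepsilon * \overline{\partial_i u}$, which converges to $\overline{\partial_i u}$ in $L^p(\mathbf{R}^N)$, and therefore $\partial_i v_\varepsilon|_\Omega \to \partial_i u$ in $L^p(\Omega)$.

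Combining both convergences yields $\|v_\varepsilon - u\|_{W^{1,p}(\Omega)} \to 0$, completing the proof. The main obstacle, and the only non-routine point, is the identification $\partial_i \bar{u} = \overline{\partial_i u}$: in general the extension by zero of a Sobolev function acquires singular (measure-valued) boundary contributions, and it is precisely the compactness of $\operatorname{Supp}u$ inside $\Omega$ that kills these boundary terms via the cut-off $\alpha$. Once this identification is in hand, the rest is standard mollifier calculus.
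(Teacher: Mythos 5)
Your proof is correct, but it takes a different route from the paper's. The paper invokes the Friedrichs density theorem to produce $u_n \in C^\infty_c(\mathbf{R}^N)$ with $u_n \to u$ in $L^p(\Omega)$ and $\nabla u_n \to \nabla u$ only on sets $\omega \subset\subset \Omega$, and then places the cut-off $\alpha \in C^1_c(\omega)$, $\alpha \equiv 1$ on $\operatorname{Supp}u$, on the \emph{approximants}: since $\alpha u = u$ and the gradient convergence is only needed where $\alpha$ lives, $\alpha u_n \in C^1_c(\Omega)$ converges to $u$ in the full $W^{1,p}(\Omega)$ norm. You instead use the cut-off one level earlier, in the integration-by-parts identity, to prove that the zero-extension satisfies $\bar{u} \in W^{1,p}(\mathbf{R}^N)$ with $\partial_i\bar{u} = \overline{\partial_i u}$; after that a single global mollification finishes the job, the support condition $\operatorname{Supp}(\rho_\varepsilon * \bar{u}) \subset \operatorname{Supp}u + \overline{B(0,\varepsilon)} \subset \Omega$ being guaranteed by $\varepsilon < d$. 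Both arguments rest on the same convolution-derivative lemma; yours buys a self-contained, explicit approximating sequence and the useful by-product that the zero-extension of a compactly supported $W^{1,p}(\Omega)$ function is globally $W^{1,p}$, while the paper's buys brevity by reusing Friedrichs as a black box. One small point worth making explicit in your write-up: the final equality $\int_\Omega(\partial_i u)\,\alpha\varphi = \int_{\mathbf{R}^N}\overline{\partial_i u}\,\varphi$ needs $\alpha \equiv 1$ on $\operatorname{Supp}(\partial_i u)$, which holds because the weak derivative is locally determined and hence $\operatorname{Supp}(\partial_i u) \subset \operatorname{Supp}u$; this is standard but is doing real work there.
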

\proof Given an open set $\omega$ such that ${\rm Supp}u \subset \omega \subset \subset \Omega$ and by choosing $\alpha \in C^1_c(\omega)$ such that $\alpha=1$ on ${\rm Supp}u$, then $\alpha u = u$. On the other hand, the Friederichs theorem states the existence of a sequence $u_n \in C^\infty_c(\mathbf{R}^N)$ such that $u_n \rightarrow u$ in $L^p(\Omega)$ and $\nabla u_n \rightarrow \nabla u$ in $L^p(\omega)^N$. Consequently, $\alpha u_n \rightarrow \alpha u$ in $W^{1, p}(\Omega)$ and $\alpha u \in W^{1, p}_0(\Omega)$. Hence, $u \in W^{1, p}_0(\Omega)$. \endproof
 
\begin{thm} Let us suppose that $\Omega$ is of class $C^1$. Let
$$ u \in W^{1, p}(\Omega) \cap C(\overline{\Omega}) \hspace{1cm} {\rm with} \; 1 \leq p < \infty.$$
Then the following properties are equivalent:
\begin{description}
\item[(i)] $u=0$ on $\Gamma$,
\item[(ii)] $u \in W^{1, p}_0(\Omega)$.
\end{description}\end{thm}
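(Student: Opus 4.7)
The plan is to prove both implications separately, with (i)$\Rightarrow$(ii) being the substantive one. For the easier direction (ii)$\Rightarrow$(i), I would exploit the $C^1$ regularity of $\Omega$ to reduce to the half-space model $\{x_N>0\}$ by local charts and a partition of unity. Given an approximating sequence $u_n\in C^1_c(\Omega)$ with $u_n\to u$ in $W^{1,p}$, each $u_n$ satisfies $u_n(x',0)=0$, so for $t>0$ the fundamental theorem of calculus gives $u_n(x',t)=\int_0^t\partial_N u_n(x',s)\,ds$. Hölder then yields the slab estimate
\begin{equation*}
\int_B\int_0^h |u_n(x',t)|^p\,dt\,dx'\leq \frac{h^p}{p}\int_B\int_0^h |\partial_N u_n|^p\,ds\,dx',
\end{equation*}
which survives the passage $n\to\infty$. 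Dividing by $h$ and letting $h\to 0^+$, the continuity of $u$ on $\overline{\Omega}$ makes the left-hand side converge to $\int_B |u(x',0)|^p\,dx'$ while the right-hand side vanishes, forcing $u\equiv 0$ on $\Gamma$.

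For the reverse implication (i)$\Rightarrow$(ii), I would apply a direct truncation that bypasses boundary charts entirely. Fix $G\in C^1(\mathbb{R})$ with $G(t)=0$ for $|t|\leq 1$, $G(t)=t$ for $|t|\geq 2$, $|G(t)|\leq |t|$, and $G'$ bounded, and define $u_n\equiv \tfrac{1}{n}G(nu)$. The chain rule for Sobolev functions gives $u_n\in W^{1,p}(\Omega)$ with $\nabla u_n=G'(nu)\,\nabla u$. By construction $\mathrm{Supp}\,u_n\subset\{x\in\overline{\Omega}:|u(x)|\geq 1/n\}$, and since $u\in C(\overline{\Omega})$ vanishes on $\Gamma$ by hypothesis (i), this closed subset of $\overline{\Omega}$ lies at positive distance from $\Gamma$ and is therefore a compact subset of $\Omega$. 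By the lemma immediately preceding the theorem, $u_n\in W^{1,p}_0(\Omega)$. Dominated convergence then shows $u_n\to u$ in $L^p(\Omega)$ (since $|u_n|\leq |u|$ and $u_n\to u$ pointwise) and $\nabla u_n\to \nabla u$ in $L^p(\Omega)^N$ (using that $\nabla u=0$ a.e.\ on $\{u=0\}$, so that $G'(nu)\nabla u\to \nabla u$ a.e., with pointwise majorant $\|G'\|_\infty |\nabla u|$). Since $W^{1,p}_0(\Omega)$ is closed in $W^{1,p}(\Omega)$, the limit $u$ belongs to $W^{1,p}_0(\Omega)$.

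The main obstacle will be the rigorous justification of the chain rule identity $\nabla(\tfrac{1}{n}G(nu))=G'(nu)\,\nabla u$ for $u\in W^{1,p}(\Omega)$, which is not part of the material developed in the excerpt. The standard route is to approximate $u$ by smooth functions via Friedrichs' theorem, verify the identity for the smooth approximants by the classical chain rule, and pass to the limit using the uniform boundedness of $G'$ together with the $W^{1,p}$-convergence of the approximants on relatively compact subsets. A closely related delicate point is the identity $\nabla u=0$ a.e.\ on $\{u=0\}$, needed to handle the convergence of $G'(nu)\nabla u$ on that level set; it follows from the same approximation scheme but must be invoked explicitly rather than taken for granted.
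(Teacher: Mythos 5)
The paper states this theorem in the appendix without giving any proof (the appendix reproduces material from Brezis's functional analysis text, and the theorem is quoted there only to ``specify what is meant by sufficiently regular''), so there is no in-paper argument to compare against. Your proposal is essentially the standard textbook proof: the direction (ii)$\Rightarrow$(i) via local flattening, the fundamental theorem of calculus and the slab estimate is correct as written, including the treatment of the limits $n\to\infty$ and $h\to 0^+$ and the case $p=1$ (where $h^{p-1}=1$ but absolute continuity of the integral still kills the right-hand side). The two technical points you flag in the last paragraph --- the chain rule $\nabla(\tfrac1n G(nu))=G'(nu)\nabla u$ in $W^{1,p}$ and the fact that $\nabla u=0$ a.e.\ on $\{u=0\}$ --- are indeed the facts one must import, and your plan for establishing them via Friedrichs approximation is the right one.

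There is, however, one genuine gap in (i)$\Rightarrow$(ii): the claim that $\{x\in\overline{\Omega}:|u(x)|\geq 1/n\}$ is a compact subset of $\Omega$. The theorem does not assume $\Omega$ bounded, and for unbounded $\Omega$ this set is closed but need not be bounded (take $u\in W^{1,p}$ with bumps of height $1$ and shrinking width marching off to infinity), so the lemma on compactly supported functions cannot be invoked. Moreover the inference you use --- closed and disjoint from $\Gamma$, hence at positive distance from $\Gamma$, hence compact --- is backwards even in the bounded case: two disjoint closed sets need not be at positive distance unless one is compact, and positive distance from $\Gamma$ does not imply boundedness. The correct order is: if $\overline{\Omega}$ is compact, then the closed set $\{|u|\geq 1/n\}$ is compact, and since it misses $\Gamma$ it is a compact subset of $\Omega$. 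For general $\Omega$ you must first reduce to the case $\mathrm{Supp}\,u$ bounded by considering $\zeta_k u$ with $\zeta_k(x)=\zeta(x/k)$ for a fixed cutoff $\zeta\in C^\infty_c(\mathbb{R}^N)$ equal to $1$ on the unit ball, checking $\zeta_k u\to u$ in $W^{1,p}(\Omega)$, and only then applying the truncation $\tfrac1n G(nu)$; with that preliminary step the rest of your argument goes through.
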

This theorem specifies what is meant by "sufficiently regular" in section A.3.

\section{The dual space of $W^{1, p}_0(\Omega)$}
Let us denote with $W^{-1, p'}(\Omega)$ the space which is dual to $W^{1, p}_0(\Omega)$, $1 \leq p < \infty$ and with $H^{-1}$ the space dual to $H^1_0(\Omega)$.

We identify $L^2(\Omega)$ with its dual, but the same does not hold for $H^1_0(\Omega)$ with its dual. Hence, we have the following scheme:
\begin{equation*}{H_0^1}(\Omega) \subset L^2(\Omega) \subset H^{-1}(\Omega)\end{equation*}
with continuous and dense immersions. 

If $\Omega$ is bounded, we have
\begin{equation*}W^{1, p}_0(\Omega) \subset L^2(\Omega) \subset W^{-1, p'}(\Omega) \hspace{1cm} {\rm if} \; \frac{2N}{N+2} \leq p < \infty \end{equation*}
with continuous and dense immersions.

If $\Omega$ is unbounded, we have
\begin{equation*}W^{1, p}_0(\Omega) \subset L^2(\Omega) \subset W^{-1, p'}(\Omega) \hspace{1cm} {\rm if} \; \frac{2N}{N+2} \leq p \leq 2. \end{equation*}
It is possible to characterize the elements of $W^{-1, p'}(\Omega)$ by making use of the following proposition.
\begin{prop}Let $F \in W^{-1, p'}(\Omega)$, then there exist $f_0$, $f_1$, ..., $f_N \in L^{p'}(\Omega)$ such that
\begin{equation*} < F, v> = \int f_0 v + \sum_{i=1}^N \int f_i \frac{\partial v}{\partial x_i} \hspace{1cm} \forall v \in W^{1, p}_0 \end{equation*}
and
\begin{equation*} {\rm Max}_{0 \leq i \leq N} ||f_i||_{L^{p'}}= ||F||. \end{equation*}
\end{prop}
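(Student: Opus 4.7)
The strategy is the classical Hahn--Banach plus Riesz representation argument: realise $W^{1,p}_0(\Omega)$ isometrically as a closed subspace of a product of $L^p$ spaces, transport $F$ to that subspace, extend by Hahn--Banach, and read off the coefficient functions $f_0, \ldots, f_N$ from the Riesz representation on the product space.

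First, I would introduce the product space $E \equiv L^p(\Omega)^{N+1}$ equipped with the norm $\|(g_0, g_1, \ldots, g_N)\|_E = \sum_{i=0}^N \|g_i\|_{L^p}$, whose dual is $E^* = L^{p'}(\Omega)^{N+1}$ with the norm $\|(f_0, \ldots, f_N)\|_{E^*} = \max_{0 \leq i \leq N} \|f_i\|_{L^{p'}}$. Define the linear map $T: W^{1,p}_0(\Omega) \to E$ by $T(v) = (v, \partial v/\partial x_1, \ldots, \partial v/\partial x_N)$. By the definition of the norm on $W^{1,p}$ recalled in Section A.2, $T$ is an isometry onto its image, and its image $G = T(W^{1,p}_0(\Omega))$ is a closed linear subspace of $E$ (closedness follows from completeness of $W^{1,p}_0(\Omega)$ combined with the isometry).

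Next, given $F \in W^{-1,p'}(\Omega)$, define a continuous linear functional $\Phi_0$ on $G$ by $\Phi_0(T(v)) = \langle F, v\rangle$. Because $T$ is an isometry, $\Phi_0$ is well-defined (injectivity of $T$) and has norm $\|\Phi_0\|_{G^*} = \|F\|_{W^{-1,p'}}$. Apply the Hahn--Banach theorem to extend $\Phi_0$ to a continuous linear functional $\Phi$ on the whole of $E$ with $\|\Phi\|_{E^*} = \|\Phi_0\|_{G^*} = \|F\|$. By the Riesz representation theorem for $L^p$ (valid for $1 \leq p < \infty$, which is precisely the range in which $W^{1,p}_0$ is reflexive/separable enough for this machinery), there exist $f_0, f_1, \ldots, f_N \in L^{p'}(\Omega)$ such that
\begin{equation*}
\Phi(g_0, g_1, \ldots, g_N) = \int_\Omega f_0 g_0 + \sum_{i=1}^N \int_\Omega f_i g_i,
\end{equation*}
and $\|\Phi\|_{E^*} = \max_{0 \leq i \leq N} \|f_i\|_{L^{p'}}$. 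Restricting $\Phi$ back to $G$ and evaluating at $T(v)$ yields the desired representation formula, while the chain of equalities $\max_i \|f_i\|_{L^{p'}} = \|\Phi\|_{E^*} = \|F\|$ gives the norm identity.

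\textbf{Main obstacle.} The only delicate point is ensuring that the norms match exactly, i.e.\ that the maximum in the statement is genuinely $\|F\|$ and not merely $\geq \|F\|$. The inequality $\|F\| \leq \max_i \|f_i\|_{L^{p'}}$ is immediate from the representation formula and H\"older's inequality, while the reverse inequality requires the \emph{norm-preserving} form of Hahn--Banach (which in turn depends on the concrete choice of the sum-norm on $E$ making its dual the max-norm on $E^*$). Any alternative equivalent norm on $W^{1,p}$ would only give the representation up to an equivalence constant, so the precise pairing of norms on $E$ and $E^*$ is essential; once this bookkeeping is set up correctly, the rest of the argument is routine. A minor additional point is that the $f_i$ are non-unique (the extension $\Phi$ is not unique unless $G = E$), but the statement only claims existence.
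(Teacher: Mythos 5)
Your proof is correct, and it is precisely the classical argument from Brezis (the reference the paper cites for this appendix): isometric embedding $v \mapsto (v, \nabla v)$ of $W^{1,p}_0(\Omega)$ into $L^p(\Omega)^{N+1}$ with the sum norm, norm-preserving Hahn--Banach extension, and Riesz representation of the dual product space with the max norm. The paper itself states the proposition without proof, so there is nothing to compare against beyond noting that your bookkeeping of the dual norms is exactly what makes the equality $\mathrm{Max}_{0 \leq i \leq N}\|f_i\|_{L^{p'}} = \|F\|$ (rather than a mere equivalence) come out.
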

If $\Omega$ is bounded, it is possible to choose $f_0=0$.

\chapter{Kasner Space-times}
The Kasner spacetimes were discovered by Kasner (1925). They have attracted considerable interest for the study of the behaviour of space-times near the initial singularity. The Kasner models are built with the isometry group $G$ being the Abelian group $\mathbf{R}^3$. All structure constants are zero.

\section{Kasner solutions}
Following Choquet-Bruhat \cite{choquet2009general}, let $x^i$ be arbitrary Cartesian coordinates on $\mathbf{R}^3$; the differentials $dx^i$ are a basis of invariant 1-forms on $\mathbf{R}^3$. We can choose them at each time $t$ so that they are orthogonal in the metric $g$ of the corresponding orbit. Thus, this metric takes the diagonal form
\begin{equation} \label{B.1} g \equiv \sum_{i=1, 2, 3}a_i(t)(dx^i)^2. \end{equation}
The vacuum Einstein equations 
\begin{equation*} R_{\alpha \beta}\equiv \partial_\lambda \Gamma \{\lambda, [\alpha, \beta]\} - \partial_\alpha \Gamma\{\lambda, [\beta, \lambda]\} + \Gamma \{\lambda, [\alpha, \beta] \} \Gamma \{\mu, [\lambda, \mu]\} - \Gamma \{ \mu, [\beta, \lambda] \} =0 \end{equation*}
reduce to ordinary differential equations by making use of the metric $(\ref{B.1})$, and we have
\begin{equation}\label{B.2} R_{00} \equiv - \frac{1}{4}(\partial_0 log(a_i))^2 - \frac{1}{2} \partial_0 \partial_0 log(a_i)=0 \end{equation}
\begin{equation} \label{B.3}  R_{ij} \equiv 0 \hspace{2cm} {\rm if} \; i \neq j \end{equation}
\begin{equation} \label{B.4}  R_{ii} \equiv - \Biggl\{\frac{1}{2} (\partial_0 log(a_i)) \partial_0 a_i - \frac{1}{4}a_i \sum_{p=1, 2, 3} (\partial_0 log(a_p)) - \frac{1}{2} {\partial_{00}}^2 a_i \Biggr\} =0 \end{equation}
We set $\frac{1}{2}\partial_0 log(a_i)=b_i$, then the $R_{00}$ equation reads as
\begin{equation}\label{B.5} R_{00} \equiv \sum_{i=1, 2, 3} (b_i^2 + \partial_0 b_i)=0. \end{equation}
By using the identity
\begin{equation}\label{B.6} \frac{1}{2} {a_i}^{-1} {\partial_{00}}^2 a_i \equiv \partial_0 \bigg{(} \frac{1}{2}{a_i}^{-1} \partial_0 a_i \bigg{)} + \frac{1}{2} {a_i}^{-2} (\partial_0 a_i)^2, \end{equation}
the equation $(\ref{B.4})$ reads as
\begin{equation} \label{B.7} {a_i}^{-1}R_{ii} \equiv b_i \bigg{(} \sum_{j=1, 2, 3} b_j \bigg{)} + \partial_0 b_i =0, \end{equation}
and hence it follows that
\begin{equation}\label{B.8} g^{ij}R_{ij}\equiv \bigg{(}\sum_{i=1, 2, 3}b_i \bigg{)}^2 + \sum_{i=1, 2, 3}(\partial_0 b_i )=0. \end{equation}
The two equations $(\ref{B.5})$ and $(\ref{B.8})$ imply the constraint
\begin{equation} \label{B.9} G_{00}\equiv \frac{1}{2} (R_{00} + g^{ij} R_{ij} ) \equiv - \bigg{(} \sum_{i=1, 2, 3} b_i \bigg{)}^2 + \sum_{i=1, 2, 3}{b_i}^2 =0. \end{equation}
Letting $K$ denote the extrinsic mean curvature of the space sections $t=cost.$, we evaluate
\begin{equation} K \equiv - \sum_{i=1, 2, 3}b_i \equiv - {X}^{-1} \partial_0 X \end{equation}
with 
\begin{equation} X \equiv (det(g))^{\frac{1}{2}} \equiv (a_1 a_2 a_3)^{\frac{1}{2}}. \end{equation}
The identity ($\ref{B.6})$ gives
\begin{equation} \partial_0K=K^2 \rightarrow {\partial_{00}}^2 X=0. \end{equation}
The solution whose volume tends to zero, when $t$ tends to zero and $K$ becomes infinite, takes the form
\begin{equation}\label{B.13} X=t, \hspace{1cm} K= - \frac{1}{t}. \end{equation}
Therefore, Eq. ($\ref{B.7}$) become a diagonal system of first-order differential equations for the functions $b_i$, i.e.
\begin{equation} R_{ii}{a_i}^{-1} \equiv \frac{b_i}{t} + \partial_0 b_i=0. \end{equation}
The general solution of this equation becomes infinite for $t=0$. It takes the form $b_i= \frac{p_i}{t}$, with $p_i={\rm const.}$. Hence
\begin{equation} a_i = t^{2p_i}. \end{equation}
The $\textit{Kasner exponents}$ $p_i$ must verify, due to $(\ref{B.9}$) and $(\ref{B.13})$
\begin{equation} \label{B.16} \bigg{(} \sum_{i=1, 2, 3}p_i \bigg{)}^2 = \sum_{i=1, 2, 3} {p_i}^2 = 1. \end{equation}
Then, the vacuum Einstein equations are all satisfied. The Kasner space-time metric is
\begin{equation} - dt^2 + t^{2p_1}(dx^1)^2 + t^{2p_2}(dx^2)^2 + t^{2 p_3}(dx^3)^2 \end{equation}
where the Kasner exponents $p_i$ lie in the Kasner circle ($\ref{B.16})$, the intersection of a 2-sphere and a plane.

One of the Kasner solutions has two of the Kasner exponents vanishing. In this case, the space-time metric is $\textit{locally flat}$, as can be seen by evaluating the Ricci tensor of the 2-metric 
\begin{equation} - dt^2 + t^2 dx^2 .\end{equation}
The space-time with such a Kasner metric supported by the manifold $\mathbf{R}^3 \times \mathbf{R}^+$ is in fact isometric to the wedge $t >|x|$ on the Minkowski space-time. 

For all Kasner solutions the volume of $M_t$ expands from zero to infinity as $t$ increases from zero to infinity, since $X \equiv (det g)^{\frac{1}{2}} = t$.

If two of the exponents are not zero the ($\ref{B.16})$ shows that one at least must be negative. Suppose $p_1 <0$, $p_2 >0$, and $p_3 >0$. Then, as $t$ tends to zero the space-time shrinks in the direction of $x^2$ and $x^3$ while it expands indefinitely in the direction of $x^1$. The opposite happens at $t$ tends to infinity; in both time directions the Universe is very anisotropic, while it is much less so at intermediate times.

\bibliographystyle{unsrt}
\bibliography{bibliografiatesi}
\nocite{*}
\addcontentsline{toc}{chapter}{Bibliography}

\end{document}